\newcounter{subterm}[equation] 
\renewcommand{\thesubterm}{\theequation\alph{subterm}} 
\newcommand{\subtermeqlabel}[1]{%
    \refstepcounter{subterm}
    \textup{\tagform@{\thesubterm}}
    \protected@write\@auxout{}{%
        \string\newlabel{#1}{{\thesubterm}{\thepage}{\@currentlabelname}{\@currentHref}{}}%
    }%
}
\definecolor {snow}                {rgb}{1.00,0.98,0.98}
\definecolor {ghostwhite}          {rgb}{0.97,0.97,1.00}
\definecolor {whitesmoke}          {rgb}{0.96,0.96,0.96}
\definecolor {gainsboro}           {rgb}{0.86,0.86,0.86}
\definecolor {floralwhite}         {rgb}{1.00,0.98,0.94}
\definecolor {oldlace}             {rgb}{0.99,0.96,0.90}
\definecolor {linen}               {rgb}{0.98,0.94,0.90}
\definecolor {antiquewhite}        {rgb}{0.98,0.92,0.84}
\definecolor {papayawhip}          {rgb}{1.00,0.94,0.84}
\definecolor {blanchedalmond}      {rgb}{1.00,0.92,0.80}
\definecolor {bisque}              {rgb}{1.00,0.89,0.77}
\definecolor {peachpuff}           {rgb}{1.00,0.85,0.73}
\definecolor {navajowhite}         {rgb}{1.00,0.87,0.68}
\definecolor {moccasin}            {rgb}{1.00,0.89,0.71}
\definecolor {cornsilk}            {rgb}{1.00,0.97,0.86}
\definecolor {ivory}               {rgb}{1.00,1.00,0.94}
\definecolor {lemonchiffon}        {rgb}{1.00,0.98,0.80}
\definecolor {seashell}            {rgb}{1.00,0.96,0.93}
\definecolor {honeydew}            {rgb}{0.94,1.00,0.94}
\definecolor {mintcream}           {rgb}{0.96,1.00,0.98}
\definecolor {azure}               {rgb}{0.94,1.00,1.00}
\definecolor {aliceblue}           {rgb}{0.94,0.97,1.00}
\definecolor {lavender}            {rgb}{0.90,0.90,0.98}
\definecolor {lavenderblush}       {rgb}{1.00,0.94,0.96}
\definecolor {mistyrose}           {rgb}{1.00,0.89,0.88}
\definecolor {white}               {rgb}{1.00,1.00,1.00}
\definecolor {black}               {rgb}{0.00,0.00,0.00}
\definecolor {darkslategray}       {rgb}{0.18,0.31,0.31}
\definecolor {dimgray}             {rgb}{0.41,0.41,0.41}
\definecolor {slategray}           {rgb}{0.44,0.50,0.56}
\definecolor {lightslategray}      {rgb}{0.47,0.53,0.60}
\definecolor {gray}                {rgb}{0.75,0.75,0.75}
\definecolor {lightgrey}           {rgb}{0.83,0.83,0.83}
\definecolor {midnightblue}        {rgb}{0.10,0.10,0.44}
\definecolor {navy}                {rgb}{0.00,0.00,0.50}
\definecolor {cornflowerblue}      {rgb}{0.39,0.58,0.93}
\definecolor {darkslateblue}       {rgb}{0.28,0.24,0.55}
\definecolor {slateblue}           {rgb}{0.42,0.35,0.80}
\definecolor {mediumslateblue}     {rgb}{0.48,0.41,0.93}
\definecolor {lightslateblue}      {rgb}{0.52,0.44,1.00}
\definecolor {mediumblue}          {rgb}{0.00,0.00,0.80}
\definecolor {royalblue}           {rgb}{0.25,0.41,0.88}
\definecolor {blue}                {rgb}{0.00,0.00,1.00}
\definecolor {dodgerblue}          {rgb}{0.12,0.56,1.00}
\definecolor {deepskyblue}         {rgb}{0.00,0.75,1.00}
\definecolor {skyblue}             {rgb}{0.53,0.81,0.92}
\definecolor {lightskyblue}        {rgb}{0.53,0.81,0.98}
\definecolor {steelblue}           {rgb}{0.27,0.51,0.71}
\definecolor {lightsteelblue}      {rgb}{0.69,0.77,0.87}
\definecolor {lightblue}           {rgb}{0.68,0.85,0.90}
\definecolor {powderblue}          {rgb}{0.69,0.88,0.90}
\definecolor {paleturquoise}       {rgb}{0.69,0.93,0.93}
\definecolor {darkturquoise}       {rgb}{0.00,0.81,0.82}
\definecolor {mediumturquoise}     {rgb}{0.28,0.82,0.80}
\definecolor {turquoise}           {rgb}{0.25,0.88,0.82}
\definecolor {cyan}                {rgb}{0.00,1.00,1.00}
\definecolor {lightcyan}           {rgb}{0.88,1.00,1.00}
\definecolor {cadetblue}           {rgb}{0.37,0.62,0.63}
\definecolor {mediumaquamarine}    {rgb}{0.40,0.80,0.67}
\definecolor {aquamarine}          {rgb}{0.50,1.00,0.83}
\definecolor {darkgreen}           {rgb}{0.00,0.39,0.00}
\definecolor {darkolivegreen}      {rgb}{0.33,0.42,0.18}
\definecolor {darkseagreen}        {rgb}{0.56,0.74,0.56}
\definecolor {seagreen}            {rgb}{0.18,0.55,0.34}
\definecolor {mediumseagreen}      {rgb}{0.24,0.70,0.44}
\definecolor {lightseagreen}       {rgb}{0.13,0.70,0.67}
\definecolor {palegreen}           {rgb}{0.60,0.98,0.60}
\definecolor {springgreen}         {rgb}{0.00,1.00,0.50}
\definecolor {lawngreen}           {rgb}{0.49,0.99,0.00}
\definecolor {green}               {rgb}{0.00,1.00,0.00}
\definecolor {chartreuse}          {rgb}{0.50,1.00,0.00}
\definecolor {mediumspringgreen}   {rgb}{0.00,0.98,0.60}
\definecolor {greenyellow}         {rgb}{0.68,1.00,0.18}
\definecolor {limegreen}           {rgb}{0.20,0.80,0.20}
\definecolor {yellowgreen}         {rgb}{0.60,0.80,0.20}
\definecolor {forestgreen}         {rgb}{0.13,0.55,0.13}
\definecolor {olivedrab}           {rgb}{0.42,0.56,0.14}
\definecolor {darkkhaki}           {rgb}{0.74,0.72,0.42}
\definecolor {khaki}               {rgb}{0.94,0.90,0.55}
\definecolor {palegoldenrod}       {rgb}{0.93,0.91,0.67}
\definecolor {lightgoldenrodyellow} {rgb}{0.98,0.98,0.82}
\definecolor {lightyellow}         {rgb}{1.00,1.00,0.88}
\definecolor {yellow}              {rgb}{1.00,1.00,0.00}
\definecolor {gold}                {rgb}{1.00,0.84,0.00}
\definecolor {lightgoldenrod}      {rgb}{0.93,0.87,0.51}
\definecolor {goldenrod}           {rgb}{0.85,0.65,0.13}
\definecolor {darkgoldenrod}       {rgb}{0.72,0.53,0.04}
\definecolor {rosybrown}           {rgb}{0.74,0.56,0.56}
\definecolor {indianred}           {rgb}{0.80,0.36,0.36}
\definecolor {saddlebrown}         {rgb}{0.55,0.27,0.07}
\definecolor {sienna}              {rgb}{0.63,0.32,0.18}
\definecolor {peru}                {rgb}{0.80,0.52,0.25}
\definecolor {burlywood}           {rgb}{0.87,0.72,0.53}
\definecolor {beige}               {rgb}{0.96,0.96,0.86}
\definecolor {wheat}               {rgb}{0.96,0.87,0.70}
\definecolor {sandybrown}          {rgb}{0.96,0.64,0.38}
\definecolor {tan}                 {rgb}{0.82,0.71,0.55}
\definecolor {chocolate}           {rgb}{0.82,0.41,0.12}
\definecolor {firebrick}           {rgb}{0.70,0.13,0.13}
\definecolor {brown}               {rgb}{0.65,0.16,0.16}
\definecolor {darksalmon}          {rgb}{0.91,0.59,0.48}
\definecolor {salmon}              {rgb}{0.98,0.50,0.45}
\definecolor {lightsalmon}         {rgb}{1.00,0.63,0.48}
\definecolor {orange}              {rgb}{1.00,0.65,0.00}
\definecolor {darkorange}          {rgb}{1.00,0.55,0.00}
\definecolor {coral}               {rgb}{1.00,0.50,0.31}
\definecolor {lightcoral}          {rgb}{0.94,0.50,0.50}
\definecolor {tomato}              {rgb}{1.00,0.39,0.28}
\definecolor {orangered}           {rgb}{1.00,0.27,0.00}
\definecolor {red}                 {rgb}{1.00,0.00,0.00}
\definecolor {hotpink}             {rgb}{1.00,0.41,0.71}
\definecolor {deeppink}            {rgb}{1.00,0.08,0.58}
\definecolor {pink}                {rgb}{1.00,0.75,0.80}
\definecolor {lightpink}           {rgb}{1.00,0.71,0.76}
\definecolor {palevioletred}       {rgb}{0.86,0.44,0.58}
\definecolor {maroon}              {rgb}{0.69,0.19,0.38}
\definecolor {mediumvioletred}     {rgb}{0.78,0.08,0.52}
\definecolor {violetred}           {rgb}{0.82,0.13,0.56}
\definecolor {magenta}             {rgb}{1.00,0.00,1.00}
\definecolor {violet}              {rgb}{0.93,0.51,0.93}
\definecolor {plum}                {rgb}{0.87,0.63,0.87}
\definecolor {orchid}              {rgb}{0.85,0.44,0.84}
\definecolor {mediumorchid}        {rgb}{0.73,0.33,0.83}
\definecolor {darkorchid}          {rgb}{0.60,0.20,0.80}
\definecolor {darkviolet}          {rgb}{0.58,0.00,0.83}
\definecolor {blueviolet}          {rgb}{0.54,0.17,0.89}
\definecolor {purple}              {rgb}{0.63,0.13,0.94}
\definecolor {mediumpurple}        {rgb}{0.58,0.44,0.86}
\definecolor {thistle}             {rgb}{0.85,0.75,0.85}
\definecolor {snow2}               {rgb}{0.93,0.91,0.91}
\definecolor {snow3}               {rgb}{0.80,0.79,0.79}
\definecolor {snow4}               {rgb}{0.55,0.54,0.54}
\definecolor {seashell2}           {rgb}{0.93,0.90,0.87}
\definecolor {seashell3}           {rgb}{0.80,0.77,0.75}
\definecolor {seashell4}           {rgb}{0.55,0.53,0.51}
\definecolor {antiquewhite1}       {rgb}{1.00,0.94,0.86}
\definecolor {antiquewhite2}       {rgb}{0.93,0.87,0.80}
\definecolor {antiquewhite3}       {rgb}{0.80,0.75,0.69}
\definecolor {antiquewhite4}       {rgb}{0.55,0.51,0.47}
\definecolor {bisque2}             {rgb}{0.93,0.84,0.72}
\definecolor {bisque3}             {rgb}{0.80,0.72,0.62}
\definecolor {bisque4}             {rgb}{0.55,0.49,0.42}
\definecolor {peachpuff2}          {rgb}{0.93,0.80,0.68}
\definecolor {peachpuff3}          {rgb}{0.80,0.69,0.58}
\definecolor {peachpuff4}          {rgb}{0.55,0.47,0.40}
\definecolor {navajowhite2}        {rgb}{0.93,0.81,0.63}
\definecolor {navajowhite3}        {rgb}{0.80,0.70,0.55}
\definecolor {navajowhite4}        {rgb}{0.55,0.47,0.37}
\definecolor {lemonchiffon2}       {rgb}{0.93,0.91,0.75}
\definecolor {lemonchiffon3}       {rgb}{0.80,0.79,0.65}
\definecolor {lemonchiffon4}       {rgb}{0.55,0.54,0.44}
\definecolor {cornsilk2}           {rgb}{0.93,0.91,0.80}
\definecolor {cornsilk3}           {rgb}{0.80,0.78,0.69}
\definecolor {cornsilk4}           {rgb}{0.55,0.53,0.47}
\definecolor {ivory2}              {rgb}{0.93,0.93,0.88}
\definecolor {ivory3}              {rgb}{0.80,0.80,0.76}
\definecolor {ivory4}              {rgb}{0.55,0.55,0.51}
\definecolor {honeydew2}           {rgb}{0.88,0.93,0.88}
\definecolor {honeydew3}           {rgb}{0.76,0.80,0.76}
\definecolor {honeydew4}           {rgb}{0.51,0.55,0.51}
\definecolor {lavenderblush2}      {rgb}{0.93,0.88,0.90}
\definecolor {lavenderblush3}      {rgb}{0.80,0.76,0.77}
\definecolor {lavenderblush4}      {rgb}{0.55,0.51,0.53}
\definecolor {mistyrose2}          {rgb}{0.93,0.84,0.82}
\definecolor {mistyrose3}          {rgb}{0.80,0.72,0.71}
\definecolor {mistyrose4}          {rgb}{0.55,0.49,0.48}
\definecolor {azure2}              {rgb}{0.88,0.93,0.93}
\definecolor {azure3}              {rgb}{0.76,0.80,0.80}
\definecolor {azure4}              {rgb}{0.51,0.55,0.55}
\definecolor {slateblue1}          {rgb}{0.51,0.44,1.00}
\definecolor {slateblue2}          {rgb}{0.48,0.40,0.93}
\definecolor {slateblue3}          {rgb}{0.41,0.35,0.80}
\definecolor {slateblue4}          {rgb}{0.28,0.24,0.55}
\definecolor {royalblue1}          {rgb}{0.28,0.46,1.00}
\definecolor {royalblue2}          {rgb}{0.26,0.43,0.93}
\definecolor {royalblue3}          {rgb}{0.23,0.37,0.80}
\definecolor {royalblue4}          {rgb}{0.15,0.25,0.55}
\definecolor {blue2}               {rgb}{0.00,0.00,0.93}
\definecolor {blue4}               {rgb}{0.00,0.00,0.55}
\definecolor {dodgerblue2}         {rgb}{0.11,0.53,0.93}
\definecolor {dodgerblue3}         {rgb}{0.09,0.45,0.80}
\definecolor {dodgerblue4}         {rgb}{0.06,0.31,0.55}
\definecolor {steelblue1}          {rgb}{0.39,0.72,1.00}
\definecolor {steelblue2}          {rgb}{0.36,0.67,0.93}
\definecolor {steelblue3}          {rgb}{0.31,0.58,0.80}
\definecolor {steelblue4}          {rgb}{0.21,0.39,0.55}
\definecolor {deepskyblue2}        {rgb}{0.00,0.70,0.93}
\definecolor {deepskyblue3}        {rgb}{0.00,0.60,0.80}
\definecolor {deepskyblue4}        {rgb}{0.00,0.41,0.55}
\definecolor {skyblue1}            {rgb}{0.53,0.81,1.00}
\definecolor {skyblue2}            {rgb}{0.49,0.75,0.93}
\definecolor {skyblue3}            {rgb}{0.42,0.65,0.80}
\definecolor {skyblue4}            {rgb}{0.29,0.44,0.55}
\definecolor {lightskyblue1}       {rgb}{0.69,0.89,1.00}
\definecolor {lightskyblue2}       {rgb}{0.64,0.83,0.93}
\definecolor {lightskyblue3}       {rgb}{0.55,0.71,0.80}
\definecolor {lightskyblue4}       {rgb}{0.38,0.48,0.55}
\definecolor {slategray1}          {rgb}{0.78,0.89,1.00}
\definecolor {slategray2}          {rgb}{0.73,0.83,0.93}
\definecolor {slategray3}          {rgb}{0.62,0.71,0.80}
\definecolor {slategray4}          {rgb}{0.42,0.48,0.55}
\definecolor {lightsteelblue1}     {rgb}{0.79,0.88,1.00}
\definecolor {lightsteelblue2}     {rgb}{0.74,0.82,0.93}
\definecolor {lightsteelblue3}     {rgb}{0.64,0.71,0.80}
\definecolor {lightsteelblue4}     {rgb}{0.43,0.48,0.55}
\definecolor {lightblue1}          {rgb}{0.75,0.94,1.00}
\definecolor {lightblue2}          {rgb}{0.70,0.87,0.93}
\definecolor {lightblue3}          {rgb}{0.60,0.75,0.80}
\definecolor {lightblue4}          {rgb}{0.41,0.51,0.55}
\definecolor {lightcyan2}          {rgb}{0.82,0.93,0.93}
\definecolor {lightcyan3}          {rgb}{0.71,0.80,0.80}
\definecolor {lightcyan4}          {rgb}{0.48,0.55,0.55}
\definecolor {paleturquoise1}      {rgb}{0.73,1.00,1.00}
\definecolor {paleturquoise2}      {rgb}{0.68,0.93,0.93}
\definecolor {paleturquoise3}      {rgb}{0.59,0.80,0.80}
\definecolor {paleturquoise4}      {rgb}{0.40,0.55,0.55}
\definecolor {cadetblue1}          {rgb}{0.60,0.96,1.00}
\definecolor {cadetblue2}          {rgb}{0.56,0.90,0.93}
\definecolor {cadetblue3}          {rgb}{0.48,0.77,0.80}
\definecolor {cadetblue4}          {rgb}{0.33,0.53,0.55}
\definecolor {turquoise1}          {rgb}{0.00,0.96,1.00}
\definecolor {turquoise2}          {rgb}{0.00,0.90,0.93}
\definecolor {turquoise3}          {rgb}{0.00,0.77,0.80}
\definecolor {turquoise4}          {rgb}{0.00,0.53,0.55}
\definecolor {cyan2}               {rgb}{0.00,0.93,0.93}
\definecolor {cyan3}               {rgb}{0.00,0.80,0.80}
\definecolor {cyan4}               {rgb}{0.00,0.55,0.55}
\definecolor {darkslategray1}      {rgb}{0.59,1.00,1.00}
\definecolor {darkslategray2}      {rgb}{0.55,0.93,0.93}
\definecolor {darkslategray3}      {rgb}{0.47,0.80,0.80}
\definecolor {darkslategray4}      {rgb}{0.32,0.55,0.55}
\definecolor {aquamarine2}         {rgb}{0.46,0.93,0.78}
\definecolor {aquamarine4}         {rgb}{0.27,0.55,0.45}
\definecolor {darkseagreen1}       {rgb}{0.76,1.00,0.76}
\definecolor {darkseagreen2}       {rgb}{0.71,0.93,0.71}
\definecolor {darkseagreen3}       {rgb}{0.61,0.80,0.61}
\definecolor {darkseagreen4}       {rgb}{0.41,0.55,0.41}
\definecolor {seagreen1}           {rgb}{0.33,1.00,0.62}
\definecolor {seagreen2}           {rgb}{0.31,0.93,0.58}
\definecolor {seagreen3}           {rgb}{0.26,0.80,0.50}
\definecolor {palegreen1}          {rgb}{0.60,1.00,0.60}
\definecolor {palegreen2}          {rgb}{0.56,0.93,0.56}
\definecolor {palegreen3}          {rgb}{0.49,0.80,0.49}
\definecolor {palegreen4}          {rgb}{0.33,0.55,0.33}
\definecolor {springgreen2}        {rgb}{0.00,0.93,0.46}
\definecolor {springgreen3}        {rgb}{0.00,0.80,0.40}
\definecolor {springgreen4}        {rgb}{0.00,0.55,0.27}
\definecolor {green2}              {rgb}{0.00,0.93,0.00}
\definecolor {green3}              {rgb}{0.00,0.80,0.00}
\definecolor {green4}              {rgb}{0.00,0.55,0.00}
\definecolor {chartreuse2}         {rgb}{0.46,0.93,0.00}
\definecolor {chartreuse3}         {rgb}{0.40,0.80,0.00}
\definecolor {chartreuse4}         {rgb}{0.27,0.55,0.00}
\definecolor {olivedrab1}          {rgb}{0.75,1.00,0.24}
\definecolor {olivedrab2}          {rgb}{0.70,0.93,0.23}
\definecolor {olivedrab4}          {rgb}{0.41,0.55,0.13}
\definecolor {darkolivegreen1}     {rgb}{0.79,1.00,0.44}
\definecolor {darkolivegreen2}     {rgb}{0.74,0.93,0.41}
\definecolor {darkolivegreen3}     {rgb}{0.64,0.80,0.35}
\definecolor {darkolivegreen4}     {rgb}{0.43,0.55,0.24}
\definecolor {khaki1}              {rgb}{1.00,0.96,0.56}
\definecolor {khaki2}              {rgb}{0.93,0.90,0.52}
\definecolor {khaki3}              {rgb}{0.80,0.78,0.45}
\definecolor {khaki4}              {rgb}{0.55,0.53,0.31}
\definecolor {lightgoldenrod1}     {rgb}{1.00,0.93,0.55}
\definecolor {lightgoldenrod2}     {rgb}{0.93,0.86,0.51}
\definecolor {lightgoldenrod3}     {rgb}{0.80,0.75,0.44}
\definecolor {lightgoldenrod4}     {rgb}{0.55,0.51,0.30}
\definecolor {lightyellow2}        {rgb}{0.93,0.93,0.82}
\definecolor {lightyellow3}        {rgb}{0.80,0.80,0.71}
\definecolor {lightyellow4}        {rgb}{0.55,0.55,0.48}
\definecolor {yellow2}             {rgb}{0.93,0.93,0.00}
\definecolor {yellow3}             {rgb}{0.80,0.80,0.00}
\definecolor {yellow4}             {rgb}{0.55,0.55,0.00}
\definecolor {gold2}               {rgb}{0.93,0.79,0.00}
\definecolor {gold3}               {rgb}{0.80,0.68,0.00}
\definecolor {gold4}               {rgb}{0.55,0.46,0.00}
\definecolor {goldenrod1}          {rgb}{1.00,0.76,0.15}
\definecolor {goldenrod2}          {rgb}{0.93,0.71,0.13}
\definecolor {goldenrod3}          {rgb}{0.80,0.61,0.11}
\definecolor {goldenrod4}          {rgb}{0.55,0.41,0.08}
\definecolor {darkgoldenrod1}      {rgb}{1.00,0.73,0.06}
\definecolor {darkgoldenrod2}      {rgb}{0.93,0.68,0.05}
\definecolor {darkgoldenrod3}      {rgb}{0.80,0.58,0.05}
\definecolor {darkgoldenrod4}      {rgb}{0.55,0.40,0.03}
\definecolor {rosybrown1}          {rgb}{1.00,0.76,0.76}
\definecolor {rosybrown2}          {rgb}{0.93,0.71,0.71}
\definecolor {rosybrown3}          {rgb}{0.80,0.61,0.61}
\definecolor {rosybrown4}          {rgb}{0.55,0.41,0.41}
\definecolor {indianred1}          {rgb}{1.00,0.42,0.42}
\definecolor {indianred2}          {rgb}{0.93,0.39,0.39}
\definecolor {indianred3}          {rgb}{0.80,0.33,0.33}
\definecolor {indianred4}          {rgb}{0.55,0.23,0.23}
\definecolor {sienna1}             {rgb}{1.00,0.51,0.28}
\definecolor {sienna2}             {rgb}{0.93,0.47,0.26}
\definecolor {sienna3}             {rgb}{0.80,0.41,0.22}
\definecolor {sienna4}             {rgb}{0.55,0.28,0.15}
\definecolor {burlywood1}          {rgb}{1.00,0.83,0.61}
\definecolor {burlywood2}          {rgb}{0.93,0.77,0.57}
\definecolor {burlywood3}          {rgb}{0.80,0.67,0.49}
\definecolor {burlywood4}          {rgb}{0.55,0.45,0.33}
\definecolor {wheat1}              {rgb}{1.00,0.91,0.73}
\definecolor {wheat2}              {rgb}{0.93,0.85,0.68}
\definecolor {wheat3}              {rgb}{0.80,0.73,0.59}
\definecolor {wheat4}              {rgb}{0.55,0.49,0.40}
\definecolor {tan1}                {rgb}{1.00,0.65,0.31}
\definecolor {tan2}                {rgb}{0.93,0.60,0.29}
\definecolor {tan4}                {rgb}{0.55,0.35,0.17}
\definecolor {chocolate1}          {rgb}{1.00,0.50,0.14}
\definecolor {chocolate2}          {rgb}{0.93,0.46,0.13}
\definecolor {chocolate3}          {rgb}{0.80,0.40,0.11}
\definecolor {firebrick1}          {rgb}{1.00,0.19,0.19}
\definecolor {firebrick2}          {rgb}{0.93,0.17,0.17}
\definecolor {firebrick3}          {rgb}{0.80,0.15,0.15}
\definecolor {firebrick4}          {rgb}{0.55,0.10,0.10}
\definecolor {brown1}              {rgb}{1.00,0.25,0.25}
\definecolor {brown2}              {rgb}{0.93,0.23,0.23}
\definecolor {brown3}              {rgb}{0.80,0.20,0.20}
\definecolor {brown4}              {rgb}{0.55,0.14,0.14}
\definecolor {salmon1}             {rgb}{1.00,0.55,0.41}
\definecolor {salmon2}             {rgb}{0.93,0.51,0.38}
\definecolor {salmon3}             {rgb}{0.80,0.44,0.33}
\definecolor {salmon4}             {rgb}{0.55,0.30,0.22}
\definecolor {lightsalmon2}        {rgb}{0.93,0.58,0.45}
\definecolor {lightsalmon3}        {rgb}{0.80,0.51,0.38}
\definecolor {lightsalmon4}        {rgb}{0.55,0.34,0.26}
\definecolor {orange2}             {rgb}{0.93,0.60,0.00}
\definecolor {orange3}             {rgb}{0.80,0.52,0.00}
\definecolor {orange4}             {rgb}{0.55,0.35,0.00}
\definecolor {darkorange1}         {rgb}{1.00,0.50,0.00}
\definecolor {darkorange2}         {rgb}{0.93,0.46,0.00}
\definecolor {darkorange3}         {rgb}{0.80,0.40,0.00}
\definecolor {darkorange4}         {rgb}{0.55,0.27,0.00}
\definecolor {coral1}              {rgb}{1.00,0.45,0.34}
\definecolor {coral2}              {rgb}{0.93,0.42,0.31}
\definecolor {coral3}              {rgb}{0.80,0.36,0.27}
\definecolor {coral4}              {rgb}{0.55,0.24,0.18}
\definecolor {tomato2}             {rgb}{0.93,0.36,0.26}
\definecolor {tomato3}             {rgb}{0.80,0.31,0.22}
\definecolor {tomato4}             {rgb}{0.55,0.21,0.15}
\definecolor {orangered2}          {rgb}{0.93,0.25,0.00}
\definecolor {orangered3}          {rgb}{0.80,0.22,0.00}
\definecolor {orangered4}          {rgb}{0.55,0.15,0.00}
\definecolor {red2}                {rgb}{0.93,0.00,0.00}
\definecolor {red3}                {rgb}{0.80,0.00,0.00}
\definecolor {red4}                {rgb}{0.55,0.00,0.00}
\definecolor {deeppink2}           {rgb}{0.93,0.07,0.54}
\definecolor {deeppink3}           {rgb}{0.80,0.06,0.46}
\definecolor {deeppink4}           {rgb}{0.55,0.04,0.31}
\definecolor {hotpink1}            {rgb}{1.00,0.43,0.71}
\definecolor {hotpink2}            {rgb}{0.93,0.42,0.65}
\definecolor {hotpink3}            {rgb}{0.80,0.38,0.56}
\definecolor {hotpink4}            {rgb}{0.55,0.23,0.38}
\definecolor {pink1}               {rgb}{1.00,0.71,0.77}
\definecolor {pink2}               {rgb}{0.93,0.66,0.72}
\definecolor {pink3}               {rgb}{0.80,0.57,0.62}
\definecolor {pink4}               {rgb}{0.55,0.39,0.42}
\definecolor {lightpink1}          {rgb}{1.00,0.68,0.73}
\definecolor {lightpink2}          {rgb}{0.93,0.64,0.68}
\definecolor {lightpink3}          {rgb}{0.80,0.55,0.58}
\definecolor {lightpink4}          {rgb}{0.55,0.37,0.40}
\definecolor {palevioletred1}      {rgb}{1.00,0.51,0.67}
\definecolor {palevioletred2}      {rgb}{0.93,0.47,0.62}
\definecolor {palevioletred3}      {rgb}{0.80,0.41,0.54}
\definecolor {palevioletred4}      {rgb}{0.55,0.28,0.36}
\definecolor {maroon1}             {rgb}{1.00,0.20,0.70}
\definecolor {maroon2}             {rgb}{0.93,0.19,0.65}
\definecolor {maroon3}             {rgb}{0.80,0.16,0.56}
\definecolor {maroon4}             {rgb}{0.55,0.11,0.38}
\definecolor {violetred1}          {rgb}{1.00,0.24,0.59}
\definecolor {violetred2}          {rgb}{0.93,0.23,0.55}
\definecolor {violetred3}          {rgb}{0.80,0.20,0.47}
\definecolor {violetred4}          {rgb}{0.55,0.13,0.32}
\definecolor {magenta2}            {rgb}{0.93,0.00,0.93}
\definecolor {magenta3}            {rgb}{0.80,0.00,0.80}
\definecolor {magenta4}            {rgb}{0.55,0.00,0.55}
\definecolor {orchid1}             {rgb}{1.00,0.51,0.98}
\definecolor {orchid2}             {rgb}{0.93,0.48,0.91}
\definecolor {orchid3}             {rgb}{0.80,0.41,0.79}
\definecolor {orchid4}             {rgb}{0.55,0.28,0.54}
\definecolor {plum1}               {rgb}{1.00,0.73,1.00}
\definecolor {plum2}               {rgb}{0.93,0.68,0.93}
\definecolor {plum3}               {rgb}{0.80,0.59,0.80}
\definecolor {plum4}               {rgb}{0.55,0.40,0.55}
\definecolor {mediumorchid1}       {rgb}{0.88,0.40,1.00}
\definecolor {mediumorchid2}       {rgb}{0.82,0.37,0.93}
\definecolor {mediumorchid3}       {rgb}{0.71,0.32,0.80}
\definecolor {mediumorchid4}       {rgb}{0.48,0.22,0.55}
\definecolor {darkorchid1}         {rgb}{0.75,0.24,1.00}
\definecolor {darkorchid2}         {rgb}{0.70,0.23,0.93}
\definecolor {darkorchid3}         {rgb}{0.60,0.20,0.80}
\definecolor {darkorchid4}         {rgb}{0.41,0.13,0.55}
\definecolor {purple1}             {rgb}{0.61,0.19,1.00}
\definecolor {purple2}             {rgb}{0.57,0.17,0.93}
\definecolor {purple3}             {rgb}{0.49,0.15,0.80}
\definecolor {purple4}             {rgb}{0.33,0.10,0.55}
\definecolor {mediumpurple1}       {rgb}{0.67,0.51,1.00}
\definecolor {mediumpurple2}       {rgb}{0.62,0.47,0.93}
\definecolor {mediumpurple3}       {rgb}{0.54,0.41,0.80}
\definecolor {mediumpurple4}       {rgb}{0.36,0.28,0.55}
\definecolor {thistle1}            {rgb}{1.00,0.88,1.00}
\definecolor {thistle2}            {rgb}{0.93,0.82,0.93}
\definecolor {thistle3}            {rgb}{0.80,0.71,0.80}
\definecolor {thistle4}            {rgb}{0.55,0.48,0.55}
\definecolor {gray1}               {rgb}{0.01,0.01,0.01}
\definecolor {gray2}               {rgb}{0.02,0.02,0.02}
\definecolor {gray3}               {rgb}{0.03,0.03,0.03}
\definecolor {gray4}               {rgb}{0.04,0.04,0.04}
\definecolor {gray5}               {rgb}{0.05,0.05,0.05}
\definecolor {gray6}               {rgb}{0.06,0.06,0.06}
\definecolor {gray7}               {rgb}{0.07,0.07,0.07}
\definecolor {gray8}               {rgb}{0.08,0.08,0.08}
\definecolor {gray9}               {rgb}{0.09,0.09,0.09}
\definecolor {gray10}              {rgb}{0.10,0.10,0.10}
\definecolor {gray11}              {rgb}{0.11,0.11,0.11}
\definecolor {gray12}              {rgb}{0.12,0.12,0.12}
\definecolor {gray13}              {rgb}{0.13,0.13,0.13}
\definecolor {gray14}              {rgb}{0.14,0.14,0.14}
\definecolor {gray15}              {rgb}{0.15,0.15,0.15}
\definecolor {gray16}              {rgb}{0.16,0.16,0.16}
\definecolor {gray17}              {rgb}{0.17,0.17,0.17}
\definecolor {gray18}              {rgb}{0.18,0.18,0.18}
\definecolor {gray19}              {rgb}{0.19,0.19,0.19}
\definecolor {gray20}              {rgb}{0.20,0.20,0.20}
\definecolor {gray21}              {rgb}{0.21,0.21,0.21}
\definecolor {gray22}              {rgb}{0.22,0.22,0.22}
\definecolor {gray23}              {rgb}{0.23,0.23,0.23}
\definecolor {gray24}              {rgb}{0.24,0.24,0.24}
\definecolor {gray25}              {rgb}{0.25,0.25,0.25}
\definecolor {gray26}              {rgb}{0.26,0.26,0.26}
\definecolor {gray27}              {rgb}{0.27,0.27,0.27}
\definecolor {gray28}              {rgb}{0.28,0.28,0.28}
\definecolor {gray29}              {rgb}{0.29,0.29,0.29}
\definecolor {gray30}              {rgb}{0.30,0.30,0.30}
\definecolor {gray31}              {rgb}{0.31,0.31,0.31}
\definecolor {gray32}              {rgb}{0.32,0.32,0.32}
\definecolor {gray33}              {rgb}{0.33,0.33,0.33}
\definecolor {gray34}              {rgb}{0.34,0.34,0.34}
\definecolor {gray35}              {rgb}{0.35,0.35,0.35}
\definecolor {gray36}              {rgb}{0.36,0.36,0.36}
\definecolor {gray37}              {rgb}{0.37,0.37,0.37}
\definecolor {gray38}              {rgb}{0.38,0.38,0.38}
\definecolor {gray39}              {rgb}{0.39,0.39,0.39}
\definecolor {gray40}              {rgb}{0.40,0.40,0.40}
\definecolor {gray42}              {rgb}{0.42,0.42,0.42}
\definecolor {gray43}              {rgb}{0.43,0.43,0.43}
\definecolor {gray44}              {rgb}{0.44,0.44,0.44}
\definecolor {gray45}              {rgb}{0.45,0.45,0.45}
\definecolor {gray46}              {rgb}{0.46,0.46,0.46}
\definecolor {gray47}              {rgb}{0.47,0.47,0.47}
\definecolor {gray48}              {rgb}{0.48,0.48,0.48}
\definecolor {gray49}              {rgb}{0.49,0.49,0.49}
\definecolor {gray50}              {rgb}{0.50,0.50,0.50}
\definecolor {gray51}              {rgb}{0.51,0.51,0.51}
\definecolor {gray52}              {rgb}{0.52,0.52,0.52}
\definecolor {gray53}              {rgb}{0.53,0.53,0.53}
\definecolor {gray54}              {rgb}{0.54,0.54,0.54}
\definecolor {gray55}              {rgb}{0.55,0.55,0.55}
\definecolor {gray56}              {rgb}{0.56,0.56,0.56}
\definecolor {gray57}              {rgb}{0.57,0.57,0.57}
\definecolor {gray58}              {rgb}{0.58,0.58,0.58}
\definecolor {gray59}              {rgb}{0.59,0.59,0.59}
\definecolor {gray60}              {rgb}{0.60,0.60,0.60}
\definecolor {gray61}              {rgb}{0.61,0.61,0.61}
\definecolor {gray62}              {rgb}{0.62,0.62,0.62}
\definecolor {gray63}              {rgb}{0.63,0.63,0.63}
\definecolor {gray64}              {rgb}{0.64,0.64,0.64}
\definecolor {gray65}              {rgb}{0.65,0.65,0.65}
\definecolor {gray66}              {rgb}{0.66,0.66,0.66}
\definecolor {gray67}              {rgb}{0.67,0.67,0.67}
\definecolor {gray68}              {rgb}{0.68,0.68,0.68}
\definecolor {gray69}              {rgb}{0.69,0.69,0.69}
\definecolor {gray70}              {rgb}{0.70,0.70,0.70}
\definecolor {gray71}              {rgb}{0.71,0.71,0.71}
\definecolor {gray72}              {rgb}{0.72,0.72,0.72}
\definecolor {gray73}              {rgb}{0.73,0.73,0.73}
\definecolor {gray74}              {rgb}{0.74,0.74,0.74}
\definecolor {gray75}              {rgb}{0.75,0.75,0.75}
\definecolor {gray76}              {rgb}{0.76,0.76,0.76}
\definecolor {gray77}              {rgb}{0.77,0.77,0.77}
\definecolor {gray78}              {rgb}{0.78,0.78,0.78}
\definecolor {gray79}              {rgb}{0.79,0.79,0.79}
\definecolor {gray80}              {rgb}{0.80,0.80,0.80}
\definecolor {gray81}              {rgb}{0.81,0.81,0.81}
\definecolor {gray82}              {rgb}{0.82,0.82,0.82}
\definecolor {gray83}              {rgb}{0.83,0.83,0.83}
\definecolor {gray84}              {rgb}{0.84,0.84,0.84}
\definecolor {gray85}              {rgb}{0.85,0.85,0.85}
\definecolor {gray86}              {rgb}{0.86,0.86,0.86}
\definecolor {gray87}              {rgb}{0.87,0.87,0.87}
\definecolor {gray88}              {rgb}{0.88,0.88,0.88}
\definecolor {gray89}              {rgb}{0.89,0.89,0.89}
\definecolor {gray90}              {rgb}{0.90,0.90,0.90}
\definecolor {gray91}              {rgb}{0.91,0.91,0.91}
\definecolor {gray92}              {rgb}{0.92,0.92,0.92}
\definecolor {gray93}              {rgb}{0.93,0.93,0.93}
\definecolor {gray94}              {rgb}{0.94,0.94,0.94}
\definecolor {gray95}              {rgb}{0.95,0.95,0.95}
\definecolor {gray97}              {rgb}{0.97,0.97,0.97}
\definecolor {gray98}              {rgb}{0.98,0.98,0.98}
\definecolor {gray99}              {rgb}{0.99,0.99,0.99}
\definecolor {darkgrey}            {rgb}{0.66,0.66,0.66}
\newcommand{\resp}[1]{[resp.\ #1]}
\newcommand{\TODO}[1]{{}}
\newcommand{\ignore}[1]{}
\newcommand{\RSTODO}[1]{{\bf \textcolor{darkgreen}{{\fbox{RS TODO:} #1}}}}
\renewcommand{\RSTODO}[1]{}
\newenvironment{rschange}{\color{darkviolet}}{\normalcolor}
\newcommand{\ignoreinshort}[1]{}
\newcommand{\ignoreinlong}[1]{{#1}}
\providecommand{\longversion}{true}
    \renewcommand{\ignoreinshort}[1]{\textcolor{blue}{#1}}
    \newcommand{\ignoreinshortnc}[1]{{#1}}
    \renewcommand{\ignoreinlong}[1]{}
    \newcommand{\ignoreinlongnc}[1]{}
    \renewcommand{\ignoreinshort}[1]{}
    \newcommand{\ignoreinshortnc}[1]{}
    \renewcommand{\ignoreinlong}[1]{\textcolor{blue}{#1}}
    \newcommand{\ignoreinlongnc}[1]{{#1}}
\def\makenewenumerate#1#2{%
    \newcounter{cnt#1}
    \newenvironment{#1}%
    {\begin{list}{\makebox[0pt][r]{#2}}%
            {\setlength{\itemsep}{0pt}%
                \setlength{\parsep}{.2em}%
                \setlength{\leftmargin}{1.5em}%
                \setlength{\labelwidth}{.4em}%
                \usecounter{cnt#1}}}
            {\end{list}}}
\newcommand{\sref}[1]{\S{}\ref{#1}}
\newcommand{\pair}[2]{\ensuremath{\langle{#1},{#2}\rangle}\xspace}
\newcommand{\tuple}[1]{\ensuremath{\langle{#1}\rangle}\xspace}
\newcommand{\set}[1]{\ensuremath{\{{#1}\}}\xspace}
\newcommand{\imp}{\ensuremath{\rightarrow}\xspace}
\newcommand{\limp}{\ensuremath{\leftarrow}\xspace}
\renewcommand{\iff}{\ensuremath{\leftrightarrow}\xspace}
\newcommand{\defas}{\ensuremath{\stackrel{\text{\scalebox{.7}{def}}}{=}}\xspace}
\newcommand{\pos}{\phantom{\neg}}
\newcommand\cala{\ensuremath{\mathcal{A}}\xspace}
\newcommand\calt{\ensuremath{\mathcal{T}}\xspace}
\newcommand{\topt}{\ensuremath{\top}\xspace}
\newcommand\mysout{\bgroup \markoverwith{{-}}\ULon}
\newcommand\nosout{\bgroup \markoverwith{{ }}\ULon}
\definecolor{mygray}{rgb}{0.90,0.90,0.90}
\definecolor{mywhite}{rgb}{1.00,1.00,1.00}
\newcommand{\B}{\ensuremath{\mathcal{B}}\xspace}
\newcommand{\T}{\ensuremath{\mathcal{T}}\xspace}
\newcommand{\smt}{SMT\xspace}
\newcommand{\smtt}{\ensuremath{\text{SMT}(\T)}\xspace}
\newcommand{\smttt}[1]{\ensuremath{\text{SMT}(#1)}\xspace}
\newcommand{\euf}{\ensuremath{\mathcal{EUF}}\xspace}
\newcommand{\larat}{\ensuremath{\mathcal{LA}(\mathbb{Q})}\xspace}
\newcommand{\laint}{\ensuremath{\mathcal{LA}(\mathbb{Z})}\xspace}
\renewcommand{\larat}{\ensuremath{\mathcal{LRA}}\xspace}
\renewcommand{\laint}{\ensuremath{\mathcal{LIA}}\xspace}
\newcommand{\smtlarat}{\smttt{\larat}}
\newcommand{\pmodels}{\models_p}
\newcommand{\mathsat}{\textsc{MathSAT}\xspace}
\newcommand{\mathsatfive}{\textsc{MathSAT5}\xspace}
\renewcommand{\TODO}[1]{\todo[inline,color=green!40]{{\small{#1}}}}
\renewcommand{\RSTODO}[1]{\todo[inline,color=green!40]{{\small{RS TODO: #1}}}}
\newcommand{\cnfnamed}[2][]{\ensuremath{\mathsf{CNF_{#2}^{#1}}}}
\newcommand{\DeMorganCNF}{\cnfnamed{DM}}
\newcommand{\TseitinCNF}{\cnfnamed{TS}}
\renewcommand{\TseitinCNF}{\cnfnamed{Ts}}
\newcommand{\PlaistedCNF}{\cnfnamed{PG}}
\newcommand{\NNF}[1]{\ensuremath{\mathsf{NNF}(#1)}}
\newcommand{\NNFna}{\ensuremath{\mathsf{NNF}}}
\newcommand{\NNFPlaisted}{\NNFna{}+\PlaistedCNF{}}
\newcommand{\allA}{\ensuremath{\mathbf{A}}\xspace}
\newcommand{\allalpha}{\ensuremath{\boldsymbol{\alpha}}\xspace}
\newcommand{\allB}{\ensuremath{\mathbf{B}}\xspace}
\newcommand{\TA}[1]{\ensuremath{\calt\hspace{-.1cm}\cala(#1)}\xspace}
\newcommand{\TAna}{\ensuremath{\calt\hspace{-.1cm}\cala}\xspace}
\newcommand{\TTA}[1]{\ensuremath{\calt\hspace{-.1cm}\calt\hspace{-.1cm}\cala(#1)}\xspace}
\newcommand{\muA}{\ensuremath{\mu^\allA}\xspace}
\newcommand{\mualpha}{\ensuremath{\mu^{\allalpha}}\xspace}
\newcommand{\etaA}{\ensuremath{\eta^\allA}\xspace}
\newcommand{\muAprime}{\ensuremath{\muA{}'}\xspace}
\newcommand{\muB}{\ensuremath{\mu^\allB}\xspace}
\newcommand{\etaB}{\ensuremath{\eta^\allB}\xspace}
\newcommand{\residual}[2]{\ensuremath{#1|_{#2}}\xspace}
\newcommand{\exdone}{\ensuremath{\hfill\diamond}}
\newcommand{\vi}{\ensuremath{\varphi}}
\newcommand{\vicnf}{\ensuremath{\psi}}
\newcommand{\vicnfts}{\ensuremath{\TseitinCNF(\vi)}}
\newcommand{\vicnfpg}{\ensuremath{\PlaistedCNF(\vi)}}
\newcommand{\vinnfcnfpg}{\ensuremath{\PlaistedCNF(\NNF{\vi})}}
\newcommand{\any}{\ensuremath{*}}
\newcommand{\minus}{\scalebox{0.5}[0.5]{\(-\)}}
\newcommand{\plus}{\scalebox{0.5}[0.5]{\(+\)}}
\newcommand{\poslab}[1]{\ensuremath{#1^{\plus}}}
\newcommand{\neglab}[1]{\ensuremath{#1^{\minus}}}
\newcommand{\eqcomment}[1]{\textcolor{darkgreen}{//#1}}
\newcommand{\Iff}{\ensuremath{\Leftrightarrow}\xspace}
\newcommand{\eqind}{\ensuremath{\stackrel{\text{\scalebox{.7}{ind}}}{=}}\xspace}
\newcommand{\Iffind}{\ensuremath{\stackrel{\text{\scalebox{.7}{ind}}}{\Iff}}\xspace}
\renewcommand{\B}{\ensuremath{\mathbb{B}}\xspace}
\newcommand{\pequiv}{\equiv_p}
\newtheorem{fact}{Fact}
\newtheorem{theorem}{Theorem}
\newtheorem{lemma}{Lemma}
\newtheorem{example}{Example}
\newtheorem{remark}{Remark}
\def\newplaintheorem#1#2{%
\newtheorem{#1plain}{#2}%
\newenvironment{#1}{\begin{#1plain} \em}{\end{#1plain}}}
\crefname{generalremark}{General Remark}{General Remarks}
\newcommand{\tabularallsat}{\textsc{TabularAllSAT}\xspace}
\newcommand{\tabularallsmt}{\textsc{TabularAllSMT}\xspace}
\newcommand{\df}{\textsc{d4}}
\newcommand{\decdnnf}{\textsc{model-graph}\xspace}
\newcommand{\dfdecdnnf}{\textsc{\df+\decdnnf}\xspace}
\begin{document}

\title{On CNF Conversion for SAT
  and SMT
  Enumeration}

\author{\name Gabriele Masina \email gabriele.masina@unitn.it \\
  \name Giuseppe Spallitta \email giuseppe.spallitta@unitn.it \\
  \name Roberto Sebastiani \email roberto.sebastiani@unitn.it \\
  \addr DISI, University of Trento,
  Via Sommarive 9, 38123 Povo, Trento, Italy
}


\maketitle

\begin{abstract}
  Modern SAT 
  and SMT 
  solvers are designed to handle problems expressed in
  Conjunctive Normal Form (CNF) so that non-CNF problems must be
  CNF-ized upfront, typically by using variants of either Tseitin
  or Plaisted and Greenbaum transformations.
  When passing from plain solving to enumeration, however, the capability of
  producing partial satisfying assignments that are as small as possible becomes
  crucial, which raises the question of whether such CNF encodings are also
  effective for enumeration.

  In this paper, we investigate both theoretically and empirically the
  effectiveness of CNF conversions for
  SAT and SMT enumeration. On the negative side, we show that: (i)~Tseitin
  transformation prevents the solver from producing short partial assignments,
  thus seriously affecting the effectiveness of enumeration; (ii)~Plaisted and
  Greenbaum transformation overcomes this problem only in part. On the positive
  side, we prove theoretically and we show empirically that combining Plaisted
  and Greenbaum transformation with NNF preprocessing upfront \mbox{---which} is
  typically not used in solving--- can fully overcome the problem and can
  drastically reduce both the number of partial assignments and the execution
  time. \ignore{This analysis is validated by an experimental evaluation on
    non-CNF problems originating from both synthetic and real-world-inspired
    applications.}
\end{abstract}

\section{Introduction}%
State-of-the-art SAT and SMT solvers deal very efficiently with formulas
expressed in Conjunctive Normal Form (CNF). In real-world scenarios, however,
it is common for problems to be expressed as non-CNF formulas. Hence, these
problems are converted into CNF before being processed by the solver. This
conversion is generally done by using variants of
the Tseitin~\cite{tseitinComplexityDerivationPropositional1983} or
the Plaisted and Greenbaum~\cite{plaistedStructurepreservingClauseForm1986} transformations, which
generate a linear-size equisatisfiable CNF formula by labeling sub-formulas
with fresh Boolean atoms. These transformations can also be employed for SAT
and SMT enumeration (AllSAT and AllSMT, respectively), by projecting the truth
assignments
onto the original atoms only.

When passing from plain \emph{solving} to \emph{enumeration}, however, the
capability of enumerating \emph{partial} satisfying assignments that are as
small as possible is crucial, because each prevents from enumerating a number
of total assignments that is exponential w.r.t.\ the number of unassigned
atoms. This raises the question of whether CNF encodings conceived for solving
are also effective for enumeration.
To the best of our knowledge, however, no research has yet been published to
analyze how the different CNF encodings may affect the effectiveness of the
solvers for AllSAT and AllSMT.

\paragraph*{Contributions.}
In this paper, we investigate, both theoretically and empirically, the effectiveness of different CNF transformations for 
SAT and SMT 
enumeration, %
both in the disjoint and non-disjoint cases. 
The contribution of this paper is twofold.

First, on the negative side, we show that the commonly-employed CNF
transformations for solving are not suitable for enumeration.\@ In particular,
we notice that the Tseitin transformation introduces top-level label
definitions for sub-formulas with double implications, which need to be
satisfied as well, and thus prevent the solver from producing short partial
assignments. We also notice that the Plaisted and Greenbaum transformation
solves this problem only in part by labeling sub-formulas only with one-way
implications if they occur with single polarity, but it has similar issues to
the Tseitin transformation when sub-formulas occur with both polarities.

Second, on the positive side, we prove theoretically and we show empirically
that converting the formula into Negation Normal Form (NNF) before applying the
Plaisted and Greenbaum transformation can fix the problem and drastically
improve the effectiveness of the enumeration process by up to orders of
magnitude.

This analysis is confirmed by an experimental evaluation of non-CNF problems
originating from both synthetic and real-world-inspired applications.
The results confirm the theoretical analysis, showing that the proposed
combination of NNF with the Plaisted and Greenbaum CNF allows for a
drastic reduction in both the number of partial assignments and the execution
time. 



\paragraph*{Previous works.}
A preliminary 
version of this paper was presented at SAT 2023~\cite{masinaCNFConversionDisjoint2023}.
In this paper, we present the following novel contributions:
\begin{itemize}
  \item we formalize the main claim of the paper and formally prove it
        (\Cref{th:existsetaB} in \sref{sec:solution});
  \item we extend the analysis to non-disjoint AllSAT and disjoint and non-disjoint
        AllSMT;
  \item we extend the empirical evaluation to a much broader set of benchmarks,
        including also non-disjoint AllSAT and disjoint and non-disjoint AllSMT, which
        confirm the theoretical results. We extend the evaluation by using also the
        novel AllSAT and AllSMT enumerators \tabularallsat{} and \tabularallsmt{},
        obtaining similar results as with \mathsat{}. Moreover, we extend the timeout
        of each job-pair from 1200s to 3600s, providing thus a more informative
        comparison;
  \item we present a much more detailed related work section.
\end{itemize}
\paragraph*{Organization.}
The paper is organized as follows. In~\sref{sec:background} we introduce the theoretical background necessary to understand the rest of the paper. In~\sref{sec:problem} we analyze the problem of the classical CNF transformations when used for AllSAT
and AllSMT.\@
In~\sref{sec:solution} we propose one possible solution, whose effectiveness is
evaluated on both synthetic and real-world-inspired benchmarks
in~\sref{sec:experiments}. The related work is presented
in~\sref{sec:related-work}. We conclude the paper in~\sref{sec:conclusions},
drawing some final remarks and indicating possible future work. 

\section{Background}%
\label{sec:background}
This section introduces the notation and the theoretical background necessary
to understand the content of this paper.
We assume the reader is familiar with the basic syntax, semantics, and results
of propositional and first-order logics.\@ We briefly summarize the main
concepts and results of SAT and SMT, and the fundamental ideas behind SAT and
SMT enumeration and projected enumeration implemented by modern AllSAT and
AllSMT solvers.

\subsection{
    SAT and SAT Modulo Theories}
\label{sec:background:propositional-logic}

\paragraph*{Notation.}
In the paper, we adopt the following conventions. We refer to Boolean atoms
with capital letters, such as $A$ and $B$,
and to first-order atoms (including Boolean atoms), with Greek letters, e.g.,
$\alpha$.\@
The symbols $\allA\defas\set{A_1, \dots, A_N }$ and $\allB\defas\set{B_1,
        \dots, B_M}$ denote disjoint sets of Boolean atoms,
and $\allalpha{}\defas\set{\alpha_1,\dots,\alpha_k}$ denotes a set of
first-order (including Boolean) atoms. 
Propositional
and first-order
formulas are referred to with Greek letters $\vi, \psi$.
We write $\vi(\allalpha)$ to denote that $\allalpha$ is the set of atoms
occurring in $\vi$.
We denote Boolean constants by $\B\defas\set{\top, \bot}$. Total truth
assignments are denoted by $\eta$, whereas partial truth assignments are
denoted by $\mu$, possibly annotated with superscripts. 

\paragraph*{
    Propositional Satisfiability.}
A \emph{propositional} (also \emph{Boolean}) \emph{formula} $\vi$ can be defined recursively as follows. The Boolean constants $\top$ and $\bot$ are formulas; a Boolean atom $A$ and its negation $\neg A$ are formulas, also referred to as \emph{literals}; a connection of two formulas by one of the Boolean connectives $\wedge, \vee, \imp, \iff$ is a formula.
A disjunction $(\vee)$ of literals is called a \emph{clause}. 
\emph{Propositional satisfiability (SAT)} is the problem of deciding the satisfiability of a propositional formula, i.e., if there is a way to assign truth values to the atoms such that the formula evaluates to $\top$.
We refer to standard literature (e.g.,~\cite{biereHandbookSatisfiability2021})
for details.
%

\paragraph*{Satisfiability Modulo Theories.}
As it is standard in most \smt{} literature, we restrict to quantifier-free
first-order formulas. A first-order term is either a variable, a constant
symbol, or a combination of terms using function symbols. A first-order atom is
a predicate symbol applied to a tuple of terms (Boolean atoms can be viewed as
zero-arity predicates). A first-order formula is either a first-order atom, or
a connection of two formulas by one of the Boolean connectives. %
A first-order theory \T{} is a (possibly infinite) set of first-order formulas,
that provides an intended interpretation of constant, function, and predicate
symbols. Examples of theories of practical interest are those of equality and
uninterpreted functions (\euf{}), of linear arithmetic over integer (\laint{})
or real numbers (\larat{}), and combinations thereof. We refer to formulas and
atoms over symbols defined by \T{} as \T-formulas and \T-atoms, respectively.

\emph{Satisfiability Modulo the Theory $\T$, also $\smtt$,} is the problem of deciding the satisfiability of a first-order formula with respect to some background theory \T{}.
A formula $\vi$ is \emph{\T{}-satisfiable}
if $\vi\wedge\T{}$ is satisfiable in the first-order sense. Otherwise, it is
\emph{\T{}-unsatisfiable}. 
%
We refer to standard literature
(e.g.,~\cite{barrettSatisfiabilityModuloTheories2021}) for details.

Hereafter, unless otherwise indicated, by \emph{formulas} we mean both Boolean
and \T-formulas, and by \emph{atoms} we mean both Boolean and \T-atoms, for a
generic theory \T{}.
We assume that formulas are internally represented as single-rooted directed acyclic
graphs (DAGs) 
where internal nodes are labeled
with Boolean connectives, and leaves are labeled with literals or
atoms. (Some authors call them "circuits", e.g.~\cite{friedAllSATCombinationalCircuits2023,friedEntailingGeneralizationBoosts2024}.)
In this way, multiple occurrences of the same sub-formula in one formula are
represented by only one shared sub-DAG. The \emph{size} of a formula is the
number of nodes and arcs of its DAG representation.



\paragraph*{Total and partial truth assignments.}
Given a set $\allalpha$ of atoms, a \emph{total truth assignment} is a total
map $\eta: \allalpha\longmapsto\B$. A \emph{partial truth assignment} is a
partial map $\mu: \allalpha\longmapsto\B$.
Notice that a partial truth assignment represents (aka ``covers'') $2^K$ total
truth assignments extending it, where $K$ is the number of unassigned atoms
in $\mu$.
We often represent a truth assignment either as a set, s.t.\
$\mu\defas\set{\alpha\ |\ \mu(\alpha)=\top}\cup\set{\neg \alpha\ |\
        \mu(\alpha)=\bot}$, or as a conjunction of literals, s.t.\
$\mu\defas\bigwedge_{\mu(\alpha)=\top}{\alpha}\wedge\bigwedge_{\mu(\alpha)=\bot}{\neg
        \alpha}$. If $\mu_1\subseteq\mu_2$ \resp{$\mu_1\subset\mu_2$} we say that
$\mu_1$ is a \emph{sub-assignment} \resp{\emph{strict sub-assignment}} of
$\mu_2$ and that $\mu_2$ is a \emph{super-assignment} \resp{\emph{strict
        super-assignment}} of $\mu_1$. We denote with $\residual{\vi}{\mu}$ the
\emph{residual of $\vi$ under $\mu$}, i.e.\ the formula obtained by
substituting in $\vi$ each $\alpha_i\in\allalpha$ with $\mu(\alpha_i)$, and by
recursively applying the standard propagation rules of truth values through
Boolean operators.

Given a set $\allalpha$ of atoms and a formula $\vi(\allalpha)$, we say that a
\emph{[partial or total] truth assignment $\mu:\allalpha\longmapsto\B$
    propositionally satisfies} $\vi$, denoted as $\mu\pmodels\vi$, iff
$\residual{\vi}{\mu}=\top$.~\footnote{ The definition of satisfiability by
    partial assignment may present some ambiguities for non-CNF and
    existentially-quantified
    formulas~\cite{sebastianiAreYouSatisfied2020,mohleFourFlavorsEntailment2020,sebastianiEntailmentVsVerification2025}.
    Here we adopt the above definition because it is the easiest to implement, the
    most efficient to compute, and it is the one typically used by state-of-the-art
    SAT solvers. {We refer to~\cite{sebastianiAreYouSatisfied2020,mohleFourFlavorsEntailment2020,sebastianiEntailmentVsVerification2025} for
            details.} }
            
A partial truth assignment $\mu$ is \emph{minimal} for $\vi$ iff
$\mu\pmodels\vi$ and every strict sub-assignment $\mu{}'\subset\mu$ is such
that $\mu{}'\not\pmodels\vi$.
A Boolean formula is satisfiable iff there exists a truth assignment
propositionally satisfying it. A \T-formula is \T-satisfiable iff there exists
a \T-satisfiable truth assignment propositionally satisfying it. 

Given two disjoint sets $\allalpha, \allB$ of atoms, and a
formula $\vicnf(\allalpha\cup\allB)$, we say that a \emph{[partial or total]
    truth assignment $\mualpha:\allalpha\longmapsto\B$ propositionally
    satisfies} $\exists\allB.\vicnf$ iff there exists a total truth assignment
$\etaB:\allB\longmapsto\B$ such that $\mualpha\cup\etaB:
    \allalpha\cup\allB\longmapsto\B$ propositionally satisfies $\vicnf$. 

Two formulas $\vi{}$ and $\psi{}$ are \emph{propositionally equivalent},
denoted as $\vi\pequiv\psi$, iff every total truth assignment propositionally
satisfying $\vi$ also propositionally satisfies $\psi$, and vice versa. 



\paragraph*{Negation Normal Form.}
Given a formula $\vi$, a sub-formula occurs with \emph{positive}
\resp{\emph{negative}} \emph{polarity} (also \emph{positively}
\resp{\emph{negatively}}) if it occurs under an even \resp{odd} number of
nested negations. Specifically, $\vi$ occurs positively in $\vi$; if
$\neg\vi_1$ occurs positively \resp{negatively} in $\vi$, then $\vi_1$ occurs
negatively \resp{positively} in $\vi$; if $\vi_1\wedge \vi_2$ or
$\vi_1\vee\vi_2$ occur positively \resp{negatively} in $\vi$, then $\vi_1$ and
$\vi_2$ occur positively \resp{negatively} in $\vi$; if $\vi_1\imp\vi_2$ occurs
positively \resp{negatively} in $\vi$, then $\vi_1$ occurs negatively
\resp{positively} and $\vi_2$ occurs positively \resp{negatively} in $\vi$; if
$\vi_1\iff\vi_2$ occurs in $\vi$, 
then $\vi_1$ and $\vi_2$ both occur positively and negatively. 

A formula is in \emph{Negation Normal Form (NNF)} iff it is given only by the
recursive applications of $\wedge$ and $\vee$ to literals, i.e., iff all its
sub-formulas occur positively, except for literals. 
A formula can be converted into a propositionally-equivalent NNF formula by
recursively rewriting implications $(\vi_1\imp\vi_2)$ as $(\neg\vi_1\vee\vi_2)$
and equivalences $(\vi_1\iff\vi_2)$ as
$(\neg\vi_1\vee\vi_2)\wedge(\vi_1\vee\neg\vi_2)$, and then by recursively
``pushing down'' the negations: $\neg(\vi_1\vee\vi_2)$ as
$(\neg\vi_1\wedge\neg\vi_2)$, $\neg(\vi_1\wedge\vi_2)$ as
$(\neg\vi_1\vee\neg\vi_2)$ and $\neg\neg\vi_1$ as $\vi_1$. 
The following fact holds.
\begin{lemma}%
    \label{th:nnfdaglinear}
    Let \vi{} be a 
    formula and \NNF{\vi} be the NNF
    formula resulting from converting \vi{} into NNF as described
    above. Then the size of \NNF{\vi} is linear w.r.t.\ the size of \vi.
\end{lemma}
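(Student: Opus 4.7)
The plan is to exploit the DAG representation. The only rule that seems to pose a threat of blow-up is the equivalence rewriting $(\vi_1 \iff \vi_2) \rightsquigarrow (\neg\vi_1 \vee \vi_2) \wedge (\vi_1 \vee \neg\vi_2)$, in which both operands are duplicated; under naive tree-based rewriting this would indeed be exponential. The key observation is that, since we maintain the formula as a DAG, duplicated occurrences of a subformula are \emph{shared}, so what really matters is how many distinct NNF nodes can arise from each original node. I claim that from every node of the DAG of $\vi$, at most two NNF nodes are produced: one representing the NNF of the subformula, and one representing the NNF of its negation.

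To make this precise, I would associate to each subformula $\psi$ of $\vi$ two NNF subformulas $\NNFna^{+}(\psi)$ and $\NNFna^{-}(\psi)$, defined inductively so that $\NNFna^{+}(\psi)\pequiv\psi$ and $\NNFna^{-}(\psi)\pequiv\neg\psi$, both in NNF. The base case is a literal $\ell$, where $\NNFna^{+}(\ell):=\ell$ and $\NNFna^{-}(\ell):=\overline{\ell}$. The inductive cases follow the rewriting rules in the excerpt: e.g.\ for $\psi=\psi_1\wedge\psi_2$ set
\[
\NNFna^{+}(\psi):=\NNFna^{+}(\psi_1)\wedge\NNFna^{+}(\psi_2),\qquad
\NNFna^{-}(\psi):=\NNFna^{-}(\psi_1)\vee\NNFna^{-}(\psi_2),
\]
dually for $\vee$; for $\psi=\psi_1\imp\psi_2$ use $\NNFna^{+}(\psi):=\NNFna^{-}(\psi_1)\vee\NNFna^{+}(\psi_2)$ and dually; and for $\psi=\psi_1\iff\psi_2$ use the conjunction of the two implications, combining the $+/-$ versions of $\psi_1,\psi_2$. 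Crucially, each of $\NNFna^{+}(\psi)$ and $\NNFna^{-}(\psi)$ is built from a bounded number of already-constructed nodes $\NNFna^{\pm}(\psi_i)$ on immediate subformulas, which can be shared in the DAG.

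Then I would argue the size bound directly: if $\vi$ has $n$ distinct subformula nodes in its DAG, the above construction creates at most $2n$ distinct NNF nodes (the pair $\NNFna^{+}(\psi),\NNFna^{-}(\psi)$ for each $\psi$), each with a constant out-degree matching (up to a factor) the arity of the connective it replaces. Hence both the number of nodes and the number of arcs of the NNF DAG are $O(n)$, i.e.\ linear in the size of $\vi$. Finally observe that $\NNF{\vi}=\NNFna^{+}(\vi)$ is exactly the formula produced by the rewriting procedure described before the lemma (up to the elimination of double negations, which can only reduce the size), giving the claim.

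The only delicate point is making sure that the ``push down the negation'' rules, when applied bottom-up on a shared DAG, do not force one to reinstantiate an already-computed subformula with a different polarity pattern inside it. The device of computing both $\NNFna^{+}(\psi)$ and $\NNFna^{-}(\psi)$ at each node sidesteps this: whatever polarity is requested from above is already available as a shared child, so no recomputation and no duplication beyond the factor of two is needed. I expect this ``$2n$ nodes suffice'' accounting to be the main -- and essentially the only -- substantive step; the rest is a routine structural induction.
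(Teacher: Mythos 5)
Your proposal is correct and follows essentially the same route as the paper's own proof: the paper also reasons on the 2-root DAG for the pair $\langle\NNF{\vi},\NNF{\neg\vi}\rangle$ (your $\NNFna^{+}(\psi),\NNFna^{-}(\psi)$), performs the same structural induction over the connectives, and accounts for a constant number of extra nodes and arcs per connective (reducing $\vee$ and $\imp$ to the $\wedge$ case). The paper merely makes the constants explicit via size recurrences, whereas you state the ``at most two NNF nodes per original node'' bound abstractly; the substance is identical.
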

\noindent(Although we believe this fact is well-known,
we provide a formal proof in~\sref{sec:proofnnfdaglinear}.)
Intuitively, we only need at most 2 nodes for each sub-formula $\vi_i$ of
$\vi$, representing $\NNF{\vi_i}$ and $\NNF{\neg\vi_i}$ for positive and
negative occurrences of $\vi_i$ respectively. These nodes are shared among up
to exponentially-many branches generated by expanding the nested iffs.~%
Notice that \Cref{th:nnfdaglinear} holds {\em because a DAG representation of
        NNF is used}, so that we need at most two nodes for each sub-formula, one for
each polarity. (If instead one represented the NNF formula as a tree, then the
NNF representation would blow up exponentially in the number of nested $\iff$'s
in the original formula, because each sub-formula $(\vi_i \iff \vi_j)$ is
recursively expanded into $(\neg\vi_i \vee \vi_j)\wedge (\vi_i \vee
    \neg\vi_j)$.)

We have the following fact, for which we provide a complete proof
in~\sref{sec:proofmunnf}.
\begin{lemma}%
    \label{th:munnf}
    Consider a formula $\vi$, 
    and a partial truth assignment $\mu$. 
    Then $\residual{\vi}{\mu}=v$ iff $\residual{\NNF{\vi}}{\mu}=v$, for
    $v\in\set{\top,\bot}$.
\end{lemma}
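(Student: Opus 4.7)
The plan is to prove the lemma by structural induction on $\vi$, but to strengthen the inductive hypothesis so that it handles both polarities of each sub-formula at once. This strengthening is required because the NNF rewrite rules for $\imp$ and $\iff$ produce terms of the form $\NNF{\neg\vi_i}$ on sub-formulas, so the IH applied to a sub-formula $\vi_i$ must tell us something about both $\NNF{\vi_i}$ and $\NNF{\neg\vi_i}$. Concretely, I would prove simultaneously, by induction on the structure of $\vi$, that for every partial assignment $\mu$ and every $v \in \B$: (i) $\residual{\vi}{\mu}=v$ iff $\residual{\NNF{\vi}}{\mu}=v$, and (ii) $\residual{\neg\vi}{\mu}=v$ iff $\residual{\NNF{\neg\vi}}{\mu}=v$. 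Throughout, (ii) mirrors (i) via the residual identity $\residual{\neg\vi}{\mu}=v$ iff $\residual{\vi}{\mu}=\neg v$ for $v\in\B$, so every positive-polarity step has a symmetric negative-polarity counterpart.

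The base case is when $\vi$ is a Boolean constant or an atom, for which $\NNF{\vi}=\vi$ and $\NNF{\neg\vi}=\neg\vi$, making both (i) and (ii) immediate. For the inductive step, I would consider each top-level connective in turn. The cases $\vi=\neg\vi_1$, $\vi=\vi_1\wedge\vi_2$, $\vi=\vi_1\vee\vi_2$, and $\vi=\vi_1\imp\vi_2$ all follow directly from (a) the defining NNF rewrite rules (e.g.\ $\NNF{\vi_1\wedge\vi_2}=\NNF{\vi_1}\wedge\NNF{\vi_2}$ and $\NNF{\neg(\vi_1\wedge\vi_2)}=\NNF{\neg\vi_1}\vee\NNF{\neg\vi_2}$), (b) the standard residual-propagation rules for $\wedge$ and $\vee$ ($\residual{\vi_1\wedge\vi_2}{\mu}=\top$ iff both $\residual{\vi_i}{\mu}=\top$; $\residual{\vi_1\wedge\vi_2}{\mu}=\bot$ iff some $\residual{\vi_i}{\mu}=\bot$; dual for $\vee$), and (c) the induction hypothesis applied to $\vi_1$ and $\vi_2$ in both polarities. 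A short case split on whether each $\residual{\vi_i}{\mu}$ resolves to $\top$, to $\bot$, or to a non-constant formula closes each case.

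The only nontrivial case, and hence the main obstacle, is $\vi=\vi_1\iff\vi_2$, because $\iff$ is not short-circuited by a truth value on a single operand: determining $\residual{\vi_1\iff\vi_2}{\mu}\in\B$ requires both $\residual{\vi_i}{\mu}$ to be in $\B$. I would verify this case by expanding $\NNF{\vi_1\iff\vi_2}=(\NNF{\neg\vi_1}\vee\NNF{\vi_2})\wedge(\NNF{\vi_1}\vee\NNF{\neg\vi_2})$, applying the IH to each of $\vi_1,\vi_2$ in both polarities, and then performing a small case analysis on $(\residual{\vi_1}{\mu},\residual{\vi_2}{\mu})$. For instance, if $\residual{\vi_1}{\mu}=\top$ and $\residual{\vi_2}{\mu}=\bot$, then by IH $\residual{\NNF{\neg\vi_1}}{\mu}=\bot$ and $\residual{\NNF{\vi_2}}{\mu}=\bot$, so the first conjunct of the NNF expansion evaluates to $\bot$, and the whole expression evaluates to $\bot$, matching $\residual{\vi_1\iff\vi_2}{\mu}=\bot$. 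The other three cases in $\B\times\B$ are analogous; when at least one of the two residuals is non-constant, one checks that both $\residual{\vi_1\iff\vi_2}{\mu}$ and $\residual{\NNF{\vi_1\iff\vi_2}}{\mu}$ fail to lie in $\B$, so neither side of the ``iff'' in the lemma is satisfied. The case for $\NNF{\neg(\vi_1\iff\vi_2)}$ is handled by the symmetric dual argument.
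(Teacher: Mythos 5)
Your proposal is correct and follows essentially the same route as the paper's own proof: the paper likewise proves a strengthened claim by structural induction, tracking both polarities simultaneously (phrased there as the pair equality $\pair{\residual{\vi}{\mu}}{\residual{\neg\vi}{\mu}}=\pair{\residual{\NNF{\vi}}{\mu}}{\residual{\NNF{\neg\vi}}{\mu}}$ under a three-valued semantics for residuals, which is just a repackaging of your clauses (i) and (ii)), with the $\iff$ case handled by the same expansion and case analysis. The only cosmetic difference is that the paper reduces the $\vee$ and $\imp$ cases to the $\wedge$ case via $\NNF{\vi_1\vee\vi_2}=\NNF{\neg(\neg\vi_1\wedge\neg\vi_2)}$, whereas you treat them directly.
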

Notice that \Cref{th:munnf} \emph{is not} a direct consequence of the fact that \NNF{\vi} is equivalence-preserving, because the above notion of satisfiability by partial assignment is such that if $\vi_1\pequiv\vi_2$, then $\mu\pmodels\vi_1$ \emph{does not} imply that $\mu\pmodels\vi_2$~\cite{sebastianiAreYouSatisfied2020,mohleFourFlavorsEntailment2020,sebastianiEntailmentVsVerification2025}. Consider, e.g., $\vi_1\defas(A\vee{}B)\wedge(A\vee\neg{}B)$ and $\vi_2\defas(A\wedge{}B)\vee(A\wedge{\neg{}B})$, and the partial assignment $\mu\defas\set{A}$.
Although $\vi_1\pequiv\vi_2$, we have that $\residual{\vi_1}{\mu}=\top$, but $\residual{\vi_2}{\mu}=B\vee\neg B$, which, although \emph{logically} equivalent to $\top{}$, is \emph{syntactically} different from it.

\paragraph*{CNF Transformations.}%
\label{sec:bg:cnf}
A 
formula is in \emph{Conjunctive Normal Form (CNF)} iff it is a conjunction 
of clauses. 
Numerous CNF transformation procedures, commonly referred to as CNF-izations,
have been proposed in the literature.
We summarize the three most frequently employed techniques. 

The \emph{Classic CNF-ization} (\DeMorganCNF{}) converts a formula into a
propositionally-equivalent formula in CNF by applying DeMorgan's rules. First,
it converts the formula into NNF.\@ Second, it recursively rewrites
sub-formulas $\vi_1\vee(\vi_2\wedge\vi_3)$ as
$(\vi_1\vee\vi_2)\wedge(\vi_1\vee\vi_3)$ to distribute $\vee$ over $\wedge$,
until the formula is in CNF.\@ The principal limitation of this transformation
lies in the possible exponential growth of the resulting formula compared to
the original (e.g., when the formula is a disjunction of conjunctions of
sub-formulas), making it unsuitable for modern SAT and SMT solvers~(see e.g.,
\cite{prestwichCNFEncodings2021}). 

The \emph{Tseitin CNF-ization}
(\TseitinCNF{})~\cite{tseitinComplexityDerivationPropositional1983} avoids this
exponential blow-up by labeling each sub-formula $\vi_i$ with a fresh Boolean
atom $B_i$, which is used as a placeholder for the sub-formula. Specifically,
it consists in applying recursively bottom-up the rewriting rule
$\vi\Longrightarrow\vi[\vi_i|B_i] \wedge \DeMorganCNF(B_i\iff\vi_i)$ until the
resulting formula is in CNF, where $\vi[\vi_i|B_i]$ is the formula obtained by
substituting in $\vi$ every occurrence of $\vi_i$ with $B_i$.

The \emph{Plaisted and Greenbaum CNF-ization}
(\PlaistedCNF{})~\cite{plaistedStructurepreservingClauseForm1986} is a variant
of the \TseitinCNF{} that exploits the polarity of sub-formulas to reduce the
number of clauses of the final formula. If a sub-formula $\vi_i$
appears only with positive \resp{negative} polarity, then it can be labeled
with a one-way implication as $\DeMorganCNF(B_i\imp\vi_i)$
\resp{$\DeMorganCNF(B_i\limp\vi_i)$}; if $\vi_i$ occurs with both polarities,
then it is labeled as $\DeMorganCNF(B_i\iff\vi_i)$, as
with \TseitinCNF{}.

With both \TseitinCNF{} and \PlaistedCNF{}, due to the introduction of the
label variables, the final formula does not preserve the propositional
equivalence with the original formula but only the equisatisfiability.
Moreover, they also have a stronger property. If $\vi(\allalpha)$ is a non-CNF
formula and $\vicnf(\allalpha\cup\allB)$ is either the \TseitinCNF{} or the
\PlaistedCNF{} encoding of $\vi$, where $\allB$ are the fresh Boolean atoms
introduced by the transformation, then
$\vi(\allalpha)\pequiv\exists\allB.\vicnf(\allalpha\cup\allB)$. 

Most of the modern SAT and SMT solvers do not deal directly with non-CNF
formulas, rather they convert them into CNF by using either \TseitinCNF{} or
\PlaistedCNF{},
and then find truth assignments propositionally satisfying $\vi(\allalpha)$ by
finding truth assignments propositionally satisfying
$\exists\allB.\vicnf(\allalpha\cup\allB)$. 



\subsection{AllSAT,
    AllSMT,
    Projected AllSAT
    and Projected AllSMT}
\label{sec:background:allsat}


AllSAT is the task of enumerating all the truth assignments propositionally
satisfying a Boolean formula.\@ 
The task can be found in the literature in two versions: \emph{disjoint}
AllSAT, where the assignments are required to be pairwise mutually
inconsistent, and \emph{non-disjoint} AllSAT, where they are not. A
generalization to the \smtt{} case is All\smtt{}, defined as the task of
enumerating all the \T{}{\em -satisfiable} truth assignments propositionally
satisfying a \smtt{} formula. Also in this case, both disjoint or non-disjoint
All\smtt{} are possible. 

In the following, for the sake of compactness, we present definitions and
algorithms referring to AllSAT. An extension to All\smtt{} can be obtained
straightforwardly, by substituting ``\allA'' with ``\allalpha'' and ``truth
assignments'' with ``\T{}-satisfiable truth assignments''.

Given a formula $\vi$, we denote with
$\TTA{\vi}\defas\set{\eta_1,\dots,\eta_j\dots,\eta_M}$ the set of all total
truth assignments propositionally satisfying $\vi$. 
We denote with $\TA{\vi}\defas\set{\mu_1,\dots,\mu_i\dots,\mu_N}$ a set of
partial truth assignments propositionally satisfying $\vi$ s.t.:
\begin{enumerate}[(a)]
    \item\label{item:ta:complete} every $\eta\in\TTA{\vi}$ is a super-assignment of some $\mu\in\TA{\vi}$;
\end{enumerate}
and, only in the \emph{disjoint} case:
\begin{enumerate}[(b)]
    \item\label{item:ta:disjoint} every pair $\mu_i,\mu_j\in\TA{\vi}$ assigns opposite truth value to at least one atom.
\end{enumerate}

Notice that, whereas $\TTA{\vi}$ is unique, multiple $\TA{\vi}$s are admissible
for the same formula $\vi$, including $\TTA{\vi}$. AllSAT is the task of
enumerating either $\TTA{\vi}$ or a set $\TA{\vi}$. Typically, AllSAT solvers
aim at enumerating a set $\TA{\vi}$ which is as small as possible, since every
partial truth assignment prevents from enumerating a number of total truth
assignments that is exponential w.r.t.\ the number of unassigned atoms, so that
to save large amounts of computational space and time. 

The enumeration of a $\TA{\vi}$ for a non-CNF formula $\vi$ is typically
implemented by first converting it into CNF, and then by enumerating its
satisfying assignments by means of \emph{Projected AllSAT}. 
Specifically, let $\vi(\allA)$ be a non-CNF formula and let
$\vicnf(\allA\cup\allB)$ be the result of applying either \TseitinCNF{} or
\PlaistedCNF{} to $\vi$, where $\allB$ is the set of Boolean atoms introduced
by either transformation. $\TA{\vi}$ is enumerated via Projected AllSAT as
$\TA{\exists\allB.\vicnf}$, i.e.\ as a set of (partial) truth assignments over
$\allA$ that can be extended to total truth assignments satisfying $\vicnf$
over $\allA\cup\allB$. We refer the reader to the general schema described
in~\cite{lahiriSMTTechniquesFast2006}, which we briefly recap here for
completeness of narration. 

Let $\vicnf(\allA\cup\allB)$ be a CNF formula over two disjoint sets
$\allA, \allB$ of atoms, where $\allA$ is a set of
\emph{relevant atoms} s.t.\ we want to enumerate a
$\TA{\exists\allB.\vicnf}$.
For disjoint AllSAT with minimal models, the solver enumerates one-by-one
partial truth assignments $\mu_1,\dots,\mu_i,\dots\mu_N$ which comply with
point~\ref{item:ta:complete} above where each $\mu_i\defas\muA_i\cup\etaB_i$ is
s.t.:
\begin{enumerate}
    \item\label{item:mumodelsvi} (\emph{satisfiability}) $\mu_i\pmodels\vicnf$;
    \item\label{item:musdisjoint} (\emph{disjointness}) for each
          $j<i$, $\muA_i,\muA_j$ assign opposite truth values to some atom in $\allA$;
    \item\label{item:muAminimal} (\emph{minimality}) $\muA_i$ is \emph{minimal}, meaning that no literal can be dropped from it without losing properties~\ref{item:mumodelsvi} and~\ref{item:musdisjoint}.
\end{enumerate}
For the non-disjoint case, property~\ref{item:musdisjoint} is omitted, and the
reference to it in~\ref{item:muAminimal} is dropped. If the minimality of
models is not required, property~\ref{item:muAminimal} is omitted. 

An example of a basic projected AllSAT procedure producing minimal models
(implemented e.g.\ in \mathsatfive{}~\cite{mathsat5_tacas13}) works as
follows. At each step $i$, it finds a total truth assignment
$\eta_i\defas\etaA_i\cup\etaB_i$ s.t.\ $\eta_i\pmodels\psi_i$, where
$\psi_i\defas\psi\wedge\bigwedge_{j=1}^{i-1}\neg\muA_j$, and then invokes a
minimization procedure on $\etaA_i$ to compute a partial truth assignment
$\muA_i$ satisfying properties~\ref{item:mumodelsvi},~\ref{item:musdisjoint}
and~\ref{item:muAminimal}. Then, the solver adds the blocking clause
$\neg\muA_i$ ---to ensure property~\ref{item:musdisjoint} for the disjoint
version and to ensure an exhaustive exploration of the solutions space for the
non-disjoint version--- and it continues the search. This process is iterated
until {$\psi_{N+1}$ }is found to be unsatisfiable for some $N$, and the set
$\set{\muA_i}_{i=1}^N$ is returned. 
The minimization procedure consists in iteratively dropping literals one-by-one
from $\etaA_i$, checking if it still satisfies the formula. The outline of this
minimization procedure is shown in \cref{alg:minimize}. Each minimization step
is $O(\#\mathit{clauses}\cdot\#\mathit{vars})$.

\begin{algorithm}[t]
    \begin{algorithmic}[1]
        \caption[A]{{\sc Minimize-Assignment}($\psi_i, \eta_i, \allA$)\\
            \hspace*{\algorithmicindent}\textbf{Input}:
                CNF formula $\psi_i$, total truth assignment $\eta_i$ s.t.\ $\eta_i\pmodels\psi_i$, set of relevant atoms $\allA$:\\
                \hspace*{\algorithmicindent}\null\quad $\psi_i(\allA\cup\allB)\defas\psi\wedge\bigwedge_{j=1}^{i-1}\neg\muA_j$ if disjoint, $\psi_i\defas\psi$ if non-disjoint \\%
                \hspace*{\algorithmicindent}\null\quad $\eta_i = \etaA_i \cup \etaB_i$\\
            \hspace*{\algorithmicindent}\textbf{Output}: minimal partial truth assignment $\muA_i$ s.t.\ $\muA_i\cup\etaB_i\pmodels\psi_i$%
            \label{alg:minimize}}
        \STATE $\muA_i \leftarrow \etaA_i$
        \FOR{$\ell\in\muA_i$}
        \IF{$\residual{\psi_i}{\muA_i\setminus\set{\ell}\ \cup\ \etaB_i} = \top$}
        \STATE $\muA_i \leftarrow \muA_i \setminus \set{\ell}$
        \ENDIF
        \ENDFOR
        \RETURN $\muA_i$
    \end{algorithmic}
\end{algorithm}

Notice that, since we are in the context of {\em projected} AllSAT, the
minimization algorithm only minimizes the relevant atoms in $\allA$, 
and the truth value of existentially quantified variables in $\allB$ is still
used to check the satisfiability of the formula by the current partial
assignment.\@ Moreover, in the disjoint case, to enforce the pairwise
disjointness between the assignments, $\psi_i$ in \cref{alg:minimize} refers to
the original formula conjoined with all current blocking clauses
$\bigwedge_{j=1}^{i-1}\neg\muA_j$, whereas in the non-disjoint case $\psi_i$
refers to the original formula only, allowing for a more effective minimization
while renouncing the disjointness property. 

The procedure reduces to non-projected AllSAT if $\allB=\emptyset$. The same
procedure can be easily generalized to disjoint and non-disjoint All\smtt{},
with the only difference that only \T-satisfiable total truth assignments
$\eta_i$ are considered. 

We stress the fact that the above procedure is reported just as an example, and
that the work described in this paper is agnostic w.r.t.\ the AllSAT or AllSMT
procedure used, provided that it is able to produce \emph{partial} assignments. 



\section{The impact of CNF transformations on 
  Enumeration} 
\label{sec:problem}
In this section, we present a theoretical analysis of the impact of different
CNF-izations on the enumeration of short partial truth assignments. In
particular, we focus on
\TseitinCNF~\cite{tseitinComplexityDerivationPropositional1983} and
\PlaistedCNF~\cite{plaistedStructurepreservingClauseForm1986}. 
We point out how CNF-izing AllSAT problems using these transformations can
introduce unexpected drawbacks for enumeration. In fact, we show that the
resulting encodings can force the enumerator to produce partial assignments
that are larger in size and in number than necessary. 
In our analysis we refer to AllSAT, but it applies to All\smt{} as well by
restricting to theory-satisfiable truth assignments.\@ Moreover, the analysis
applies to both disjoint and non-disjoint enumeration.

\subsection{The impact of Tseitin CNF transformation}%
\label{sec:problem:label}

We show that using the \TseitinCNF{}
transformation~\cite{tseitinComplexityDerivationPropositional1983} can be
problematic for enumeration. 
In particular, we point out a fundamental weakness of \TseitinCNF{}:
%
\begin{fact}%
    \label{fact:tseitin}
    If a partial assignment \muA{} satisfies $\vi$, this does not imply that \muA{} satisfies $\exists
        \allB.\vicnfts$.
\end{fact}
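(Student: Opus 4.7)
The statement is a non-implication (``does not imply''), so the natural strategy is to exhibit a small, explicit counterexample: a formula $\vi$, a partial assignment $\muA$ that suffices for $\vi$, and a proof that no extension of $\muA$ by a total assignment $\etaB$ on the Tseitin labels forces $\residual{\vicnfts}{\muA\cup\etaB}=\top$. A minimal candidate that already exposes the phenomenon is $\vi\defas A_1\vee(A_2\wedge A_3)$ with $\muA\defas\set{A_1}$, since $A_1\vee\dots$ is trivially satisfied by $A_1$ without consulting $A_2, A_3$.

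First I would write down $\vicnfts$ explicitly. Labeling $A_2\wedge A_3$ with a fresh $B_1$ and CNF-izing $B_1\iff(A_2\wedge A_3)$ in the standard way gives
\[
\vicnfts \;=\; (A_1\vee B_1)\;\wedge\;(\neg B_1\vee A_2)\;\wedge\;(\neg B_1\vee A_3)\;\wedge\;(B_1\vee\neg A_2\vee\neg A_3).
\]
Then I would verify $\muA\pmodels\vi$ directly: $\residual{\vi}{\muA}=\top\vee(A_2\wedge A_3)=\top$. This settles the premise.

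Next I would perform the key case analysis on $\etaB$. If $\etaB(B_1)=\top$, the first clause becomes $\top$ but the middle two reduce to $A_2$ and $A_3$, which remain unassigned; hence the residual is $A_2\wedge A_3\neq\top$. If instead $\etaB(B_1)=\bot$, the first three clauses evaluate to $\top$, but the fourth reduces to $\neg A_2\vee\neg A_3\neq\top$. Since $\allB=\set{B_1}$ admits only these two total assignments, no extension yields residual $\top$, so $\muA\not\pmodels\exists\allB.\vicnfts$. This contradicts the implication in the fact, proving it.

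There is no genuine obstacle here; the only thing worth emphasizing in the write-up is the \emph{reason} the counterexample works, because it is exactly the pattern that will be exploited in the rest of the section. Specifically, the Tseitin double implication $B_1\iff(A_2\wedge A_3)$ is a top-level conjunct whose satisfaction is \emph{insensitive to the polarity} with which $A_2\wedge A_3$ occurs in $\vi$: regardless of whether that sub-formula is needed to satisfy $\vi$, any partial assignment must still determine the truth of $B_1$, which in turn requires determining the truth of $A_2\wedge A_3$ (in one direction or the other). I would state this informally right after the counterexample, as it motivates \PlaistedCNF{} and the NNF-based fix discussed later in the paper.
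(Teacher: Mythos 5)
Your proposal is correct and rests on exactly the same mechanism as the paper's argument: the top-level definition clause $B_i\iff\vi_i$ for a sub-formula $\vi_i$ left entirely unassigned by \muA{} cannot be satisfied by choosing a value for $B_i$ alone, so \muA{} fails on $\exists\allB.\vicnfts$. The only difference is presentational ---the paper argues generically over the rewriting step \eqref{eq:rewritingTseitin} and then illustrates with a larger seven-atom example, whereas you exhibit a minimal three-atom counterexample and verify both values of $B_1$ exhaustively, which is a perfectly adequate (indeed cleaner) way to establish a non-implication.
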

\noindent
In fact, we recall that \TseitinCNF{} works by applying recursively the rewriting step (\sref{sec:bg:cnf}):
\begin{eqnarray}
    \label{eq:rewritingTseitin}
    \vi\Longrightarrow\vi[\vi_i|B_i] \wedge (B_i\iff\vi_i)
\end{eqnarray}
\noindent and then by recursively CNF-izing the two conjuncts.
A {\em partial} assignment \muA{} may satisfy a non-CNF formula $\vi(\allA)$
because it does not need to assign a truth value to the atoms in {\em all}
sub-formulas of $\vi$. (E.g., \muA{} can satisfy $\vi\defas\vi_1\vee\vi_2$
without assigning values to the atoms in $\vi_2$ if it satisfies $\vi_1$.)
Consider \eqref{eq:rewritingTseitin} s.t.\ $\vi_i$ is some sub-formula of \vi{}
whose atoms are not assigned by \muA{}. Although \muA{} satisfies $\vi$, \muA{}
does not satisfy $\exists B_i.( \vi[\vi_i|B_i]\wedge (B_i\iff\vi_i))$. In fact,
to satisfy the second conjunct it is necessary to assign some truth value not
only to $B_i$ but also to some of the unassigned atoms in $\vi_i$, so that to
make $\vi_i$ evaluate to the same truth value assigned to $B_i$.


As a consequence of \cref{fact:tseitin}, given $\muA$ satisfying \vi{}, in
order to produce an assignment \muAprime{} satisfying $\exists \allB.\vicnfts$
the enumerator is most often forced to assign other atoms in \allA{}, so that
$\muAprime\supset\muA$. 
Given the fact that the amount of total assignments covered by a partial
assignment decreases exponentially with its length (see
\sref{sec:background:propositional-logic}), the above weakness causes a blow-up
in the number of partial assignments
needed to cover all models. This may drastically affect the effectiveness and
efficiency of the enumeration.

This is illustrated in the following example, where instead of one single short
partial assignment the enumerator is forced to enumerate 9 longer ones.

\begin{example}%
    \label{ex1}
    Consider the propositional formula  over 
    $\allA\defas\set{A_1, A_2, A_3, A_4, A_5, A_6, A_7}$:
    \begin{equation}
        \label{eq:ex1:vi}
        \vi \defas
        \overbrace{(A_1\wedge A_2)}^{B_1}\vee
        \overbrace{(
            \overbrace{(
                \overbrace{(A_3\vee A_4)}^{B_2}\wedge
                \overbrace{(A_5\vee A_6)}^{B_3}
                )}^{B_4}\iff
            A_7
            )}^{B_5}.
    \end{equation}
    \noindent
    $\vi$ is not in CNF, and thus it must be CNF-ized before starting the enumeration process.
    If \TseitinCNF{} is used, then the following CNF formula is obtained:

    \begin{subequations}%
        \label{eq:ex1:vicnf}
        \begin{alignat}{2}
             & \vicnfts\defas\nonumber                                                                                                         \\
             & \enspace(\neg B_1\vee\pos A_1)\wedge(\neg B_1\vee\pos A_2)\wedge(\pos B_1\vee\neg A_1\vee\neg A_2) & \wedge
             & \quad\eqcomment{(B_1\iff (A_1\wedge A_2))}\label{eq:ex1:vicnf:line1}                                                            \\
             & \enspace(\pos B_2\vee\neg A_3)\wedge(\pos B_2\vee\neg A_4)\wedge(\neg B_2\vee\pos A_3\vee\pos A_4) & \wedge
             & \quad\eqcomment{(B_2\iff (A_3\vee A_4))}\label{eq:ex1:vicnf:line2}                                                              \\
             & \enspace(\pos B_3\vee\neg A_5)\wedge(\pos B_3\vee\neg A_6)\wedge(\neg B_3\vee\pos A_5\vee\pos A_6) & \wedge
             & \quad\eqcomment{(B_3\iff (A_5\vee A_6))}\label{eq:ex1:vicnf:line3}                                                              \\
             & \enspace(\neg B_4\vee\pos B_2)\wedge(\neg B_4\vee\pos B_3)\wedge(\pos B_4\vee\neg B_2\vee\neg B_3) & \wedge
             & \quad\eqcomment{(B_4\iff (B_2\wedge B_3))}\label{eq:ex1:vicnf:line4}                                                            \\
             & \enspace(\neg B_5\vee\pos B_4\vee\neg A_7)\wedge(\neg B_5\vee\neg B_4\vee\pos A_7)                 & \wedge
             & \quad\eqcomment{(B_5\iff (B_4\iff A_7))}\label{eq:ex1:vicnf:line5}                                                              \\
             & \enspace(\pos B_5\vee\pos B_4\vee\pos A_7)\wedge(\pos B_5\vee\neg B_4\vee\neg A_7)                 & \wedge\nonumber            
            \\
             & \enspace(\pos B_1\vee\pos B_5)                                                                     & \label{eq:ex1:vicnf:line6}
        \end{alignat}
    \end{subequations}
    where the fresh atoms $\allB\defas\set{B_1, B_2, B_3, B_4, B_5}$ label
    sub-formulas as in~\eqref{eq:ex1:vi}.

    Consider the (minimal) partial truth assignment:
    \begin{equation}
        \label{eq:ex1:muA}
        \muA\defas\set{\neg A_3,\neg A_4,\neg A_7}.
    \end{equation}
    \muA{} satisfies $\vi$ \eqref{eq:ex1:vi}, {even though it does not assign a truth value to the sub-formulas $(A_1\wedge A_2)$ and $(A_5\vee A_6)$.} 
        %
        {Yet, }$\muA$ does not
    satisfy $\exists\allB.\vicnfts$.
    In fact,  {there is no total truth
            assignment $\etaB$ on \allB{} such that
            $\muA\cup\etaB\pmodels\vicnfts$},
    because~\eqref{eq:ex1:vicnf:line1} and~\eqref{eq:ex1:vicnf:line3}
    cannot be satisfied by assigning only variables in $\allB$;
    rather, it is necessary to further assign  at least one atom in
    $\set{A_1,A_2}$ to satisfy~\eqref{eq:ex1:vicnf:line1} and at least one
    in $\set{A_5,A_6}$ to satisfy~\eqref{eq:ex1:vicnf:line3}.

    Suppose an enumerator assigns first the literals in \muA~\eqref{eq:ex1:muA},
    which force to assign also $\muB\defas\set{\neg B_2, \neg B_4, B_5}$ due
    to~\eqref{eq:ex1:vicnf:line2},~\eqref{eq:ex1:vicnf:line4},
    and~\eqref{eq:ex1:vicnf:line5} respectively. Since $\muA\cup\muB$ satisfies all
    clauses except those in~\eqref{eq:ex1:vicnf:line1}
    and~\eqref{eq:ex1:vicnf:line3}, then the enumerator needs extending
    $\muA\cup\muB$ by enumerating the partial assignments on the unassigned atoms
    \set{A_1,A_2,A_5,A_6,B_1,B_3} which satisfy~\eqref{eq:ex1:vicnf:line1}
    and~\eqref{eq:ex1:vicnf:line3}. Regardless of the search strategy adopted, this
    requires generating no less than $9$ satisfying partial assignments on
    $\set{A_1,A_2,A_5,A_6}$.~%
    \footnote{
        The set of models for \eqref{eq:ex1:vicnf:line1}
        is
        \set{
            \set{\pos B_1,\pos A_1,\pos A_2},
            \set{\neg B_1,\pos A_1,\neg A_2},
            \set{\neg B_1,\neg A_1,\pos A_2},
            \set{\neg B_1,\neg A_1,\neg A_2}
        }, which can be covered only either by
        \set{
            \set{\pos B_1,\pos A_1,\pos A_2},
            \set{\neg B_1,\neg A_1},
            \set{\neg B_1,\pos A_1,\neg A_2}
        } or by
        \set{
            \set{\pos B_1,\pos A_1,\pos A_2},
            \set{\neg B_1,\neg A_2},
            \set{\neg B_1,\neg A_1,\pos A_2}
        }
        in the case of disjoint enumeration,
        and by
        \set{
            \set{\pos B_1,\pos A_1,\pos A_2},
            \set{\neg B_1,\neg A_1},
            \set{\neg B_1,\neg A_2}
        } in the case of non-disjoint enumeration.
        In all cases,   we need no less than 3 distinct partial assignments on
        \set{A_1,A_2}. Similar considerations hold for
        \eqref{eq:ex1:vicnf:line3}. Thus, we need no
        less than
        $3\times3=9$ partial assignments on $\set{A_1,A_2}\cup\set{A_5,A_6}$.
    }
    For instance, in the case of disjoint AllSAT, instead of the single partial
    assignment $\muA$~\eqref{eq:ex1:muA}, the solver may return the following list
    of 9 partial assignments satisfying $\exists\allB.\vicnfts$ which extend
    \muA{}:
    \begin{equation}%
        \label{eq:ex1:muA:all}
        \begin{array}{llllllll}
            \multicolumn{2}{c}{\overbrace{\rule{1.5cm}{0pt}}^{B_1}} &           &           & \multicolumn{2}{c}{\overbrace{\rule{1.5cm}{0pt}}^{B_3}} &           &                        \\
            \{\neg A_1,                                             &           & \neg A_3, & \neg A_4,                                               & \neg A_5, & \neg A_6, & \neg A_7\}
            \quad\eqcomment{\set{\neg B_1,\neg B_2,\neg B_3,\neg B_4,\pos B_5}}                                                                                                            \\
            \{\neg A_1,                                             &           & \neg A_3, & \neg A_4,                                               & \pos A_5, &           & \neg A_7\}
            \quad\eqcomment{\set{\neg B_1,\neg B_2,\pos B_3,\neg B_4,\pos B_5}}                                                                                                            \\
            \{\neg A_1,                                             &           & \neg A_3, & \neg A_4,                                               & \neg A_5, & \pos A_6, & \neg A_7\}
            \quad\eqcomment{\set{\neg B_1,\neg B_2,\pos B_3,\neg B_4,\pos B_5}}                                                                                                            \\
            \{\pos A_1,                                             & \neg A_2, & \neg A_3, & \neg A_4,                                               & \neg A_5, & \neg A_6, & \neg A_7\}
            \quad\eqcomment{\set{\neg B_1,\neg B_2,\neg B_3,\neg B_4,\pos B_5}}                                                                                                            \\
            \{\pos A_1,                                             & \neg A_2, & \neg A_3, & \neg A_4,                                               & \pos A_5, &           & \neg A_7\}
            \quad\eqcomment{\set{\neg B_1,\neg B_2,\pos B_3,\neg B_4,\pos B_5}}                                                                                                            \\
            \{\pos A_1,                                             & \neg A_2, & \neg A_3, & \neg A_4,                                               & \neg A_5, & \pos A_6, & \neg A_7\}
            \quad\eqcomment{\set{\neg B_1,\neg B_2,\pos B_3,\neg B_4,\pos B_5}}                                                                                                            \\
            \{\pos A_1,                                             & \pos A_2, & \neg A_3, & \neg A_4,                                               & \neg A_5, & \neg A_6, & \neg A_7\}
            \quad\eqcomment{\set{\pos B_1,\neg B_2,\neg B_3,\neg B_4,\pos B_5}}                                                                                                            \\
            \{\pos A_1,                                             & \pos A_2, & \neg A_3, & \neg A_4,                                               & \pos A_5, &           & \neg A_7\}
            \quad\eqcomment{\set{\pos B_1,\neg B_2,\pos B_3,\neg B_4,\pos B_5}}                                                                                                            \\
            \{\pos A_1,                                             & \pos A_2, & \neg A_3, & \neg A_4,                                               & \neg A_5, & \pos A_6, & \neg A_7\}
            \quad\eqcomment{\set{\pos B_1,\neg B_2,\pos B_3,\neg B_4,\pos B_5}}                                                                                                            \\
        \end{array}
    \end{equation}
    In the case of non-disjoint AllSAT, the solver may enumerate a similar
    set of partial assignments, with \set{\neg A_2} instead of \set{A_1,\neg A_2} and \set{A_6}
    instead of  \set{\neg A_5,A_6}.\exdone{}
\end{example}

\subsection{The impact of Plaisted and Greenbaum CNF transformation}%
\label{sec:problem:polarity}

We point out that also the \PlaistedCNF{}
transformation~\cite{plaistedStructurepreservingClauseForm1986} suffers for the
same weakness as \TseitinCNF{} ---that is, \cref{fact:tseitin} holds also for
\PlaistedCNF{}--- although its effects are mitigated.

In fact, we recall that \PlaistedCNF{} works by applying recursively the
rewriting step (\sref{sec:bg:cnf}):
\begin{eqnarray}
    \label{eq:rewritingPlaisted}
    \vi\Longrightarrow\vi[\vi_i|B_i] \wedge \left \{
    \begin{array}{lll}
        (B_i\imp\vi_i)  & \mbox{if $\vi_i$ occurs only positively in $\vi$}                \\
        (B_i\limp\vi_i) & \mbox{if $\vi_i$ occurs only negatively in $\vi$}                \\
        (B_i\iff\vi_i)  & \mbox{if $\vi_i$ occurs both positively and negatively in $\vi$} \\
    \end{array}\right \},
\end{eqnarray}
\noindent and then by recursively CNF-izing the two conjuncts.
As with \TseitinCNF{}, consider~\eqref{eq:rewritingPlaisted} s.t.\ $\vi_i$ is
some sub-formula of \vi{} whose atoms are not assigned by \muA{}.

If $\vi_i$ occurs both positively and negatively in $\vi$,
then~\eqref{eq:rewritingPlaisted} reduces to~\eqref{eq:rewritingTseitin} and
\PlaistedCNF{} behaves like \TseitinCNF{}, so that \muA{} does not satisfy
$\exists B_i.( \vi[\vi_i|B_i]\wedge (B_i\iff\vi_i))$.

If instead $\vi_i$ occurs only positively \resp{negatively} in $\vi$, then it
is possible to extend \muA{} by assigning $B_i$ to $\bot$ \resp{$\top$} to
satisfy $(B_i\imp\vi_i)$ \resp{$(B_i\limp\vi_i)$} without assigning any atom in
$\vi_i$. Thus $\muA$ satisfies $\exists B_i.( \vi[\vi_i|B_i]\wedge
    (B_i\imp\vi_i))$ \resp{$\exists B_i.( \vi[\vi_i|B_i]\wedge (B_i\limp\vi_i))$}.

As with \TseitinCNF{}, given $\muA$ satisfying \vi{}, in order to produce an
assignment \muAprime{} satisfying $\exists \allB.\vicnfpg$ the enumerator is
most often forced to assign other atoms in \allA{}, so that
$\muAprime\supset\muA$. We notice, however, that the effect of this problem is
mitigated by the presence of single-polarity sub-formulas among those left
unassigned by \muA{}. As an extreme case, if all sub-formulas in $\vi$ occur
with single polarity, then no further assignment to atoms in \allA{} is needed. 

\begin{example}%
    \label{ex2}
    Consider the formula $\vi$~\eqref{eq:ex1:vi} as in
    \cref{ex1}. Suppose that $\vi$ is converted into CNF using
    \PlaistedCNF{}. Then, we have:
    \begin{subequations}%
        \label{eq:ex2:vicnf}
        \begin{alignat}{2}
             & \vicnfpg\defas\nonumber                                                                                                         \\
             & \enspace(\neg B_1\vee\pos A_1)\wedge(\neg B_1\vee\pos A_2)                                         & \wedge
             & \quad\eqcomment{(B_1\imp (A_1\wedge A_2))}\label{eq:ex2:vicnf:line1}                                                            \\
             & \enspace(\pos B_2\vee\neg A_3)\wedge(\pos B_2\vee\neg A_4)\wedge(\neg B_2\vee\pos A_3\vee\pos A_4) & \wedge
             & \quad\eqcomment{(B_2\iff (A_3\vee A_4))}\label{eq:ex2:vicnf:line2}                                                              \\
             & \enspace(\pos B_3\vee\neg A_5)\wedge(\pos B_3\vee\neg A_6)\wedge(\neg B_3\vee\pos A_5\vee\pos A_6) & \wedge
             & \quad\eqcomment{(B_3\iff (A_5\vee A_6))}\label{eq:ex2:vicnf:line3}                                                              \\
             & \enspace(\neg B_4\vee\pos B_2)\wedge(\neg B_4\vee\pos B_3)\wedge(\pos B_4\vee\neg B_2\vee\neg B_3) & \wedge
             & \quad\eqcomment{(B_4\iff (B_2\wedge B_3))}\label{eq:ex2:vicnf:line4}                                                            \\
             & \enspace(\neg B_5\vee\pos B_4\vee\neg A_7)\wedge(\neg B_5\vee\neg B_4\vee\pos A_7)                 & \wedge
             & \quad\eqcomment{(B_5\imp (B_4\iff A_7))}\label{eq:ex2:vicnf:line5}                                                              \\
             & \enspace(\pos B_1\vee\pos B_5).                                                                    & \label{eq:ex2:vicnf:line6}
        \end{alignat}
    \end{subequations}
    We remark that~\eqref{eq:ex2:vicnf:line1}
    and~\eqref{eq:ex2:vicnf:line5} are shorter
    than~\eqref{eq:ex1:vicnf:line1} and~\eqref{eq:ex1:vicnf:line5}
    respectively, since the corresponding sub-formulas $(A_1\wedge
        A_2)$ and $((\dots)\iff A_7)$ occur only with
    positive polarity in \vi{}~\eqref{eq:ex1:vi}, so that only the one-way implication
    $(B_i\imp\vi_i)$ is needed.

    As in \cref{ex1}, consider the partial assignment $\muA\defas\set{\neg A_3,\neg
            A_4,\neg A_7}$~\eqref{eq:ex1:muA} which satisfies $\vi{}$~\eqref{eq:ex1:vi}. As
    before, $\muA$ does not satisfy $\exists\allB.\vicnfpg$. In fact, {there is no
            total truth assignment $\etaB$ on \allB{} such that
            $\muA\cup\etaB\pmodels\vicnfpg$}, because~\eqref{eq:ex2:vicnf:line3} cannot be
    satisfied by assigning only variables in $\allB$; rather, it is necessary to
    further assign at least one atom in
    $\set{A_5,A_6}$ to satisfy~\eqref{eq:ex2:vicnf:line3}.
    Notice that, unlike with \cref{ex1}, in order to satisfy~\eqref{eq:ex2:vicnf:line1} 
    it is sufficient to set $B_1=\bot$ with no need to assign any atom in $\set{A_1,A_2}$.

    Suppose an enumerator assigns first the literals in \muA, which force it to
    assign also $\muB\defas\set{\neg B_2, \neg B_4}$ due
    to~\eqref{eq:ex2:vicnf:line2} and~\eqref{eq:ex2:vicnf:line4} respectively.
    Since $\muA\cup\muB$ satisfies all clauses except those in
    \eqref{eq:ex2:vicnf:line1}, \eqref{eq:ex2:vicnf:line3},
    and \eqref{eq:ex2:vicnf:line6}, the enumerator needs extending $\muA\cup\muB$
    by enumerating partial assignments on the unassigned atoms
    \set{A_1,A_2,A_5,A_6,B_1,B_3,B_5} which satisfy them.
    Regardless of the search strategy adopted, {to
            satisfy~\eqref{eq:ex2:vicnf:line3} it is necessary to generate no less than $3$
            partial assignments} on $\set{A_5,A_6}$ (see \cref{ex1}); to satisfy
    \eqref{eq:ex2:vicnf:line1},
    and \eqref{eq:ex2:vicnf:line6}, instead, the enumerator needs only assigning
    $B_1=\bot$, which forces it to assign $B_5=\top$ due to
    \eqref{eq:ex2:vicnf:line6}.

    For instance, in the case of disjoint AllSAT, instead of the single partial
    assignment $\muA$, the solver may return the following list of 3 partial
    assignments satisfying $\exists\allB.\vicnfpg$ which extend \muA{}:
    \begin{equation}%
        \label{eq:ex2:muA:all}
        \begin{array}{llllllll}
            \multicolumn{2}{c}{} &  &           & \multicolumn{2}{c}{\overbrace{\rule{1.5cm}{0pt}}^{B_3}} &           &                        \\
            \{                   &  & \neg A_3, & \neg A_4,                                               & \neg A_5, & \neg A_6, & \neg A_7\}
            \quad\eqcomment{\set{\neg B_1,\neg B_2,\neg B_3,\neg B_4,\pos B_5}}                                                                \\
            \{                   &  & \neg A_3, & \neg A_4,                                               & \pos A_5, &           & \neg A_7\}
            \quad\eqcomment{\set{\neg B_1,\neg B_2,\pos B_3,\neg B_4,\pos B_5}}                                                                \\
            \{                   &  & \neg A_3, & \neg A_4,                                               & \neg A_5, & \pos A_6, & \neg A_7\}
            \quad\eqcomment{\set{\neg B_1,\neg B_2,\pos B_3,\neg
            B_4,\pos B_5}}                                                                                                                     \\
        \end{array}
    \end{equation}
    In the case of non-disjoint AllSAT, the solver may enumerate a similar
    set of partial assignments, with
    \set{A_6}
    instead of  \set{\neg A_5,A_6} in the third assignment.\exdone{}

\end{example}

Notice that, to maximize the benefits of \PlaistedCNF{}, the sub-formulas
occurring with positive \resp{negative} polarity only must have their label
assigned to false \resp{true}. {In practice, this can be achieved in part by
instructing the solver to split on negative values in decision
branches.~\footnote{To exploit this heuristic also for sub-formulas occurring
    only negatively, the latter can be labeled with a negative label $\neg B_i$ as
    $(\neg B_i\limp\vi_i)$.} Even though the solver is not guaranteed to always
assign to false these labels, we empirically verified that this heuristic
provides a good approximation of this behavior.}


\section{Enhancing enumeration via NNF preprocessing}%
\label{sec:solution}

In this section, we propose a solution to address the shortcomings of
\TseitinCNF{} and \PlaistedCNF{} 
for SAT and SMT enumeration described in~\sref{sec:problem}. 
The idea is simple: \emph{we transform first the input formula \vi{} into NNF,
    {and then we apply \PlaistedCNF{} to \NNF{\vi}.}} In fact, NNF guarantees that
every non-atomic sub-formula of \NNF{\vi} occurs only positively, because every
sub-formula $\vi_i$ occurring with double polarity in \vi{} is converted into
two syntactically-different sub-formulas $\poslab{\vi_i}\defas\NNF{\vi_i}$ and
$\neglab{\vi_i}\defas\NNF{\neg\vi_i}$, each occurring only positively. Thus,
when computing \vinnfcnfpg{}, $\poslab{\vi_i}$ and $\neglab{\vi_i}$ are
labelled with two distinct atoms $\poslab{B_i}$ and $\neglab{B_i}$
respectively, adding the one-way implications
$(\poslab{B_i}\imp\poslab{\vi_i})$ and $(\neglab{B_i}\imp\neglab{\vi_i})$. (If
$\vi_i$ occurs only positively \resp{negatively} in \vi, then only
\poslab{\vi_i} \resp{\neglab{\vi_i}} occurs in \NNF{\vi}, so that only
\poslab{B_i} \resp{\neglab{B_i}} is introduced and only
$(\poslab{B_i}\imp\poslab{\vi_i})$ \resp{$(\neglab{B_i}\imp\neglab{\vi_i})$} is
added.)

We remark that with this preprocessing we maintain the correctness and
completeness of the enumeration process, because $\vi$ is equivalent to
$\NNF{\vi}$, which is equivalent to $\exists\allB.\vinnfcnfpg$. We remark also
that we produce a linear-size CNF encoding, since {$\NNF{\vi}$} has linear size
w.r.t.\ $\vi$ and \vinnfcnfpg{} has linear size w.r.t.\ \NNF{\vi}
(see~\sref{sec:background:propositional-logic}). 

We prove the following fact: {\em if a partial truth assignment $\muA$
satisfies $\vi$, then it satisfies also $\exists\allB.\vinnfcnfpg$}. 
(The vice versa holds trivially.)\@
{We remark that, as shown in~\sref{sec:problem}, this fact does not hold for
    \TseitinCNF{} and \PlaistedCNF{}.}\@
%
%
\begin{theorem}%
    \label{th:existsetaB}
    Consider a formula $\vi$, and a \emph{partial} truth assignment $\muA$ such that $\muA\pmodels\vi$.
    Then $\muA\pmodels\exists\allB.\vinnfcnfpg$.
\end{theorem}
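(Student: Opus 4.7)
The plan is to construct an explicit witness $\etaB$ on $\allB$ extending $\muA$, and verify that $\muA\cup\etaB\pmodels\vinnfcnfpg$. First, by \cref{th:munnf}, $\muA\pmodels\vi$ is equivalent to $\residual{\NNF{\vi}}{\muA}=\top$, so I can reason directly on $\NNF{\vi}$. The key structural fact to exploit is that in $\NNF{\vi}$ every non-literal sub-formula occurs only positively, so \PlaistedCNF{} introduces, for each labeled sub-formula $\psi_i$, only the one-way implication $(B_i\imp\psi_i')$, where $\psi_i'$ denotes $\psi_i$ with each labeled sub-formula beneath it replaced by its label. Consequently $\vinnfcnfpg$ is the conjunction of (a) the clauses encoding the root formula $\psi_0'$ and (b) for each label $B_i$, the clauses encoding $(B_i\imp\psi_i')$.

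I would define $\etaB$ by setting $\etaB(B_i)\defas\top$ iff $\residual{\psi_i}{\muA}=\top$, for every label $B_i$ introduced by the transformation. The entire verification then rests on a single auxiliary claim: for every sub-formula $\psi$ of $\NNF{\vi}$, if $\residual{\psi}{\muA}=\top$, then $\residual{\psi'}{\muA\cup\etaB}=\top$.

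Granting the claim, group (a) follows by applying it to the root sub-formula $\NNF{\vi}$; for group (b), each clause arising from $(B_i\imp\psi_i')$ contains $\neg B_i$, which evaluates to $\top$ whenever $\etaB(B_i)=\bot$, and when $\etaB(B_i)=\top$ (i.e., $\residual{\psi_i}{\muA}=\top$) the claim applied to $\psi_i$ yields $\residual{\psi_i'}{\muA\cup\etaB}=\top$, which propagates through the clause encoding of the implication. To prove the claim itself, I would use case analysis on the top connective of $\psi$, which is necessarily $\wedge$ or $\vee$ since $\psi$ is a non-literal NNF sub-formula: at an $\wedge$-node the hypothesis forces every child to residualize to $\top$ under $\muA$, hence each literal child is still $\top$ under $\muA\cup\etaB$ and each labeled child has its label set to $\top$ by construction; at a $\vee$-node the same argument applies to the child witnessing the residual, which suffices for the disjunction.

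The main obstacle I anticipate is purely notational: formally pinning down which DAG nodes receive labels, how $\psi_i'$ relates to the children of $\psi_i$, and checking that the syntactic CNF-ization of $(B_i\imp\psi_i')$ does not perturb the satisfaction argument. The underlying reasoning is essentially immediate. What makes the theorem work is precisely that NNF combined with \PlaistedCNF{} never introduces the bi-implication that would require assigning additional atoms in $\allA$ to satisfy a reverse direction $(B_i\limp\psi_i')$, which is exactly the weakness identified in \sref{sec:problem} for \TseitinCNF{} and for the two-polarity case of \PlaistedCNF{}.
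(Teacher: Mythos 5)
Your proof is correct and follows essentially the same route as the paper's: both construct the witness $\etaB$ by setting each label to true exactly when its labelled NNF sub-formula residualizes to $\top$ under $\muA$ (invoking \cref{th:munnf} to transfer $\muA\pmodels\vi$ to $\NNF{\vi}$), and then verify satisfaction of the root clauses and of each one-way implication $(B_i\imp\psi_i')$ by a structural case analysis on $\wedge$/$\vee$ nodes. The only difference is presentational: the paper indexes the construction by sub-formulas $\vi_i$ of the original $\vi$, distinguishing the three residual values $\top$, $\bot$, and unassigned for the paired labels $\poslab{B_i}$ and $\neglab{B_i}$, whereas you index directly by the sub-formulas of $\NNF{\vi}$, which collapses the latter two cases into a single ``label set to false'' case.
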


\begin{proof}%
    We show that, for every $\muA$ such that $\muA\pmodels\vi$,
    there exists a total truth assignment $\etaB$ such that $\muA\cup\etaB\pmodels\vinnfcnfpg$.
    We first show how such a $\etaB$ can be built, then we prove that it satisfies \vinnfcnfpg{} if conjoined with $\muA$.
    In the following, the symbol ``\any'' denotes any formula that is not in
    $\set{\top,\bot}$, so that ``$\residual{\vi_i}{\muA}=\any$'' means that the
    residual of $\vi_i$ under $\muA$ is not a constant. 


    For each sub-formula $\vi_i$ of $\vi$, whose positive and negative occurrences
    in $\vinnfcnfpg$ are associated with the variables $\poslab{B_i}$ and
    $\neglab{B_i}$ respectively, do:
    \begin{enumerate}[(a)]
        \item\label{item:existsetaB:true} if $\residual{\vi_i}{\muA} = \top$, and hence $\residual{\NNF{\vi_i}}{\muA}=\top$ and $\residual{\NNF{\neg\vi_i}}{\muA}=\bot$ by \cref{th:munnf}, then
              add \set{\poslab{B_i},\neg\neglab{B_i}} to \etaB;
        \item\label{item:existsetaB:false} if $\residual{\vi_i}{\muA} = \bot$, and
              hence $\residual{\NNF{\vi_i}}{\muA}=\bot$ and
              $\residual{\NNF{\neg\vi_i}}{\muA}=\top$ by \cref{th:munnf}, then add
              \set{\neg\poslab{B_i},\neglab{B_i}} to \etaB; \item\label{item:existsetaB:star}
              otherwise if $\residual{\vi_i}{\muA}=\any$, then add
              \set{\neg\poslab{B_i},\neg\neglab{B_i}} to \etaB;
    \end{enumerate}
    (If $\vi_i$ occurs only positively or negatively, then only assign $\poslab{B_i}$ or $\neglab{B_i}$ respectively.)

    We prove that $\muA\cup\etaB\pmodels\vinnfcnfpg$ by induction on the structure
    of the formula $\vinnfcnfpg$. Consider a sub-formula $\vi_i$ of $\vi$:
    \begin{enumerate}[(i)]
        \item if $\vi_i$ occurs positively in $\vi$, then
              \begin{equation}
                  \vinnfcnfpg\defas\PlaistedCNF(
                  \overbrace{\NNF{\vi}[\NNF{\vi_i}|\poslab{B_i}]}^{\subtermeqlabel{eq:proofmuB:pos:vi}} \wedge
                  \overbrace{\left(\poslab{B_i}\imp\NNF{\vi_i}\right)}^{\subtermeqlabel{eq:proofmuB:pos:lab}})
              \end{equation}
        \item if $\vi_i$ occurs negatively in $\vi$, then
              \begin{equation}
                  \vinnfcnfpg\defas\PlaistedCNF(
                  \overbrace{\NNF{\vi}[\NNF{\neg\vi_i}|\neglab{B_i}]}^{\subtermeqlabel{eq:proofmuB:neg:vi}} \wedge
                  \overbrace{\left(\neglab{B_i}\imp\NNF{\neg\vi_i}\right)}^{\subtermeqlabel{eq:proofmuB:neg:lab}})
              \end{equation}
    \end{enumerate}
    For each pair of cases we have:
    \begin{enumerate}[(a)]
        \item\begin{enumerate}[(i)]
            \item $\muA\cup\etaB\pmodels\eqref{eq:proofmuB:pos:vi}$ since we substitute $\NNF{\vi_i}$ with $\poslab{B_i}$ and $\residual{\NNF{\vi_i}}{\muA}=\residual{\poslab{B_i}}{\etaB}=\top$;\\
                  $\muA\cup\etaB\pmodels\eqref{eq:proofmuB:pos:lab}$ since $\muA\pmodels\NNF{\vi_i}$;
            \item $\muA\cup\etaB\pmodels\eqref{eq:proofmuB:neg:vi}$ since we substitute $\NNF{\neg\vi_i}$ with $\neglab{B_i}$ and $\residual{\NNF{\neg\vi_i}}{\muA}=\residual{\neglab{B_i}}{\etaB}=\bot$; \\
                  $\muA\cup\etaB\pmodels\eqref{eq:proofmuB:neg:lab}$ since $\etaB\pmodels\neg\neglab{B_i}$;
        \end{enumerate}
        \item \begin{enumerate}[(i)]
                  \item $\muA\cup\etaB\pmodels\eqref{eq:proofmuB:pos:vi}$  since we substitute $\NNF{\vi_i}$ with $\poslab{B_i}$ and $\residual{\NNF{\vi_i}}{\muA}=\residual{\poslab{B_i}}{\etaB}=\bot$;\\ $\muA\cup\etaB\pmodels\eqref{eq:proofmuB:pos:lab}$ since $\etaB\pmodels\neg\poslab{B_i}$;
                  \item $\muA\cup\etaB\pmodels\eqref{eq:proofmuB:neg:vi}$ since we substitute $\NNF{\neg\vi_i}$ with $\neglab{B_i}$ and $\residual{\NNF{\neg\vi_i}}{\muA}=\residual{\neglab{B_i}}{\etaB}=\topt$; \\ $\muA\cup\etaB\pmodels\eqref{eq:proofmuB:neg:lab}$ since $\muA\pmodels\NNF{\neg\vi_i}$;
              \end{enumerate}
        \item \begin{enumerate}[(i)]
                  \item $\muA\cup\etaB\pmodels\eqref{eq:proofmuB:pos:vi}$ since we substitute $\NNF{\vi_i}$ with $\poslab{B_i}$, and substituting $\residual{\NNF{\vi_i}}{\muA}=*$ with $\residual{\poslab{B_i}}{\etaB}=\bot$ preserves the satisfiability;
                        $\muA\cup\etaB\pmodels\eqref{eq:proofmuB:pos:lab}$ since $\etaB\pmodels\neg\poslab{B_i}$;
                  \item $\muA\cup\etaB\pmodels\eqref{eq:proofmuB:neg:vi}$ since we substitute $\NNF{\neg\vi_i}$ with $\neglab{B_i}$, and substituting $\residual{\NNF{\neg\vi_i}}{\muA}=*$ with $\residual{\neglab{B_i}}{\etaB}=\bot$ preserves the satisfiability;
                        $\muA\cup\etaB\pmodels\eqref{eq:proofmuB:neg:lab}$ since $\etaB\pmodels\neg\neglab{B_i}$;
              \end{enumerate}
    \end{enumerate}
    Therefore, $\muA\cup\etaB\pmodels\vinnfcnfpg$.
\end{proof}


As a consequence of \cref{th:existsetaB}, given $\muA$ satisfying \vi{}, the
enumerator is no more forced to assign any more atom in \allA to satisfy
$\exists \allB.\vinnfcnfpg$. This prevents the enumerator from producing
multiple assignments extending \muA{}, avoiding thus the blow-up in the number
of assignments for \TseitinCNF{(\vi)} and \PlaistedCNF{(\vi)} described in
\sref{sec:problem}.
We stress the fact that~{\Cref{th:existsetaB}} is agnostic of the AllSAT (or
AllSMT) algorithm adopted, and that it holds for both disjoint and non-disjoint
enumeration. 

\begin{remark}%
    \label{rem:ex3:preconv}
    Unlike with AllSAT
    or AllSMT, the pre-conversion into NNF is typically never used in plain SAT 
    or SMT
    \emph{solving}, because it causes the unnecessary duplication of labels $\poslab{B_i}$ and $\neglab{B_i}$, with extra overhead and no benefit for the solver.
\end{remark}

We notice that the proof of \cref{th:existsetaB} is {\em constructive}, that
is, it not only says that an assignment $\etaB$ s.t.\
$\muA\cup\etaB\pmodels\vinnfcnfpg$ exists, but also it shows how to construct
it. In particular,
points~\ref{item:existsetaB:true},~\ref{item:existsetaB:false},
and~\ref{item:existsetaB:star} implicitly suggest a strategy for assigning the
right values to the \allB{} atoms given \muA{}, which is based on the iterative
applications of the following steps, interleaved with the assignment of values
which are forced by residual constraints:
\begin{enumerate}[(a)]
    \item\label{item:rule:true} if \poslab{B_i} occurs negatively in already-satisfied
          clauses only, then add \set{\poslab{B_i},\neg\neglab{B_i}} to \etaB;
    \item\label{item:rule:false} if \neglab{B_i} occurs negatively in already-satisfied
          clauses only, then  add \set{\neglab{B_i},\neg\poslab{B_i}} to \etaB;
    \item\label{item:rule:star} if \poslab{B_i} \resp{\neglab{B_i}} occurs negatively in a not-yet-satisfied clause whose
          other unassigned literals are all \allA{}-literals, then add \set{\neg\poslab{B_i}} \resp{\set{\neg\neglab{B_i}}} to \etaB.
\end{enumerate}
\noindent This strategy
mimics  the application of points~\ref{item:existsetaB:true},~\ref{item:existsetaB:false}, and~\ref{item:existsetaB:star} in the
proof to the sub-formulas $\poslab{\vi_i}$ and $\neglab{\vi_i}$ of \NNF{\vi}
in a bottom-up fashion, from the leaves to the root.~%
\footnote{In fact, we notice that \poslab{B_i} \resp{\neglab{B_i}} occurs
    negatively only in the clauses encoding the
    $(\poslab{B_i}\imp\poslab{\vi_i})$
    \resp{$(\neglab{B_i}\imp\neglab{\vi_i})$ } constraints, because by
    construction it always occurs positively elsewhere, since it
    substitutes some sub-formula \poslab{\vi_i} \resp{\neglab{\vi_i}}
    which occurs only positively in \NNF{\vi}. Also, by construction,
    we can have at most one negative \allB-literal per clause.}

We also notice that, due to the constraints $(\poslab{B_i}\imp\poslab{\vi_i})$
and $(\neglab{B_i}\imp\neglab{\vi_i})$ and to the fact that \poslab{\vi_i} and
\neglab{\vi_i} are mutually inconsistent by construction, no assignment
satisfying $\vinnfcnfpg$ may assign both \poslab{B_i} and \neglab{B_i} to
$\top$. Thus, to further improve the efficiency of the enumeration procedure
without affecting its outcome, we also add to \vinnfcnfpg{} the
mutual-exclusion clauses $(\neg \poslab{B_i}\vee\neg\neglab{B_i})$ when both
$\poslab{B_i}$ and $\neglab{B_i}$ are introduced, which prevent the solver from
assigning both $\poslab{B_i}$ and $\neglab{B_i}$ to $\top$, and thus from
wasting time in exploring inconsistent search sub-trees for residual formulas
like $(\ldots\wedge\poslab{\vi_i}\wedge\neglab{\vi_i}\wedge\ldots)$.

We illustrate the benefit of our proposed technique with the following example.

\begin{example}%
    \label{ex3}
    Consider the formula $\vi$ of~\cref{ex1}. By converting it into NNF, we obtain:
    \begin{align*}
         & \NNF{\vi}\defas                                \\
         & \overbrace{(A_1\wedge A_2)}^{\poslab{B_1}}\vee
        \overbrace{(
        \overbrace{(
        \overbrace{(
        \overbrace{(\neg A_3\wedge\neg A_4)}^{\neglab{B_2}}\vee
        \overbrace{(\neg A_5\wedge\neg A_6)}^{\neglab{B_3}}
        )}^{\neglab{B_4}}\vee A_7
        )}^{\poslab{B_5}} \wedge
        \overbrace{(
        \overbrace{(
        \overbrace{(     A_3\vee       A_4)}^{\poslab{B_2}}\wedge
        \overbrace{(     A_5\vee       A_6)}^{\poslab{B_3}}
        )}^{\poslab{B_4}}\vee\neg A_7
        )}^{\poslab{B_6}}
        )}^{\poslab{B_7}}
    \end{align*}
    \noindent
    Applying \PlaistedCNF{} and adding the mutual-exclusion clauses we obtain the CNF formula:
    \begin{subequations}%
        \label{eq:ex3:vicnf}
        \begin{alignat}{2}
             & \vinnfcnfpg\defas                                                                                                                                                                                                                                                                    \\
             & \qquad(\neg \poslab{B_1}\vee\pos A_1)\wedge(\neg \poslab{B_1}\vee\pos A_2)                   & \wedge
             & \quad\eqcomment{(\poslab{B_1}\imp (\pos A_1\wedge\pos A_2))}\label{eq:ex3:vicnf:line1}                                                                                                                                                                                               \\
             & \qquad(\neg \neglab{B_2}\vee\neg A_3)\wedge(\neg \neglab{B_2}\vee\neg A_4)                   & \wedge                                                                    & \quad\eqcomment{(\neglab{B_2}\imp (\neg A_3\wedge\neg A_4))}\label{eq:ex3:vicnf:line2}                    \\
             & \qquad(\neg \neglab{B_3}\vee\neg A_5)\wedge(\neg \neglab{B_3}\vee\neg A_6)                   & \wedge                                                                    & \quad\eqcomment{(\neglab{B_3}\imp (\neg A_5\wedge\neg A_6))}\label{eq:ex3:vicnf:line3}                    \\
             & \qquad(\neg \neglab{B_4}\vee\pos \neglab{B_2}\vee\pos \neglab{B_3})                          & \wedge                                                                    & \quad\eqcomment{(\neglab{B_4}\imp (\pos \neglab{B_2}\vee\pos \neglab{B_3}))}\label{eq:ex3:vicnf:line4}    \\
             & \qquad(\neg \poslab{B_5}\vee\pos \neglab{B_4}\vee\pos A_7)                                   & \wedge                                                                    & \quad\eqcomment{(\poslab{B_5}\imp (\pos \neglab{B_4}\vee\pos A_7))}\label{eq:ex3:vicnf:line5}             \\
             & \qquad(\neg \poslab{B_2}\vee\pos A_3\vee\pos A_4)                                            & \wedge                                                                    & \quad\eqcomment{(\poslab{B_2}\imp (\pos A_3\vee\pos A_4))}\label{eq:ex3:vicnf:line6}                      \\
             & \qquad(\neg \poslab{B_3}\vee\pos A_5\vee\pos A_6)                                            & \wedge                                                                    & \quad\eqcomment{(\poslab{B_3}\imp (\pos A_5\vee\pos A_6))}\label{eq:ex3:vicnf:line7}                      \\
             & \qquad(\neg \poslab{B_4}\vee\pos \poslab{B_2})\wedge(\neg \poslab{B_4}\vee\pos \poslab{B_3}) & \wedge                                                                    & \quad\eqcomment{(\poslab{B_4}\imp (\pos \poslab{B_2}\wedge\pos \poslab{B_3}))}\label{eq:ex3:vicnf:line8}  \\
             & \qquad(\neg \poslab{B_6}\vee\pos \poslab{B_4}\vee\neg A_7)                                   & \wedge                                                                    & \quad\eqcomment{(\poslab{B_6}\imp (\pos \poslab{B_4}\vee\neg A_7))}\label{eq:ex3:vicnf:line9}             \\
             & \qquad(\neg \poslab{B_7}\vee\pos \poslab{B_5})\wedge(\neg \poslab{B_7}\vee\pos \poslab{B_6}) & \wedge                                                                    & \quad\eqcomment{(\poslab{B_7}\imp (\pos \poslab{B_5}\wedge\pos \poslab{B_6}))}\label{eq:ex3:vicnf:line10} \\
             & \qquad(\pos \poslab{B_1}\vee\pos \poslab{B_7})                                               & \wedge                                                                    & \label{eq:ex3:vicnf:line11}                                                                               \\
             & \qquad(\neg \poslab{B_2}\vee \neg \neglab{B_2})
             & \wedge                                                                                       & \quad\eqcomment{\textit{mutual exclusion for \poslab{B_2}, \neglab{B_2}}}                                                                                                             \\
             & \qquad(\neg \poslab{B_3}\vee \neg \neglab{B_3})                                              & \wedge                                                                    & \quad\eqcomment{\textit{mutual exclusion for \poslab{B_3}, \neglab{B_3}}}                                 \\
             & \qquad(\neg \poslab{B_4}\vee \neg \neglab{B_4})                                              &                                                                           & \quad\eqcomment{\textit{mutual exclusion for \poslab{B_4}, \neglab{B_4}}}
        \end{alignat}
    \end{subequations}
    \noindent
    Notice that the labels
    \neglab{B_1}, \neglab{B_5}, \neglab{B_6}, \neglab{B_7} and their
    respective one-way constraints are not
    introduced, because there is no negative occurrence of the respective
    sub-formulas in \vi{}.

    As in \cref{ex1,ex2}, consider the partial assignment $\muA\defas\set{\neg
            A_3,\neg A_4,\neg A_7}$~\eqref{eq:ex1:muA} which satisfies $\vi{}$. First, it
    is easy to see that \muA{} satisfies also \NNF{\vi}, in compliance with
    \cref{th:munnf}. Also, \muA{} satisfies $\exists\allB.\vinnfcnfpg$, because
    $\muA\cup\etaB\pmodels\vinnfcnfpg$ where $\etaB\defas\set{\neg \poslab{B_1},
            \neg \poslab{B_2},\neglab{B_2},\neg \poslab{B_3},\neg \neglab{B_3},\neg
            \poslab{B_4},\neglab{B_4},\poslab{B_5},\poslab{B_6},\poslab{B_7}}$.

    The above assignment can be produced by adopting the strategy described above.
    First, we assign the literals in \muA, which force to add also $\set{\neg
            \poslab{B_2}, \neg \poslab{B_4}}$ due to \eqref{eq:ex3:vicnf:line6} and
    \eqref{eq:ex3:vicnf:line8} respectively. Then we add \set{\poslab{B_6}} by
    step~\ref{item:rule:true} on~\eqref{eq:ex3:vicnf:line9}, \set{\neglab{B_2}} by
    step~\ref{item:rule:false} on~\eqref{eq:ex3:vicnf:line2}, and
    \set{\neg\poslab{B_1},\neg\neglab{B_3},\neg\poslab{B_3}} by
    step~\ref{item:rule:star} on \eqref{eq:ex3:vicnf:line1},
    \eqref{eq:ex3:vicnf:line3}, \eqref{eq:ex3:vicnf:line7} respectively. These
    force to add \set{\poslab{B_7}} and \set{\poslab{B_5}} due to
    \eqref{eq:ex3:vicnf:line11} and \eqref{eq:ex3:vicnf:line10} respectively.
    Finally, we add \set{\neglab{B_4}} by step~\ref{item:rule:false} on
    \eqref{eq:ex3:vicnf:line4}. Overall, this corresponds to enumerate only the
    partial assignment $\muA{}$ satisfying $\exists\allB.\vinnfcnfpg$, with both
    disjoint and non-disjoint enumeration:
    \begin{equation*}
        \label{eq:ex3:final}
        \set{\neg A_3, \neg A_4, \neg A_5}  \quad\eqcomment{\set{\neg \poslab{B_1}, \neg
                \poslab{B_2},\neglab{B_2},\neg \poslab{B_3},\neg
                \neglab{B_3},\neg
                \poslab{B_4},\neglab{B_4},\poslab{B_5},\poslab{B_6},\poslab{B_7}}}.
        \hfill\exdone{}
    \end{equation*}

\end{example}

Notice that, the shorter is $\muA$ w.r.t.\ a total assignment, the higher is
the chance that \TseitinCNF{} and \PlaistedCNF{} force the production of a high
number of extra assignments, the more beneficial is the usage of \NNFPlaisted{}
which avoids it. This said, if the enumerator is able to produce short partial
assignments \muA{} satisfying the formula, we expect a high benefit from using
\NNFPlaisted{} instead of \TseitinCNF{} and \PlaistedCNF{}; vice versa, if the
enumerator produces only total or nearly-total assignments, we expect no or
very-limited benefit respectively.

Implementation-wise, the strategy to assign the values of \etaB{} described
above is difficult to implement inside the current enumerators. Therefore,
there is no formal guarantee that a generic enumeration procedure always finds
exactly the $\etaB$ which prevents the generation of longer assignments. For
example, the enumeration procedure of \sref{sec:background:allsat} finds a
total truth assignment $\etaA\cup\etaB$ that satisfies the formula, and then
finds $\muA\subseteq\etaA$ that is minimal w.r.t.\ that specific $\etaB$ such
that $\muA\cup\etaB\pmodels\vinnfcnfpg$, so that the $\etaB$ found is not
guaranteed to be the one that allows for the most effective minimization of
$\muA$. Ad-hoc enumeration heuristics could be adopted. 
Nevertheless, in \sref{sec:experiments} we show that a very simple heuristic
---i.e., force the assignments of false values first to decision atoms---
guarantees dramatic improvements w.r.t.\ previous approaches using two
state-of-the-art AllSAT/AllSMT enumerators.

\section{Experimental evaluation}%
\label{sec:experiments}
In this section, we experimentally evaluate the impact of different
CNF-izations on the disjoint and non-disjoint AllSAT 
and AllSMT tasks. In order to compare them on fair ground, we have implemented
a base version of each from scratch in
PySMT~\cite{garioPySMTSolveragnosticLibrary2015}, avoiding specific
optimizations done by the solvers. 

We have conducted a very extensive collection of experiments on AllSAT and
AllSMT. We tested the encodings on \mathsat{}, which supports disjoint and
non-disjoint AllSAT and AllSMT, and on \tabularallsat{} and \tabularallsmt{},
which support disjoint AllSAT and AllSMT, respectively.
The choice of the solvers is motivated in~\sref{sec:survey-allsat-solvers}; the
benchmarks are described in~\sref{sec:experiments:benchmarks}, the information
for the reproducibility of the experiments is given in
\sref{sec:experimentsdata}, and the results are presented
in~\sref{sec:experiments:results}; finally, in \sref{sec:experiments:others} we
analyze and discuss the application of the encodings to related fields.


\subsection{An analysis of available solvers}%
\label{sec:survey-allsat-solvers}
%
%
In order to evaluate the different CNF encodings, we need an AllSAT/AllSMT
solver that
\begin{enumerate*}[(i)]
    \item\label{item:tool:avail} is publicly available,
    \item\label{item:tool:cnf} takes as input a CNF formula,
    \item\label{item:tool:proj} allows performing {\em projected} enumeration,
    \item\label{item:tool:part} allows enumerating (disjoint or
          non-disjoint) {\em partial} truth assignments;
    \item\label{item:tool:minimal} also, the ability of producing \emph{minimal} assignments is valuable although not necessary.
\end{enumerate*}

In the literature, we found the following candidate solvers: for AllSAT, we
found
\textsc{RELSAT}~\cite{bayardoUsingCSPLookback1997},
\textsc{Grumberg}~\cite{grumbergMemoryEfficientAllSolutions2004}, \textsc{SOLALL}~\cite{liNovelSATAllsolutions2004},
\textsc{Jin}~\cite{jinEfficientConflictAnalysis2005,jin2005prime},
\textsc{clasp}~\cite{gebserConflictDrivenAnswerSet2007},
\textsc{PicoSAT}~\cite{bierePicoSATEssentials2008},
\textsc{Yu}~\cite{yuAllSATUsingMinimal2014},
\textsc{BC}, \textsc{NBC}, and \textsc{BDD}~\cite{todaImplementingEfficientAll2016},
\textsc{depbdd}~\cite{todaExploitingFunctionalDependencies2017},
\textsc{Dualiza}~\cite{mohleDualizingProjectedModel2018},
\textsc{BASolver}~\cite{zhangAcceleratingAllSATComputation2020},
\textsc{AllSATCC}~\cite{liangAllSATCCBoostingAllSAT2022},
\textsc{HALL}~\cite{friedAllSATCombinationalCircuits2023,friedEntailingGeneralizationBoosts2024},
\tabularallsat~\cite{spallittaDisjointPartialEnumeration2024,spallittaDisjointProjectedEnumeration2025},
and \decdnnf~\cite{lagniezLeveragingDecisionDNNFCompilation2024}.
For AllSMT, the only candidates are
\mathsat{}~\cite{mathsat5_tacas13},
\textsc{aZ3}~\cite{phanAllSolutionSatisfiabilityModulo2015}, and
\tabularallsmt{}~\cite{spallittaDisjointProjectedEnumeration2025}.

In Appendix~\ref{appendix:tooltable} \Cref{tab:allsat-solvers} we report an
analysis of the above features \ref{item:tool:avail}-\ref{item:tool:minimal}
for each of the above-mentioned solvers. Overall, we observe that only four
solvers match all our needs: \mathsat{}, \tabularallsat{}, \tabularallsmt{},
and \textsc{Dualiza}. \mathsat{} supports disjoint and non-disjoint AllSAT and
AllSMT enumeration of minimal partial assignments, using the
blocking-clause-based enumeration strategy
of~\cite{lahiriSMTTechniquesFast2006} described
in~\sref{sec:background:allsat}. \tabularallsat{} supports disjoint AllSAT
enumeration of non-minimal partial assignments, based on chronological
backtracking without blocking
clauses~\cite{spallittaDisjointPartialEnumeration2024,spallittaDisjointProjectedEnumeration2025}.
\tabularallsmt{}, its counterpart for AllSMT, supports disjoint AllSMT.
\textsc{Dualiza} supports disjoint and non-disjoint AllSAT enumeration of
non-minimal partial assignments, based on dual reasoning. Since
\textsc{Dualiza} has been shown empirically to perform
poorly~\cite{friedEntailingGeneralizationBoosts2024}, we decided to leave it
out of our experiments, and to focus on \mathsat{}, \tabularallsat{}, and
\tabularallsmt{} only.

\subsection{Description of the problem sets}%
\label{sec:experiments:benchmarks}
We consider 
five 
sets of benchmarks of non-CNF formulas coming from different sources, both synthetic and real-world.
For AllSAT, we evaluate the different CNF encodings on three sets of
benchmarks. The first one consists of synthetic Boolean formulas, which were
randomly generated by nesting Boolean operators up to a fixed depth. The second
one consists of formulas encoding properties of ISCAS'85
circuits~\cite{brglezNeutralNetlist101985,hansenUnveilingISCAS85Benchmarks1999}
as done in~\cite{tibebuAugmentingAllSolution2018}. The third one is a set of
benchmarks on combinatorial circuits encoded as And-Inverter Graphs (AIGs) used
in~\cite{friedAllSATCombinationalCircuits2023}.

For AllSMT, we consider two sets of benchmarks. The first one consists of
synthetic \smtlarat{} formulas, which were randomly generated by nesting
Boolean and \smtlarat{} operators up to a fixed depth. The second one consists
of formulas encoding Weighted Model Integration (WMI)
problems~\cite{morettin-wmi-ijcar17,morettin-wmi-aij19,spallittaSMTbasedWeightedModel2022,spallittaEnhancingSMTbasedWeighted2024a}.

With respect to the conference version of the
paper~\cite{masinaCNFConversionDisjoint2023}, we have extended the set of
Boolean synthetic benchmarks and added the AIG benchmarks. Moreover, we have
added the \smtlarat{} benchmarks, and modified the WMI benchmarks so that they
contain also \larat{} atoms.


\paragraph{The  Boolean synthetic benchmarks.}
The Boolean synthetic benchmarks were generated by nesting Boolean operators
$\wedge, \vee, \iff$ until some fixed depth $d$. Internal and leaf nodes are
negated with $0.5$ probability. Operators in internal nodes are chosen
randomly, giving less probability to the $\iff$ operator. In particular, $\iff$
is chosen with $0.1$ probability, whereas the other two are chosen with an
equal probability of $0.45$. We generated 100 instances over a set of 20
Boolean atoms and depth $d=8$, 100 instances over a set of 25 Boolean atoms and
depth $d=8$, and 100 instances over a set of 30 Boolean atoms and depth $d=6$,
for a total of 300 instances. 


\paragraph{The 
    ISCAS'85 
    benchmarks.}%
\label{sec:benchmarks:iscas}
The ISCAS'85 benchmarks are a set of 10 combinatorial circuits used in test generation, timing analysis, and technology mapping~\cite{brglezNeutralNetlist101985}. They have well-defined, high-level structures and functions based on multiplexers, ALUs, decoders, and other common building blocks~\cite{hansenUnveilingISCAS85Benchmarks1999}.
We generated random instances as described in~\cite{tibebuAugmentingAllSolution2018}. In particular, for each circuit, we constrained 60\%, 70\%, 80\%, 90\%, and 100\% of the outputs to either 0 or 1, for a total of 250 instances.

\paragraph{The AIG benchmarks.}
The AIG benchmarks are a set of formulas encoded as And-Inverter Graphs used
in~\cite{friedAllSATCombinationalCircuits2023}. They consist of a total of 89
instances, subdivided into 3 groups: 29 instances encoding industrial Static
Timing Analysis (STA) problems containing up to $13000$ input variables, 40
instances from the ``arithmetic'' and ``random\_control'' benchmarks of the
EPFL suite~\cite{amaruEPFLCombinationalBenchmark2015a} ---combining the
multiple outputs with an \emph{or} or an \emph{xor} operator--- containing up
to 1200 input variables, and 20 instances consisting of large
randomly-generated AIGs containing up to 2800 input variables.

We notice that the discussed CNF encodings can be applied to AIGs as well, as
an AIG can be seen as a non-CNF formula involving only $\wedge$ and $\neg$
operators. Moreover, we store formulas as DAGs, so that the same sub-formula is
not duplicated multiple times, and the CNF encodings use the same Boolean atom
to label the different occurrences of the same sub-formula. Hence, all the
discussed CNF encodings result in a CNF formula of linear size w.r.t.\ the size
of the AIG.

\paragraph{The \smtlarat{} synthetic benchmarks.} These problems were generated with the same procedure as
the Boolean ones, with the difference that atoms are randomly-generated
\smtlarat{} atoms over a set of $R$ real variables $\set{x_1, \dots, x_R}$, in
the form $(\sum_{i=1}^R a_i x_i \leq b)$, where each $a_i$ and $b$ are random
real coefficients. We generated 100 instances with depth $d=5$, 100 instances
with depth $d=6$, and 100 instances with depth $d=7$, all of them involving
$R=5$ real variables, for a total of 300 instances.

\paragraph{The WMI benchmarks.}
WMI problems were generated using the procedure described
in~\cite{spallittaSMTbasedWeightedModel2022}. Specifically, the paper
addresses the problem of enumerating all the different paths of the weight
function by encoding it into a skeleton formula. Each instance consists of a
skeleton formula of a randomly-generated weight function, where the conditions
are random formulas over Boolean and \larat{}-atoms. Since the conditions are
typically non-atomic, the resulting formula is not in CNF, and thus we
preprocessed it with the different CNF-izations before enumerating its
\TA{\dots}. 
As done in~\cite{spallittaSMTbasedWeightedModel2022}, we fixed the number of
Boolean atoms to 3, and the number of real variables to 3, and we generated 10
instances for each depth value 4, 5, 6, and 7, for a total of 40 problems.


\subsection{Information for the reproducibility of the experiments}
\label{sec:experimentsdata}
We ran \mathsat{} with the option \textsf{-dpll.allsat\_minimize\_model=true}
to enumerate minimal partial assignments. For disjoint and non-disjoint
enumeration, we set the options \textsf{-dpll.allsat\_allow\_duplicates=false}
and \textsf{-dpll.allsat\_allow\_duplicates=true}, respectively. We also set the options
\textsf{-preprocessor.simplification=0} and
\textsf{-preprocessor.toplevel\_propagation=false} to disable several
non-validity-preserving preprocessing steps.
As discussed in~\sref{sec:problem:polarity} and~\sref{sec:solution}, we also
set the options \textsf{-dpll.branching\_initial\_phase=0} and
\textsf{-dpll.branching\_cache\_phase=2} to split on the false branch first but
enabling phase caching. \tabularallsat{} was run with default options, which
include branching on the false branch first. \tabularallsmt{} was run with the
same options as \mathsat{} to disable preprocessing steps, since it uses
\mathsat{} as a backend for theory reasoning.

All the experiments were run on an Intel Xeon Gold 6238R @ 2.20GHz 28 Core
machine with 128 GB of RAM, running Ubuntu Linux 20.04. For each problem set,
we set a timeout of 3600s. 
Benchmarks, results, and source code are made available online on a Zenodo
repository~\cite{masina_2024_cnf_results,masina_2024_cnf_code}.\@ An updated
version of the source code is available at
\url{https://github.com/masinag/allsat-cnf}.

\subsection{Results}%
\label{sec:experiments:results}

\begin{figure}
    \centering
    \begin{subfigure}[t]{\textwidth}
        \centering
        \includegraphics[height=1.8em]{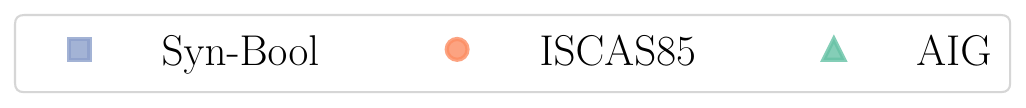}
    \end{subfigure}
    \begin{subfigure}[t]{\textwidth}
        \begin{subfigure}[t]{0.26\textwidth}
            \centering
            \includegraphics[width=.85\textwidth]{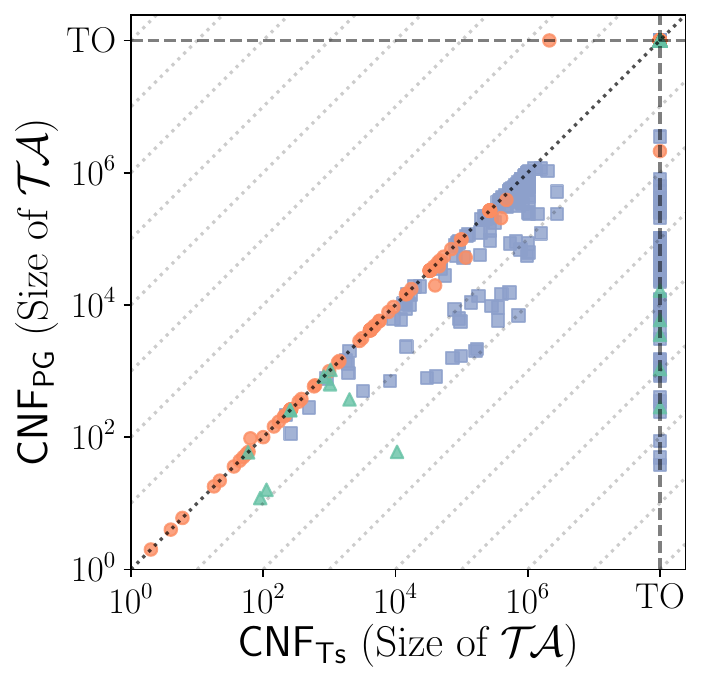}%
            \label{fig:plt:all:bool:norep:models:lab_vs_pol}
        \end{subfigure}\hfill
        \begin{subfigure}[t]{0.26\textwidth}
            \centering
            \includegraphics[width=.85\textwidth]{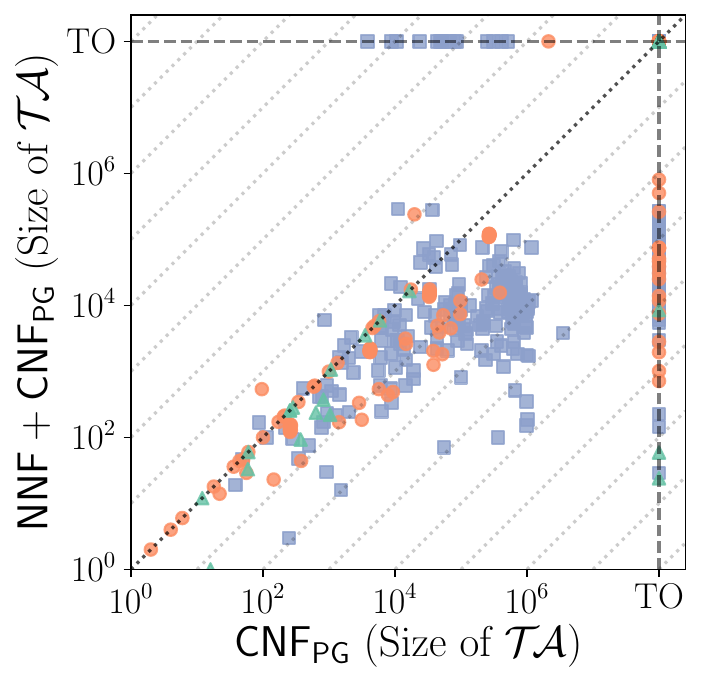}%
            \label{fig:plt:all:bool:norep:models:pol_vs_nnfpol}
        \end{subfigure}\hfill
        \begin{subfigure}[t]{0.26\textwidth}
            \centering
            \includegraphics[width=.85\textwidth]{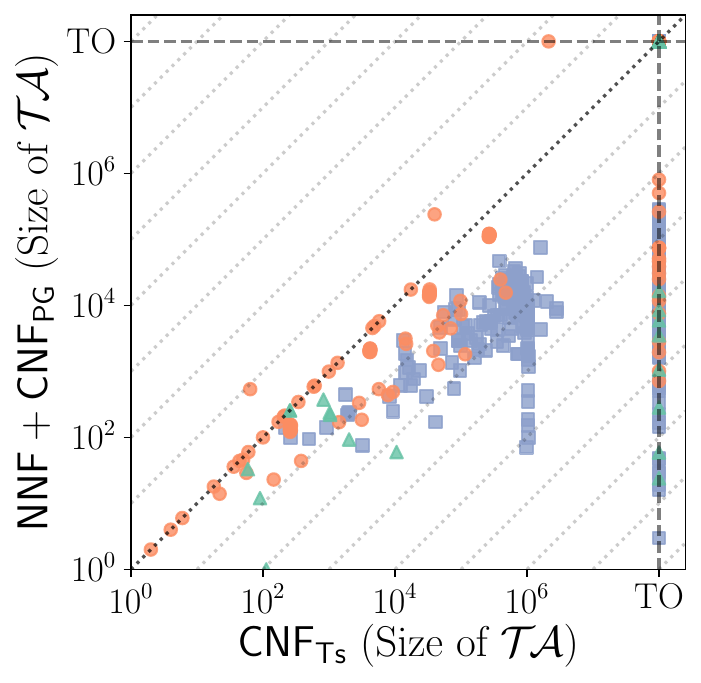}%
            \label{fig:plt:all:bool:norep:models:lab_vs_nnfpol}
        \end{subfigure}\hfill
        \begin{subfigure}[t]{0.26\textwidth}
            \centering
            \includegraphics[width=.85\textwidth]{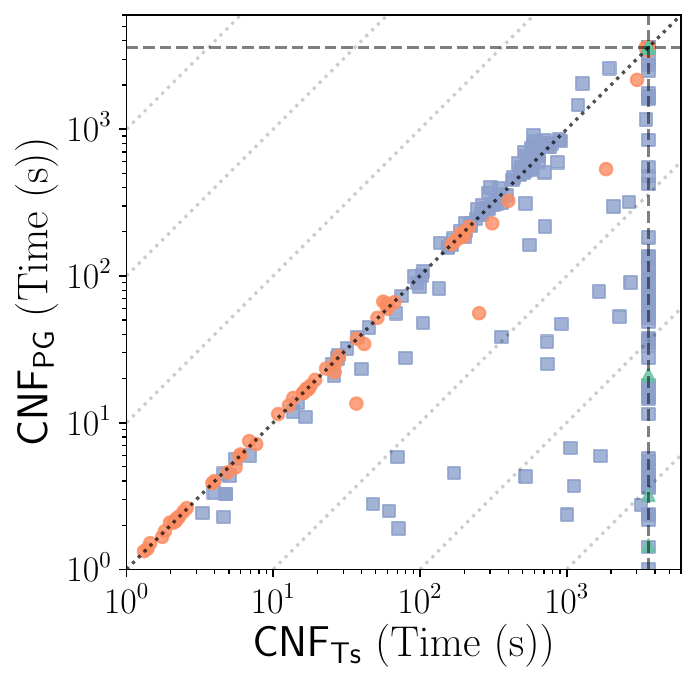}%
            \label{fig:plt:all:bool:norep:time:lab_vs_pol}
        \end{subfigure}\hfill
        \begin{subfigure}[t]{0.26\textwidth}
            \centering
            \includegraphics[width=.85\textwidth]{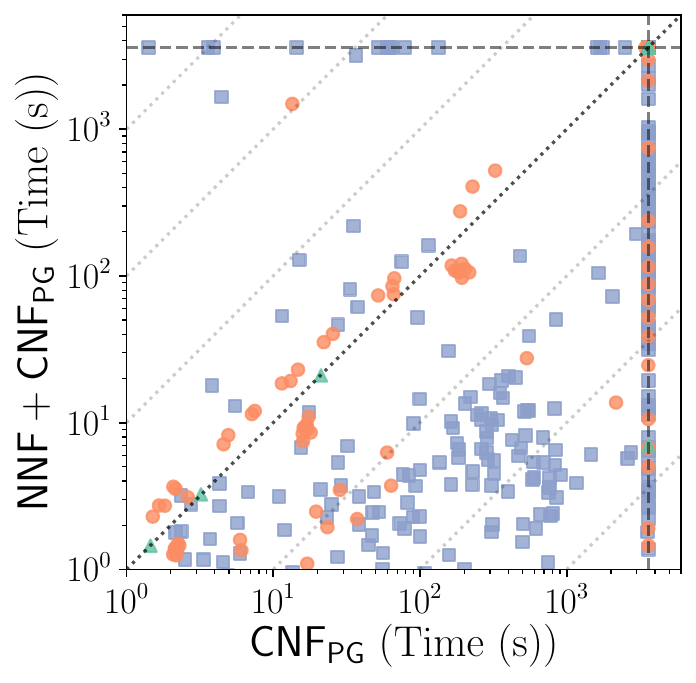}%
            \label{fig:plt:all:bool:norep:time:pol_vs_nnfpol}
        \end{subfigure}\hfill
        \begin{subfigure}[t]{0.26\textwidth}
            \centering
            \includegraphics[width=.85\textwidth]{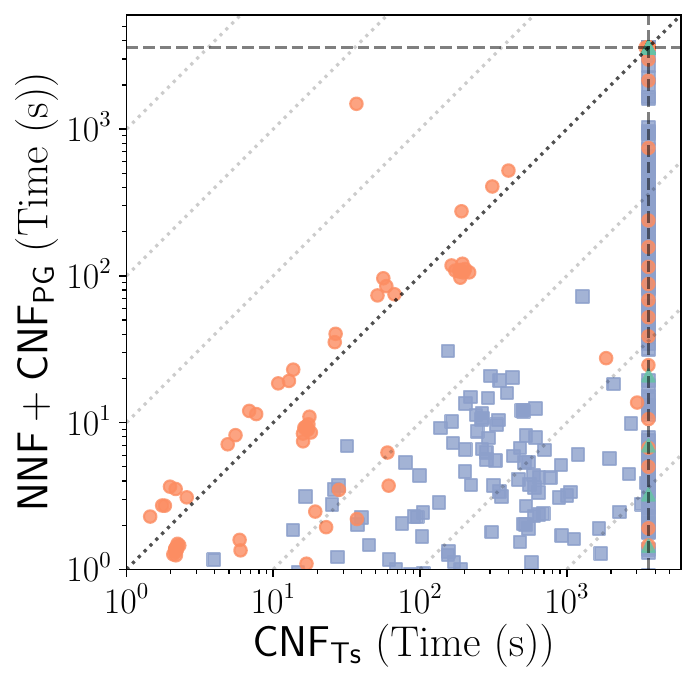}%
            \label{fig:plt:all:bool:norep:time:lab_vs_nnfpol}
        \end{subfigure}
        \caption{Results for disjoint enumeration.}%
        \label{fig:plt:all:bool:norep:scatter}
    \end{subfigure}
    \begin{subfigure}[t]{\textwidth}
        \begin{subfigure}[t]{0.26\textwidth}
            \centering
            \includegraphics[width=.85\textwidth]{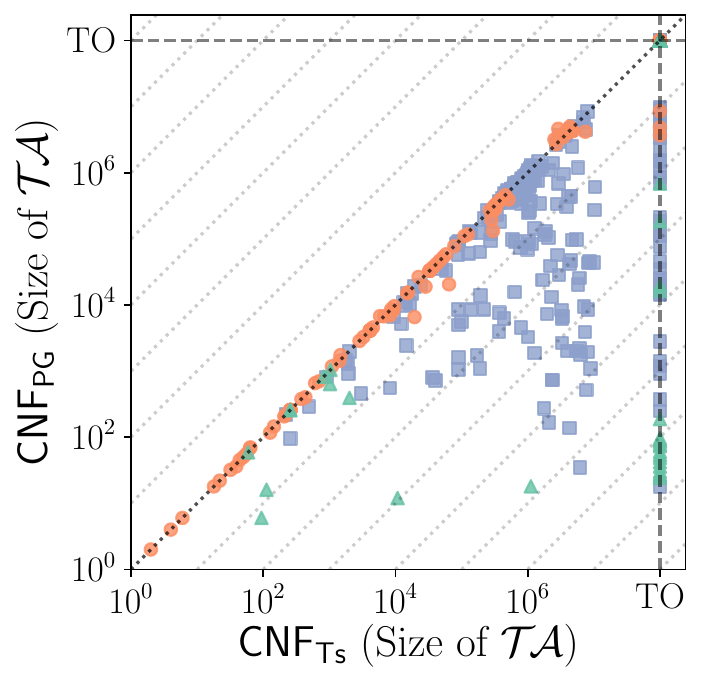}%
            \label{fig:plt:all:bool:rep:models:lab_vs_pol}
        \end{subfigure}\hfill
        \begin{subfigure}[t]{0.26\textwidth}
            \centering
            \includegraphics[width=.85\textwidth]{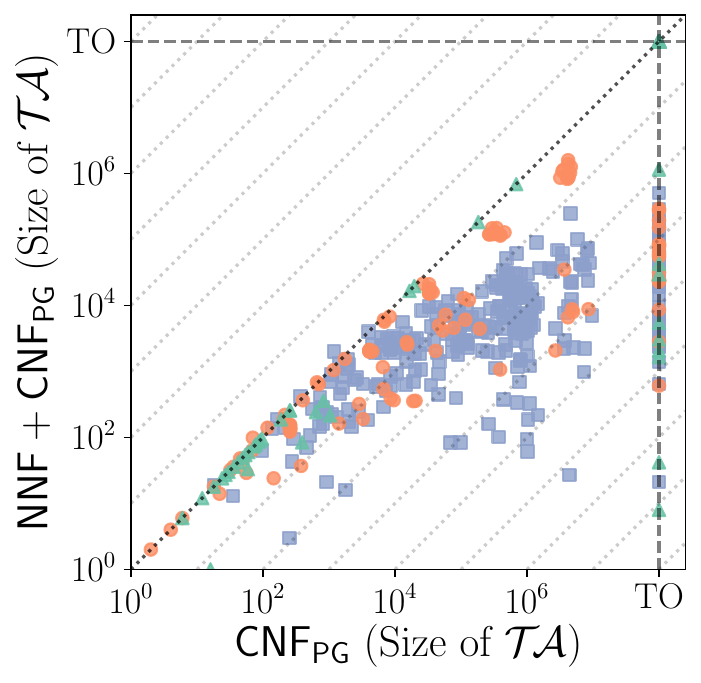}%
            \label{fig:plt:all:bool:rep:models:pol_vs_nnfpol}
        \end{subfigure}\hfill
        \begin{subfigure}[t]{0.26\textwidth}
            \centering
            \includegraphics[width=.85\textwidth]{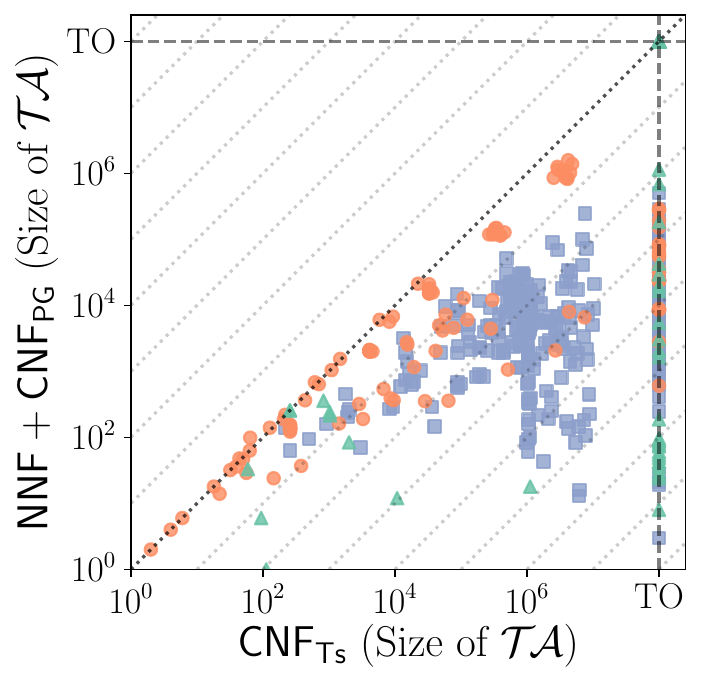}%
            \label{fig:plt:all:bool:rep:models:lab_vs_nnfpol}
        \end{subfigure}\hfill
        \begin{subfigure}[t]{0.26\textwidth}
            \centering
            \includegraphics[width=.85\textwidth]{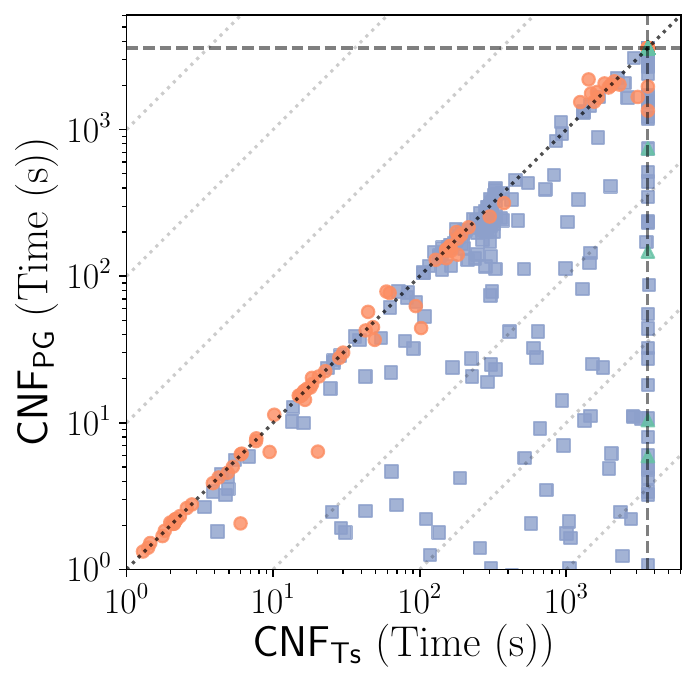}%
            \label{fig:plt:all:bool:rep:time:lab_vs_pol}
        \end{subfigure}\hfill
        \begin{subfigure}[t]{0.26\textwidth}
            \centering
            \includegraphics[width=.85\textwidth]{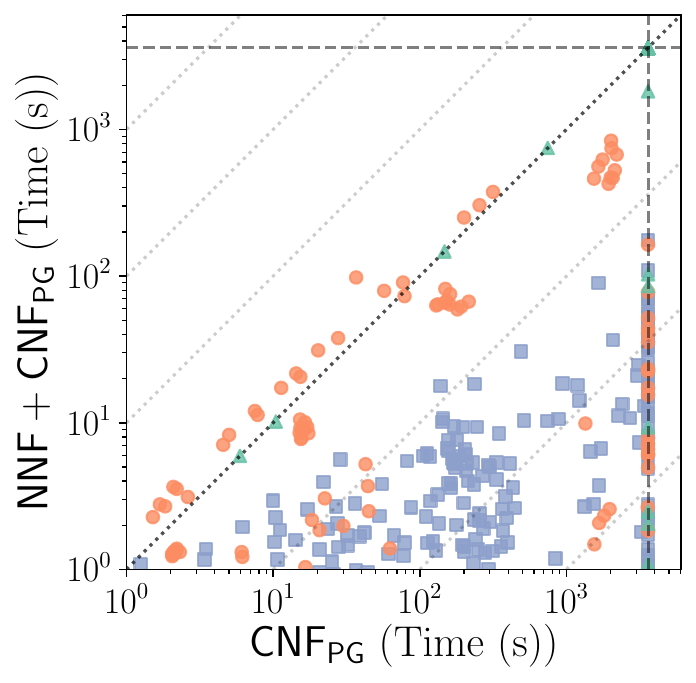}%
            \label{fig:plt:all:bool:rep:time:pol_vs_nnfpol}
        \end{subfigure}\hfill
        \begin{subfigure}[t]{0.26\textwidth}
            \centering
            \includegraphics[width=.85\textwidth]{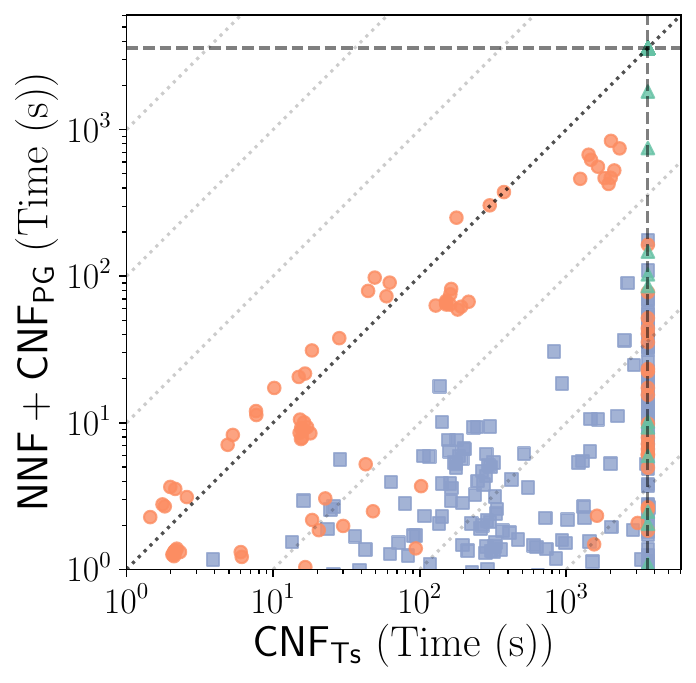}%
            \label{fig:plt:all:bool:rep:time:lab_vs_nnfpol}
        \end{subfigure}
        \caption{Results for non-disjoint enumeration.}%
        \label{fig:plt:all:bool:rep:scatter}
    \end{subfigure}
    \begin{subfigure}[t]{\textwidth}
        \vspace{.2em}
        \centering
        {\footnotesize
            
%
%
%


\newcommand{\best}[1]{\textbf{#1}}
\begin{tabularx}{.7\textwidth}{l|c|ccc|ccc}
    \multirow{3}{*}{Bench.} & \multirow{3}{*}{Instances} & \multicolumn{3}{c|}{T.O.\ for disjoint AllSAT} & \multicolumn{3}{c}{T.O.\ for non-disjoint AllSAT}                                          \\
                            &                            & \TseitinCNF{}                             & \PlaistedCNF{}                               & \NNFPlaisted{}
                            & \TseitinCNF{}              & \PlaistedCNF{}                            & \NNFPlaisted{}                                                               \\[0.2em]
    \hline
    Syn-Bool                & 300                        & 151                                       & 84                                           & \best{20}                      & 88 & 48 & \best{0}  \\
    ISCAS85                 & 250                        & 47                                        & 43                                           & \best{30}                      & 27 & 22 & \best{1}  \\
    AIG                     & 89                         & 79                                        & 73                                           & \best{71}                      & 78 & 60 & \best{50} \\
\end{tabularx}


        }
        \caption{Number of timeouts.}%
        \label{tab:timeouts:bool}
    \end{subfigure}
    \caption{Results on the Boolean benchmarks using \mathsat{}.
        Plots in~\ref{fig:plt:all:bool:norep:scatter} and~\ref{fig:plt:all:bool:rep:scatter} compare CNF-izations by \TAna{} size (first row) and execution time (second row).
        Points on dashed lines represent timeouts, shown in~\ref{tab:timeouts:bool}.
        All axes use a logarithmic scale.}%
    \label{fig:plt:all:bool:scatter}
\end{figure}

\begin{figure}
    \centering
    \begin{subfigure}[t]{\textwidth}
        \centering
        \includegraphics[height=1.8em]{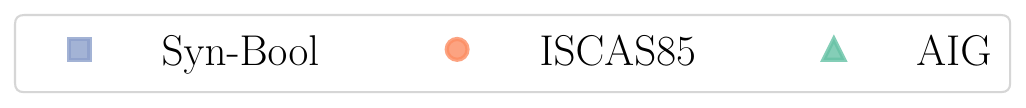}
    \end{subfigure}
    \begin{subfigure}[t]{\textwidth}
        \begin{subfigure}[t]{0.26\textwidth}
            \centering
            \includegraphics[width=.85\textwidth]{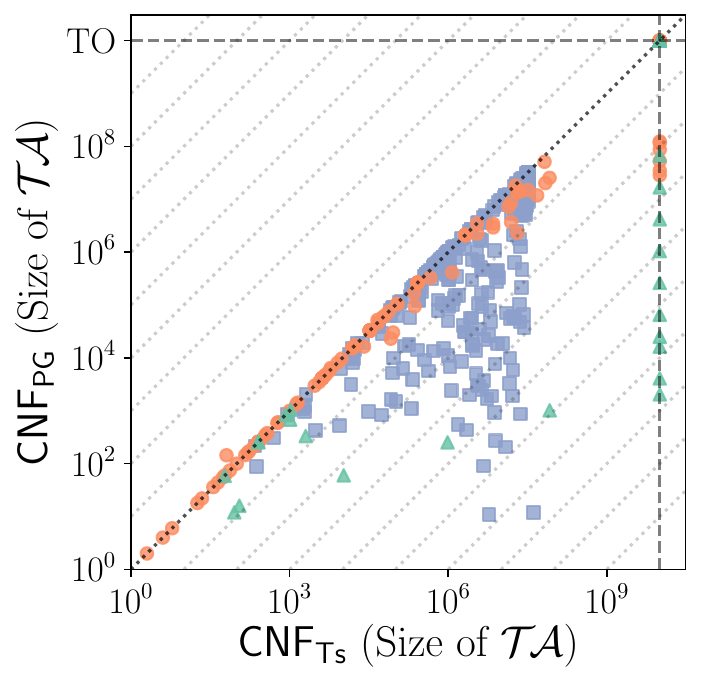}%
            \label{fig:plt:tabula:all:bool:norep:models:lab_vs_pol}
        \end{subfigure}\hfill
        \begin{subfigure}[t]{0.26\textwidth}
            \centering
            \includegraphics[width=.85\textwidth]{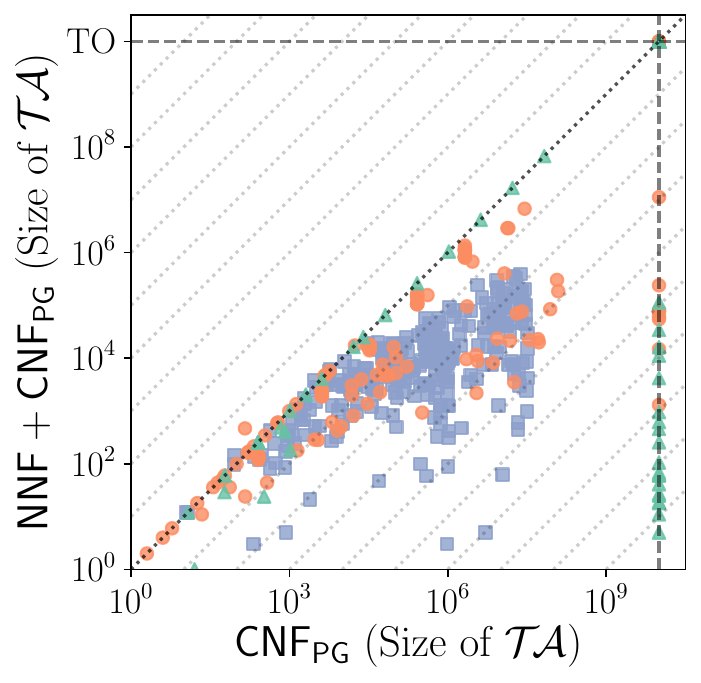}%
            \label{fig:plt:tabula:all:bool:norep:models:pol_vs_nnfpol}
        \end{subfigure}\hfill
        \begin{subfigure}[t]{0.26\textwidth}
            \centering
            \includegraphics[width=.85\textwidth]{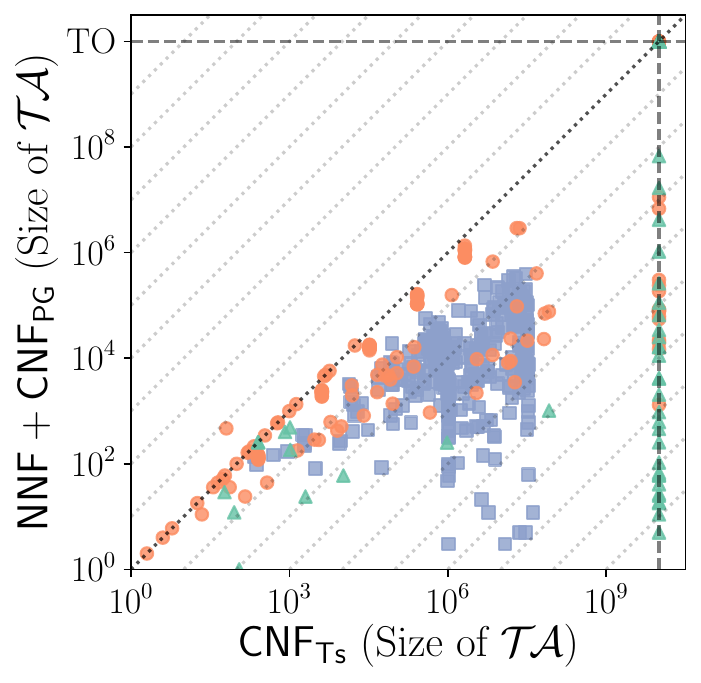}%
            \label{fig:plt:tabula:all:bool:norep:models:lab_vs_nnfpol}
        \end{subfigure}\hfill
        \begin{subfigure}[t]{0.26\textwidth}
            \centering
            \includegraphics[width=.85\textwidth]{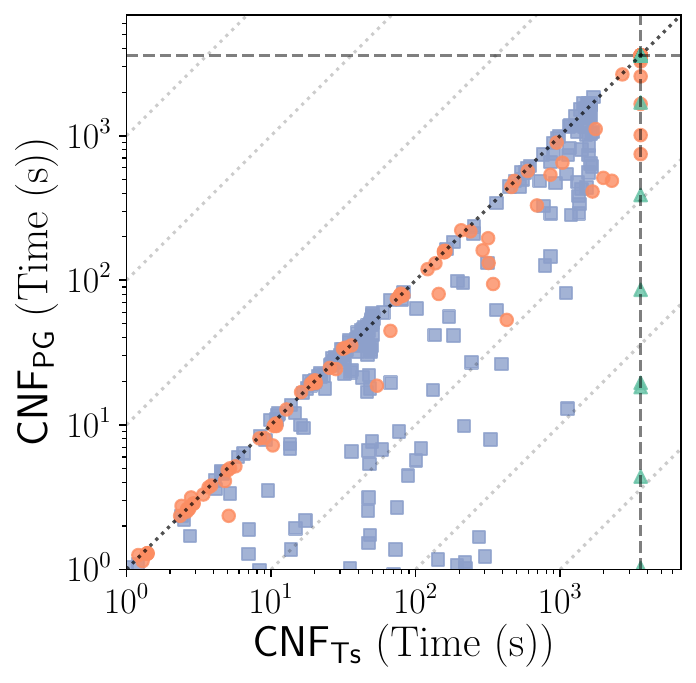}%
            \label{fig:plt:tabula:all:bool:norep:time:lab_vs_pol}
        \end{subfigure}\hfill
        \begin{subfigure}[t]{0.26\textwidth}
            \centering
            \includegraphics[width=.85\textwidth]{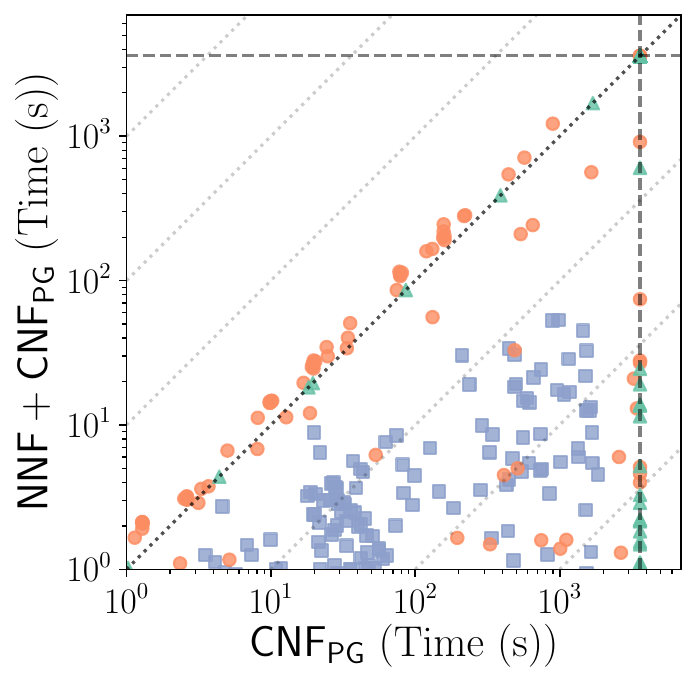}%
            \label{fig:plt:tabula:all:bool:norep:time:pol_vs_nnfpol}
        \end{subfigure}\hfill
        \begin{subfigure}[t]{0.26\textwidth}
            \centering
            \includegraphics[width=.85\textwidth]{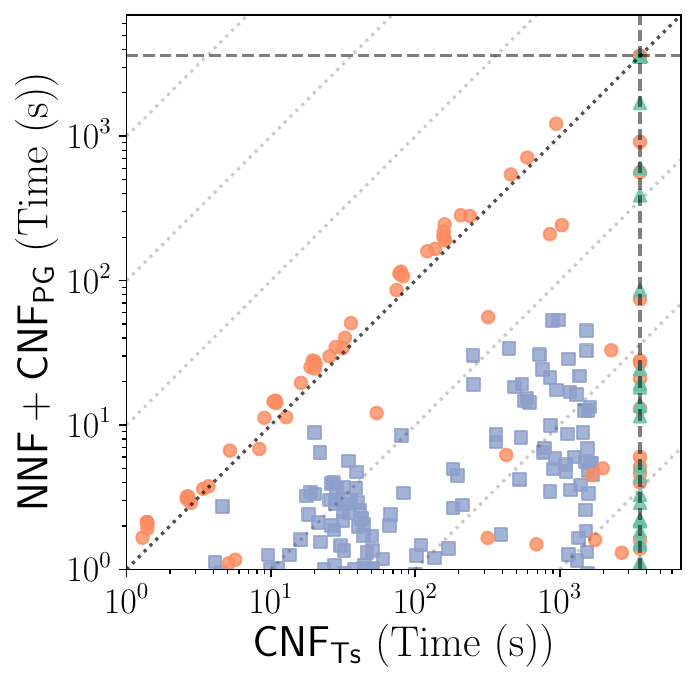}%
            \label{fig:plt:tabula:all:bool:norep:time:lab_vs_nnfpol}
        \end{subfigure}
        \caption{Results for disjoint enumeration.}%
        \label{fig:plt:tabula:all:bool:norep:scatter}
    \end{subfigure}
    \begin{subfigure}[t]{\textwidth}
        \vspace{.2em}
        \centering
        {\footnotesize
            \newcommand{\best}[1]{\textbf{#1}}
\begin{tabularx}{.44\textwidth}{l|c|ccc}
    \multirow{3}{*}{Bench.} & \multirow{3}{*}{Instances} & \multicolumn{3}{c}{T.O.\ for disjoint AllSAT}           \\
                            &                            & \TseitinCNF{}                             & \PlaistedCNF{}                               & \NNFPlaisted{}
                               \\[0.2em]
    \hline
    Syn-Bool                & 300                        & \best{0}                                       & \best{0}                                           & \best{0}  \\
    ISCAS85                 & 250                        & 17                                        & 11                                           & \best{3}  \\
    AIG                     & 89                         & 77                                        & 67                                           & \best{47} \\
\end{tabularx}


        }
        \caption{Number of timeouts.}%
        \label{tab:timeouts:tabula:bool}
    \end{subfigure}
    \caption{Results on the Boolean benchmarks using \tabularallsat{}.
        Plots in~\ref{fig:plt:tabula:all:bool:norep:scatter} compare CNF-izations by \TAna{} size (first row) and execution time (second row).
        Points on dashed lines represent timeouts, shown in~\ref{tab:timeouts:tabula:bool}.
        All axes use a logarithmic scale.}%
    \label{fig:plt:tabula:all:bool:scatter}
\end{figure}

\cref{fig:plt:all:bool:scatter,fig:plt:tabula:all:bool:scatter} show the results for AllSAT with \mathsat{} and \tabularallsat{}, respectively.\@ \cref{fig:plt:all:lra:scatter,fig:plt:tabula:all:lra:scatter} show the results for AllSMT with \mathsat{} and \tabularallsmt{}, respectively.
%
%
For each figure, we report a pair of subfigures comparing the CNF-izations for
disjoint and non-disjoint enumeration (when both are supported).\@ Each
subfigure reports a set of scatter plots to compare \TseitinCNF{},
\PlaistedCNF{} and \NNFPlaisted{} in terms of number of partial truth
assignments (size of \TAna), in the first row, and execution time, in the
second row. 
Each problem set is represented by a different color and marker.
(In~\sref{appendix:experiments}, we report the scatter plots for each group of
benchmarks separately.) Timeouts are represented by the points on the dashed
line, and summarized in the tables below the plots.

Notice the logarithmic scale of the axes (!).
%

\begin{figure}
    \centering
    \begin{subfigure}[t]{\textwidth}
        \centering
        \includegraphics[height=1.8em]{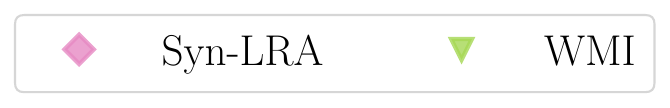}
    \end{subfigure}
    \begin{subfigure}[t]{\textwidth}
        \begin{subfigure}[t]{0.26\textwidth}
            \centering
            \includegraphics[width=.85\textwidth]{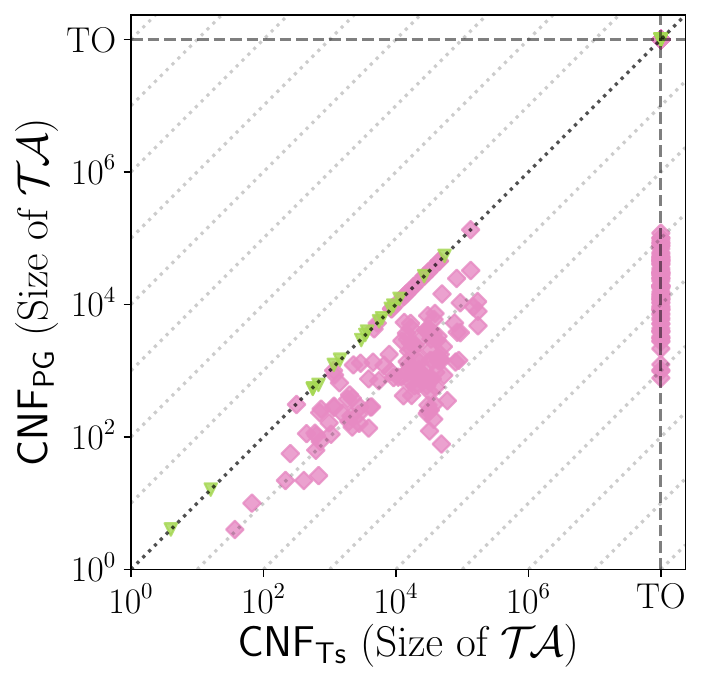}%
            \label{fig:plt:all:lra:norep:models:lab_vs_pol}
        \end{subfigure}\hfill
        \begin{subfigure}[t]{0.26\textwidth}
            \centering
            \includegraphics[width=.85\textwidth]{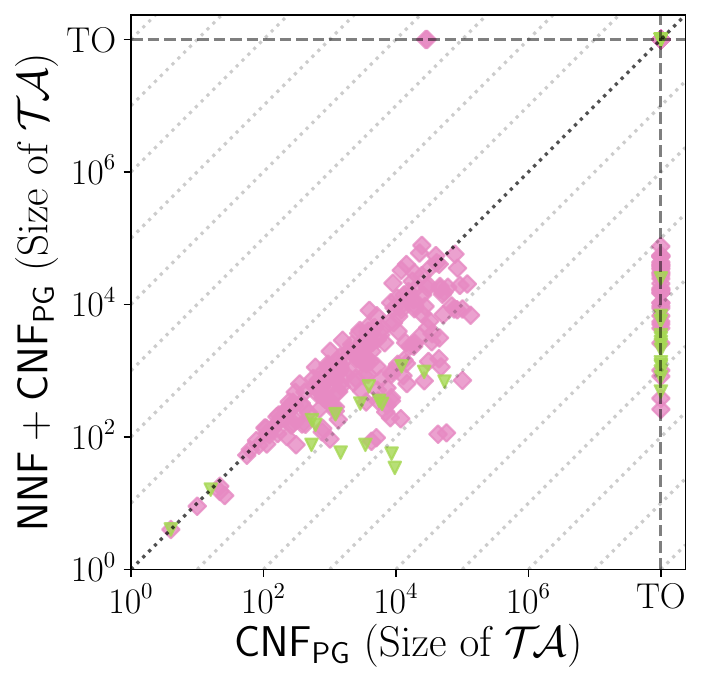}%
            \label{fig:plt:all:lra:norep:models:pol_vs_nnfpol}
        \end{subfigure}\hfill
        \begin{subfigure}[t]{0.26\textwidth}
            \centering
            \includegraphics[width=.85\textwidth]{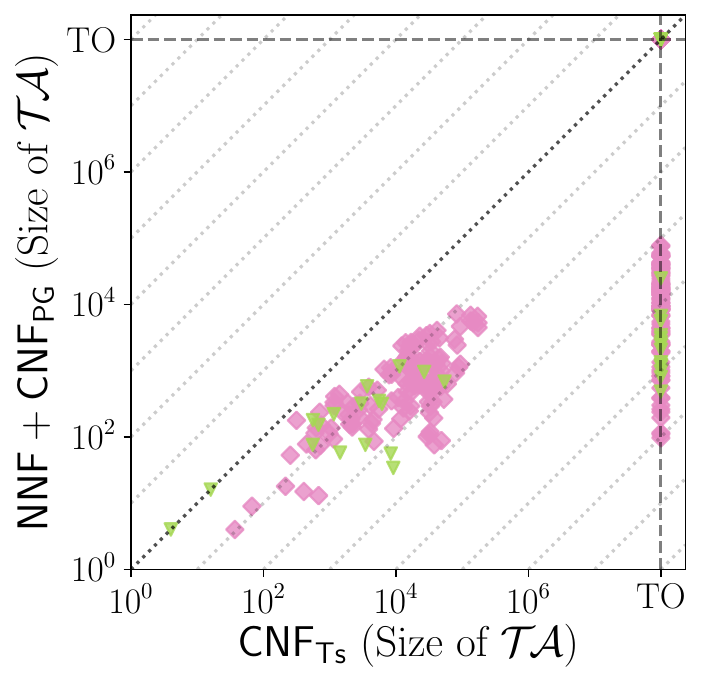}%
            \label{fig:plt:all:lra:norep:models:lab_vs_nnfpol}
        \end{subfigure}\hfill
        \begin{subfigure}[t]{0.26\textwidth}
            \centering
            \includegraphics[width=.85\textwidth]{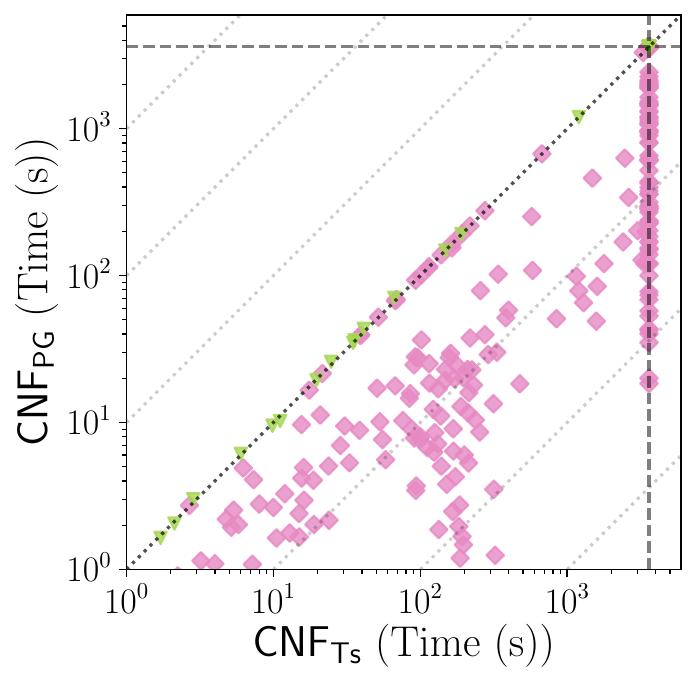}%
            \label{fig:plt:all:lra:norep:time:lab_vs_pol}
        \end{subfigure}\hfill
        \begin{subfigure}[t]{0.26\textwidth}
            \centering
            \includegraphics[width=.85\textwidth]{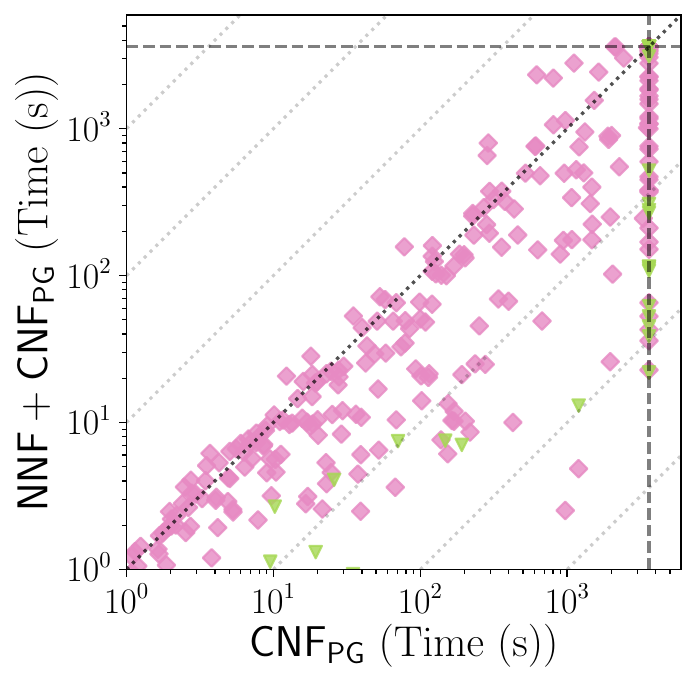}%
            \label{fig:plt:all:lra:norep:time:pol_vs_nnfpol}
        \end{subfigure}\hfill
        \begin{subfigure}[t]{0.26\textwidth}
            \centering
            \includegraphics[width=.85\textwidth]{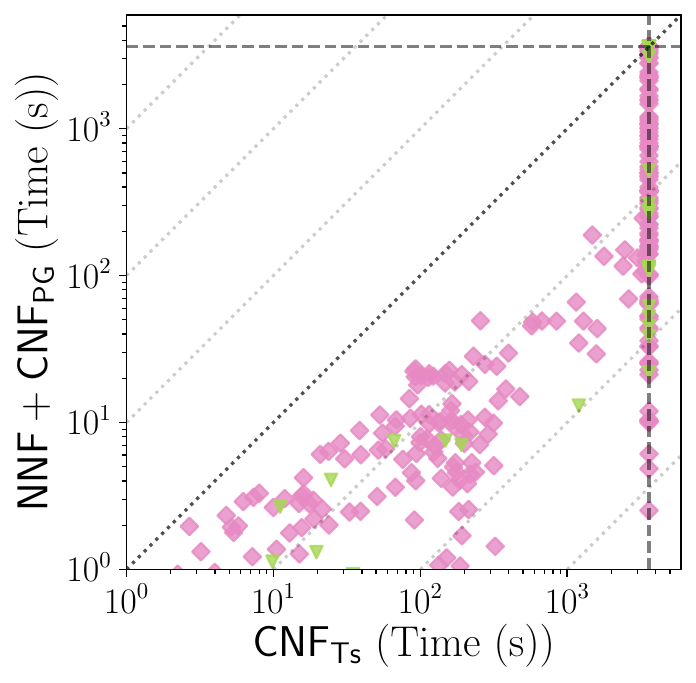}%
            \label{fig:plt:all:lra:norep:time:lab_vs_nnfpol}
        \end{subfigure}
        \caption{Results for disjoint enumeration.}%
        \label{fig:plt:all:lra:norep:scatter}
    \end{subfigure}
    \begin{subfigure}[t]{\textwidth}
        \begin{subfigure}[t]{0.26\textwidth}
            \centering
            \includegraphics[width=.85\textwidth]{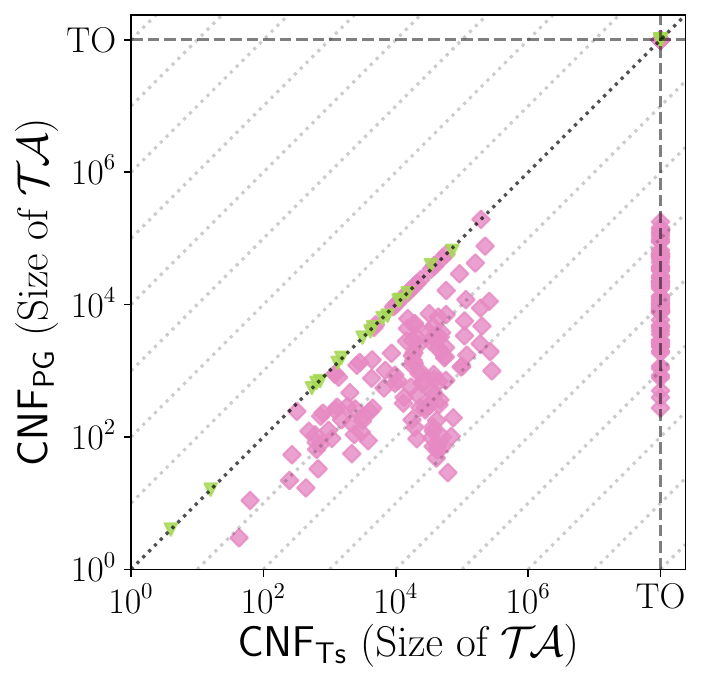}%
            \label{fig:plt:all:lra:rep:models:lab_vs_pol}
        \end{subfigure}\hfill
        \begin{subfigure}[t]{0.26\textwidth}
            \centering
            \includegraphics[width=.85\textwidth]{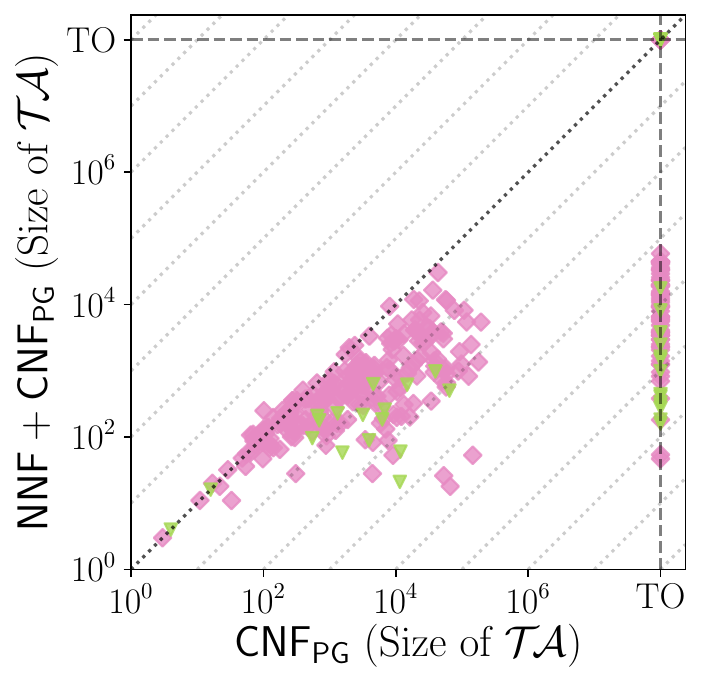}%
            \label{fig:plt:all:lra:rep:models:pol_vs_nnfpol}
        \end{subfigure}\hfill
        \begin{subfigure}[t]{0.26\textwidth}
            \centering
            \includegraphics[width=.85\textwidth]{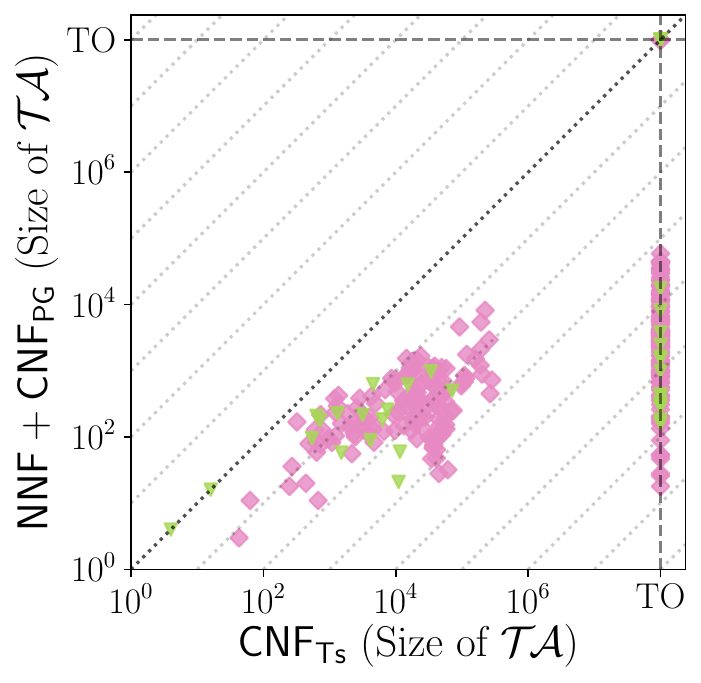}%
            \label{fig:plt:all:lra:rep:models:lab_vs_nnfpol}
        \end{subfigure}\hfill
        \begin{subfigure}[t]{0.26\textwidth}
            \centering
            \includegraphics[width=.85\textwidth]{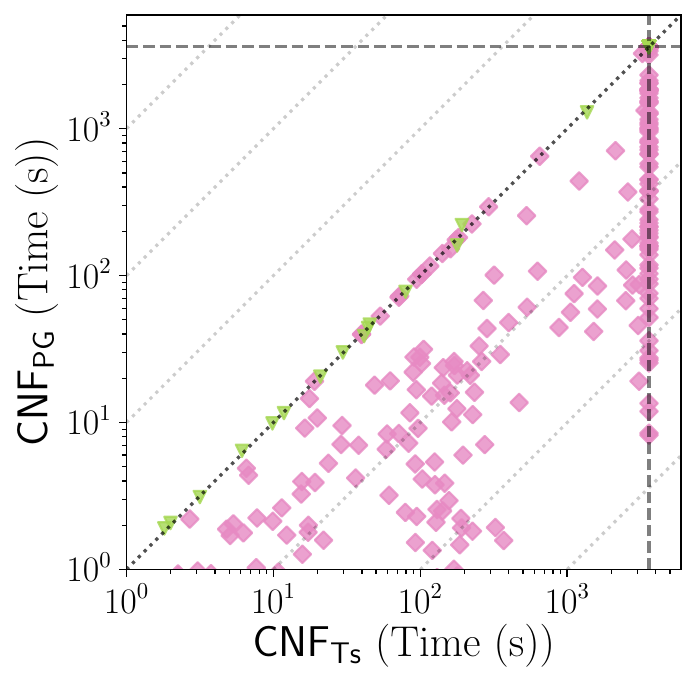}%
            \label{fig:plt:all:lra:rep:time:lab_vs_pol}
        \end{subfigure}\hfill
        \begin{subfigure}[t]{0.26\textwidth}
            \centering
            \includegraphics[width=.85\textwidth]{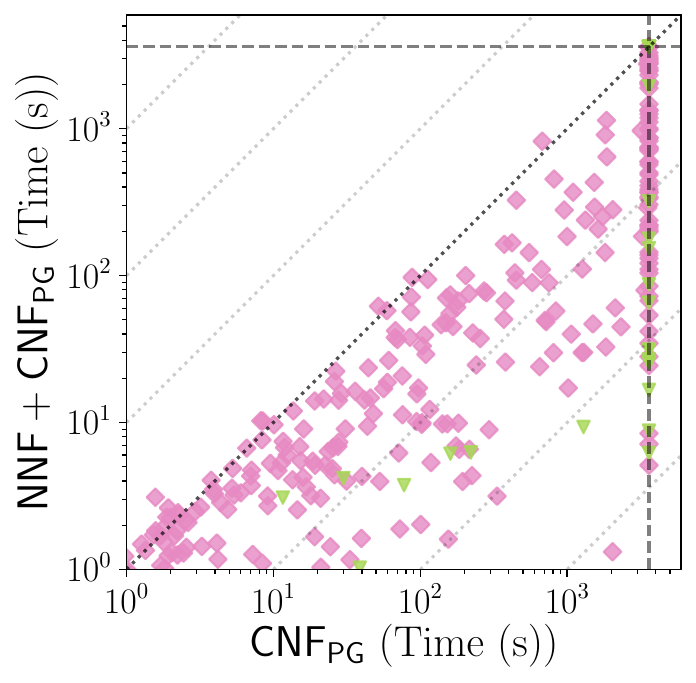}%
            \label{fig:plt:all:lra:rep:time:pol_vs_nnfpol}
        \end{subfigure}\hfill
        \begin{subfigure}[t]{0.26\textwidth}
            \centering
            \includegraphics[width=.85\textwidth]{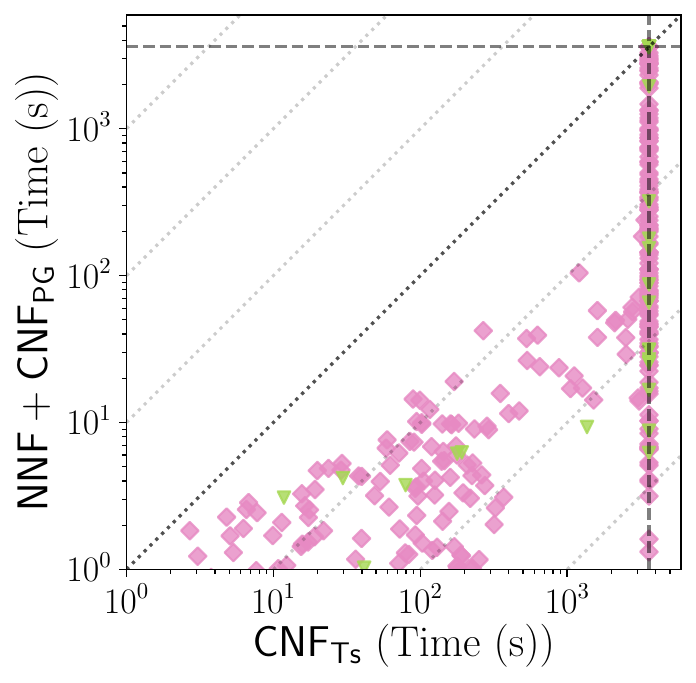}%
            \label{fig:plt:all:lra:rep:time:lab_vs_nnfpol}
        \end{subfigure}
        \caption{Results for non-disjoint enumeration.}%
        \label{fig:plt:all:lra:rep:scatter}
    \end{subfigure}
    \begin{subfigure}[t]{\textwidth}
        \vspace{.2em}
        \centering
        {\footnotesize
            \newcommand{\best}[1]{\textbf{#1}}
\begin{tabularx}{.7\textwidth}{l|c|ccc|ccc}
    \multirow{3}{*}{Bench.} & \multirow{3}{*}{Instances} & \multicolumn{3}{c|}{T.O.\ for disjoint AllSMT} & \multicolumn{3}{c}{T.O.\ for non-disjoint AllSMT}                                        \\
                            &                            & \TseitinCNF{}                                  & \PlaistedCNF{}                                    & \NNFPlaisted{}
                            & \TseitinCNF{}              & \PlaistedCNF{}                                 & \NNFPlaisted{}                                                                           \\[0.2em]
    \hline
    Syn-LRA                 & 300                        & 155                                            & 88                                                & \best{48}      & 152 & 74 & \best{7} \\
    WMI                     & 40                         & 23                                             & 23                                                & \best{9}       & 23  & 23 & \best{9} \\
\end{tabularx}
        }
        \caption{Number of timeouts.}%
        \label{tab:timeouts:lra}
    \end{subfigure}
    \caption{Results on the \smtlarat{} benchmarks using \mathsat{}.
        Plots in~\ref{fig:plt:all:lra:norep:scatter},~\ref{fig:plt:all:lra:rep:scatter} compare CNF-izations by \TAna{} size (first row) and execution time (second row).
        Points on dashed lines represent timeouts, shown in~\ref{tab:timeouts:lra}.
        All axes use a logarithmic scale.}%
    \label{fig:plt:all:lra:scatter}
\end{figure}
\begin{figure}
    \centering
    \begin{subfigure}[t]{\textwidth}
        \centering
        \includegraphics[height=1.8em]{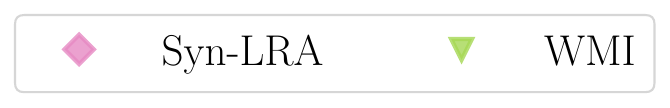}
    \end{subfigure}
    \begin{subfigure}[t]{\textwidth}
        \begin{subfigure}[t]{0.26\textwidth}
            \centering
            \includegraphics[width=.85\textwidth]{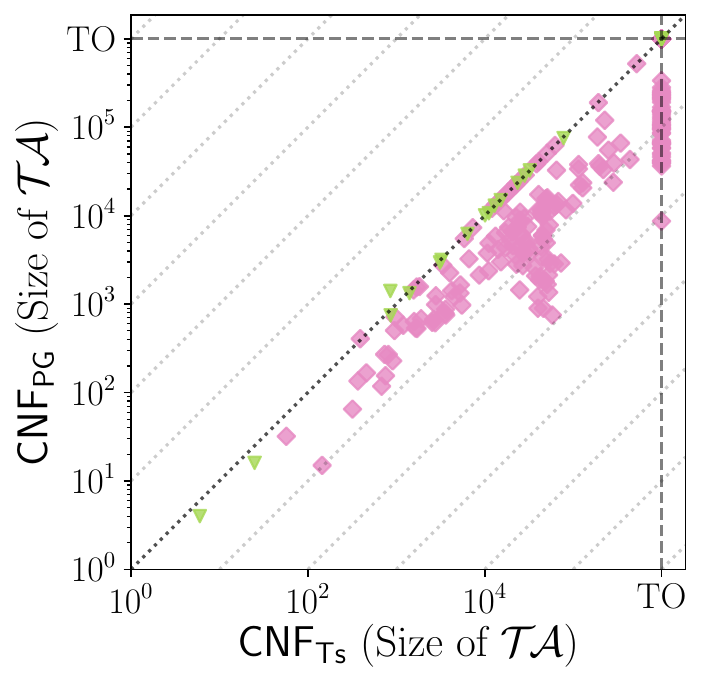}%
            \label{fig:plt:tabula:all:lra:norep:models:lab_vs_pol}
        \end{subfigure}\hfill
        \begin{subfigure}[t]{0.26\textwidth}
            \centering
            \includegraphics[width=.85\textwidth]{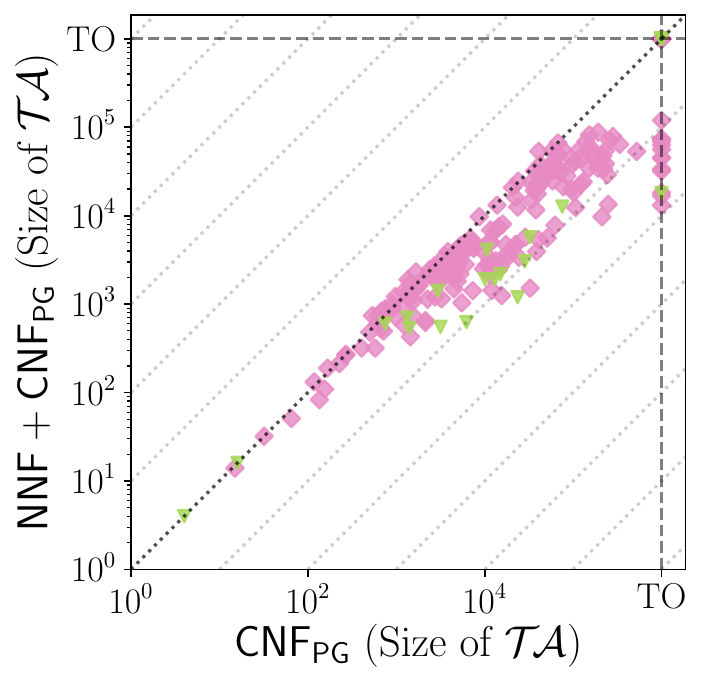}%
            \label{fig:plt:tabula:all:lra:norep:models:pol_vs_nnfpol}
        \end{subfigure}\hfill
        \begin{subfigure}[t]{0.26\textwidth}
            \centering
            \includegraphics[width=.85\textwidth]{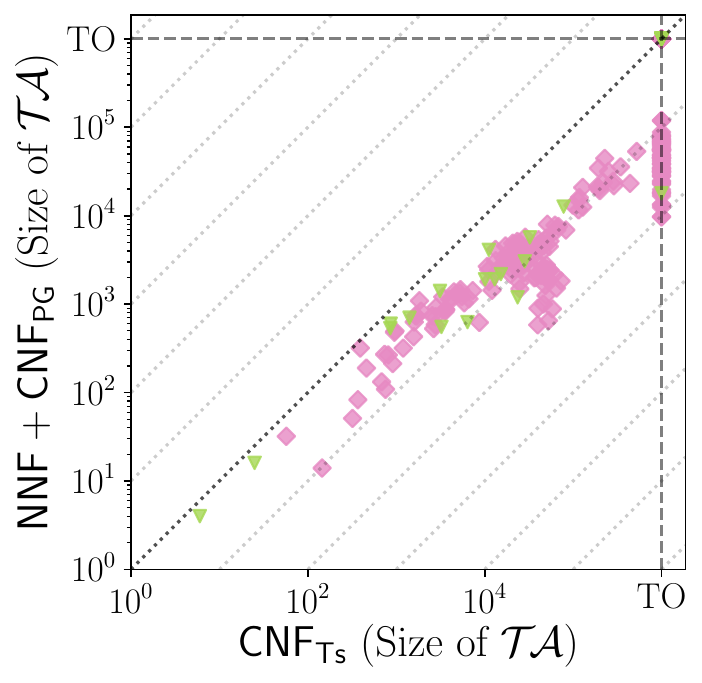}%
            \label{fig:plt:tabula:all:lra:norep:models:lab_vs_nnfpol}
        \end{subfigure}\hfill
        \begin{subfigure}[t]{0.26\textwidth}
            \centering
            \includegraphics[width=.85\textwidth]{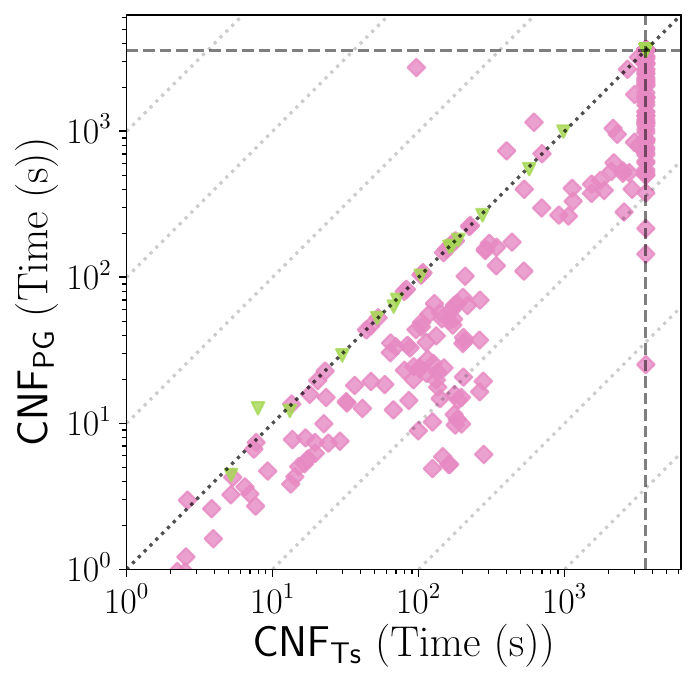}%
            \label{fig:plt:tabula:all:lra:norep:time:lab_vs_pol}
        \end{subfigure}\hfill
        \begin{subfigure}[t]{0.26\textwidth}
            \centering
            \includegraphics[width=.85\textwidth]{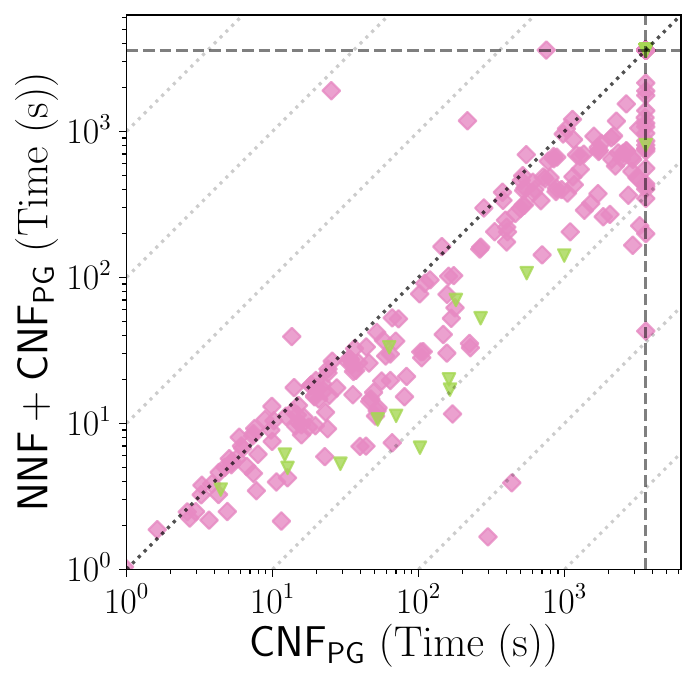}%
            \label{fig:plt:tabula:all:lra:norep:time:pol_vs_nnfpol}
        \end{subfigure}\hfill
        \begin{subfigure}[t]{0.26\textwidth}
            \centering
            \includegraphics[width=.85\textwidth]{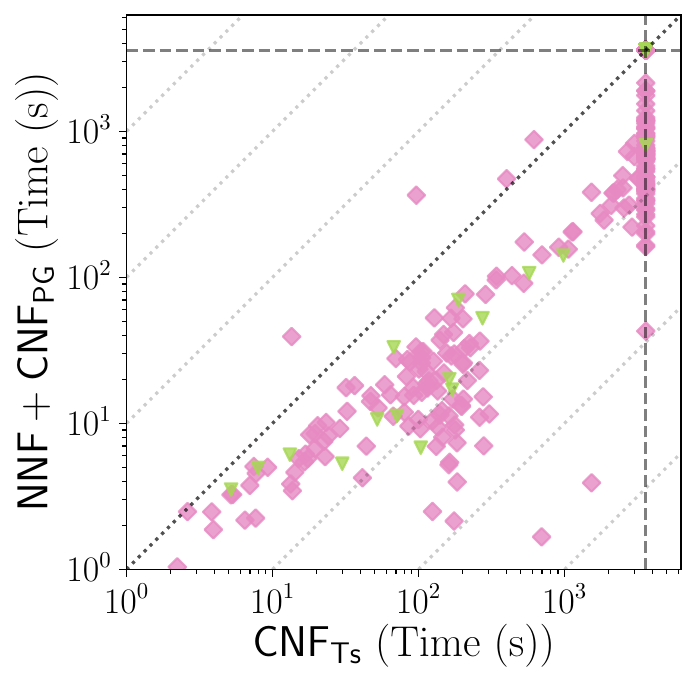}%
            \label{fig:plt:tabula:all:lra:norep:time:lab_vs_nnfpol}
        \end{subfigure}
        \caption{Results for disjoint enumeration.}%
        \label{fig:plt:tabula:all:lra:norep:scatter}
    \end{subfigure}
    \begin{subfigure}[t]{\textwidth}
        \vspace{.2em}
        \centering
        {\footnotesize
            \newcommand{\best}[1]{\textbf{#1}}
\begin{tabularx}{.44\textwidth}{l|c|ccc}
    \multirow{3}{*}{Bench.} & \multirow{3}{*}{Instances} & \multicolumn{3}{c}{T.O.\ for disjoint AllSMT}                                            \\
                            &                            & \TseitinCNF{}                                  & \PlaistedCNF{} & \NNFPlaisted{}
    \\[0.2em]
    \hline
    Syn-LRA                 & 300                        & 147                                            & 92             & \best{73}               \\
    WMI                     & 40                         & 24                                             & 24             & \best{23}               \\
\end{tabularx}
        }
        \caption{Number of timeouts.}%
        \label{tab:timeouts:tabula:lra}
    \end{subfigure}
    \caption{Results on the \smtlarat{} benchmarks using \tabularallsmt{}.
        Plots in~\ref{fig:plt:tabula:all:lra:norep:scatter} compare CNF-izations by \TAna{} size (first row) and execution time (second row).
        Points on dashed lines represent timeouts, shown in~\ref{tab:timeouts:tabula:lra}.
        All axes use a logarithmic scale.}%
    \label{fig:plt:tabula:all:lra:scatter}
\end{figure}

\paragraph{The 
    Boolean 
    synthetic benchmarks.} (\cref{fig:plt:all:bool:scatter,fig:plt:tabula:all:bool:scatter})
%
%
The plots show that in the disjoint case \PlaistedCNF{} performs better than
\TseitinCNF{}, since both \mathsat{}~(\cref{fig:plt:all:bool:norep:scatter})
and \tabularallsat~(\cref{fig:plt:tabula:all:bool:norep:scatter}) enumerate a
smaller \TAna (first row) in less time (second row) on every instance.
Furthermore, the combination of \NNFna{} and \PlaistedCNF{} yields by far the
best results for both solvers, drastically reducing the size of \TAna{} and the
execution time by orders of magnitude w.r.t.\ both \TseitinCNF{} and
\PlaistedCNF{}. 
The advantage of \NNFPlaisted{} over both \TseitinCNF{} and \PlaistedCNF{} is
even more evident for the non-disjoint case
(\cref{fig:plt:all:bool:rep:scatter}). 

\paragraph{The ISCAS'85 benchmarks.}
(\cref{fig:plt:all:bool:scatter,fig:plt:tabula:all:bool:scatter})
%
%
First, we notice that \TseitinCNF{} and \PlaistedCNF{} have very similar
behavior, both in terms of execution time and size of \TAna{}. The reason is
that in circuits most of the sub-formulas occur with double polarity, so that
the two encodings are very similar, if not identical. 

Second, we notice that by converting the formula into NNF before applying
\PlaistedCNF{} the enumeration, both disjoint and non-disjoint, 
is much more effective, as a much smaller \TAna is enumerated, with only a few
outliers.
The fact that for some instances \NNFPlaisted{} takes a little more time can be
due to the fact that it can produce a formula that is up to twice as large and
contains up to twice as many label atoms as the other two encodings, increasing
the time to find the assignments. Notice also that, even enumerating a smaller
\TAna{} at a price of a small time overhead can be beneficial in many
applications. 


\paragraph{The AIG benchmarks.} (\cref{fig:plt:all:bool:scatter,fig:plt:tabula:all:bool:scatter})
%
This set of benchmarks is by far the most challenging one, as they contain many
Boolean atoms and a very complex structure. For this reason, even the
best-performing encoding, \NNFPlaisted{}, reports many timeouts. 

Nevertheless, we can still observe that the combination of \NNFna{} and
\PlaistedCNF{} is the best-performing encoding for both solvers, both in terms
of size of \TAna{} and execution time. 


\paragraph{The \smtlarat{} synthetic benchmarks.}
(\cref{fig:plt:all:lra:scatter,fig:plt:tabula:all:lra:scatter})
%
The plots confirm that the analysis holds also for the All\smt{} case. The
results are in line with those obtained on the Boolean benchmarks, for both the
disjoint and non-disjoint cases. 

\paragraph{The WMI benchmarks.} (\cref{fig:plt:all:lra:scatter,fig:plt:tabula:all:lra:scatter})
%
%
%
In these benchmarks, most of the sub-formulas occur with double polarity, so
that \TseitinCNF{} and \PlaistedCNF{} encodings are almost identical, and they
obtain very similar results in both metrics. The advantage is significant,
instead, if the formula is converted into NNF upfront, since by using
\NNFPlaisted{} the solvers enumerate a smaller \TAna. In this application, it
is crucial to enumerate as few partial truth assignments as possible, since for
each truth assignment an integral must be computed, which is a very expensive
operation~\cite{morettin-wmi-ijcar17,morettin-wmi-aij19,spallittaSMTbasedWeightedModel2022,spallittaEnhancingSMTbasedWeighted2024a}.
Notice also that WMI requires disjoint enumeration to decompose the whole
integral into a sum of directly-computable integrals, one for each satisfying
truth assignment. Nevertheless, for completeness, we also report the results
for the non-disjoint case.

\paragraph{Overall Results.} Overall, from the plots and tables in
\cref{fig:plt:all:bool:scatter,fig:plt:tabula:all:bool:scatter,fig:plt:all:lra:scatter,fig:plt:tabula:all:lra:scatter},
it is eye-catching that the usage of \NNFPlaisted{} instead of \TseitinCNF{} or
\PlaistedCNF{} causes a dramatic improvement in the performances of \mathsat{},
\tabularallsat{}, and \tabularallsmt{}, in terms of both CPU time and size of
the assignment sets generated, for both disjoint and non-disjoint enumeration,
on all AllSAT and AllSMT benchmark sets under test.

\subsection{Discussion: Using \NNFPlaisted{} with related problems}\label{sec:experiments:others}%
We stress the fact that the proposed \NNFPlaisted{} encoding is conceived and
expected to be effective for \emph{enumeration} techniques which generate
\emph{partial} assignments (see \sref{sec:solution}). Thus, we do not expect
benefits in using it with plain SAT or SMT \emph{solving} (see also
\cref{rem:ex3:preconv}), nor with enumeration techniques which generate {\em
        total} or {\em nearly-total} assignments only.
To this extent, we investigate empirically and discuss some related problems
for which we do not expect benefits in using \NNFPlaisted{} for the above
reason. (See also \sref{sec:related-work}.)

\paragraph{Plain  SAT  and SMT solving.}%
\label{sec:experiments:sat}
%
To support \cref{rem:ex3:preconv}, in Appendix~\ref{appendix:sat} we compare
the different CNF encodings for plain SAT and SMT solving on all the benchmarks
in~\sref{sec:experiments:benchmarks}. Even though these problems are very small
for plain solving, and SAT and SMT
solvers deal with them efficiently, we can see that the usage of \NNFPlaisted{}
brings no advantage, and solving is uniformly slower than with \PlaistedCNF{}
or \TseitinCNF{}. This shows that our novel technique works specifically for
enumeration but not for solving, as expected.

\paragraph{Answer Set Programming.}
For the reason discussed above, we do not expect our new CNF encoding to bring
benefits to {Answer Set Programming (ASP)},
because answer sets represent \emph{total} truth assignments, since they
include atoms assigned to true, and other atoms are implicitly assumed to be
assigned to false. (We received confirmation from the clasp and clingo
developers that these tools do not support partial assignments. See the
discussion at
\url{https://github.com/potassco/clingo/issues/512#issuecomment-2223162682}.)
Notice that the concept of ``minimal'' in ASP refers to the minimality of the
set of atoms that are true in the answer set, and not to the minimality of the
set of atoms that are assigned a truth value, as in this paper.

\paragraph{(Weighted) Model Counting.}
For {(weighted) model counting} the \TseitinCNF{} encoding is currently the
most suitable, because it preserves the model count, whereas \PlaistedCNF{}
---and hence \NNFPlaisted{}--- does not.
In order to preserve the model count with these encodings and with current
model counters, one should make them project away the \allB{} labels,
drastically worsening their performances.

Although investigating ad-hoc model counting techniques for
\NNFPlaisted{}-encoded formulas (e.g., by exploiting the polarity of label
variables) could be an interesting topic for future research, it would
definitely exceed the scope of this paper.

\paragraph{d-DNNF compilation, (projected)       enumeration and counting.}%
\label{sec:experiments:d4}
We investigated the effect of the different CNF-izations on d-DNNF compilation, (projected) enumeration, and model counting.
We tested the tools \dfdecdnnf{}~\cite{lagniezLeveragingDecisionDNNFCompilation2024} for {projected} d-DNNF compilation and enumeration,
and \df{}~\cite{lagniezImprovedDecisionDNNFCompiler2017} for
projected d-DNNF model counting, on the Boolean benchmarks in
\sref{sec:experiments:benchmarks}.
Each formula was converted into CNF, and then compiled into d-DNNF projected on
the original atoms \allA{}.

%
The results are reported in \sref{appendix:d4:enum} and \sref{appendix:d4:counting} for
enumeration and model counting, respectively. We notice that, by using
\NNFPlaisted{} instead of \TseitinCNF{} or \PlaistedCNF{}, these tools
\begin{enumerate*}[(a)]
    \item\label{item:d4:analysis:models} show no substantial difference in terms of number and size of the assignments generated, and
    \item\label{item:d4:analysis:time} even worsen their time performance.
\end{enumerate*}
The former fact~\ref{item:d4:analysis:models} is due to the fact that \df{}$_p$ algorithm~\cite{lagniezRecursiveAlgorithmProjected2019} ---implemented by \df{} for projected compilation and counting--- branches first on
important atoms \allA, and only at the
end of the branch it tries to assign also the non-important atoms \allB{}  to be
projected away. Consequently,
the projected assignments include all or almost-all important
atoms, thus producing total or nearly-total projected assignments on \allA.
This is not surprising, since d-DNNF-based tools typically do not rely their
efficiency on the enumeration of short partial assignments, but rather on the
effective decomposition and caching of subproblems.
%
The latter fact~\ref{item:d4:analysis:time} may be due to several reasons:
unlike with \TseitinCNF, with \PlaistedCNF{} the assignment of the \allA{}
values does not force the deterministic assignment of the \allB values by BCP,
causing extra useless search;
also, the NNF transformation up to doubles the number of the non-important
atoms \allB{} to be projected away;
furthermore, we conjecture that the duplication of labels \poslab{B_i} and
\neglab{B_i}, along with distinct encodings \poslab{\vi_i} and \neglab{\vi_i}
for \NNF{\vi_i} and \NNF{\neg\vi_i}, may affect the effectiveness of the
caching and partitioning mechanisms inside the \df{} d-DNNF compiler.

Although in principle ad-hoc d-DNNF compilation strategies for d-DNNF-based
model counting and enumeration could be devised to exploit the properties of
our \NNFPlaisted{} encoding, investigating such techniques would exceed the
scope of this paper.

\section{Related work}
\label{sec:related-work}
\paragraph{Applications of AllSAT and AllSMT}
SAT and \smt{} enumeration has an important role in a variety of applications,
ranging from artificial intelligence to formal verification. AllSAT and AllSMT,
mainly in their disjoint version, play a foundational role in several
frameworks for \emph{probabilistic reasoning}, such as model counting in
\smt{}~(\emph{\#\smt})~\cite{chistikovApproximateCountingSMT2017} and
\emph{Weighted Model Integration
    (WMI)}~\cite{belleProbabilisticInferenceHybrid2015,morettin-wmi-ijcar17,morettin-wmi-aij19,spallittaSMTbasedWeightedModel2022,spallittaEnhancingSMTbasedWeighted2024a}.
Specifically,
\#\smtlarat{}~\cite{maVolumeComputationBoolean2009,zhouEstimatingVolumeSolution2015,geComputingEstimatingVolume2018}
consists in summing up the volumes of the convex polytopes defined by each of
the \larat{}-satisfiable truth assignments propositionally satisfying a
\smtlarat{} formula, and has been employed for value estimation of
probabilistic programs~\cite{chistikovApproximateCountingSMT2017} and for
quantitative program analysis~\cite{liuProgramAnalysisQualitative2011}.\@ WMI
can be seen as a generalization of \#\smtlarat{} that additionally considers a
weight function $w$ that has to be integrated over each of such polytopes, and
has been used to perform inference in hybrid probabilistic models such as
Bayesian and Markov networks~\cite{belleProbabilisticInferenceHybrid2015} and
Density Estimation Trees~\cite{spallittaSMTbasedWeightedModel2022}.
%
%
AllSAT, both disjoint and non-disjoint, has applications also in \emph{data
    mining}, where the problem of frequent itemsets can be encoded into a
propositional formula whose satisfying assignments are the frequent
itemsets~\cite{boudaneSATbasedApproachMining2016,dlalaComparativeStudySATBased2016}.
It has also been used in the context of \emph{software testing} to generate a
suite of test inputs that should match a given
output~\cite{khurshidCaseEfficientSolution2004}, and in \emph{circuit design},
to convert a formula from CNF to
DNF~\cite{minatoFindingAllSimple1998,miltersenConvertingCNFDNF2005,bernasconiCompactDSOPPartial2013},
and for Static Timing Analysis to determine the inputs that can trigger a
timing violation in a circuit~\cite{friedAllSATCombinationalCircuits2023}.
AllSAT and AllSMT have also been applied in \emph{network verification} for
checking network reachability and for analyzing the correctness and consistency
of network connectivity
policies~\cite{lopes2013network,jayaraman2014automated,lopesCheckingBeliefsDynamic2015}.
Moreover, they have been used for computing the image and preimage of a given
set of states in \emph{unbounded model
    checking}~\cite{mcmillanApplyingSATMethods2002,grumbergMemoryEfficientAllSolutions2004,liNovelSATAllsolutions2004},
and they are also at the core of algorithms for computing \emph{predicate
    abstraction}, a concept widely used in formal verification for automatically
computing finite-state abstractions for systems with potentially infinite state
space~\cite{lahiriSymbolicApproachPredicate2003,clarkePredicateAbstractionANSIC2004,lahiriSMTTechniquesFast2006}.

\paragraph{AllSAT}
Most of the works on AllSAT have focused on the enumeration of satisfying
assignments for CNF formulas
(e.g.,~\cite{morgadoGoodLearningImplicit2005,huangUsingDPLLEfficient2004,yuAllSATUsingMinimal2014,todaBDDConstructionAll2015,todaImplementingEfficientAll2016,liangAllSATCCBoostingAllSAT2022,spallittaDisjointPartialEnumeration2024}),
with several efforts in developing efficient and effective techniques for
minimizing partial assignments
(e.g.,~\cite{raviMinimalAssignmentsBounded2004,morgadoGoodLearningImplicit2005,todaImplementingEfficientAll2016}).
The problem of minimizing truth assignments for Tseitin-encoded problems was
addressed in~\cite{iserMinimizingModelsTseitinEncoded2013a}. They propose to
make iterative calls to a SAT solver imposing increasingly tighter cardinality
constraints to obtain a minimal assignment. Whereas this approach can be used
to find a single short truth assignment, it can be very expensive, and thus it
is unsuitable for enumeration.

Other works have concentrated on the enumeration of satisfying assignments for
combinatorial circuits, exploiting the structural information of the circuits
to minimize the partial assignments over the input variables
(e.g.,~\cite{jinEfficientConflictAnalysis2005,jin2005prime,tibebuAugmentingAllSolution2018,friedAllSATCombinationalCircuits2023,friedEntailingGeneralizationBoosts2024}).
%
In \cite{lagniezLeveragingDecisionDNNFCompilation2024}, the authors proposed a tool to enumerate disjoint partial satisfying assignments of formulas in decomposable, deterministic NNF (d-DNNF).
The tool takes as input a d-DNNF formula compiled with \df{}~\cite{lagniezImprovedDecisionDNNFCompiler2017}, and then efficiently traverses it to enumerate the partial assignments.
However, projected d-DNNFs typically do not allow for finding short partial assignments~\cite{lagniezRecursiveAlgorithmProjected2019}, and hence preprocessing the formula with our encoding does not bring any benefit, as confirmed by the experimental evaluation in~\sref{sec:experiments:others}.

A problem closely related to AllSAT is that of finding all the prime implicants
of a formula
(e.g.,~\cite{previtiPrimeCompilationNonClausal,jabbourEnumeratingPrimeImplicants2014,luoEfficientTwophaseMethod2021}).
AllSAT is a much simpler problem, and can be viewed as the problem of finding a
subset of implicants (i.e., partial satisfying assignments), not necessarily
prime (i.e., minimal), which covers all the total satisfying assignments of the
formula.

\paragraph{Projected AllSAT}
Projected enumeration, i.e., enumeration of satisfying assignments over a set
of relevant atoms, has been studied mainly for CNF formulas, e.g.,
in~\cite{grumbergMemoryEfficientAllSolutions2004,liNovelSATAllsolutions2004,lahiriSMTTechniquesFast2006,todaExploitingFunctionalDependencies2017,spallittaDisjointProjectedEnumeration2025}.
The ambiguity of the notion of ``satisfiability by partial assignment'' for
non-CNF and existentially-quantified formulas has been raised
in~\cite{sebastianiAreYouSatisfied2020,mohleFourFlavorsEntailment2020,sebastianiEntailmentVsVerification2025},
highlighting the difference between ``evaluation to true'', which is simpler to
check and typically used by SAT solvers, and ``logical entailment'', which
allows for producing shorter assignments. The approach based on dual reasoning
in~\cite{mohleDualizingProjectedModel2018,mohleEnumeratingShortProjected2025},
although able to detect logical entailment and thus to produce shorter partial
assignments, is very inefficient even for small formulas. 

\paragraph{AllSMT}
The literature on AllSMT is very limited, and AllSMT algorithms are highly
based on AllSAT techniques and tools. For instance,
\mathsatfive{}~\cite{mathsat5_tacas13} implements an AllSMT functionality based
on the procedure in~\cite{lahiriSMTTechniquesFast2006}. A similar procedure
has been described in~\cite{phanAllSolutionSatisfiabilityModulo2015}. An
AllSMT algorithm based on chronological backtracking has been recently proposed
in~\cite{spallittaDisjointProjectedEnumeration2025}.

\paragraph{Enumeration in related areas}
In Answer Set Programming (ASP), answer sets can be seen as truth assignments
satisfying a logic program. ASP programs can be translated into CNF formulas
and vice versa, so as to have a one-to-one correspondence between satisfying
assignments and answer
sets~\cite{clarkNegationFailure1978,linASSATComputingAnswer2004}. Answer sets,
however, correspond to \emph{total} truth assignments. 
Thus, our work is not directly applicable to ASP. 

Propositional Model Counting (\#SAT) is the task of \emph{counting} the number
of satisfying assignments of a propositional formula. Counting is simpler than
enumeration, since only the number of truth assignments is relevant, but a
complete exploration of the space of solutions is still needed. (This is
different from \#SMT, where each truth assignment is ``consumed'' to compute a
volume.) Efficient algorithms have been developed for \#SAT on CNF formulas,
mostly based either on DPLL-like exhaustive search, or on Knowledge Compilation
(see~\cite{gomesModelCounting2021}).
\ignore{
    Typically, they do not rely their efficiency on the enumeration of \emph{short} partial assignments, but rather on decomposition and caching of subproblems, so that the usage of our encoding is not expected to bring any benefit, as confirmed by the experimental evaluation in~\sref{sec:experiments:others}.}
Notably,~\cite{azizExistsSATProjected2015} proposed the model counter
\textsc{\#clasp} based on enumeration of minimal partial assignment, which
would likely benefit from our encoding. Unfortunately, we did not receive an
answer from the authors about the availability of the tool.
%

\paragraph{The role of CNF encodings}
Although most AllSAT solvers assume the formulas to be in CNF, little or no
work has been done to investigate the impact of the different CNF encodings on
their effectiveness and efficiency. The role of the CNF-ization has been widely
studied for SAT solving
(e.g.,~\cite{boydelatourOptimalityResultClause1992,jacksonClauseFormConversions2005,bjorkSuccessfulSATEncoding2009,kuiterTseitinNotTseitin2022})
and in a recent work also for propositional model counting in the context of
feature model analysis~\cite{kuiterTseitinNotTseitin2022}.

The idea of using two different variables to label the positive and negative occurrences of a sub-formula shares some similarities with the so-called \emph{dual-rail} encoding~\cite{bryantCOSMOSCompiledSimulator1987,palopoliAlgorithmsSelectiveEnumeration1999}, which has been shown to be successful for prime implicant enumeration~\cite{
    previtiPrimeCompilationNonClausal,luoEfficientTwophaseMethod2021}, and recently also for AllSAT on combinatorial circuits~\cite{friedAllSATCombinationalCircuits2023}.

Given a CNF formula, the dual-rail encoding maps each atom $A_i$ into a pair of
dual-rail atoms $\pair{\poslab{A_i}}{\neglab{A_i}}$, s.t.\ $A_i,\neg{}A_i$ are
substituted with $\poslab{A_i},\neglab{A_i}$, respectively, and
$(\neg\poslab{A_i}\vee\neg\neglab{A_i})$ is conjoined with the formula. The
resulting CNF formula is equisatisfiable to the original one, and every total
assignment $\eta$ satisfying the dual-rail encoding corresponds to a partial
assignment $\mu$ satisfying the original formula. Such $\mu$ assigns $\mu(A_i)$
to $\top$ or $\bot$ if the pair $\pair{\eta(\poslab{A_i})}{\eta(\neglab{A_i})}$
is $\pair{\top}{\bot}$ or $\pair{\bot}{\top}$, respectively, and leaves it
unassigned if
$\pair{\eta(\poslab{A_i})}{\eta(\neglab{A_i})}=\pair{\bot}{\bot}$. Short
partial assignments can be obtained by maximizing the number of pairs assigned
to $\pair{\bot}{\bot}$, which can be done either exactly by solving a MaxSAT
problem, or heuristically by assigning negative value first to decision
atoms~\cite{friedAllSATCombinationalCircuits2023}.


Using the dual-rail encoding for enumeration, however, requires ad-hoc
enumeration algorithms, since the solver must take into account the
three-valued semantics of truth assignments over dual-rail atoms when
enumerating the assignments~\cite{friedAllSATCombinationalCircuits2023}. Our
contribution, instead, focuses on CNF-ization approaches that can be used in
combination with any enumeration algorithm matching the properties described
in~\sref{sec:background:allsat}.
Also, comparing the $\NNFPlaisted{}$ encoding with the dual-rail encoding, we
observe that the former duplicates only the label atoms in $\allB{}$,
introducing fewer atoms than the dual-rail encoding. Moreover, because of this,
the clauses in the form $(\neg\poslab{B_i}\vee\neg\neglab{B_i})$ are not needed
for correctness but only for efficiency, and can in principle be omitted. In
the dual-rail encoding, instead, their presence is essential to prevent the
illegal assignment $\eta(\poslab{B_i})=\eta(\neglab{B_i})=\top$.

\section{Conclusions and future work}
\label{sec:conclusions}
We have presented a theoretical and empirical analysis of the impact of
different CNF-ization approaches on SAT and SMT enumeration, both disjoint and
non-disjoint. We have shown how the most popular transformations conceived for
SAT and SMT solving, namely the Tseitin and the Plaisted and Greenbaum
CNF-izations, prevent the solver from producing short partial assignments, thus
seriously affecting the effectiveness of the enumeration. To overcome this
limitation, we have proposed to preprocess the formula by converting it into
NNF before applying the Plaisted and Greenbaum transformation. We have shown,
both theoretically and empirically, that the latter approach can fully overcome
the problem and can drastically reduce both the number of partial assignments
and the execution time. 


This work opens an interesting research avenue: investigate the role of
CNF-ization in neighbor fields as d-DNNF compilation and model counting,
possibly adapting d-DNNF compilers and model counters so that to exploit
different forms of CNF-izations.


\FloatBarrier

\newpage
\renewcommand{\theHsection}{A\arabic{section}}
\appendix
\section{Proofs}
\subsection{Proof for~\cref{th:nnfdaglinear}  in~\cref{sec:background:propositional-logic}}\label{sec:proofnnfdaglinear}%
\begin{figure}[th]
    \centering
    \begin{subfigure}[t]{0.48\textwidth}
        \begin{tikzpicture}[remember picture,->,auto,node distance=1.8cm,semithick]
  \begin{footnotesize}
    \node[circle, draw] (A) {$\wedge$};
    \node[circle, draw] (B) [right of=A] {$\vee$};

    \node[circle, draw] (D) [below=.4cm of A] {\phantom{$\vee$}};
    \node[circle, draw] (C) [left of=D] {\phantom{$\vee$}};
    \node[circle, draw] (E) [right of=D] {\phantom{$\vee$}};
    \node[circle, draw] (F) [right of=E] {\phantom{$\vee$}};

    \path (A) edge node {} (C)
    (A) edge node {} (E)
    (B) edge node {} (D)
    (B) edge node {} (F);

    \node[above left=.4cm and -.8cm of A] at (A) {$\NNF{\vi_1\wedge\vi_2}$};
    \node[above right=.4cm and -.8cm of B] at (B) {$\NNF{\neg(\vi_1\wedge\vi_2)}$};
    \node[below=.4cm of C]                at (C) {$\NNF{\vi_1}$};
    \node[below=.4cm of D]                at (D) {$\NNF{\neg\vi_1}$};
    \node[below=.4cm of E]                at (E) {$\NNF{\vi_2}$};
    \node[below=.4cm of F]                at (F) {$\NNF{\neg\vi_2}$};

    \node[draw,densely dotted,red,fit=(A) (B)] {};
    \node[draw,densely dotted,red,fit=(C) (D)] {};
    \node[draw,densely dotted,red,fit=(E) (F)] {};
  \end{footnotesize}
\end{tikzpicture}
        \caption{Case $\vi\defas\vi_1\wedge\vi_2$.}%
        \label{fig:2root:and}
    \end{subfigure}
    \begin{subfigure}[t]{0.48\textwidth}
        \begin{tikzpicture}[remember picture,->,auto,node distance=1.8cm,semithick]
  \begin{footnotesize}
    \node[circle, draw] (A) {$\wedge$};
    \node[circle, draw] (B) [right of=A] {$\wedge$};

    \node[circle, draw] (D) [below=.4cm of A] {$\vee$};
    \node[circle, draw] (C) [left of=D] {$\vee$};
    \node[circle, draw] (E) [right of=D] {$\vee$};
    \node[circle, draw] (F) [right of=E] {$\vee$};

    \node[circle, draw] (G) [below=.6cm of C] {\phantom{$\vee$}};
    \node[circle, draw] (H) [right of=G] {\phantom{$\vee$}};
    \node[circle, draw] (I) [right of=H] {\phantom{$\vee$}};
    \node[circle, draw] (J) [right of=I] {\phantom{$\vee$}};

    \path (A) edge node {} (C)
    (A) edge node {} (D)
    (B) edge node {} (E)
    (B) edge node {} (F)
    (C) edge node {} (G)
    (C) edge node {} (J)
    (D) edge node {} (H)
    (D) edge node {} (I)
    (E) edge node {} (G)
    (E) edge node {} (I)
    (F) edge node {} (H)
    (F) edge node {} (J);

    \node[above left=.4cm and -.8cm of A] at (A) {$\NNF{\vi_1\iff\vi_2}$};
    \node[above right=.4cm and -.8cm of B] at (B) {$\NNF{\neg(\vi_1\iff\vi_2)}$};
    \node[below=.4cm of G]                at (G) {$\NNF{\vi_1}$};
    \node[below=.4cm of H]                at (H) {$\NNF{\neg\vi_1}$};
    \node[below=.4cm of I]                at (I) {$\NNF{\vi_2}$};
    \node[below=.4cm of J]                at (J) {$\NNF{\neg\vi_2}$};

    \node[draw,densely dotted,red,fit=(A) (B)] {};
    \node[draw,densely dotted,red,fit=(G) (H)] {};
    \node[draw,densely dotted,red,fit=(I) (J)] {};
  \end{footnotesize}
\end{tikzpicture}
        \caption{Case $\vi\defas\vi_1\iff\vi_2$.}%
        \label{fig:2root:iff}
    \end{subfigure}
    \caption{2-root DAGs for the pair $\langle{\NNF{\vi}},\NNF{\neg\vi}\rangle$ for $\vi\defas\vi_1\wedge\vi_2$ and $\vi\defas\vi_1\iff\vi_2$ (those for $\vi_1\vee\vi_2$ and $\vi_1\imp\vi_2$ are similar to that of $\vi_1\wedge\vi_2$).}%
    \label{fig:2root}
\end{figure}
\begin{proof}
    \NNF{\vi} is a sub-graph of the 2-root DAG for the
    pair $\langle{\NNF{\vi}},\NNF{\neg\vi}\rangle$. 
    We prove that the latter grows linearly in size w.r.t.\ $\vi$ by reasoning
    inductively on the structure of \vi. (The size of a DAG is denoted with
    ``$|\dots|$''.)
    \begin{description}
        \item[if $\vi$ is an atom:] $\NNF{\vi}=\vi$ and $\NNF{\neg\vi}=\neg\vi$, so that:
              $|\langle\NNF{\vi},\NNF{\neg\vi}\rangle|=2$.
        \item[if $\vi\defas\neg\vi_1$:] we assume by induction that
              $|\tuple{\NNF{\vi_1},\NNF{\neg\vi_1}}|$ is linear in $|\vi_1|$. \\
              Then
              $\tuple{\NNF{\vi},\NNF{\neg\vi}}=\tuple{\NNF{\neg\vi_1},\NNF{\vi_1}}$
              (i.e., we just invert the order), s.t.
              \begin{equation}\label{eq:recursizenot}
                  |\tuple{\NNF{\vi},\NNF{\neg\vi}}|=|\tuple{\NNF{\neg\vi_1},\NNF{\vi_1}}|=|\tuple{\NNF{\vi_1},\NNF{\neg\vi_1}}|
              \end{equation}
        \item[if $\vi\defas(\vi_1\bowtie\vi_2)$ s.t.\ $\bowtie\ \in\set{\wedge,\iff}$:] we
              assume by induction that
              $|\tuple{\NNF{\vi_1},\NNF{\neg\vi_1}}|$ and
              $|\tuple{\NNF{\vi_2},\NNF{\neg\vi_2}}|$ are linear in $|\vi_1|$ and $|\vi_2|$, respectively.
              (See \Cref{fig:2root}):
              \begin{description}
                  \item[if $\bowtie$ is $\wedge$:] the DAGs for $\NNF{\vi}, \NNF{\neg\vi}$ add 2
                        ``$\wedge/\vee$'' nodes and $2+2$ arcs:\\ $
                                \NNF{\vi_1\wedge\vi_2} \defas\NNF{\vi_1}\wedge\NNF{\vi_2}$ and
                                $\NNF{\neg(\vi_1\wedge\vi_2)}                     \defas\NNF{\neg\vi_1}\vee\NNF{\neg\vi_2}$ thus
                        {\setlength{\mathindent}{-.4cm}%
                                \begin{equation}\label{eq:recursizeand}
                                    |\tuple{\NNF{\vi},\NNF{\neg\vi}}| =
                                    6+|\tuple{\NNF{\vi_1},\NNF{\neg\vi_1}}|+
                                    |\tuple{\NNF{\vi_2},\NNF{\neg\vi_2}}|
                                \end{equation}%
                            }
                  \item[if $\bowtie$ is $\iff$:] the DAGs for $\NNF{\vi}, \NNF{\neg\vi}$ 
                        add 3+3 ``$\wedge$''/``$\vee$'' nodes and $6+6$ arcs:\\ $
                            \begin{aligned}
                                \NNF{\pos\phantom{(}\vi_1\iff\vi_2\phantom{)}} & \defas(\NNF{\neg\vi_1}\vee\NNF{\vi_2})\wedge(\NNF{\pos\vi_1}\vee \NNF{\neg\vi_2})\text{ and}  \\
                                \NNF{\neg(\vi_1\iff\vi_2)}                     & \defas(\NNF{\pos\vi_1}\vee\NNF{\vi_2})\wedge(\NNF{\neg\vi_1}\vee\NNF{\neg\vi_2})\text{, thus}
                            \end{aligned}
                        $
                        {\setlength{\mathindent}{-.4cm}%
                                \begin{equation}\label{eq:recursizeiff}
                                    |\tuple{\NNF{\vi},\NNF{\neg\vi}}| =
                                    18+|\tuple{\NNF{\vi_1},\NNF{\neg\vi_1}}|+
                                    |\tuple{\NNF{\vi_2},\NNF{\neg\vi_2}}|
                                \end{equation}%
                            }
              \end{description}
        \item[if $\vi\defas(\vi_1\bowtie\vi_2)$ s.t.\ $\bowtie\ \in\set{\vee,\imp}$:]
              these cases can be reduced to the previous cases, since
              \begin{displaymath}
                  \NNF{\vi_1\vee\vi_2}=
                  \NNF{\neg(\neg\vi_1\wedge\neg\vi_2)}\text{ and }\NNF{\vi_1\imp\vi_2}=
                  \NNF{\neg(\vi_1\wedge\neg\vi_2)}.
              \end{displaymath}
    \end{description}
    Therefore, from~\eqref{eq:recursizenot},~\eqref{eq:recursizeand} and~\eqref{eq:recursizeiff} we have
    that $|\langle \NNF{\vi},\NNF{\neg\vi}\rangle|$ is $O(|\vi|)$.
    %
\end{proof}

\newpage
\newcommand{\muof}[1]{\residual{#1}{\mu}}
\subsection{Proof for~\cref{th:munnf} in~\cref{sec:background:propositional-logic}}\label{sec:proofmunnf}%

\begin{figure}[th]
  \small
  \centering
  $
    \begin{array}{||c||l|l|l|l|l|l|l|l|l||}
      \hline
      \muof{\vi_1}                   & \top & \top & \top & \any & \any & \any & \bot & \bot & \bot \\
      \muof{\vi_2}                   & \top & \any & \bot & \top & \any & \bot & \top & \any & \bot \\
      \hline
      \neg(\muof{\vi_1})             & \bot & \bot & \bot & \any & \any & \any & \top & \top & \top \\
      \muof{\vi_1}\wedge\muof{\vi_2} & \top & \any & \bot & \any & \any & \bot & \bot & \bot & \bot \\
      \muof{\vi_1}\vee\muof{\vi_2}   & \top & \top & \top & \top & \any & \any & \top & \any & \bot \\
      \muof{\vi_1}\imp\muof{\vi_2}   & \top & \any & \bot & \top & \any & \any & \top & \top & \top \\
      \muof{\vi_1}\iff\muof{\vi_2}   & \top & \any & \bot & \any & \any & \any & \bot & \any & \top \\
      \hline
    \end{array}
  $
  \caption{
    Three-value-semantics of $\muof{\vi}$ in terms of \set{\top ,\bot ,\any }
    (``true'', ``false'', ``unknown''). \newline
  }\label{fig:threeval}
  \ignore{
    &
    \begin{minipage}[t]{0.4\textwidth}
      $
        \begin{array}{|cl|cl|}
          \hline
          \neg\top                     & \Rightarrow \bot & \neg \bot                     & \Rightarrow \top    \\
          \top\wedge\vi, \vi\wedge\top & \Rightarrow \vi  & \bot\wedge\vi,  \vi\wedge\bot & \Rightarrow \bot    \\
          \top\vee\vi, \vi\vee\top     & \Rightarrow \top & \bot\vee\vi, \vi\vee\bot      & \Rightarrow \vi     \\
          \top\imp\vi                  & \Rightarrow \vi  & \bot\imp\vi                   & \Rightarrow \top    \\
          \vi\imp\top                  & \Rightarrow \top & \vi\imp\bot                   & \Rightarrow \neg\vi \\
          \top\iff\vi, \vi\iff\top     & \Rightarrow \vi  & \bot\iff\vi, \vi\iff\bot      & \Rightarrow \neg\vi \\
                                       &                  &                               &                     \\
          \hline
        \end{array}
      $\\
      \caption{\label{fig:boolprop}
        Propagation of truth values
        through the Boolean connectives.}
    \end{minipage}%
  }
\end{figure}

In the following the symbol ``\any'' denotes any formula which is not
in $\set{\top,\bot}$. Following~\cite{sebastianiAreYouSatisfied2020,sebastianiEntailmentVsVerification2025}, we adopt a 3-value
semantics for residuals $\residual{\vi}{\mu}\in\set{\top,\bot,\any}$,
so that ``$\residual{\vi}{\mu}=\any$'' means
``$\residual{\vi}{\mu}\not\in\set{\top,\bot}$'' and
``$\residual{\vi_1}{\mu}=\residual{\vi_2}{\mu}$'' means that the two
residuals $\residual{\vi_1}{\mu}$ and $\residual{\vi_2}{\mu}$ are either both $\top$, or both $\bot$, or neither is in
\set{\top,\bot}. (Notice that, in the latter case,
$\residual{\vi_1}{\mu}=\residual{\vi_2}{\mu}$ even if
$\residual{\vi_1}{\mu}$ and $\residual{\vi_2}{\mu}$ are different
formulas.)
We extend the definition to tuples in an obvious way:
$\tuple{\residual{\vi_1}{\mu},\dots,\residual{\vi_n}{\mu}}=
    \tuple{\residual{\psi_1}{\mu},\dots,\residual{\psi_n}{\mu}}$ iff
$\residual{\vi_i}{\mu}=\residual{\psi_i}{\mu}$ for each $i\in[1\dots n]$.

The 3-value semantics of the Boolean operators is reported for convenience in
Figure~\ref{fig:threeval}. As a straightforward consequence of the above
semantics, we have that:
\begin{eqnarray*}
    \residual{(\neg\vi)}{\mu}&=&\neg(\residual{\vi}{\mu})\ \  \mbox{(hereafter
        simply ``$\residual{\neg\vi}{\mu}$'')}\\
    \residual{(\vi_1\bowtie\vi_2)}{\mu}&=&
    \residual{\vi_1}{\mu}\bowtie\residual{\vi_2}{\mu}\ \  \mbox{for $\bowtie\ \in \set{\wedge,\vee,\imp,\iff}$}.
\end{eqnarray*}
Also, the usual transformations apply:
$\muof{\neg(\vi_1\wedge\vi_2)}=\muof{\neg\vi_1}\vee\muof{\neg\vi_2}$,
$\muof{\neg(\vi_1\vee\vi_2)}=\muof{\neg\vi_1}\wedge\muof{\neg\vi_2}$,
$\muof{(\vi_1\iff\vi_2)}=
    (\neg\muof{\vi_1}\vee\muof{\vi_2})\wedge(\muof{\vi_1}\vee\neg\muof{\vi_2})$
and
$\muof{\neg(\vi_1\iff\vi_2)}=
    (\muof{\vi_1}\vee\muof{\vi_2})\wedge(\neg\muof{\vi_1}\vee\neg\muof{\vi_2})$.
For convenience, sometimes we denote as $\bar{v}$ the complement of
$v\in\set{\top,\bot,\any}$, i.e., $\bar{v}\defas\neg v$, so that
$\bar{\top}=\bot, \bar{\bot}=\top, \bar{\any}=\any$.

We prove the following lemma, from which \cref{th:munnf}
in~\sref{sec:background:propositional-logic} follows directly.


\begin{lemma}
    Consider a 
    formula $\vi$, 
    and a partial assignment $\mu$. 
    Then:
    \begin{equation}
        \label{eq:nnfeq}
        \pair{\residual{\vi}{\mu}}{\residual{\neg\vi}{\mu}}=\pair{\residual{\NNF{\vi}}{\mu}}{\residual{\NNF{\neg\vi}}{\mu}}.
    \end{equation}

\end{lemma}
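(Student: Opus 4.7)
The plan is to proceed by structural induction on $\vi$, strengthening the statement to the simultaneous pair $\pair{\residual{\vi}{\mu}}{\residual{\neg\vi}{\mu}}$ so that the induction hypothesis is symmetric and carries both polarities through each recursive step (this matches the 2-root DAG construction used in \cref{th:nnfdaglinear}). Throughout, the key fact I would use is that the residual operator commutes with Boolean connectives in the 3-valued semantics: $\residual{(\vi_1\bowtie\vi_2)}{\mu}=\residual{\vi_1}{\mu}\bowtie\residual{\vi_2}{\mu}$ for $\bowtie\in\set{\neg,\wedge,\vee,\imp,\iff}$, together with the truth table in Figure~\ref{fig:threeval}.

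For the \emph{base case}, if $\vi$ is a Boolean atom (or, more generally, a literal), then $\NNF{\vi}=\vi$ and $\NNF{\neg\vi}=\neg\vi$, so~\eqref{eq:nnfeq} holds trivially. For the \emph{inductive cases}, I would split on the top connective of $\vi$:
\begin{renumerate}
\item $\vi\defas\neg\vi_1$: by definition $\NNF{\neg\vi}=\NNF{\vi_1}$ and $\NNF{\vi}=\NNF{\neg\vi_1}$, so~\eqref{eq:nnfeq} follows by swapping the components of the IH applied to $\vi_1$.
\item $\vi\defas\vi_1\wedge\vi_2$: use $\NNF{\vi}=\NNF{\vi_1}\wedge\NNF{\vi_2}$ and $\NNF{\neg\vi}=\NNF{\neg\vi_1}\vee\NNF{\neg\vi_2}$; combine the IH on $\vi_1$ and $\vi_2$ componentwise via the propagation rules for $\wedge$ and $\vee$.
\item $\vi\defas\vi_1\vee\vi_2$: symmetric to the previous case.
\item $\vi\defas\vi_1\imp\vi_2$: apply the identity $\NNF{\vi_1\imp\vi_2}=\NNF{\neg\vi_1}\vee\NNF{\vi_2}$ (and dually for the negated form) and reduce to the two subcases above.
\end{renumerate}

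The \emph{main obstacle}, and the case I would spend most effort on, is $\vi\defas\vi_1\iff\vi_2$. Here $\NNF{\vi}$ and $\NNF{\neg\vi}$ unfold into the conjunctions $(\NNF{\neg\vi_1}\vee\NNF{\vi_2})\wedge(\NNF{\vi_1}\vee\NNF{\neg\vi_2})$ and $(\NNF{\vi_1}\vee\NNF{\vi_2})\wedge(\NNF{\neg\vi_1}\vee\NNF{\neg\vi_2})$, respectively, so the equality cannot be obtained purely by rewriting: one has to verify the 3-valued agreement between $\residual{(\vi_1\iff\vi_2)}{\mu}$ and $\residual{\NNF{\vi_1\iff\vi_2}}{\mu}$ (and likewise for the negation). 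My plan is to invoke the IH on $\vi_1$ and $\vi_2$ to replace $\residual{\NNF{\vi_i}}{\mu}$ and $\residual{\NNF{\neg\vi_i}}{\mu}$ by $\residual{\vi_i}{\mu}$ and $\residual{\neg\vi_i}{\mu}$, and then perform a case analysis on the nine combinations of $\pair{\residual{\vi_1}{\mu}}{\residual{\vi_2}{\mu}}\in\set{\top,\bot,\any}^2$ (the rows of Figure~\ref{fig:threeval}), checking in each case that both the compact residual $\residual{\vi_1}{\mu}\iff\residual{\vi_2}{\mu}$ and the expanded conjunction of disjunctions evaluate to the same value in $\set{\top,\bot,\any}$. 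The potentially delicate sub-cases are those containing $\any$ (e.g.\ $\pair{\top}{\any}$ or $\pair{\any}{\any}$), where syntactically distinct residuals must be recognized as the same element of $\set{\top,\bot,\any}$; the $(\neg\poslab{B_i}\vee\neg\neglab{B_i})$-style mutual-exclusion constraint plays no role in this proof, but the analogous observation that $\NNF{\vi_i}$ and $\NNF{\neg\vi_i}$ cannot be simultaneously satisfied is what makes the four-literal disjunctive form collapse correctly. Finally, \cref{th:munnf} follows immediately by projecting~\eqref{eq:nnfeq} onto its first component.
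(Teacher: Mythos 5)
Your plan is correct and follows essentially the same route as the paper: structural induction on the strengthened two-component statement $\pair{\residual{\vi}{\mu}}{\residual{\neg\vi}{\mu}}$, with atoms as base case, component-swapping for negation, componentwise combination for $\wedge$, and reduction of $\vee$ and $\imp$ to $\wedge$ via double negation. The only (immaterial) difference is in the $\iff$ case, where the paper invokes the pre-stated 3-valued identity $\residual{(\vi_1\iff\vi_2)}{\mu}=(\neg\residual{\vi_1}{\mu}\vee\residual{\vi_2}{\mu})\wedge(\residual{\vi_1}{\mu}\vee\neg\residual{\vi_2}{\mu})$ and chains equalities, whereas you propose to verify the same agreement explicitly over the nine combinations in $\set{\top,\bot,\any}^2$ --- a check the paper leaves implicit; your aside about mutual exclusion of $\NNF{\vi_i}$ and $\NNF{\neg\vi_i}$ is not actually needed for that verification.
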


\begin{proof}%
    As in~\sref{sec:proofnnfdaglinear}, we prove this fact by reasoning on the
    2-root DAG for the pair $\langle\NNF{\vi},\NNF{\neg\vi}\rangle$. Specifically,
    we prove~\eqref{eq:nnfeq} by induction on the structure of $\vi$.

    \begin{description}
        \item[if $\vi$ is an atom:] then $\NNF{\vi}=\vi$ and $\NNF{\neg\vi}=\neg\vi$, so that
              $\residual{\vi}{\mu}=\residual{\NNF{\vi}}{\mu}$ and
              $\residual{\neg\vi}{\mu}=\residual{\NNF{\neg\vi}}{\mu}$.
        \item[if $\vi\defas\neg\vi_1$:] we assume by induction that
              $\pair{\residual{\vi_1}{\mu}}{\residual{\neg\vi_1}{\mu}}=\pair{\residual{\NNF{\vi_1}}{\mu}}{\residual{\NNF{\neg\vi_1}}{\mu}}$.
                  {Let $v$ be s.t.
                      $\pair{\residual{\vi}{\mu}}{\residual{\neg\vi}{\mu}}=\pair{v}{\bar{v}}$. Then:}
              \[
                  \begin{aligned}
                      \pair{\residual{\neg\vi_1}{\mu}}{\residual{\pos\vi_1}{\mu}}=\pair{v}{\bar{v}}             & \Iff    \\
                      \pair{\residual{\pos\vi_1}{\mu}}{\residual{\neg\vi_1}{\mu}}=\pair{\bar{v}}{v}             & \Iffind \\
                      \pair{\residual{\NNF{\pos\vi_1}}{\mu}}{\residual{\NNF{\neg\vi_1}}{\mu}}=\pair{\bar{v}}{v} & \Iff    \\
                      \pair{\residual{\NNF{\neg\vi_1}}{\mu}}{\residual{\NNF{\pos\vi_1}}{\mu}}=\pair{v}{\bar{v}} &
                  \end{aligned}
              \]

        \item[if $\vi\defas(\vi_1\bowtie\vi_2)$:] s.t.\ $\bowtie~\in\set{\wedge,\iff}$. We
              assume by induction that
              \[
                  \begin{aligned}
                      \pair{\residual{\vi_1}{\mu}}{\residual{\neg\vi_1}{\mu}}=\pair{\residual{\NNF{\vi_1}}{\mu}}{\residual{\NNF{\neg\vi_1}}{\mu}}, \\
                      \pair{\residual{\vi_2}{\mu}}{\residual{\neg\vi_2}{\mu}}=\pair{\residual{\NNF{\vi_2}}{\mu}}{\residual{\NNF{\neg\vi_2}}{\mu}}.
                  \end{aligned}
              \]
              Then,
              \begin{description}
                  \item[if $\bowtie$ is $\wedge$:]
                        \[
                            \begin{aligned}[t]
                                \pair{\residual{(\vi_1\wedge\vi_2)}{\mu}}{\residual{\neg(\vi_1\wedge\vi_2)}{\mu}}                                                       & =      \\
                                \pair{\residual{\vi_1}{\mu}\wedge\residual{\vi_2}{\mu}}{\residual{\neg\vi_1}{\mu}\vee\residual{\neg\vi_2}{\mu}}                         & \eqind \\
                                \pair{\residual{\NNF{\vi_1}}{\mu}\wedge\residual{\NNF{\vi_2}}{\mu}}{\residual{\NNF{\neg\vi_1}}{\mu}\vee\residual{\NNF{\neg\vi_2}}{\mu}} & =      \\
                                \pair{\residual{(\NNF{\vi_1}\wedge\NNF{\vi_2})}{\mu}}{\residual{(\NNF{\neg\vi_1}\vee\NNF{\neg\vi_2})}{\mu}}                             & =      \\
                                \pair{\residual{\NNF{\vi_1\wedge\vi_2}}{\mu}}{\residual{\NNF{\neg(\vi_1\wedge\vi_2)}}{\mu}}                                             &
                            \end{aligned}
                        \]
                  \item[if $\bowtie$ is $\iff$:]
                        \[
                            \begin{aligned}
                                \langle\residual{(\vi_1\iff\vi_2)}{\mu},\residual{\neg(\vi_1\iff\vi_2)}{\mu}\rangle                                                                                                                                                         & =      \\
                                \langle\residual{((\neg\vi_1\vee\vi_2)\wedge(\vi_1\vee\neg\vi_2))}{\mu},\residual{((\vi_1\vee\vi_2)\wedge(\neg\vi_1\vee\neg\vi_2))}{\mu}\rangle                                                                                     & =      \\
                                \langle(\residual{\neg\vi_1}{\mu}\vee\residual{\vi_2}{\mu})\wedge(\residual{\vi_1}{\mu}\vee\residual{\neg\vi_2}{\mu}),(\residual{\vi_1}{\mu}\vee\residual{\vi_2}{\mu})\wedge(\residual{\neg\vi_1}{\mu}\vee\residual{\neg\vi_2}{\mu})\rangle & \eqind \\
                                \langle(\residual{\NNF{\neg\vi_1}}{\mu}\vee\residual{\NNF{\vi_2}}{\mu})\wedge(\residual{\NNF{\vi_1}}{\mu}\vee\residual{\NNF{\neg\vi_2}}{\mu}),                                                                                                       \\
                                (\residual{\NNF{\vi_1}}{\mu}\vee\residual{\NNF{\vi_2}}{\mu})\wedge(\residual{\NNF{\neg\vi_1}}{\mu}\vee\residual{\NNF{\neg\vi_2}}{\mu})\rangle                                                                                               & =      \\
                                \langle\residual{((\NNF{\neg\vi_1}\vee\NNF{\vi_2})\wedge(\NNF{\vi_1}\vee\NNF{\neg\vi_2}))}{\mu},                                                                                                                                            &        \\\residual{((\NNF{\vi_1}\vee\NNF{\vi_2})\wedge(\NNF{\neg\vi_1}\vee\NNF{\neg\vi_2}))}{\mu}\rangle&=\\
                                \langle\residual{\NNF{\vi_1\iff\vi_2}}{\mu},\residual{\NNF{\neg(\vi_1\iff\vi_2)}}{\mu}\rangle                                                                                                                                               & .
                            \end{aligned}
                        \]

              \end{description}
        \item[if $\vi\defas(\vi_1\bowtie\vi_2)$:] s.t.\ $\bowtie~\in\set{\vee,\imp}$. These
              cases can be reduced to the previous cases, since
              $\NNF{\vi_1\vee\vi_2}=\NNF{\neg(\neg\vi_1\wedge\neg\vi_2)}$ and
              $\NNF{\vi_1\imp\vi_2}= \NNF{\neg(\vi_1\wedge\neg\vi_2)}$.
    \end{description}
\end{proof}
{Thus,
$\pair{\residual{\vi}{\mu}}{\residual{\neg\vi}{\mu}}=\pair{\residual{\NNF{\vi}}{\mu}}{\residual{\NNF{\neg\vi}}{\mu}}$
so that $\residual{\vi}{\mu}=v$ iff
$\residual{\NNF{\vi}}{\mu}=v$ for every
$v\in\set{\top,\bot}$, so that \cref{th:munnf}
in~\sref{sec:background:propositional-logic} holds.}




\newpage
\section{An analysis of candidate solvers}%
\label{appendix:tooltable}

In Table~\ref{tab:allsat-solvers} we report an analysis of the features of
every candidate AllSAT and AllSMT solver, as discussed in
\sref{sec:survey-allsat-solvers}.

\begin{table}[th]
    \newcommand{\cmark}{\textcolor{green4}{\ding{51}}}%
    \newcommand{\xmark}{\textcolor{red}{\ding{55}}}%
    \newcommand{\nomark}{}
    \newcommand*\rot{\rotatebox{90}}
    \newcolumntype{C}{>{\centering\arraybackslash}m{0.4cm}}
    \centering
    \begin{tabularx}{.8\textwidth}{c|l|CCCCC|CC|X}
                                               &                              &
        \multicolumn{5}{c|}{Required features} & \multicolumn{2}{c|}{Options} &                                                                                        \\
        \hline
                                               & Solver                       &
        \rot{Available}                        &
        \rot{CNF input\ }                      &
        \rot{Projected}                        &
        \rot{Partial}                          &
        \rot{Minimal}                          &
        \rot{Disjoint}                         &
        \rot{Non-disjoint}                     &
        {Notes}                                                                                                                                                        \\
        \hline
        \parbox[t]{4mm}{\multirow{15}{*}{\rotatebox[origin=c]{90}{AllSAT}}}
                                               & \textsc{RELSAT}              & \cmark & \cmark & \xmark & \xmark & \xmark & \cmark & \xmark &                         \\
                                               & \textsc{Grumberg}            & \xmark & \cmark & \cmark & \cmark & \cmark & \cmark & \cmark & Ex-post minimization    \\
                                               & \textsc{SOLALL}              & ?      & \cmark & \cmark & \cmark & \xmark & \cmark & \xmark & Result stored in (O)BDD \\
                                               & \textsc{Jin}                 & \xmark & \xmark & \xmark & \cmark & \cmark & \xmark & \cmark & AIG input               \\
                                               & \textsc{clasp}               & \cmark & \cmark & \cmark & \xmark & \xmark & \cmark & \xmark & ASP solver              \\
                                               & \textsc{PicoSAT}             & \cmark & \cmark & \xmark & \xmark & \xmark & \cmark & \xmark &                         \\
                                               & \textsc{Yu}                  & ?      & \cmark & \xmark & \cmark & \cmark & \cmark & \cmark &                         \\
                                               & \textsc{BC}                  & \cmark & \cmark & \xmark & \cmark & \cmark & \cmark & \cmark &                         \\
                                               & \textsc{NBC}                 & \cmark & \cmark & \xmark & \xmark & \xmark & \cmark & \xmark &                         \\
                                               & \textsc{BDD}                 & \cmark & \cmark & \xmark & \cmark & \xmark & \xmark & \cmark & Result stored in (O)BDD \\
                                               & \textsc{depbdd}              & \xmark & \cmark & \cmark & \cmark & \xmark & \cmark & \xmark & Result stored in (O)BDD \\
                                               & \textsc{Dualiza}             & \cmark & \cmark & \cmark & \cmark & \xmark & \cmark & \cmark & Dual-reasoning-based    \\
                                               & \textsc{BASolver}            & \xmark & \cmark & \xmark & \cmark & \cmark & \xmark & \cmark &                         \\
                                               & \textsc{AllSATCC}            & \xmark & \cmark & \xmark & \cmark & \cmark & \cmark & \xmark &                         \\
                                               & \textsc{HALL}                & \cmark & \xmark & \xmark & \cmark & \cmark & \cmark & \cmark & AIG input               \\
                                               & \tabularallsat{}             & \cmark & \cmark & \cmark & \cmark & \xmark & \cmark & \xmark &                         \\
                                               & \decdnnf{}                   & \cmark & \xmark & \xmark & \cmark & \xmark & \cmark & \xmark & d-DNNF input            \\
        \hline
        \parbox[t]{4mm}{\multirow{5}{*}{\rotatebox[origin=c]{90}{AllSMT}}}
                                               &                              &        &        &        &        &        &        &                                  \\
                                               & \mathsat{}                   & \cmark & \cmark & \cmark & \cmark & \cmark & \cmark & \cmark &                         \\
                                               & \textsc{aZ3}                 & \cmark & \cmark & \cmark & \xmark & \xmark & \cmark & \xmark &                         \\
                                               & \tabularallsmt{}             & \cmark & \cmark & \cmark & \cmark & \xmark & \cmark & \xmark &                         \\
                                               &                              &        &        &        &        &        &        &
        \\
    \end{tabularx}
    \caption{List of AllSAT and AllSMT solvers (rows) and supported
        features (columns).
        %
        (In the ``Available'' column, ``\xmark'' indicates that the
        authors confirmed to us the unavailability of their tool, whereas
        ``?'' indicates that they did not reply to our enquiry.)
    }
    \label{tab:allsat-solvers}
\end{table}

\newpage
\section{Experimental results on plain SAT and SMT solving}%
\label{appendix:sat}
In this section, we report the results of the experiments on plain SAT and SMT solving with \mathsat{} using the different CNF-izations.
The plots in~\cref{fig:sat:ecdf} show that \NNFPlaisted{} does not bring any advantage for plain SAT solving, supporting the analysis in~\sref{sec:experiments:sat}.
\begin{figure}[h!]
    \centering
    \captionsetup[subfigure]{justification=centering}
    \begin{subfigure}[t]{\textwidth}
        \centering
        \includegraphics[height=2em]{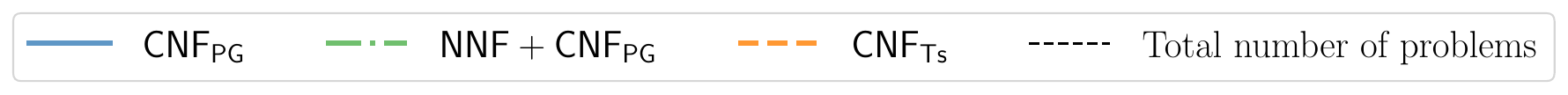}
    \end{subfigure}
    \begin{subfigure}[b]{0.33\textwidth}
        \centering
        \includegraphics[width=\textwidth]{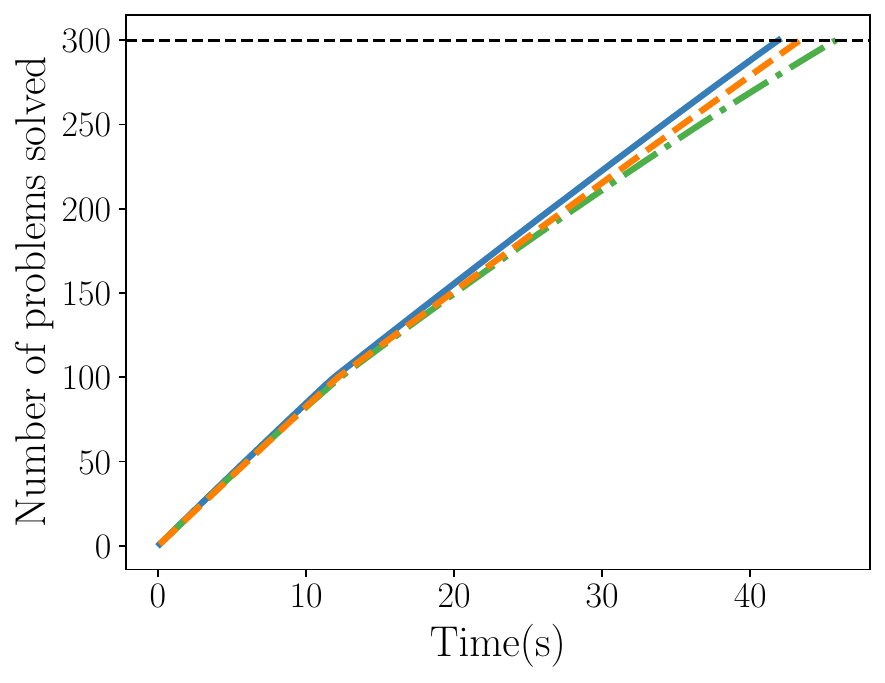}
        \caption{Boolean synthetic.}%
        \label{fig:sat:syn:bool:ecdf}
    \end{subfigure}%
    \begin{subfigure}[b]{0.33\textwidth}
        \centering
        \includegraphics[width=\textwidth]{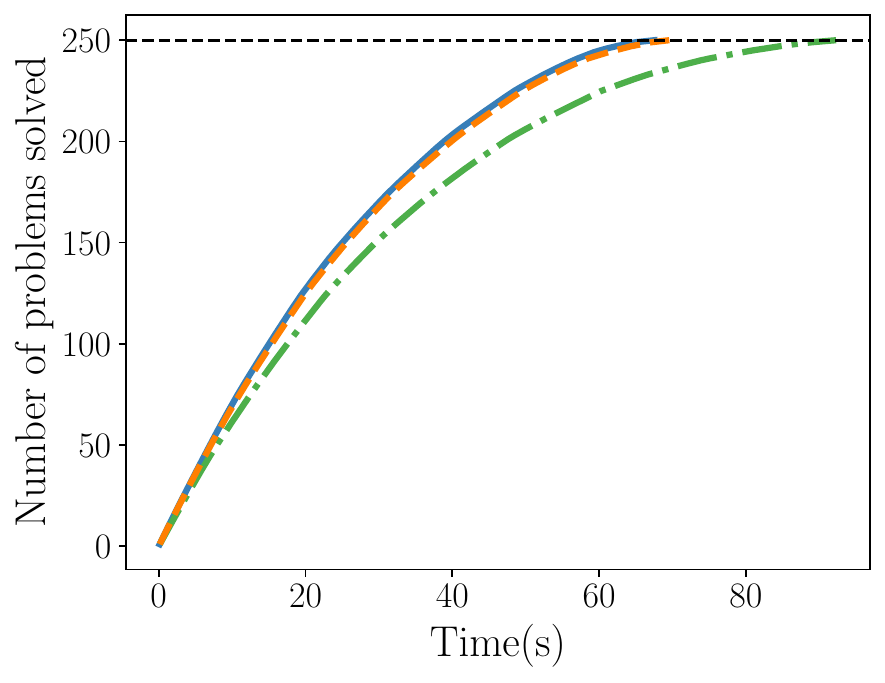}
        \caption{ISCAS'85.}%
        \label{fig:sat:circ:ecdf}
    \end{subfigure}%
    \begin{subfigure}[b]{0.33\textwidth}
        \centering
        \includegraphics[width=\textwidth]{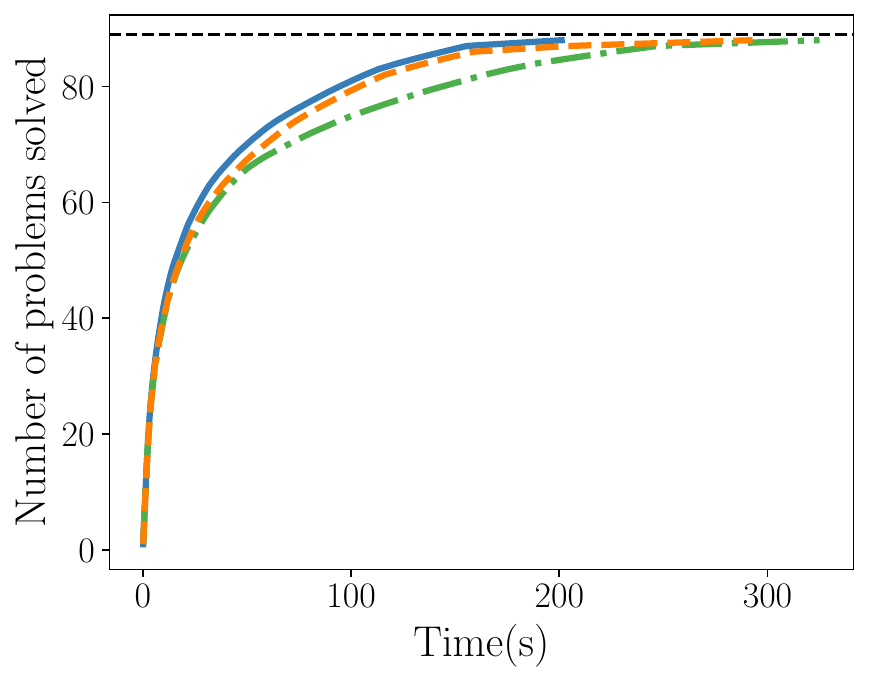}
        \caption{AIG.}%
        \label{fig:sat:aig:ecdf}
    \end{subfigure}
    \begin{subfigure}[b]{0.33\textwidth}
        \centering
        \includegraphics[width=\textwidth]{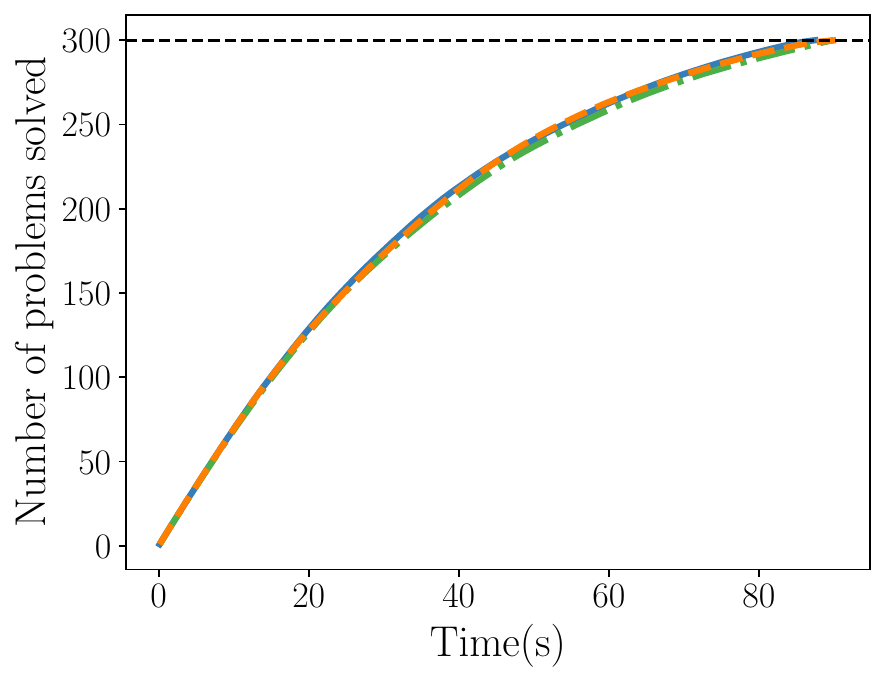}
        \caption{\smtlarat{} synthetic.}%
        \label{fig:sat:syn:lra:ecdf}
    \end{subfigure}%
    \begin{subfigure}[b]{0.33\textwidth}
        \centering
        \includegraphics[width=\textwidth]{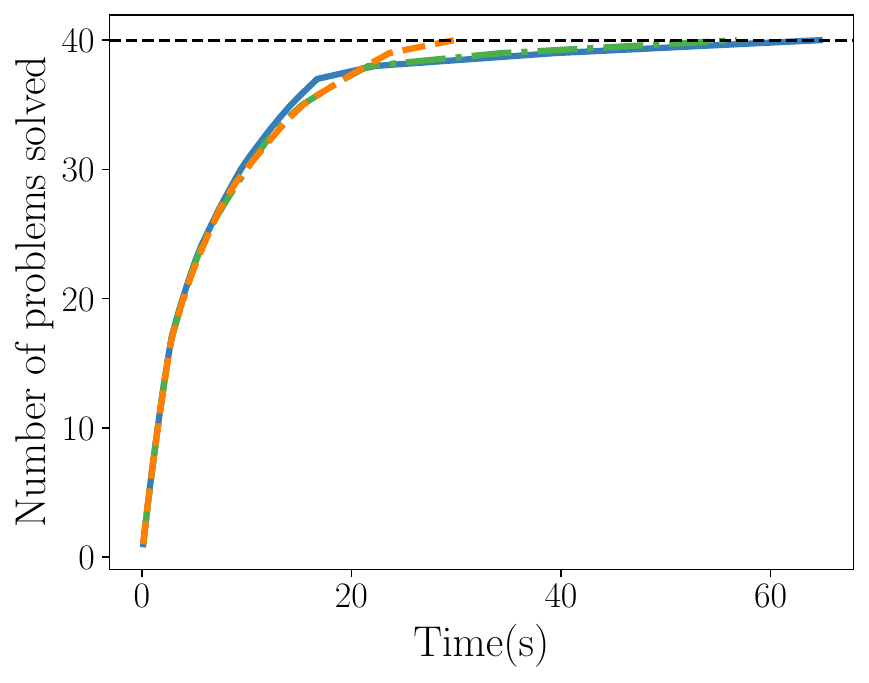}
        \caption{WMI.}%
        \label{fig:sat:wmi:ecdf}
    \end{subfigure}
    \caption{Time taken for plain SAT 
        and SMT 
        solving using the different CNF transformations. The $y$-axis reports the number of instances for which the solver finished within the cumulative time on the $x$-axis.
    }%
    \label{fig:sat:ecdf}
\end{figure}

\newpage
\section{Experimental results on d-DNNF-based enumeration}%
\label{appendix:d4:enum}
Recently, in \cite{lagniezLeveragingDecisionDNNFCompilation2024}, the authors presented a tool, named \decdnnf{}, to enumerate all models of a d-DNNF formula~\cite{darwicheKnowledgeCompilationMap2002}.

Following the \dfdecdnnf{}
approach~\cite{lagniezLeveragingDecisionDNNFCompilation2024}, we first compile
the CNF-ization of the input formula into a d-DNNF using the \textsc{d4}
compiler~\cite{lagniezImprovedDecisionDNNFCompiler2017}, and then enumerate its
models using \decdnnf{}. To avoid enumerating on CNF labels, we project the
d-DNNF onto the original atoms \allA{} of the input formula, and then use
\decdnnf{} to enumerate the models of the projected d-DNNF.

The results of the experiments on the Boolean benchmarks are shown in
\cref{fig:plt:d4:all:bool:scatter}. We see that \TseitinCNF{} is uniformly the
best-performing CNF-ization, supporting the analysis
in~\sref{sec:experiments:d4}
\begin{figure}[h!]
    \centering
    \begin{subfigure}[t]{\textwidth}
        \centering
        \includegraphics[height=2em]{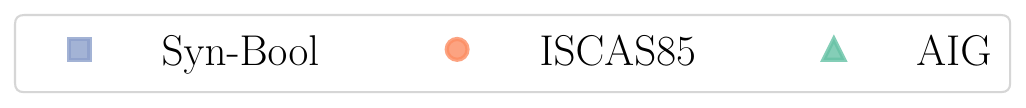}
    \end{subfigure}
    \begin{subfigure}[t]{\textwidth}
        \begin{subfigure}[t]{0.29\textwidth}
            \centering
            \includegraphics[width=.85\textwidth]{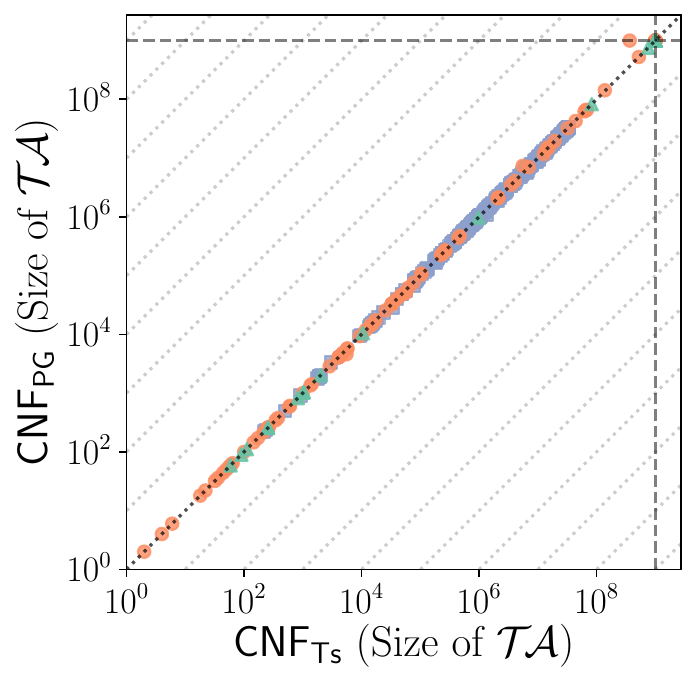}%
            \label{fig:plt:d4:all:bool:norep:models:lab_vs_pol}
        \end{subfigure}\hfill
        \begin{subfigure}[t]{0.29\textwidth}
            \centering
            \includegraphics[width=.85\textwidth]{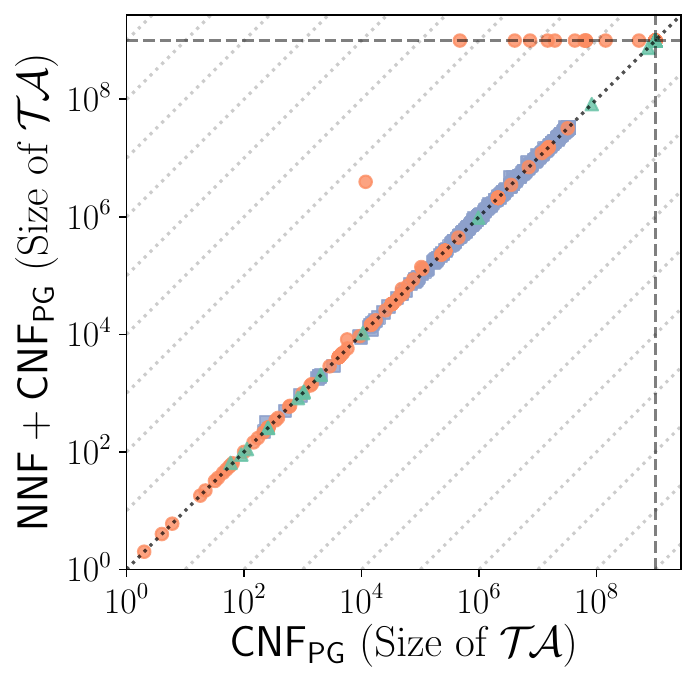}%
            \label{fig:plt:d4:all:bool:norep:models:pol_vs_nnfpol}
        \end{subfigure}\hfill
        \begin{subfigure}[t]{0.29\textwidth}
            \centering
            \includegraphics[width=.85\textwidth]{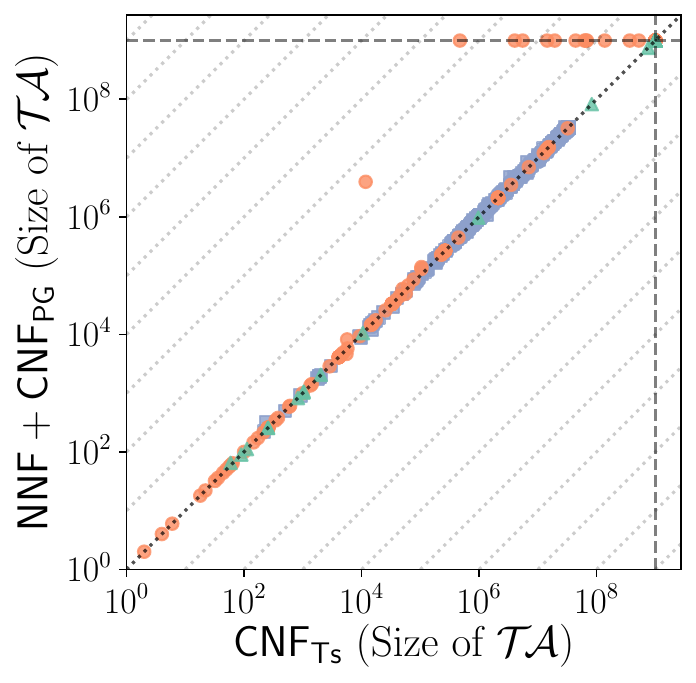}%
            \label{fig:plt:d4:all:bool:norep:models:lab_vs_nnfpol}
        \end{subfigure}\hfill
        \begin{subfigure}[t]{0.29\textwidth}
            \centering
            \includegraphics[width=.85\textwidth]{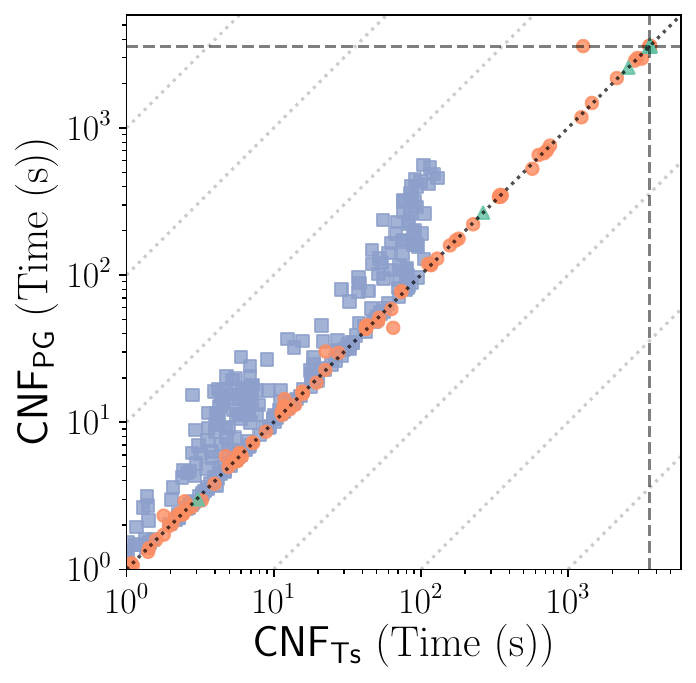}%
            \label{fig:plt:d4:all:bool:norep:time:lab_vs_pol}
        \end{subfigure}\hfill
        \begin{subfigure}[t]{0.29\textwidth}
            \centering
            \includegraphics[width=.85\textwidth]{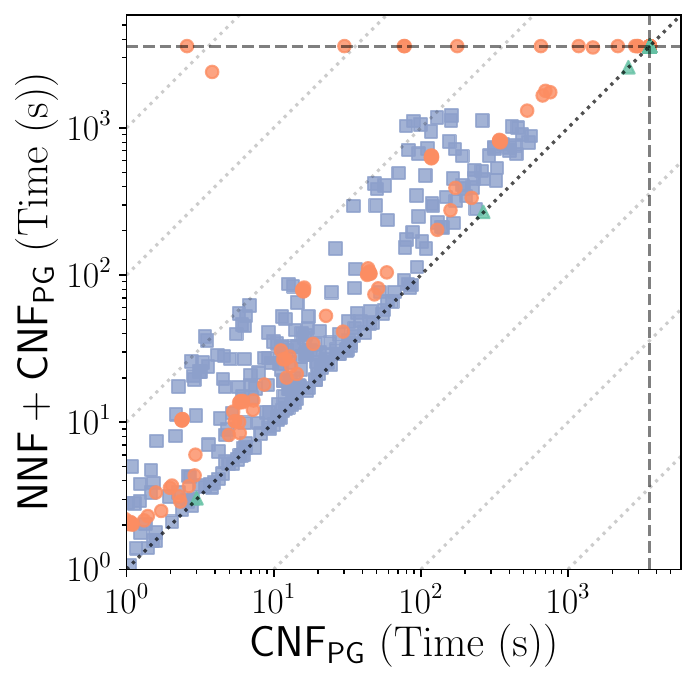}%
            \label{fig:plt:d4:all:bool:norep:time:pol_vs_nnfpol}
        \end{subfigure}\hfill
        \begin{subfigure}[t]{0.29\textwidth}
            \centering
            \includegraphics[width=.85\textwidth]{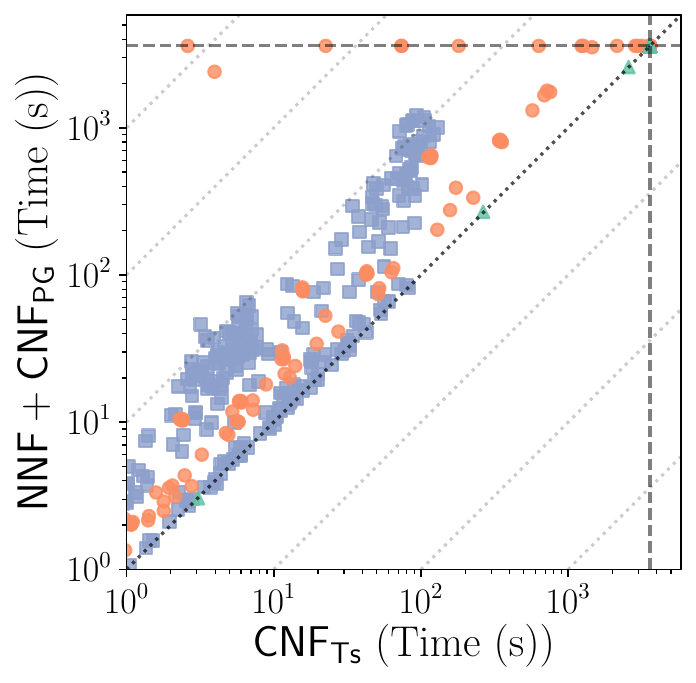}%
            \label{fig:plt:d4:all:bool:norep:time:lab_vs_nnfpol}
        \end{subfigure}
        \caption{Results for disjoint enumeration.}%
        \label{fig:plt:d4:all:bool:norep:scatter}
    \end{subfigure}
    \begin{subfigure}[t]{\textwidth}
        \vspace{.5em}
        \centering
        {\small


\newcommand{\best}[1]{\textbf{#1}}
\begin{tabularx}{.44\textwidth}{l|c|ccc}
    \multirow{3}{*}{Bench.} & \multirow{3}{*}{Instances} & \multicolumn{3}{c}{T.O.\ for disjoint AllSAT}                                            \\
                            &                            & \TseitinCNF{}                                 & \PlaistedCNF{} & \NNFPlaisted{} \\[0.2em]
    \hline
    Syn-Bool                & 300                        & \best{0}                                      & \best{0}       & \best{0}                \\
    ISCAS85                 & 250                        & \best{15}                                     & 16             & 27                      \\
    AIG                     & 89                         & \best{76}                                     & \best{76}      & \best{76}               \\
\end{tabularx}
        }
        \caption{Number of timeouts.}%
        \label{tab:timeouts:d4:bool}
    \end{subfigure}
    \caption{Results on the Boolean benchmarks using \dfdecdnnf{}.
        Plots in~\ref{fig:plt:d4:all:bool:norep:scatter} compare CNF-izations by \TAna{} size (first row) and execution time (second row).
        Points on dashed lines represent timeouts, shown in~\ref{tab:timeouts:d4:bool}.
        All axes use a logarithmic scale.}%
    \label{fig:plt:d4:all:bool:scatter}
\end{figure}

\newpage
\section{Experimental results on d-DNNF-based model counting}%
\label{appendix:d4:counting}
We tested the different CNF-izations also for model counting using \df{}. Since \PlaistedCNF{} and \NNFPlaisted{} do not preserve the model count, we perform projected model counting on the original atoms of the input formula. Even though this would not be necessary for \TseitinCNF{}, we apply the same procedure to ensure a fair comparison.

The results of the experiments on the Boolean benchmarks are shown in
\cref{fig:plt:d4:counting:all:bool:scatter}. We can observe that also for model
counting, \TseitinCNF{} is uniformly the best-performing CNF-ization,
supporting the analysis in~\sref{sec:experiments:d4}.
\begin{figure}[h!]
    \centering
    \begin{subfigure}[t]{\textwidth}
        \centering
        \includegraphics[height=2em]{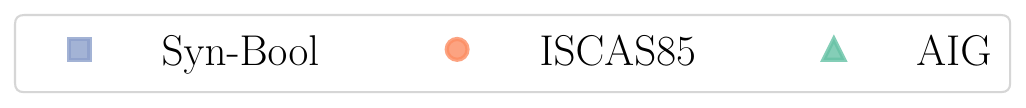}
    \end{subfigure}
    \begin{subfigure}[t]{\textwidth}
        \begin{subfigure}[t]{0.29\textwidth}
            \centering
            \includegraphics[width=.85\textwidth]{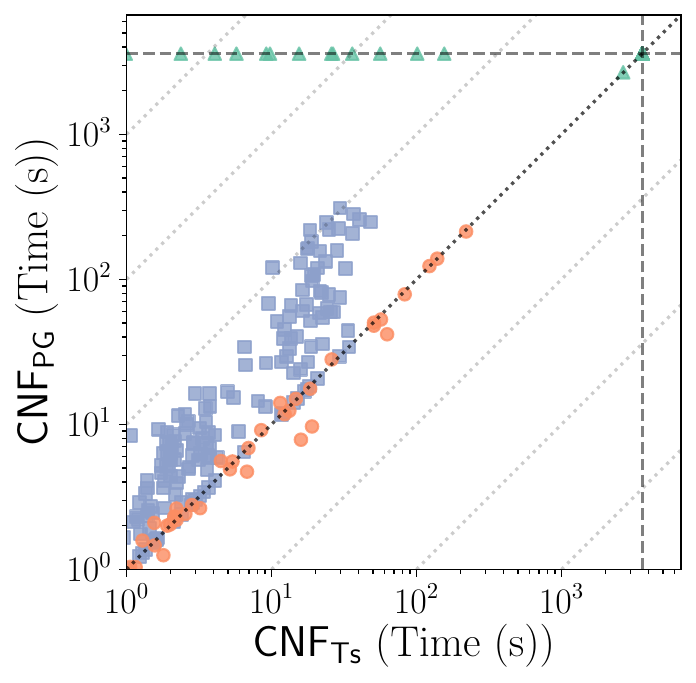}%
            \label{fig:plt:d4:counting:all:bool:norep:time:lab_vs_pol}
        \end{subfigure}\hfill
        \begin{subfigure}[t]{0.29\textwidth}
            \centering
            \includegraphics[width=.85\textwidth]{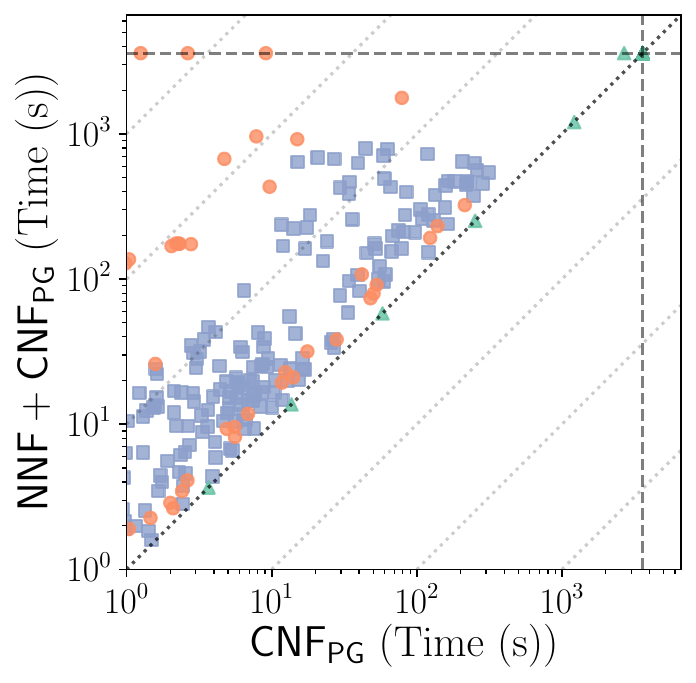}%
            \label{fig:plt:d4:counting:all:bool:norep:time:pol_vs_nnfpol}
        \end{subfigure}\hfill
        \begin{subfigure}[t]{0.29\textwidth}
            \centering
            \includegraphics[width=.85\textwidth]{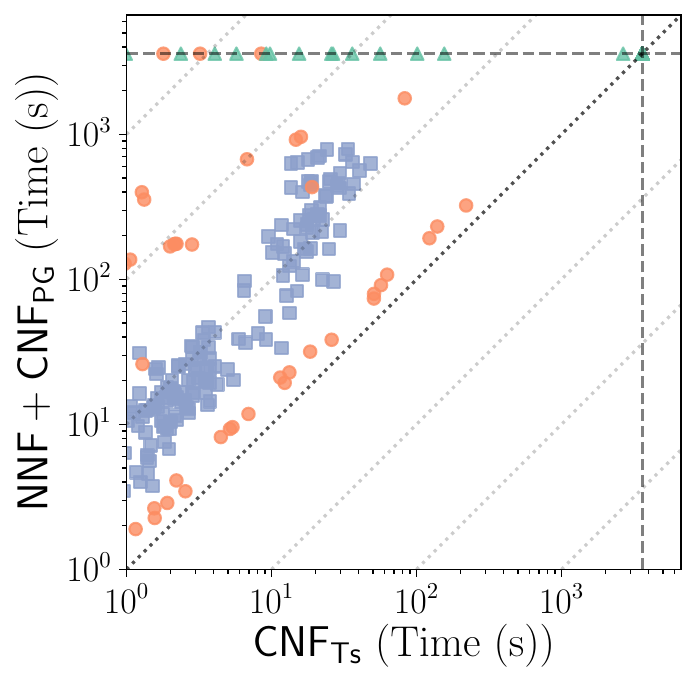}%
            \label{fig:plt:d4:counting:all:bool:norep:time:lab_vs_nnfpol}
        \end{subfigure}
        \caption{Results for model counting.}%
        \label{fig:plt:d4:counting:all:bool:norep:scatter}
    \end{subfigure}
    \begin{subfigure}[t]{\textwidth}
        \centering
        {\small


\newcommand{\best}[1]{\textbf{#1}}
\begin{tabularx}{.44\textwidth}{l|c|ccc}
    \multirow{3}{*}{Bench.} & \multirow{3}{*}{Instances} & \multicolumn{3}{c}{T.O.\ for disjoint AllSAT}                                            \\
                            &                            & \TseitinCNF{}                                 & \PlaistedCNF{} & \NNFPlaisted{} \\[0.2em]
    \hline
    Syn-Bool                & 300                        & \best{0}                                      & \best{0}       & \best{0}                \\
    ISCAS85                 & 250                        & \best{0}                                      & \best{0}       & 4                       \\
    AIG                     & 89                         & \best{50}                                     & 67             & 68                      \\
\end{tabularx}
        }
        \caption{Number of timeouts.}%
        \label{tab:timeouts:d4:bool:counting}
    \end{subfigure}
    \caption{Results on the Boolean benchmarks using \df{} for model counting.
        Plots in~\ref{fig:plt:d4:counting:all:bool:norep:scatter} compare CNF-izations by execution time.
        Points on dashed lines represent timeouts, shown in~\ref{tab:timeouts:d4:bool:counting}.
        All axes use a logarithmic scale.}%
    \label{fig:plt:d4:counting:all:bool:scatter}
\end{figure}

\newpage

\section{Details on experimental results in the paper}%
\label{appendix:experiments}
In this section, we report the scatter plots on individual
benchmarks for the experiments presented
in~\sref{sec:experiments:results}, \cref{fig:plt:all:bool:scatter,,fig:plt:tabula:all:bool:scatter,,fig:plt:all:lra:scatter}.

For AllSAT,
\cref{fig:plt:syn:bool:scatter,,fig:plt:circ:scatter,,fig:plt:aig:scatter} show
the results for \mathsat{} on the Boolean synthetic, ISCAS'85 and AIG
benchmarks, respectively.
\cref{fig:plt:tabula:syn:bool:scatter,,fig:plt:tabula:circ:scatter,,fig:plt:tabula:aig:scatter}
show the results for \tabularallsat{} on the same benchmarks.

For AllSMT, \cref{fig:plt:syn:lra:scatter,,fig:plt:wmi:scatter} show the
results for \mathsat{} on the \smtlarat{} synthetic and WMI benchmarks,
respectively. \cref{fig:plt:tabula:syn:lra:scatter,,fig:plt:tabula:wmi:scatter}
show the results for \tabularallsmt{} on the same benchmarks.
\begin{figure}
    \centering
    \begin{subfigure}[t]{\textwidth}
        \begin{subfigure}[t]{0.29\textwidth}
            \includegraphics[width=.85\textwidth]{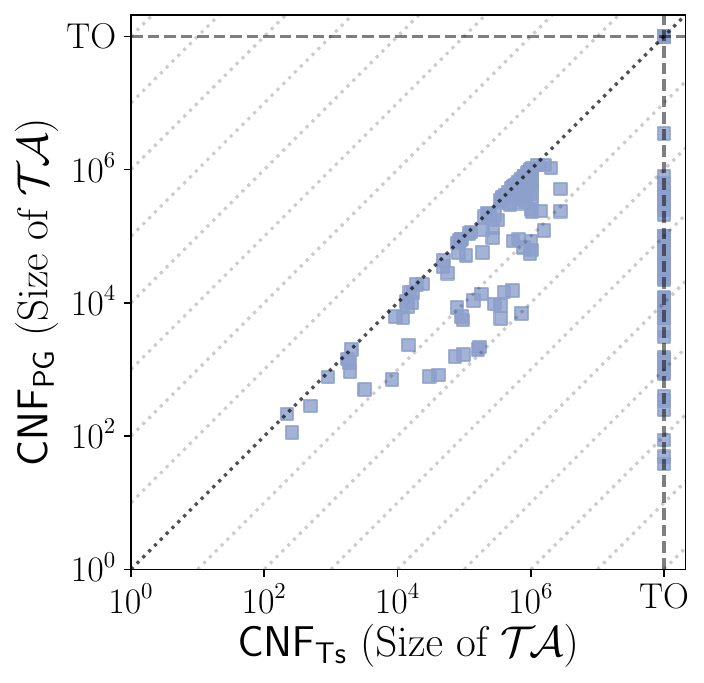}%
            \label{fig:plt:syn:bool:norep:models:lab_vs_pol}
        \end{subfigure}\hfill
        \begin{subfigure}[t]{0.29\textwidth}
            \includegraphics[width=.85\textwidth]{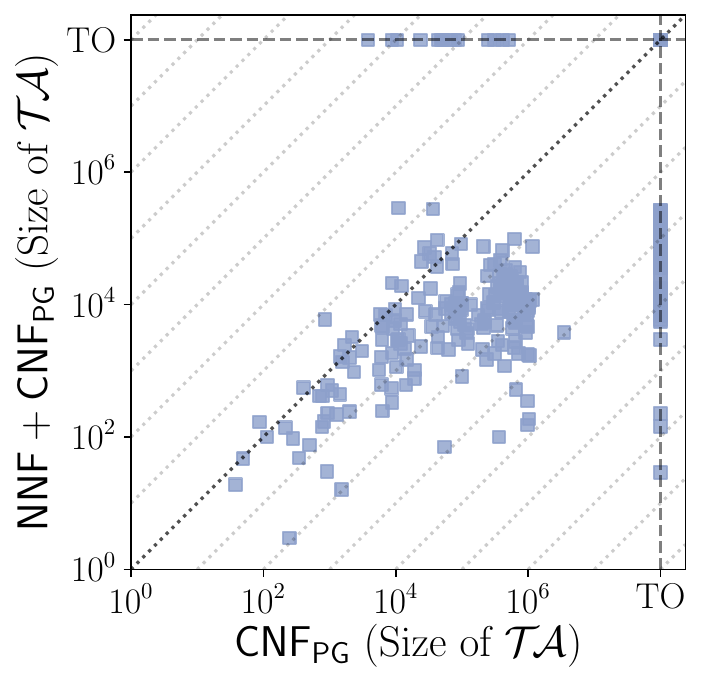}%
            \label{fig:plt:syn:bool:norep:models:pol_vs_nnfpol}
        \end{subfigure}\hfill
        \begin{subfigure}[t]{0.29\textwidth}
            \includegraphics[width=.85\textwidth]{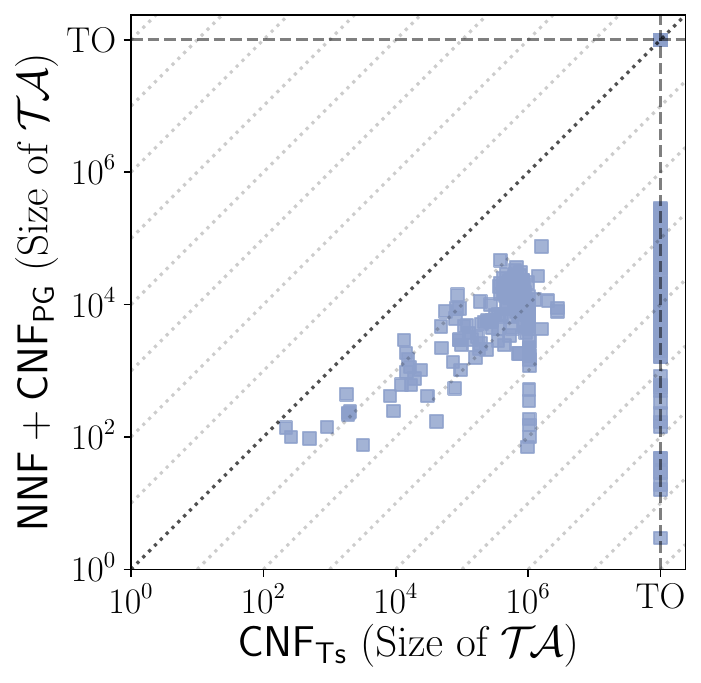}%
            \label{fig:plt:syn:bool:norep:models:lab_vs_nnfpol}
        \end{subfigure}\hfill
        \begin{subfigure}[t]{0.29\textwidth}
            \includegraphics[width=.85\textwidth]{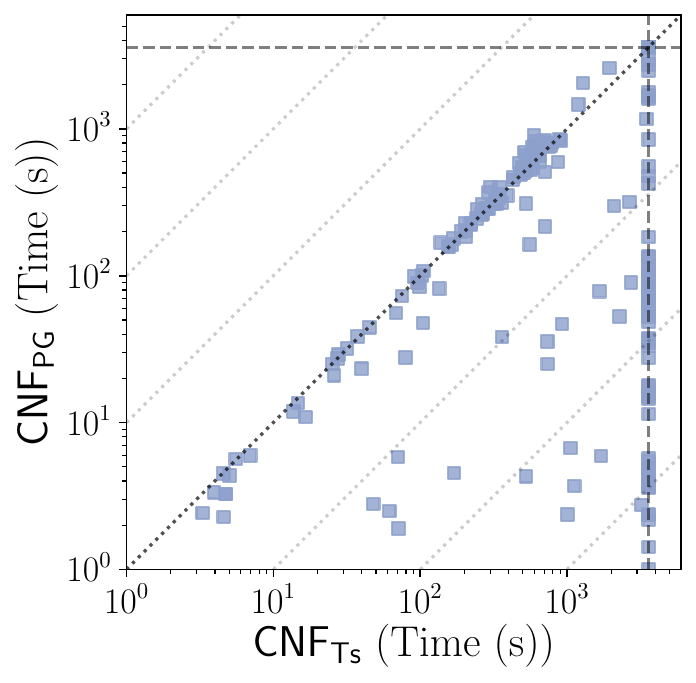}%
            \label{fig:plt:syn:bool:norep:time:lab_vs_pol}
        \end{subfigure}\hfill
        \begin{subfigure}[t]{0.29\textwidth}
            \includegraphics[width=.85\textwidth]{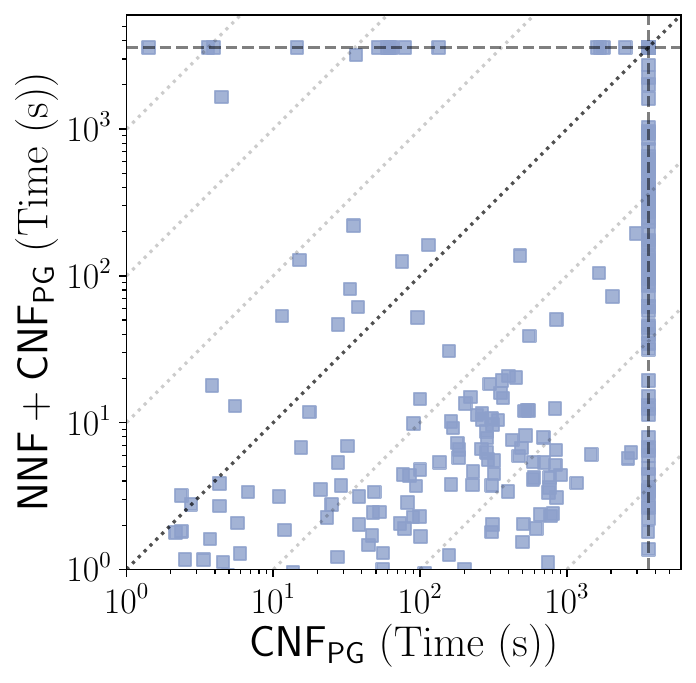}%
            \label{fig:plt:syn:bool:norep:time:pol_vs_nnfpol}
        \end{subfigure}\hfill
        \begin{subfigure}[t]{0.29\textwidth}
            \includegraphics[width=.85\textwidth]{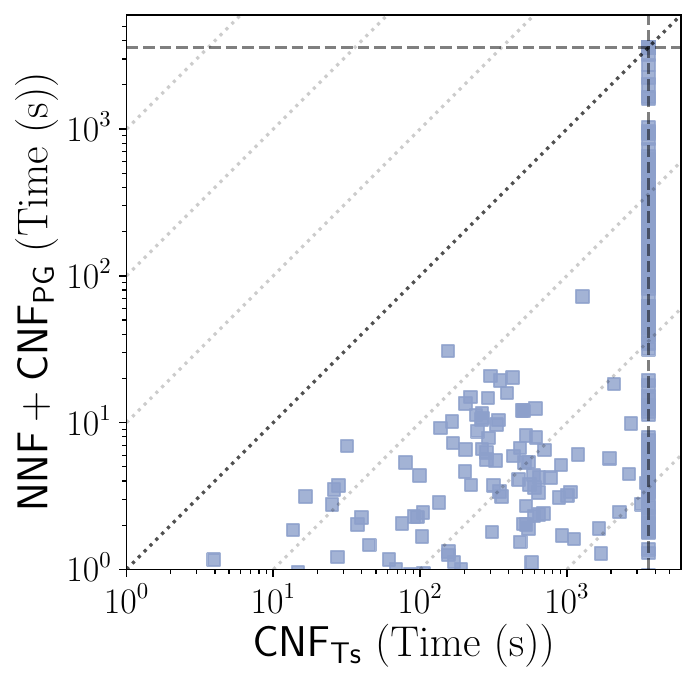}%
            \label{fig:plt:syn:bool:norep:time:lab_vs_nnfpol}
        \end{subfigure}
        \caption{Results for disjoint enumeration. 
        }%
        \label{fig:plt:syn:bool:norep:scatter}
    \end{subfigure}
    \begin{subfigure}[t]{\textwidth}
        \begin{subfigure}[t]{0.29\textwidth}
            \includegraphics[width=.85\textwidth]{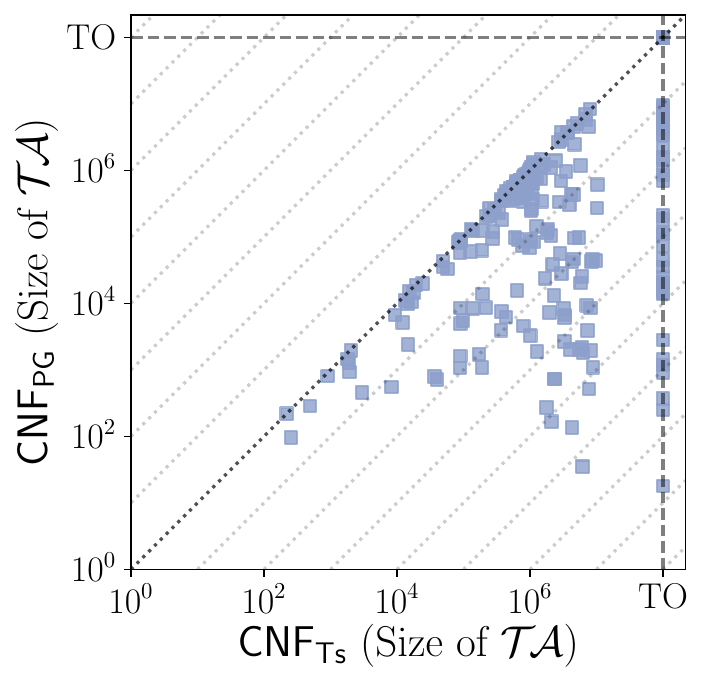}%
            \label{fig:plt:syn:bool:rep:models:lab_vs_pol}
        \end{subfigure}\hfill
        \begin{subfigure}[t]{0.29\textwidth}
            \includegraphics[width=.85\textwidth]{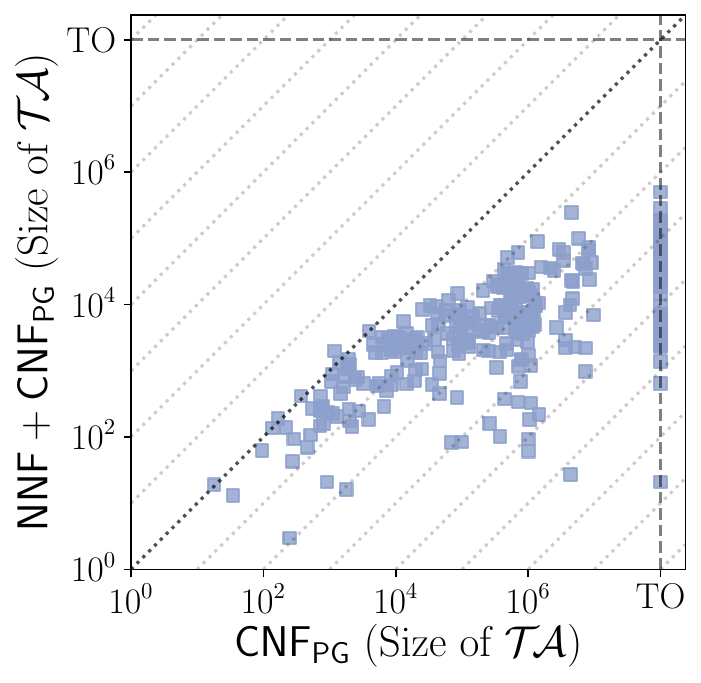}%
            \label{fig:plt:syn:bool:rep:models:pol_vs_nnfpol}
        \end{subfigure}\hfill
        \begin{subfigure}[t]{0.29\textwidth}
            \includegraphics[width=.85\textwidth]{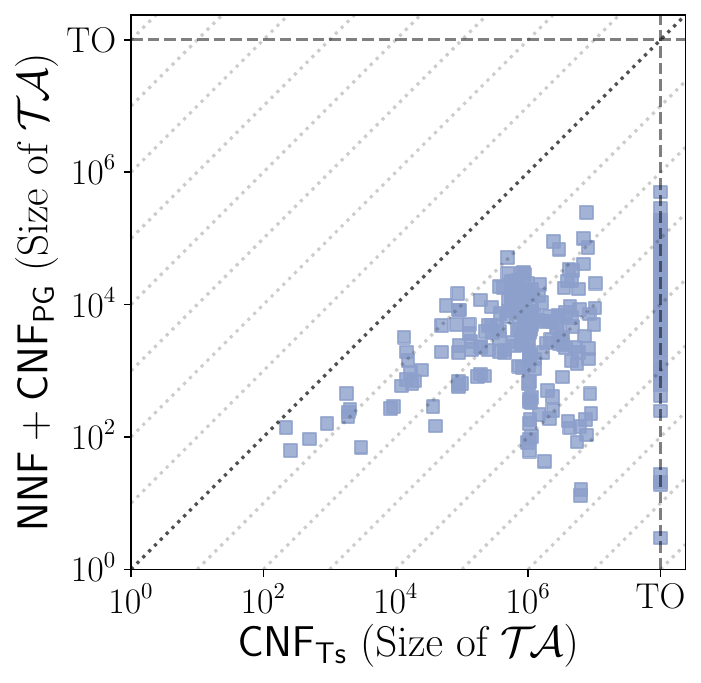}%
            \label{fig:plt:syn:bool:rep:models:lab_vs_nnfpol}
        \end{subfigure}\hfill
        \begin{subfigure}[t]{0.29\textwidth}
            \includegraphics[width=.85\textwidth]{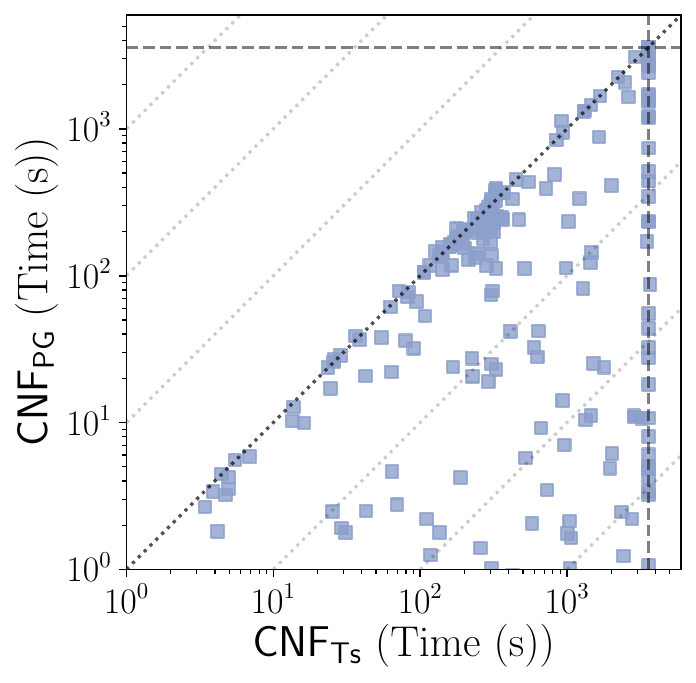}%
            \label{fig:plt:syn:bool:rep:time:lab_vs_pol}
        \end{subfigure}\hfill
        \begin{subfigure}[t]{0.29\textwidth}
            \includegraphics[width=.85\textwidth]{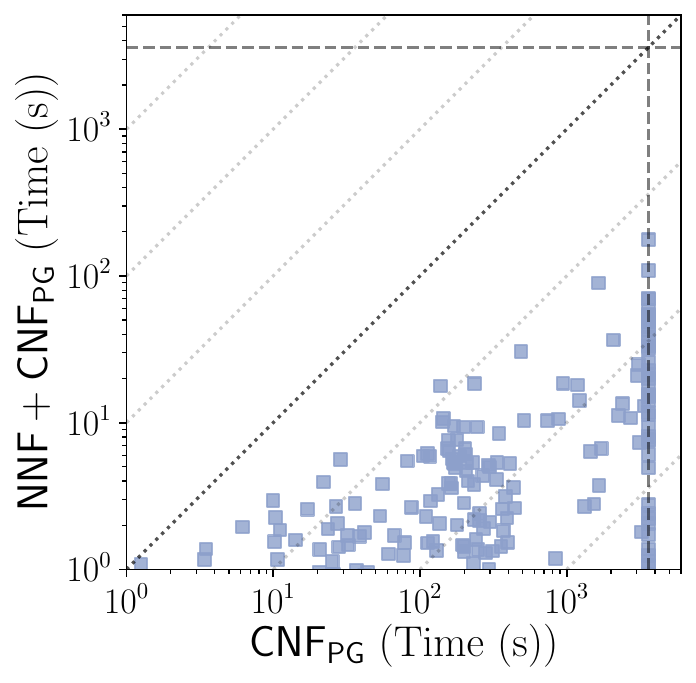}%
            \label{fig:plt:syn:bool:rep:time:pol_vs_nnfpol}
        \end{subfigure}\hfill
        \begin{subfigure}[t]{0.29\textwidth}
            \includegraphics[width=.85\textwidth]{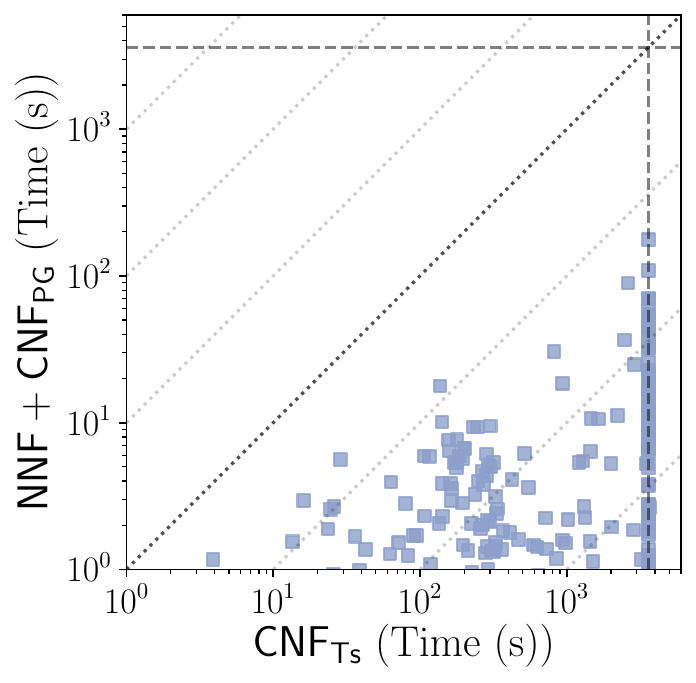}%
            \label{fig:plt:syn:bool:rep:time:lab_vs_nnfpol}
        \end{subfigure}
        \caption{Results for non-disjoint enumeration. 
        }%
        \label{fig:plt:syn:bool:rep:scatter}
    \end{subfigure}
    \caption{Results on the Boolean synthetic benchmarks using \mathsat{}.
        Plots in~\ref{fig:plt:syn:bool:norep:scatter} and~\ref{fig:plt:syn:bool:rep:scatter} compare CNF-izations by \TAna{} size (first row) and execution time (second row).
        Points on dashed lines represent timeouts, shown in~\ref{tab:timeouts:bool}.
        All axes use a logarithmic scale.}%
    \label{fig:plt:syn:bool:scatter}
\end{figure}

\begin{figure}
    \centering
    \begin{subfigure}[t]{\textwidth}
        \begin{subfigure}[t]{0.29\textwidth}
            \includegraphics[width=.85\textwidth]{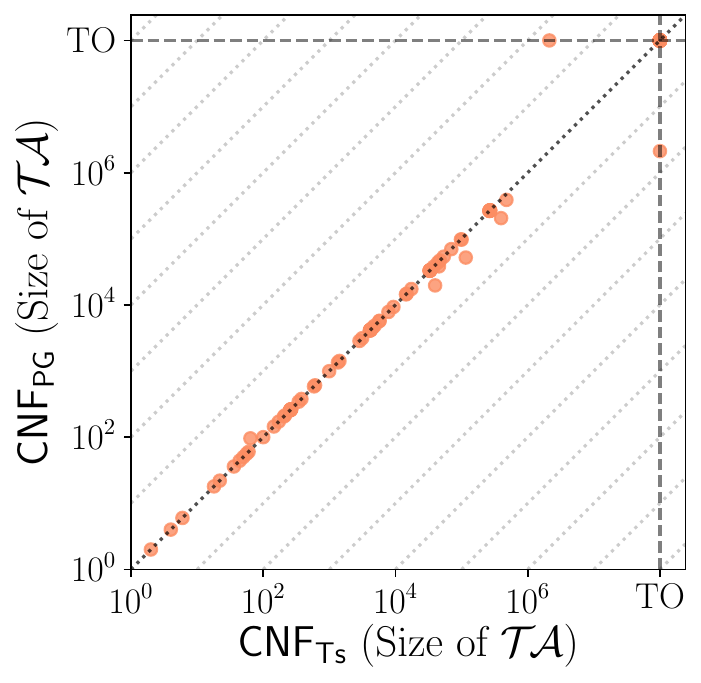}%
            \label{fig:plt:circ:norep:models:lab_vs_pol}
        \end{subfigure}\hfill
        \begin{subfigure}[t]{0.29\textwidth}
            \includegraphics[width=.85\textwidth]{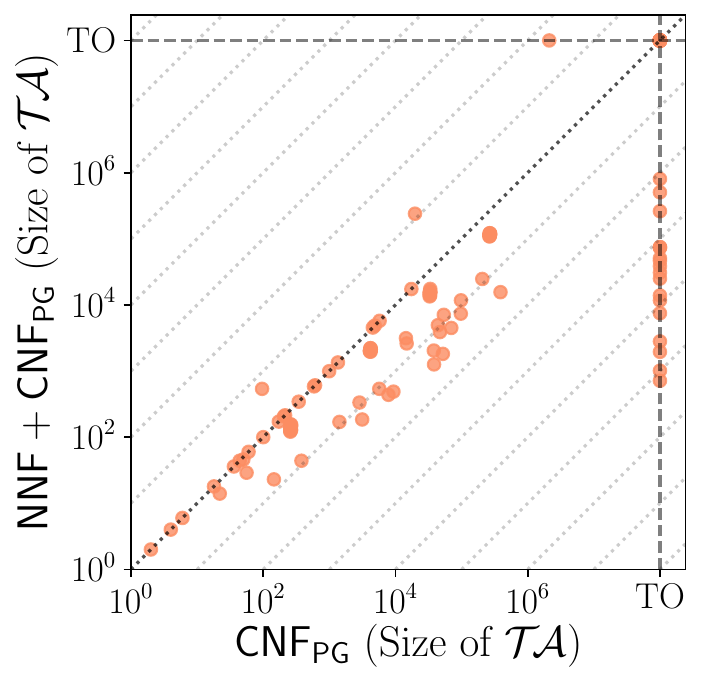}%
            \label{fig:plt:circ:norep:models:pol_vs_nnfpol}
        \end{subfigure}\hfill
        \begin{subfigure}[t]{0.29\textwidth}
            \includegraphics[width=.85\textwidth]{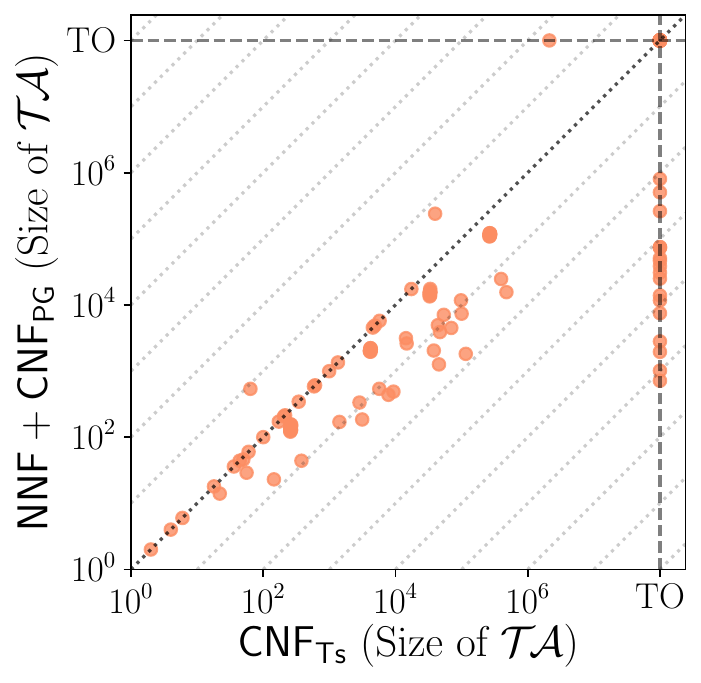}%
            \label{fig:plt:circ:norep:models:lab_vs_nnfpol}
        \end{subfigure}\hfill
        \begin{subfigure}[t]{0.29\textwidth}
            \includegraphics[width=.85\textwidth]{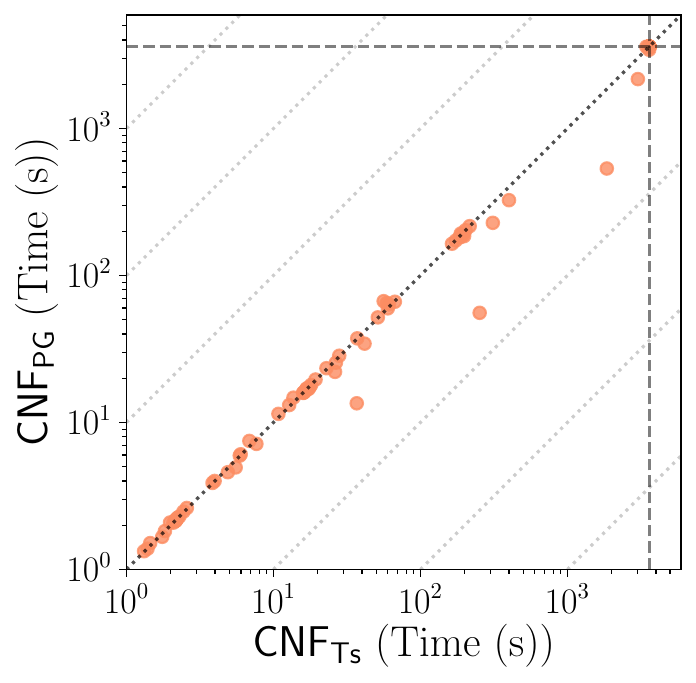}%
            \label{fig:plt:circ:norep:time:lab_vs_pol}
        \end{subfigure}\hfill
        \begin{subfigure}[t]{0.29\textwidth}
            \includegraphics[width=.85\textwidth]{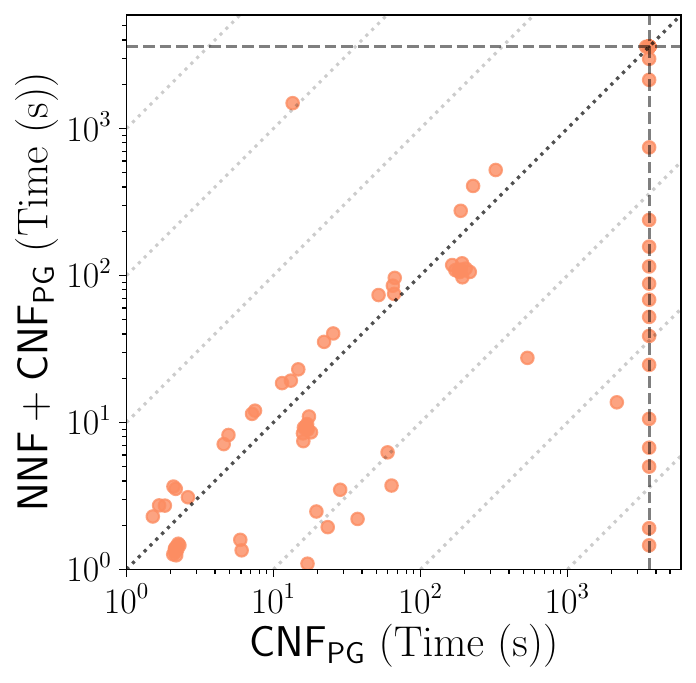}%
            \label{fig:plt:circ:norep:time:pol_vs_nnfpol}
        \end{subfigure}\hfill
        \begin{subfigure}[t]{0.29\textwidth}
            \includegraphics[width=.85\textwidth]{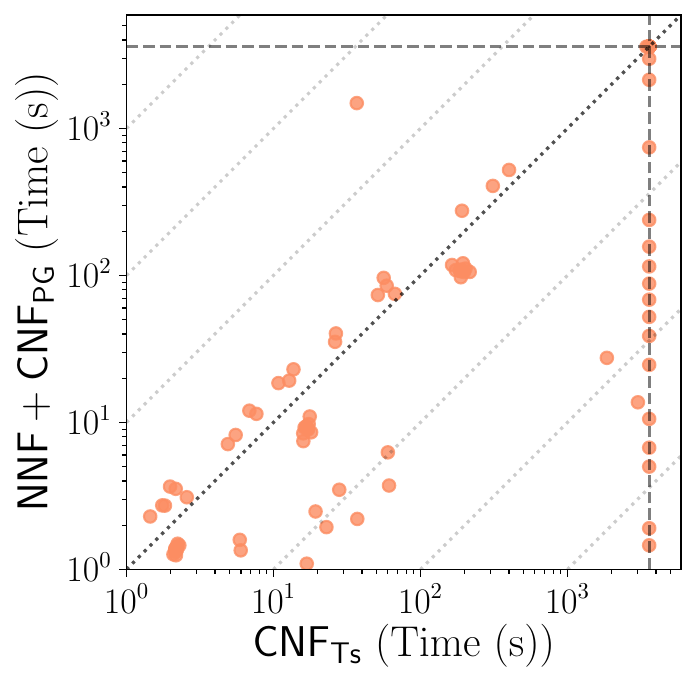}%
            \label{fig:plt:circ:norep:time:lab_vs_nnfpol}
        \end{subfigure}
        \caption{Results for disjoint enumeration. 
        }%
        \label{fig:plt:circ:norep:scatter}
    \end{subfigure}
    \begin{subfigure}[t]{\textwidth}
        \begin{subfigure}[t]{0.29\textwidth}
            \includegraphics[width=.85\textwidth]{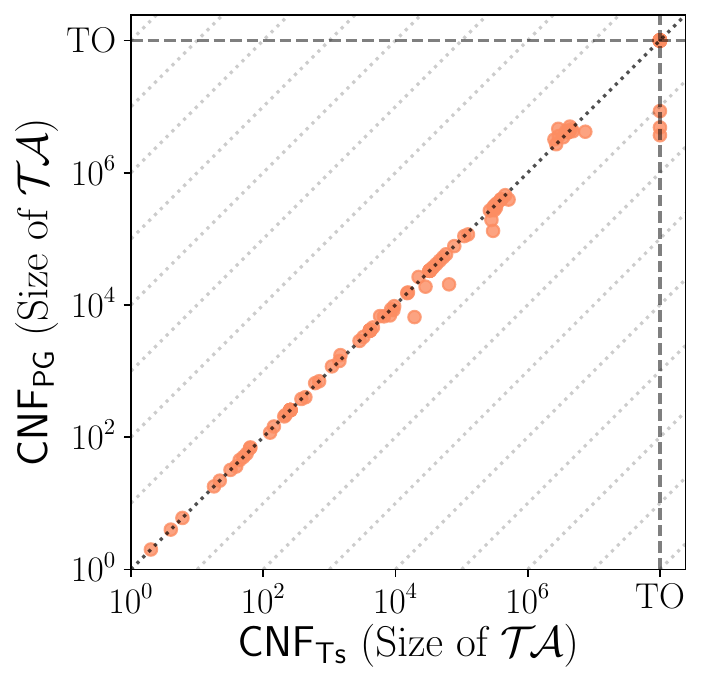}%
            \label{fig:plt:circ:rep:models:lab_vs_pol}
        \end{subfigure}\hfill
        \begin{subfigure}[t]{0.29\textwidth}
            \includegraphics[width=.85\textwidth]{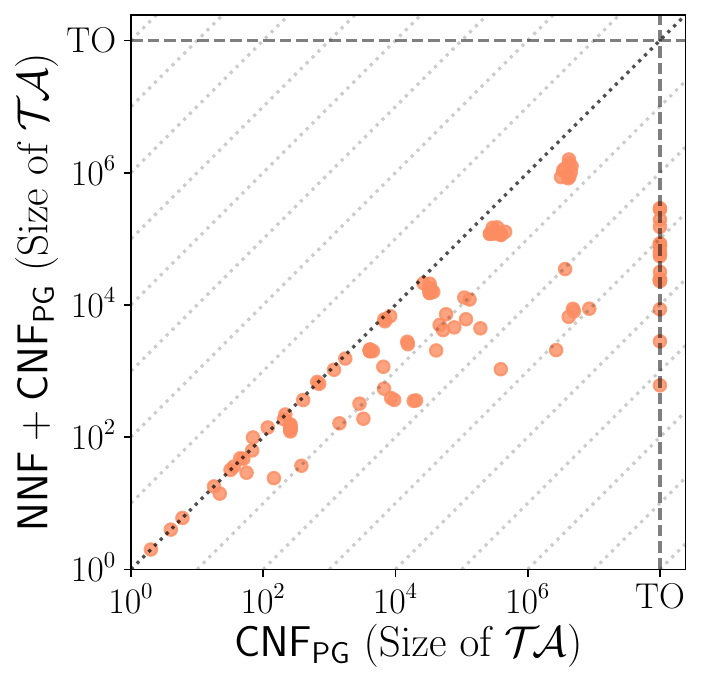}%
            \label{fig:plt:circ:rep:models:pol_vs_nnfpol}
        \end{subfigure}\hfill
        \begin{subfigure}[t]{0.29\textwidth}
            \includegraphics[width=.85\textwidth]{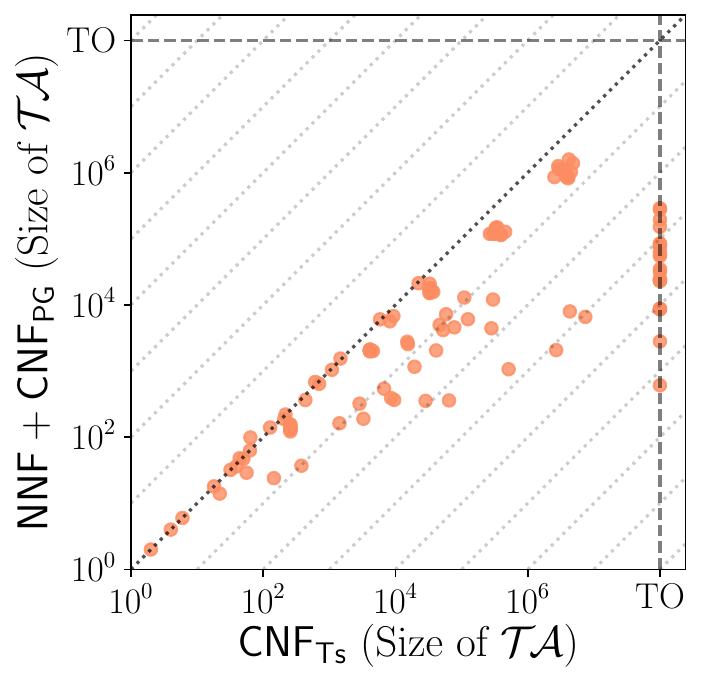}%
            \label{fig:plt:circ:rep:models:lab_vs_nnfpol}
        \end{subfigure}\hfill
        \begin{subfigure}[t]{0.29\textwidth}
            \includegraphics[width=.85\textwidth]{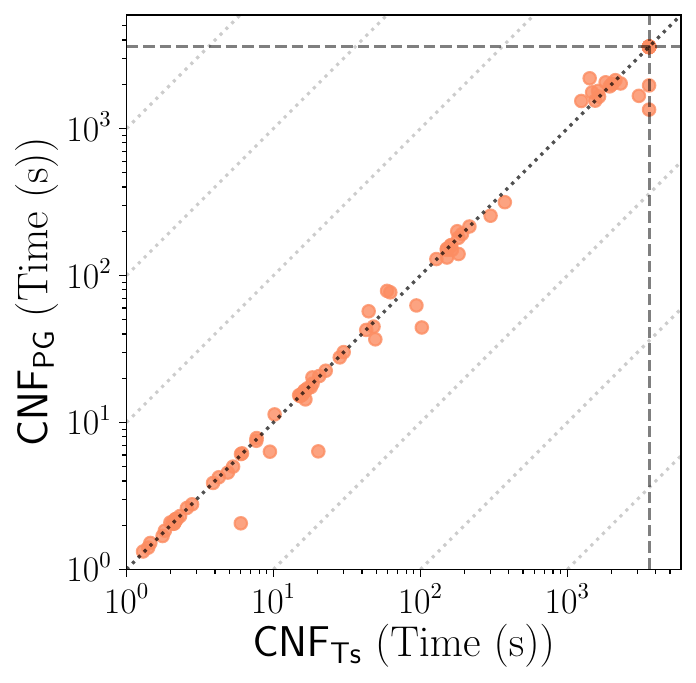}%
            \label{fig:plt:circ:rep:time:lab_vs_pol}
        \end{subfigure}\hfill
        \begin{subfigure}[t]{0.29\textwidth}
            \includegraphics[width=.85\textwidth]{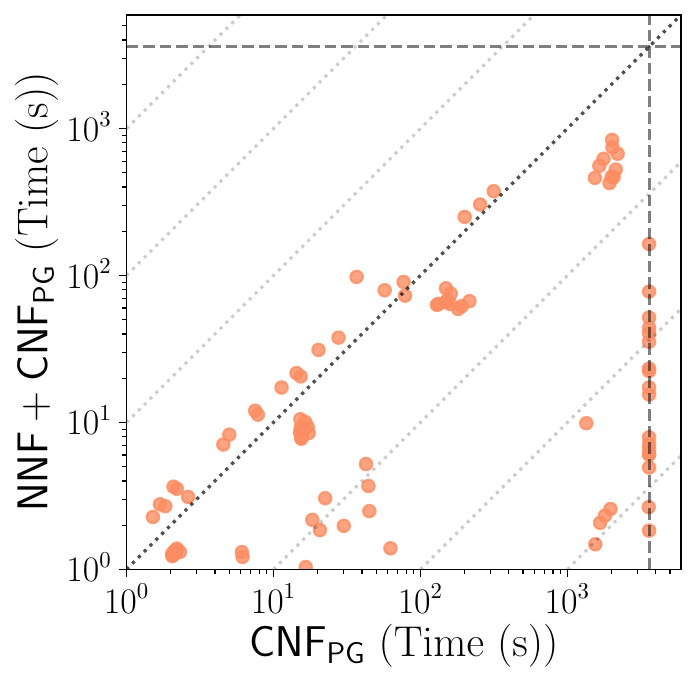}%
            \label{fig:plt:circ:rep:time:pol_vs_nnfpol}
        \end{subfigure}\hfill
        \begin{subfigure}[t]{0.29\textwidth}
            \includegraphics[width=.85\textwidth]{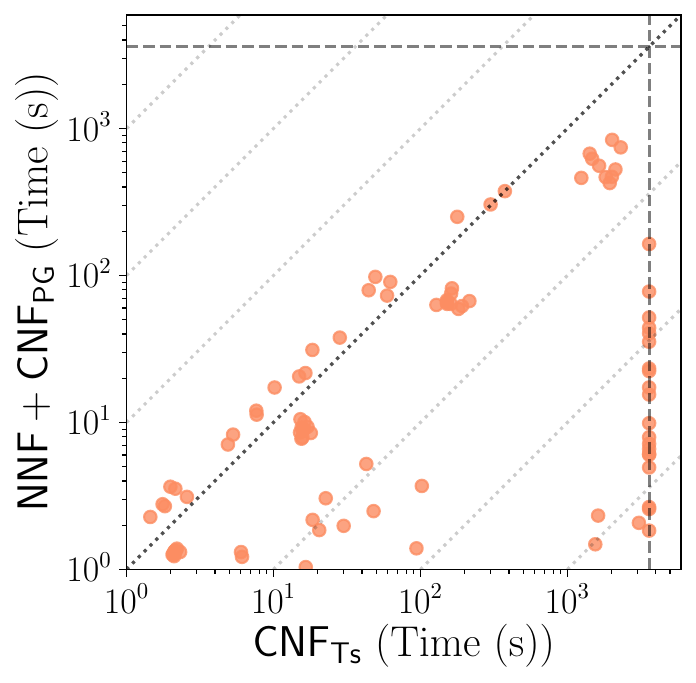}%
            \label{fig:plt:circ:rep:time:lab_vs_nnfpol}
        \end{subfigure}
        \caption{Results for non-disjoint enumeration. 
        }%
        \label{fig:plt:circ:rep:scatter}
    \end{subfigure}
    \caption{Results on the ISCAS'85 benchmarks using \mathsat{}.
        Plots in~\ref{fig:plt:circ:norep:scatter} and~\ref{fig:plt:circ:rep:scatter} compare CNF-izations by \TAna{} size (first row) and execution time (second row).
        Points on dashed lines represent timeouts, shown in~\ref{tab:timeouts:bool}.
        All axes use a logarithmic scale.}%
    \label{fig:plt:circ:scatter}
\end{figure}
\begin{figure}
    \centering
    \begin{subfigure}[t]{\textwidth}
        \begin{subfigure}[t]{0.29\textwidth}
            \includegraphics[width=.85\textwidth]{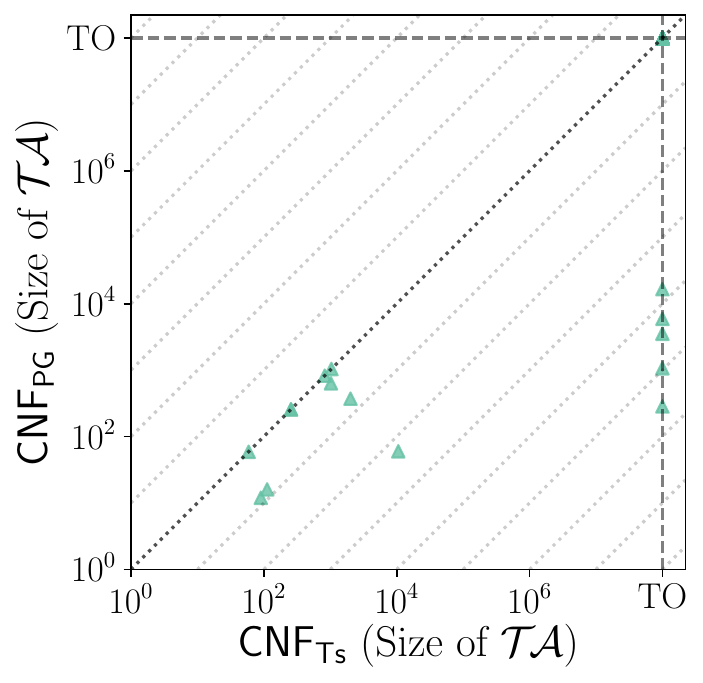}%
            \label{fig:plt:aig:norep:models:lab_vs_pol}
        \end{subfigure}\hfill
        \begin{subfigure}[t]{0.29\textwidth}
            \includegraphics[width=.85\textwidth]{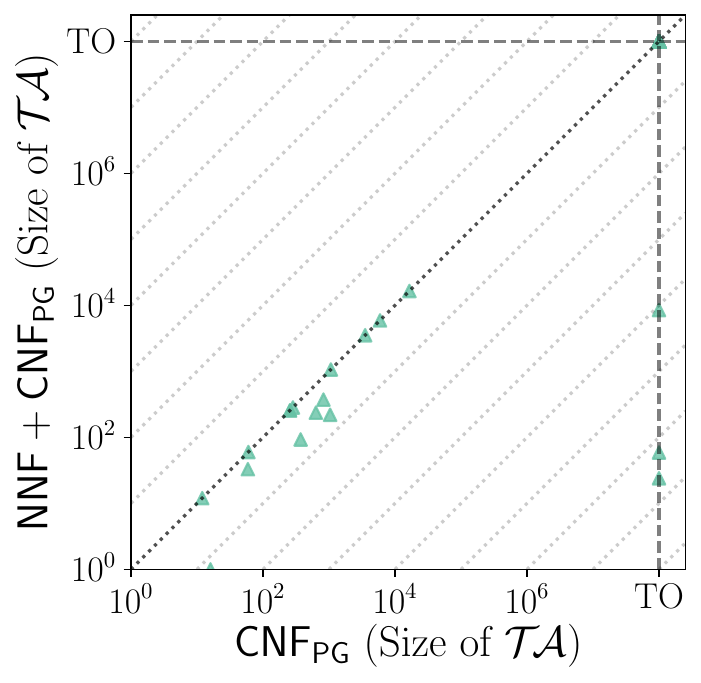}%
            \label{fig:plt:aig:norep:models:pol_vs_nnfpol}
        \end{subfigure}\hfill
        \begin{subfigure}[t]{0.29\textwidth}
            \includegraphics[width=.85\textwidth]{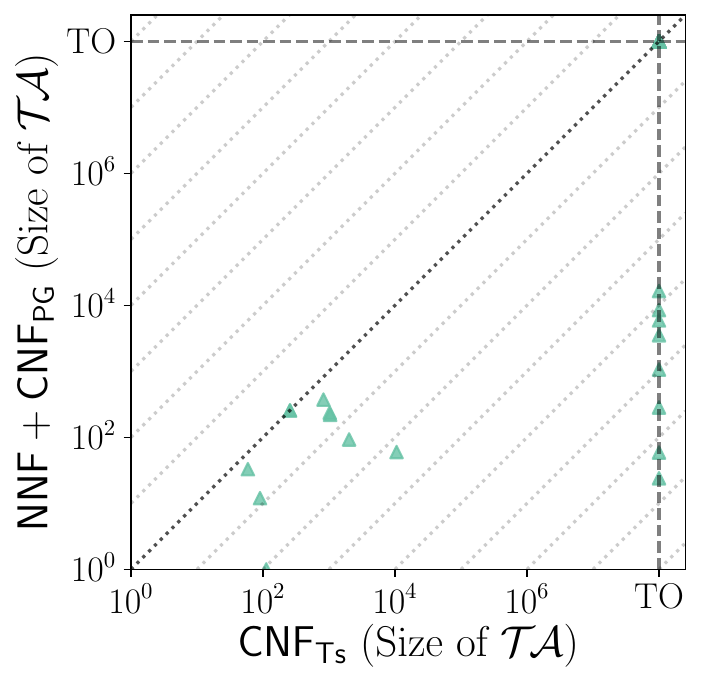}%
            \label{fig:plt:aig:norep:models:lab_vs_nnfpol}
        \end{subfigure}\hfill
        \begin{subfigure}[t]{0.29\textwidth}
            \includegraphics[width=.85\textwidth]{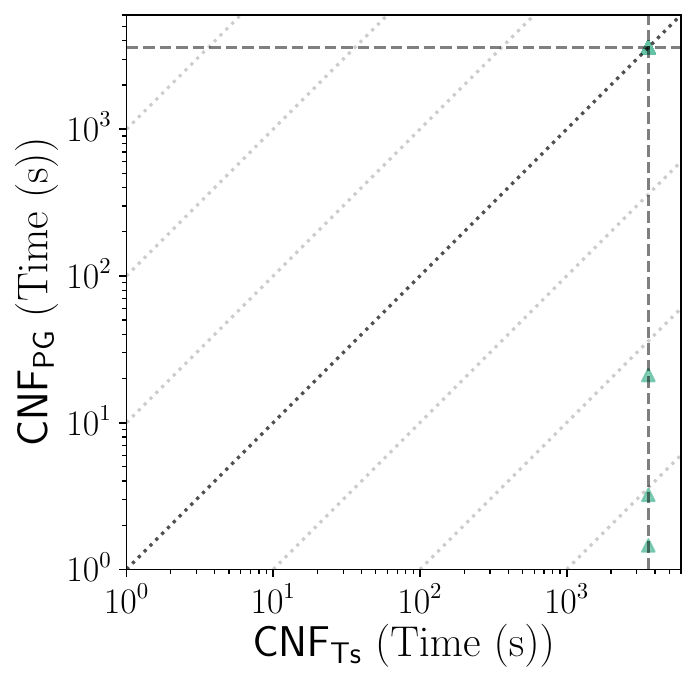}%
            \label{fig:plt:aig:norep:time:lab_vs_pol}
        \end{subfigure}\hfill
        \begin{subfigure}[t]{0.29\textwidth}
            \includegraphics[width=.85\textwidth]{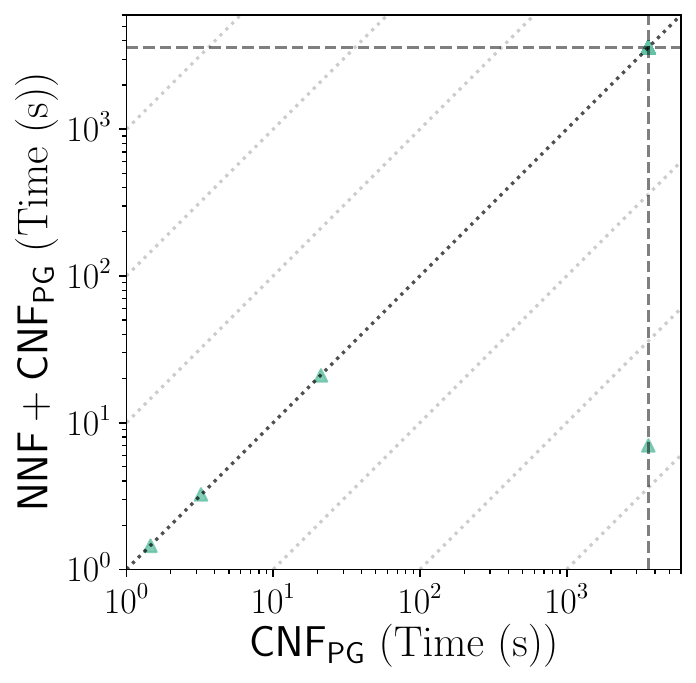}%
            \label{fig:plt:aig:norep:time:pol_vs_nnfpol}
        \end{subfigure}\hfill
        \begin{subfigure}[t]{0.29\textwidth}
            \includegraphics[width=.85\textwidth]{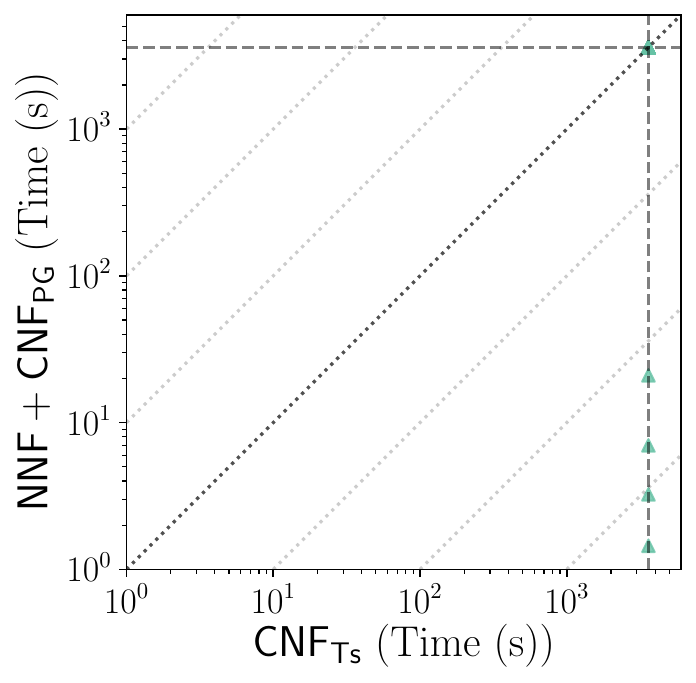}%
            \label{fig:plt:aig:norep:time:lab_vs_nnfpol}
        \end{subfigure}
        \caption{Results for disjoint enumeration.
        }%
        \label{fig:plt:aig:norep:scatter}
    \end{subfigure}
    \begin{subfigure}[t]{\textwidth}
        \begin{subfigure}[t]{0.29\textwidth}
            \includegraphics[width=.85\textwidth]{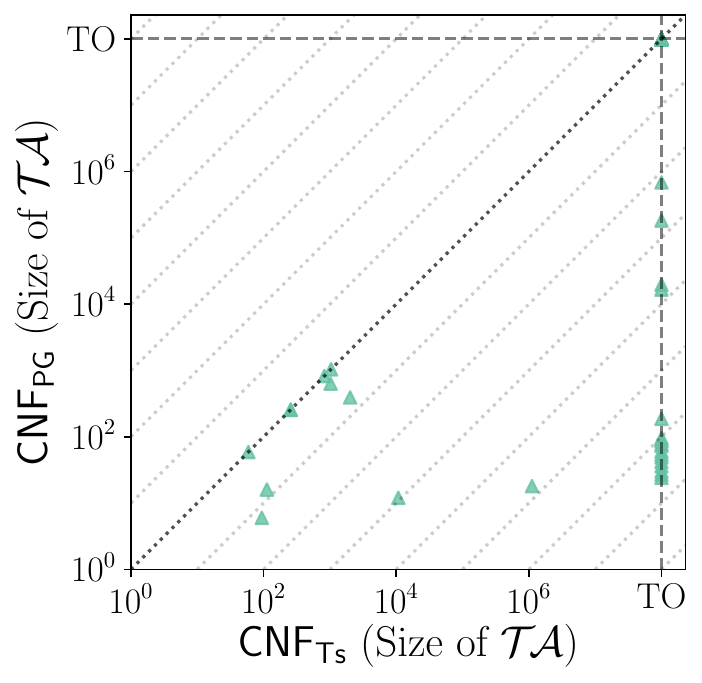}%
            \label{fig:plt:aig:rep:models:lab_vs_pol}
        \end{subfigure}\hfill
        \begin{subfigure}[t]{0.29\textwidth}
            \includegraphics[width=.85\textwidth]{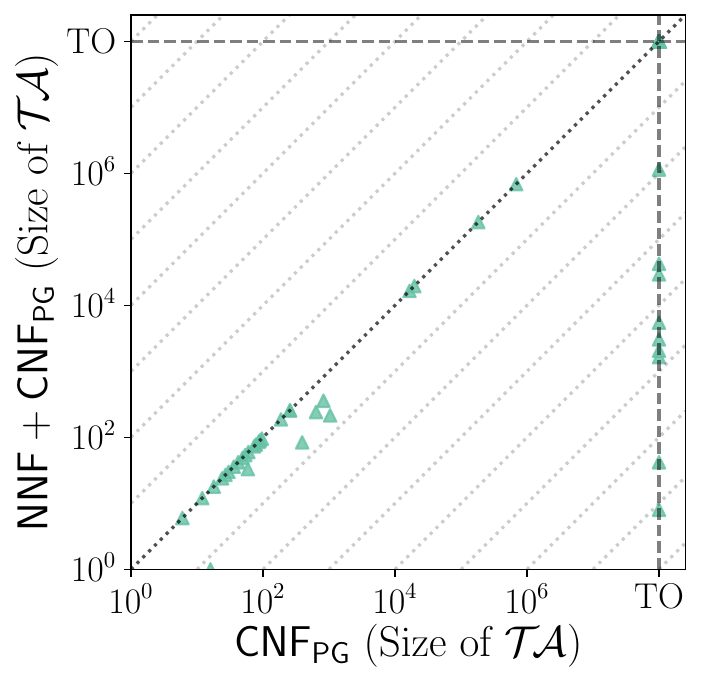}%
            \label{fig:plt:aig:rep:models:pol_vs_nnfpol}
        \end{subfigure}\hfill
        \begin{subfigure}[t]{0.29\textwidth}
            \includegraphics[width=.85\textwidth]{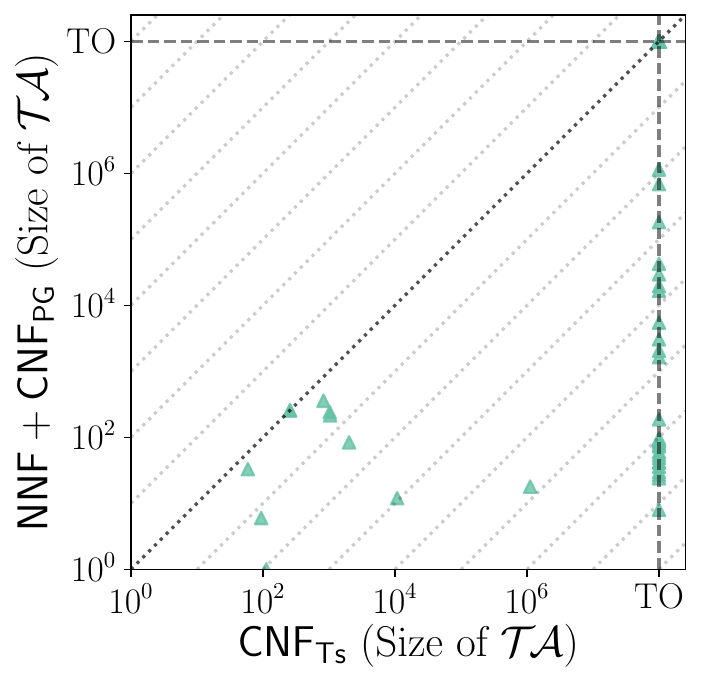}%
            \label{fig:plt:aig:rep:models:lab_vs_nnfpol}
        \end{subfigure}\hfill
        \begin{subfigure}[t]{0.29\textwidth}
            \includegraphics[width=.85\textwidth]{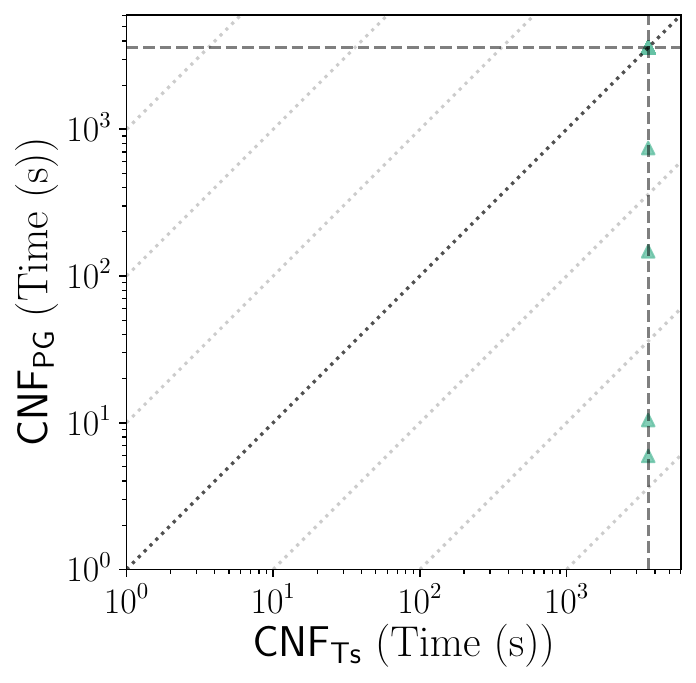}%
            \label{fig:plt:aig:rep:time:lab_vs_pol}
        \end{subfigure}\hfill
        \begin{subfigure}[t]{0.29\textwidth}
            \includegraphics[width=.85\textwidth]{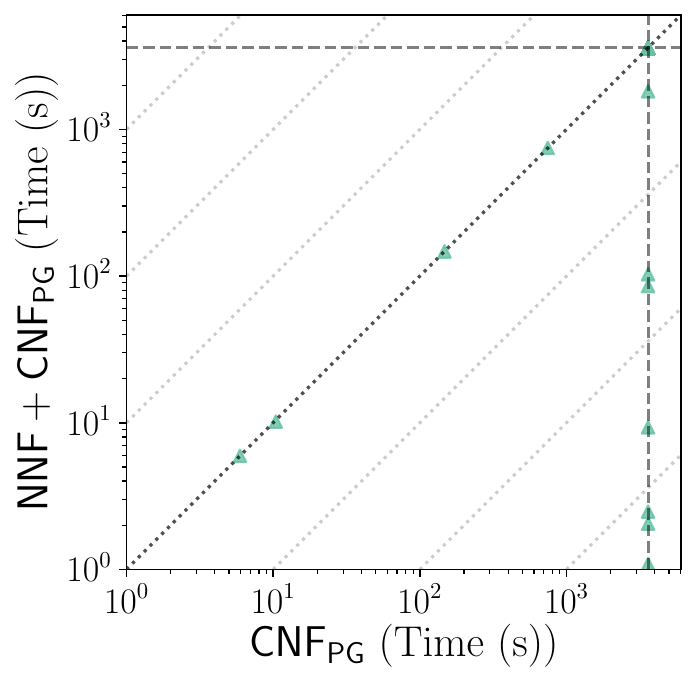}%
            \label{fig:plt:aig:rep:time:pol_vs_nnfpol}
        \end{subfigure}\hfill
        \begin{subfigure}[t]{0.29\textwidth}
            \includegraphics[width=.85\textwidth]{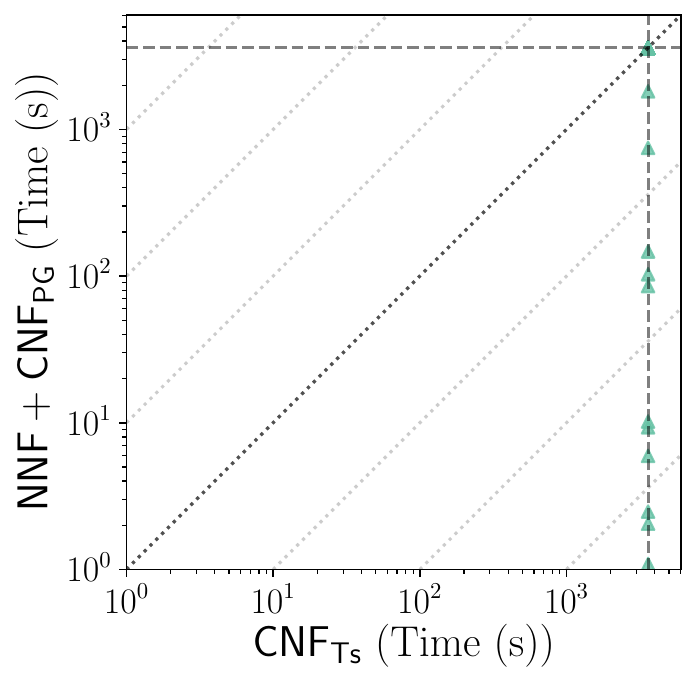}%
            \label{fig:plt:aig:rep:time:lab_vs_nnfpol}
        \end{subfigure}
        \caption{Results for non-disjoint enumeration.
        }%
        \label{fig:plt:aig:rep:scatter}
    \end{subfigure}
    \caption{Results on the AIG benchmarks using \mathsat{}.
        Plots in~\ref{fig:plt:aig:norep:scatter} and~\ref{fig:plt:aig:rep:scatter} compare CNF-izations by \TAna{} size (first row) and execution time (second row).
        Points on dashed lines represent timeouts, shown in~\ref{tab:timeouts:bool}.
        All axes use a logarithmic scale.}%
    \label{fig:plt:aig:scatter}
\end{figure}

\begin{figure}
    \centering
    \begin{subfigure}[t]{\textwidth}
        \begin{subfigure}[t]{0.29\textwidth}
            \includegraphics[width=.85\textwidth]{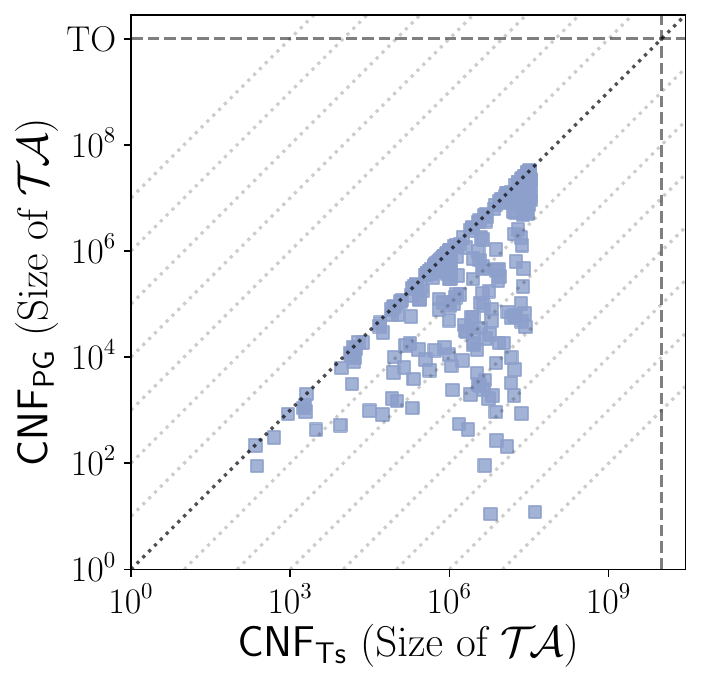}%
            \label{fig:plt:tabula:syn:bool:norep:models:lab_vs_pol}
        \end{subfigure}\hfill
        \begin{subfigure}[t]{0.29\textwidth}
            \includegraphics[width=.85\textwidth]{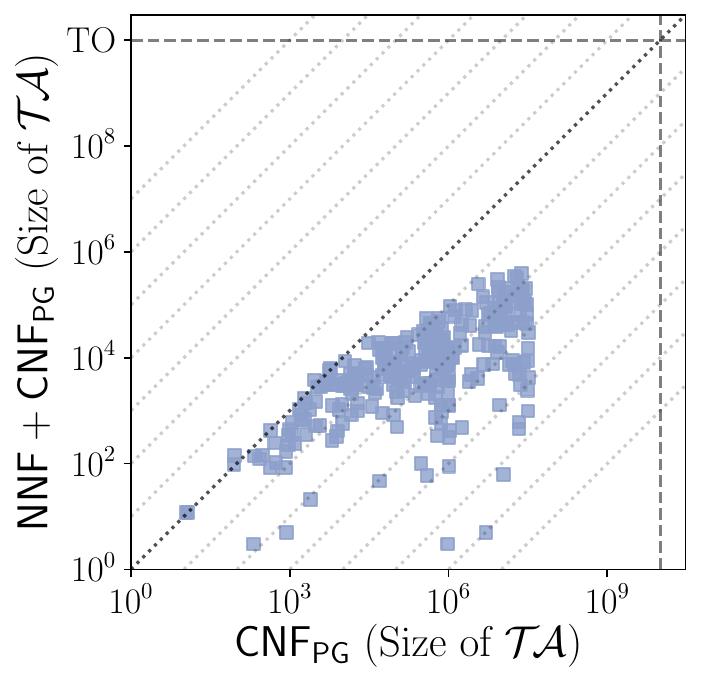}%
            \label{fig:plt:tabula:syn:bool:norep:models:pol_vs_nnfpol}
        \end{subfigure}\hfill
        \begin{subfigure}[t]{0.29\textwidth}
            \includegraphics[width=.85\textwidth]{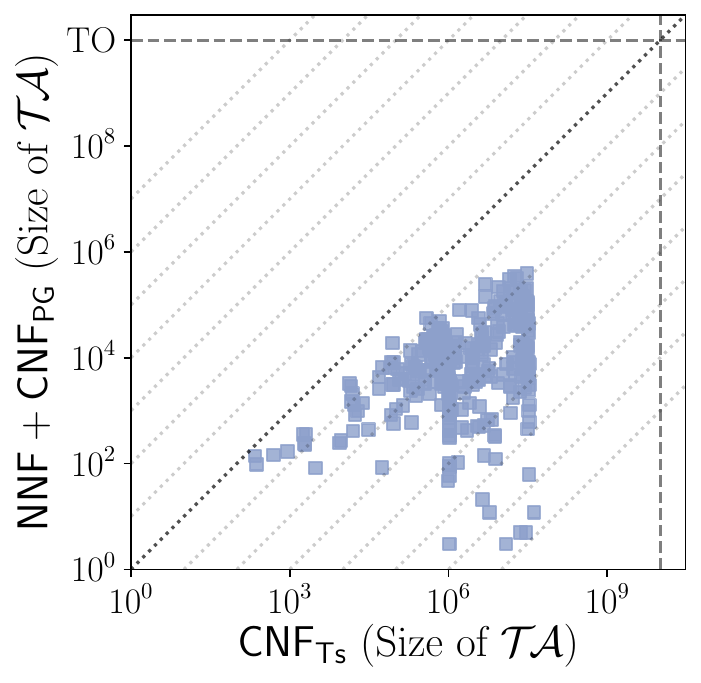}%
            \label{fig:plt:tabula:syn:bool:norep:models:lab_vs_nnfpol}
        \end{subfigure}\hfill
        \begin{subfigure}[t]{0.29\textwidth}
            \includegraphics[width=.85\textwidth]{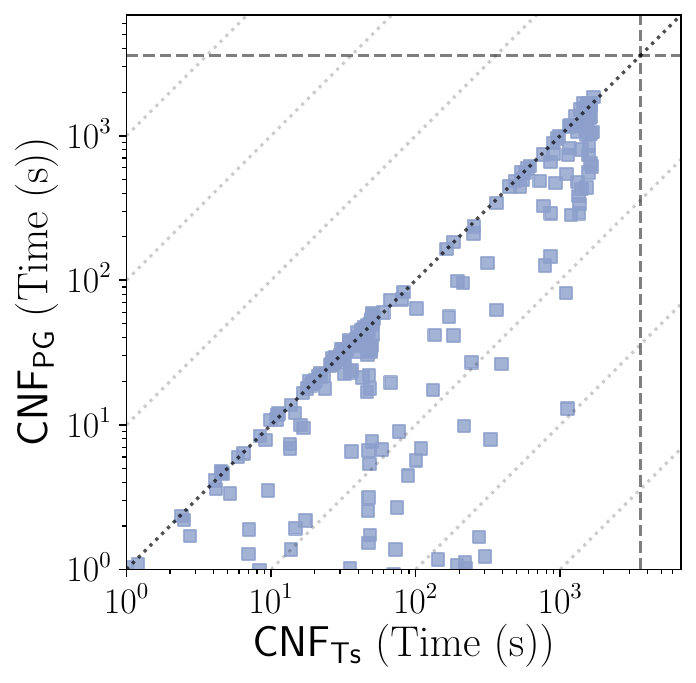}%
            \label{fig:plt:tabula:syn:bool:norep:time:lab_vs_pol}
        \end{subfigure}\hfill
        \begin{subfigure}[t]{0.29\textwidth}
            \includegraphics[width=.85\textwidth]{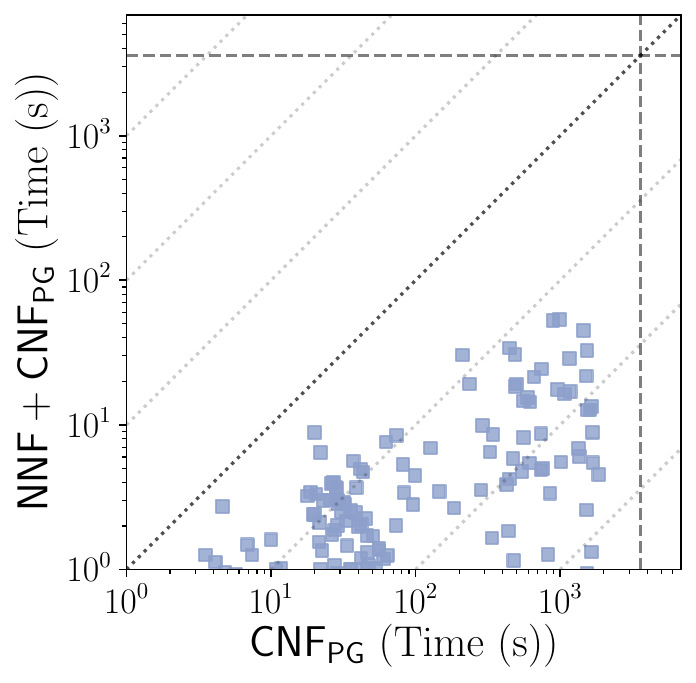}%
            \label{fig:plt:tabula:syn:bool:norep:time:pol_vs_nnfpol}
        \end{subfigure}\hfill
        \begin{subfigure}[t]{0.29\textwidth}
            \includegraphics[width=.85\textwidth]{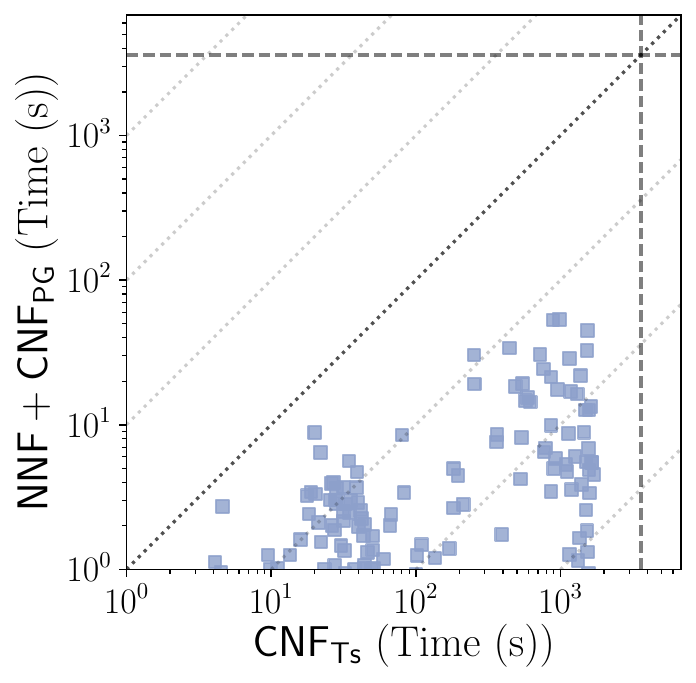}%
            \label{fig:plt:tabula:syn:bool:norep:time:lab_vs_nnfpol}
        \end{subfigure}
        \caption{Results for disjoint enumeration. 
        }%
        \label{fig:plt:tabula:syn:bool:norep:scatter}
    \end{subfigure}
    \caption{Results on the Boolean synthetic benchmarks using \tabularallsat{}.
        Plots in~\ref{fig:plt:tabula:syn:bool:norep:scatter} compare CNF-izations by \TAna{} size (first row) and execution time (second row).
        Points on dashed lines represent timeouts, shown in~\ref{tab:timeouts:tabula:bool}.
        All axes use a logarithmic scale.}%
    \label{fig:plt:tabula:syn:bool:scatter}
\end{figure}

\begin{figure}
    \centering
    \begin{subfigure}[t]{\textwidth}
        \begin{subfigure}[t]{0.29\textwidth}
            \includegraphics[width=.85\textwidth]{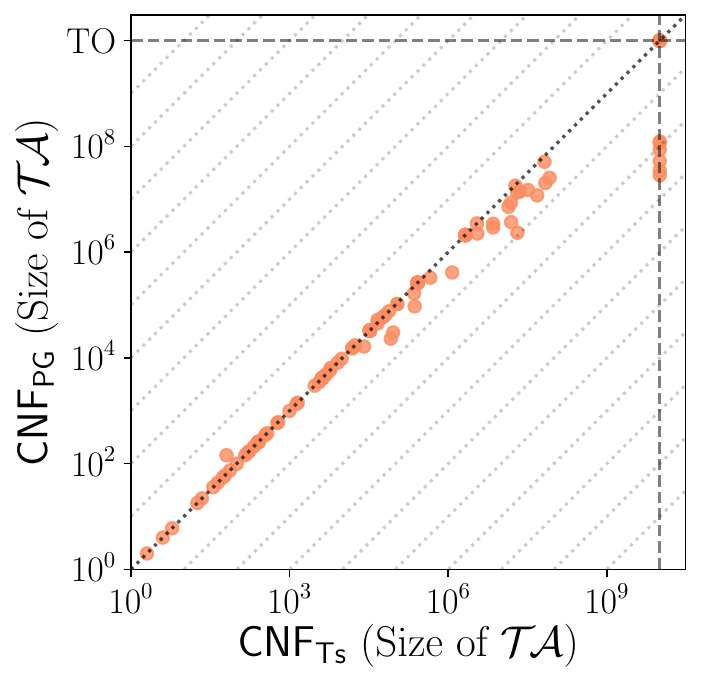}%
            \label{fig:plt:tabula:circ:norep:models:lab_vs_pol}
        \end{subfigure}\hfill
        \begin{subfigure}[t]{0.29\textwidth}
            \includegraphics[width=.85\textwidth]{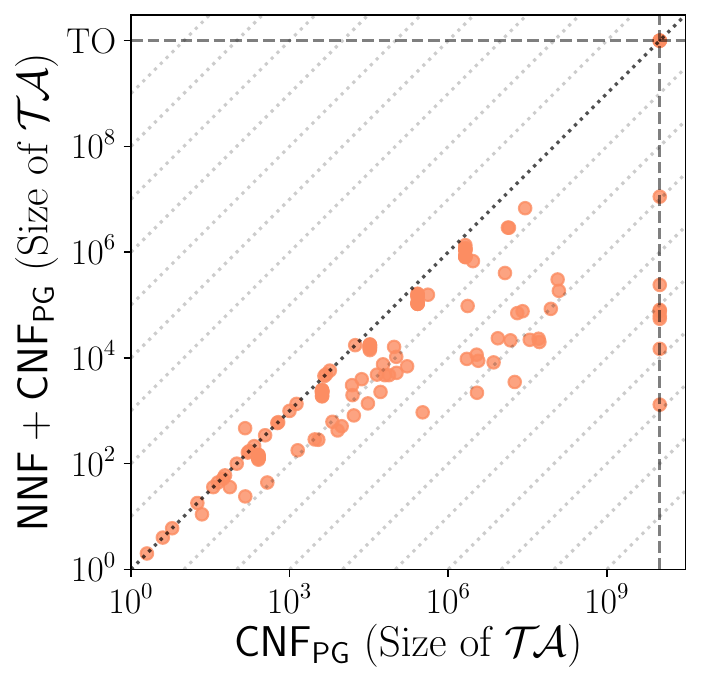}%
            \label{fig:plt:tabula:circ:norep:models:pol_vs_nnfpol}
        \end{subfigure}\hfill
        \begin{subfigure}[t]{0.29\textwidth}
            \includegraphics[width=.85\textwidth]{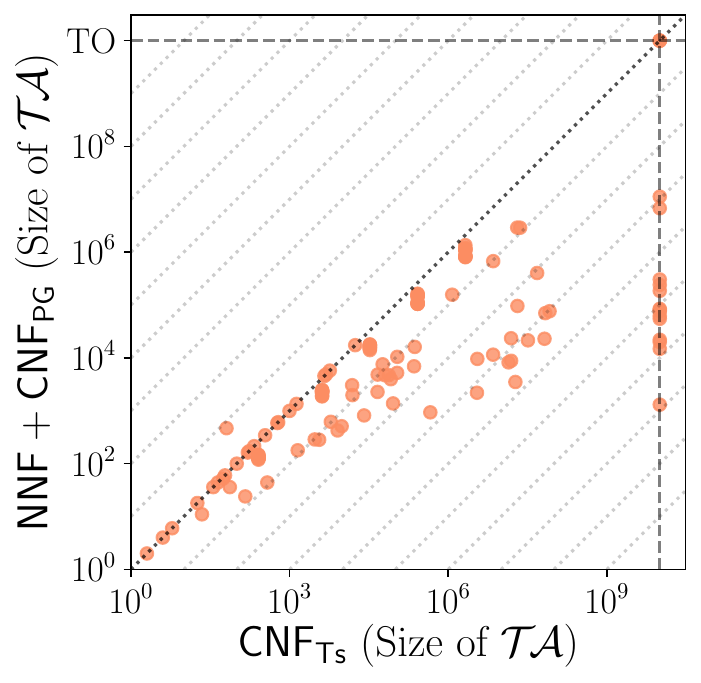}%
            \label{fig:plt:tabula:circ:norep:models:lab_vs_nnfpol}
        \end{subfigure}\hfill
        \begin{subfigure}[t]{0.29\textwidth}
            \includegraphics[width=.85\textwidth]{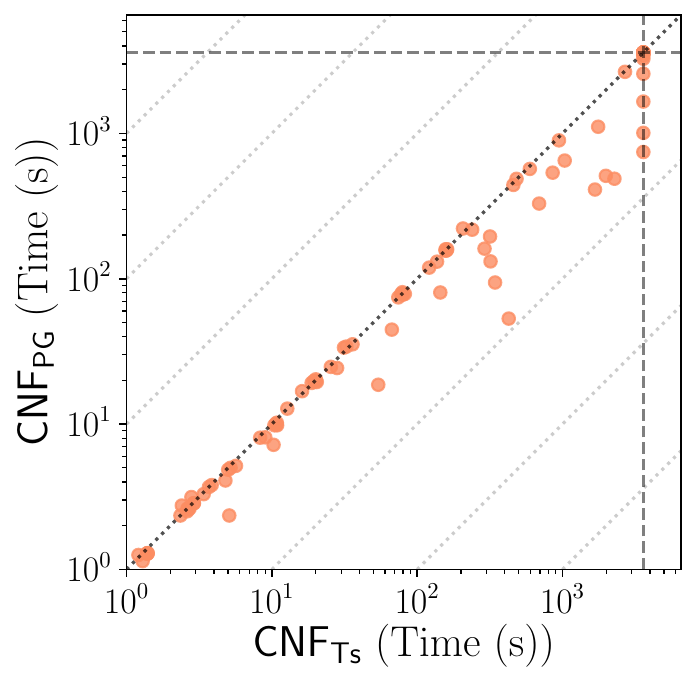}%
            \label{fig:plt:tabula:circ:norep:time:lab_vs_pol}
        \end{subfigure}\hfill
        \begin{subfigure}[t]{0.29\textwidth}
            \includegraphics[width=.85\textwidth]{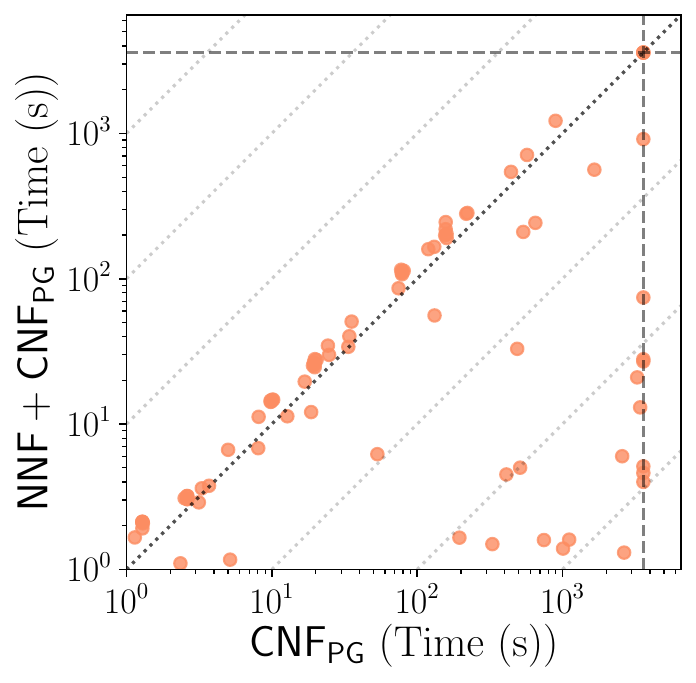}%
            \label{fig:plt:tabula:circ:norep:time:pol_vs_nnfpol}
        \end{subfigure}\hfill
        \begin{subfigure}[t]{0.29\textwidth}
            \includegraphics[width=.85\textwidth]{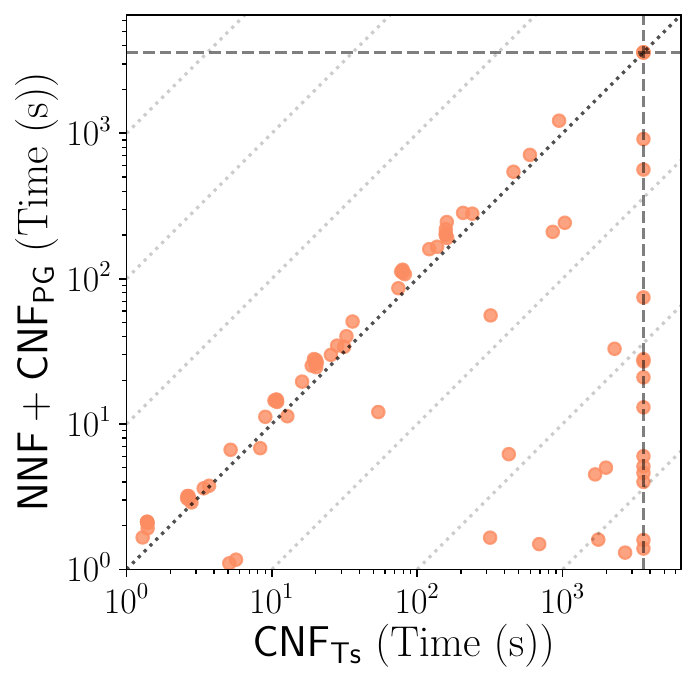}%
            \label{fig:plt:tabula:circ:norep:time:lab_vs_nnfpol}
        \end{subfigure}
        \caption{Results for disjoint enumeration. 
        }%
        \label{fig:plt:tabula:circ:norep:scatter}
    \end{subfigure}
    \caption{Results on the ISCAS'85 benchmarks using \tabularallsat{}.
        Plots in~\ref{fig:plt:tabula:circ:norep:scatter} compare CNF-izations by \TAna{} size (first row) and execution time (second row).
        Points on dashed lines represent timeouts, shown in~\ref{tab:timeouts:tabula:bool}.
        All axes use a logarithmic scale.}%
    \label{fig:plt:tabula:circ:scatter}
\end{figure}
\begin{figure}
    \centering
    \begin{subfigure}[t]{\textwidth}
        \begin{subfigure}[t]{0.29\textwidth}
            \includegraphics[width=.85\textwidth]{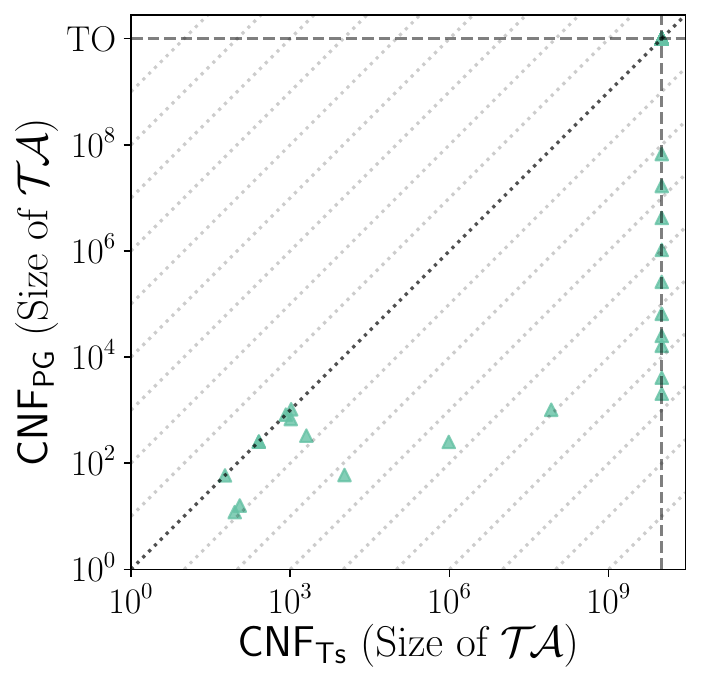}%
            \label{fig:plt:tabula:aig:norep:models:lab_vs_pol}
        \end{subfigure}\hfill
        \begin{subfigure}[t]{0.29\textwidth}
            \includegraphics[width=.85\textwidth]{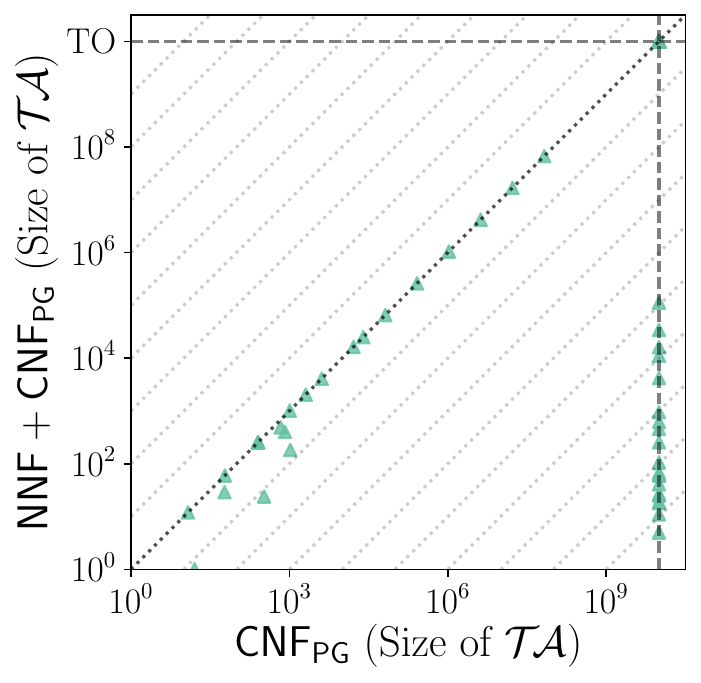}%
            \label{fig:plt:tabula:aig:norep:models:pol_vs_nnfpol}
        \end{subfigure}\hfill
        \begin{subfigure}[t]{0.29\textwidth}
            \includegraphics[width=.85\textwidth]{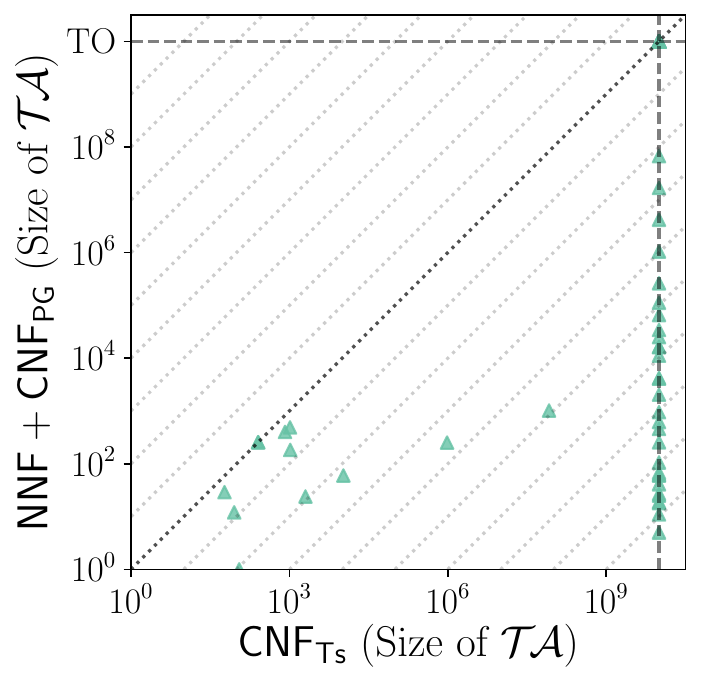}%
            \label{fig:plt:tabula:aig:norep:models:lab_vs_nnfpol}
        \end{subfigure}\hfill
        \begin{subfigure}[t]{0.29\textwidth}
            \includegraphics[width=.85\textwidth]{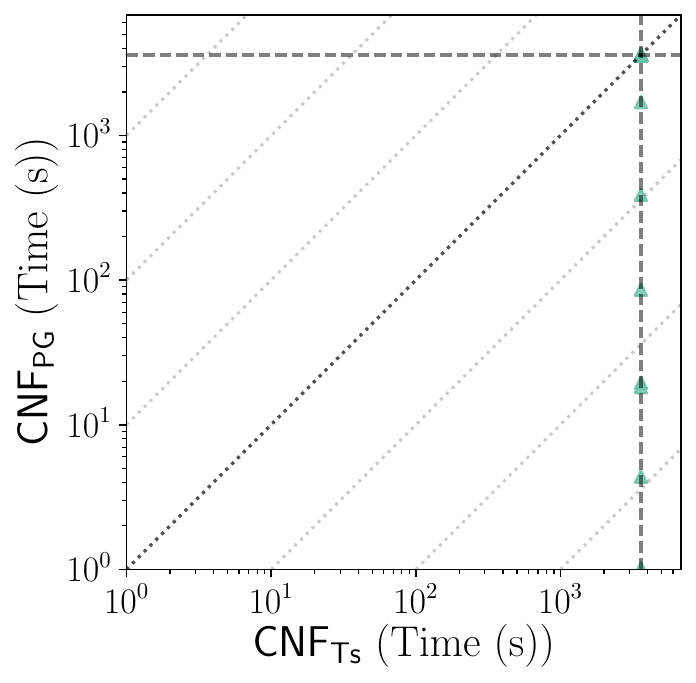}%
            \label{fig:plt:tabula:aig:norep:time:lab_vs_pol}
        \end{subfigure}\hfill
        \begin{subfigure}[t]{0.29\textwidth}
            \includegraphics[width=.85\textwidth]{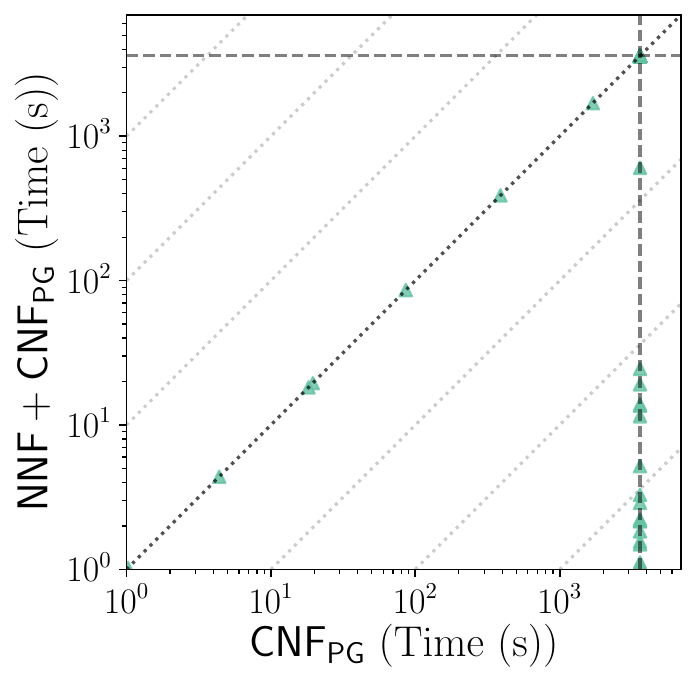}%
            \label{fig:plt:tabula:aig:norep:time:pol_vs_nnfpol}
        \end{subfigure}\hfill
        \begin{subfigure}[t]{0.29\textwidth}
            \includegraphics[width=.85\textwidth]{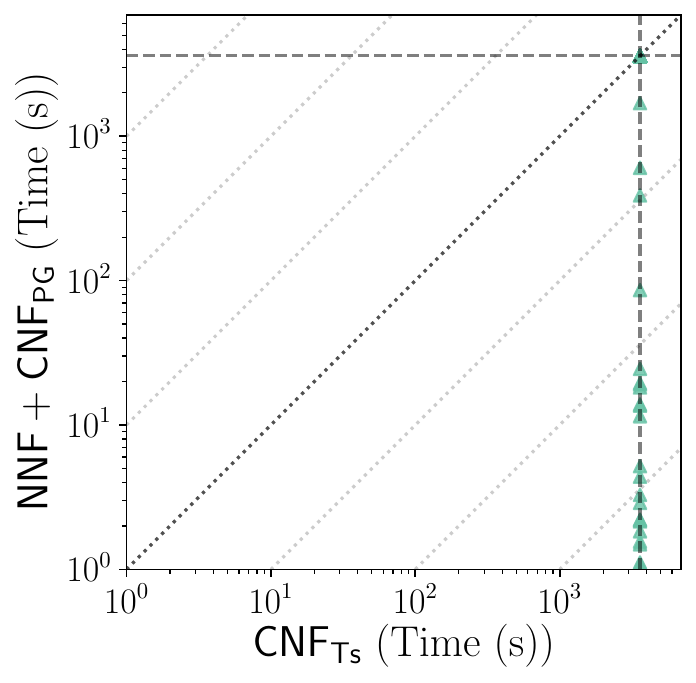}%
            \label{fig:plt:tabula:aig:norep:time:lab_vs_nnfpol}
        \end{subfigure}
        \caption{Results for disjoint enumeration.
        }%
        \label{fig:plt:tabula:aig:norep:scatter}
    \end{subfigure}
    \caption{Results on the AIG benchmarks using \tabularallsat{}.
        Plots in~\ref{fig:plt:tabula:aig:norep:scatter} compare CNF-izations by \TAna{} size (first row) and execution time (second row).
        Points on dashed lines represent timeouts, shown in~\ref{tab:timeouts:tabula:bool}.
        All axes use a logarithmic scale.}%
    \label{fig:plt:tabula:aig:scatter}
\end{figure}

\begin{figure}
    \centering
    \begin{subfigure}[t]{\textwidth}
        \begin{subfigure}[t]{0.29\textwidth}
            \includegraphics[width=.85\textwidth]{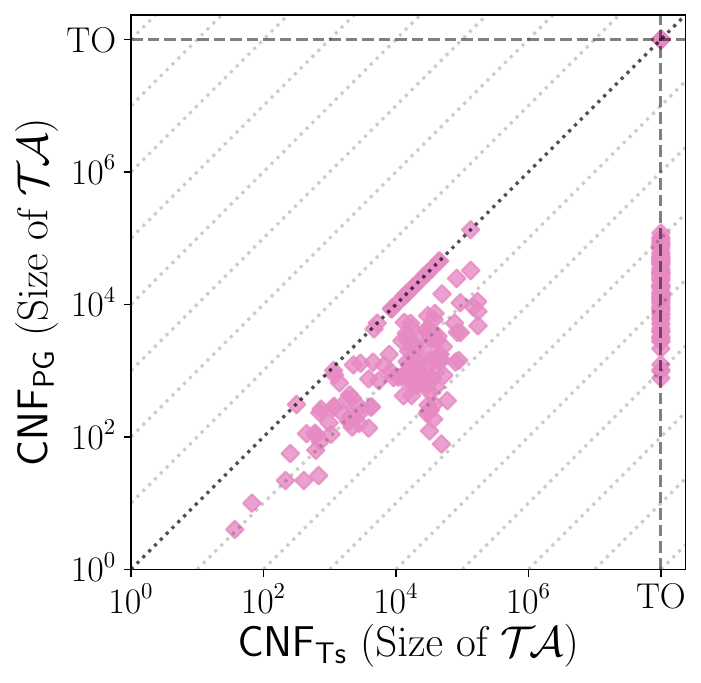}%
            \label{fig:plt:syn:lra:norep:models:lab_vs_pol}
        \end{subfigure}\hfill
        \begin{subfigure}[t]{0.29\textwidth}
            \includegraphics[width=.85\textwidth]{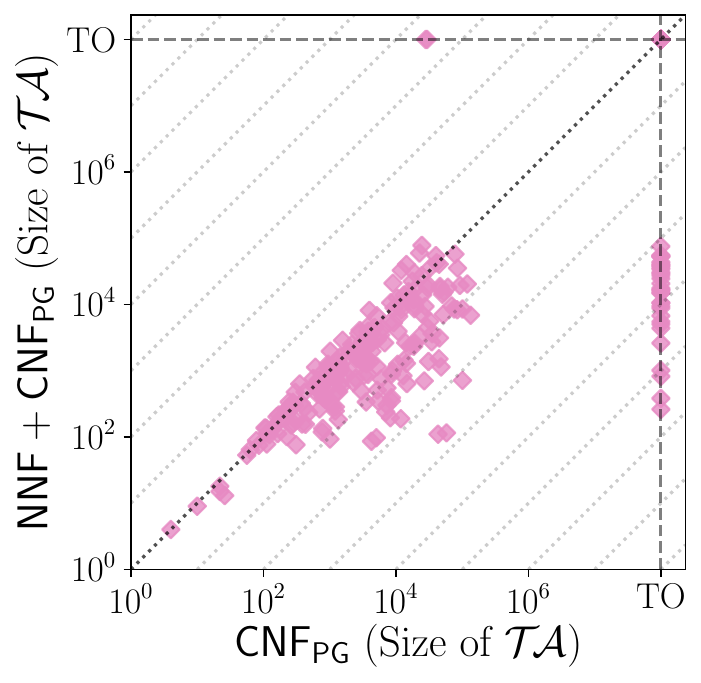}%
            \label{fig:plt:syn:lra:norep:models:pol_vs_nnfpol}
        \end{subfigure}\hfill
        \begin{subfigure}[t]{0.29\textwidth}
            \includegraphics[width=.85\textwidth]{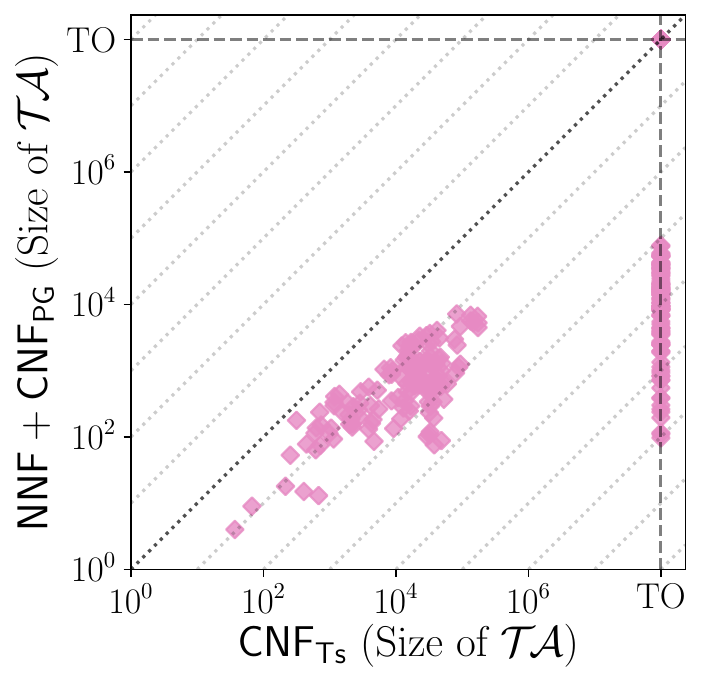}%
            \label{fig:plt:syn:lra:norep:models:lab_vs_nnfpol}
        \end{subfigure}\hfill
        \begin{subfigure}[t]{0.29\textwidth}
            \includegraphics[width=.85\textwidth]{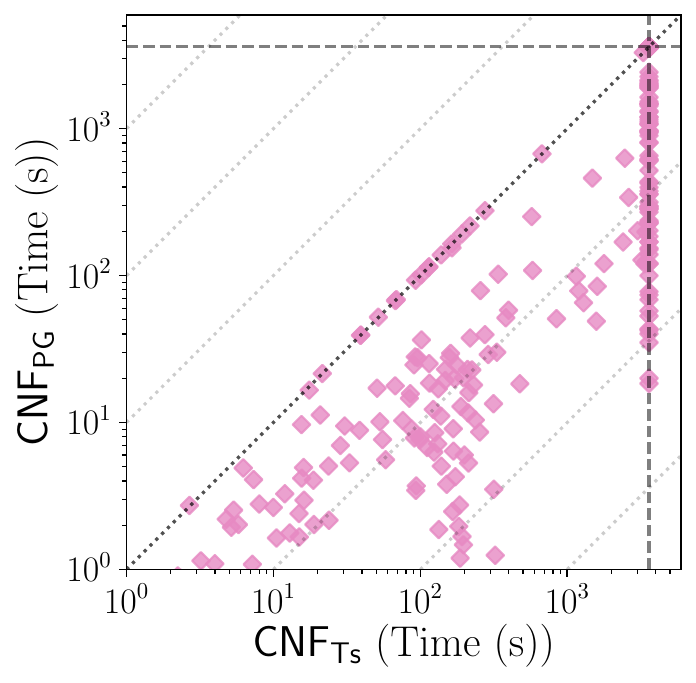}%
            \label{fig:plt:syn:lra:norep:time:lab_vs_pol}
        \end{subfigure}\hfill
        \begin{subfigure}[t]{0.29\textwidth}
            \includegraphics[width=.85\textwidth]{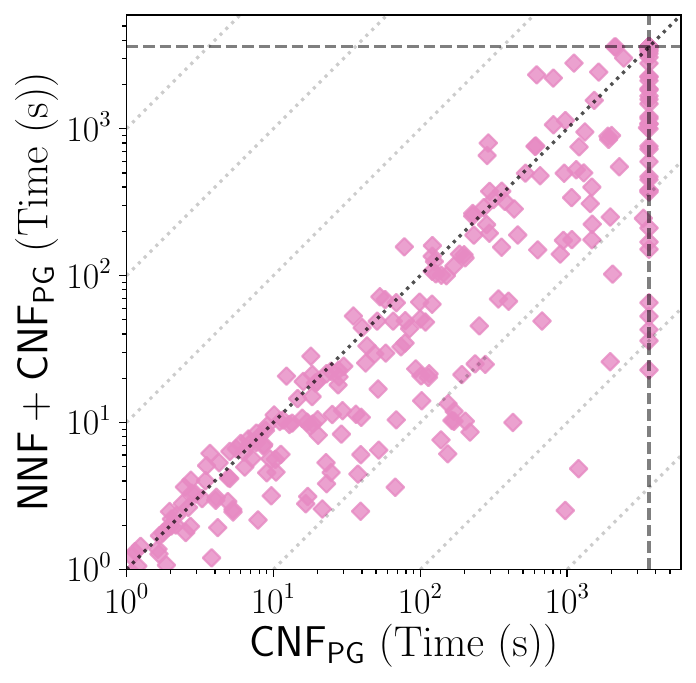}%
            \label{fig:plt:syn:lra:norep:time:pol_vs_nnfpol}
        \end{subfigure}\hfill
        \begin{subfigure}[t]{0.29\textwidth}
            \includegraphics[width=.85\textwidth]{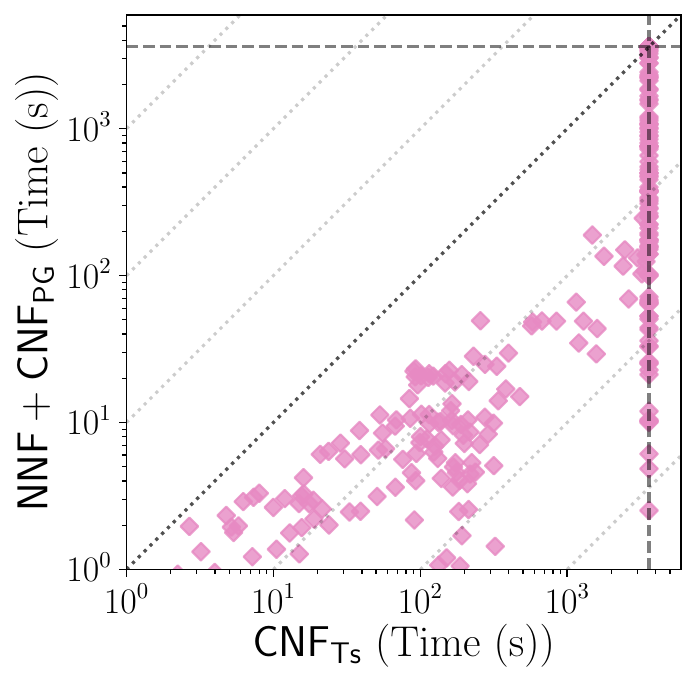}%
            \label{fig:plt:syn:lra:norep:time:lab_vs_nnfpol}
        \end{subfigure}
        \caption{Results for disjoint enumeration. 
        }%
        \label{fig:plt:syn:lra:norep:scatter}
    \end{subfigure}
    \begin{subfigure}[t]{\textwidth}
        \begin{subfigure}[t]{0.29\textwidth}
            \includegraphics[width=.85\textwidth]{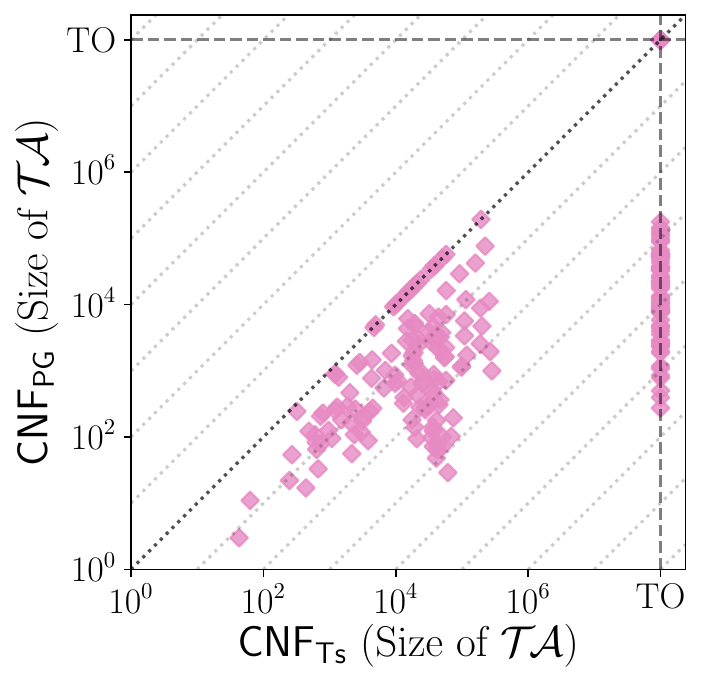}%
            \label{fig:plt:syn:lra:rep:models:lab_vs_pol}
        \end{subfigure}\hfill
        \begin{subfigure}[t]{0.29\textwidth}
            \includegraphics[width=.85\textwidth]{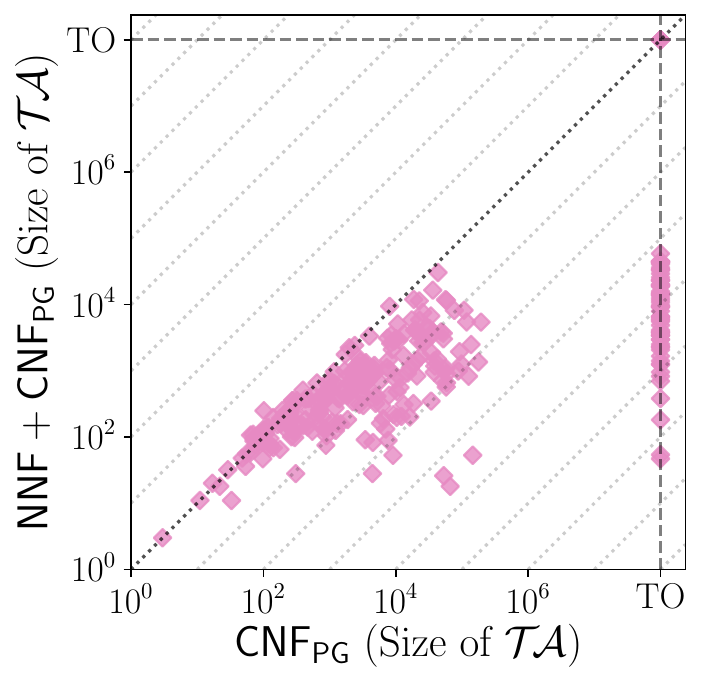}%
            \label{fig:plt:syn:lra:rep:models:pol_vs_nnfpol}
        \end{subfigure}\hfill
        \begin{subfigure}[t]{0.29\textwidth}
            \includegraphics[width=.85\textwidth]{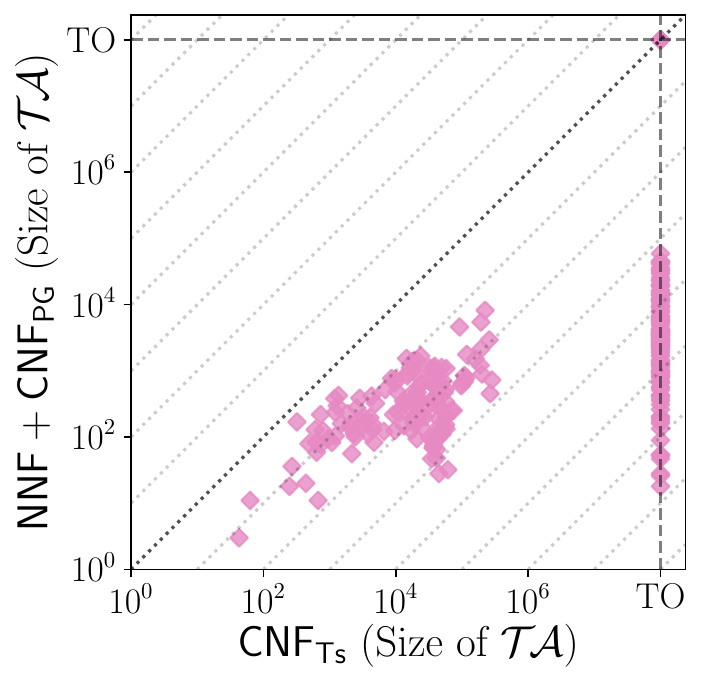}%
            \label{fig:plt:syn:lra:rep:models:lab_vs_nnfpol}
        \end{subfigure}\hfill
        \begin{subfigure}[t]{0.29\textwidth}
            \includegraphics[width=.85\textwidth]{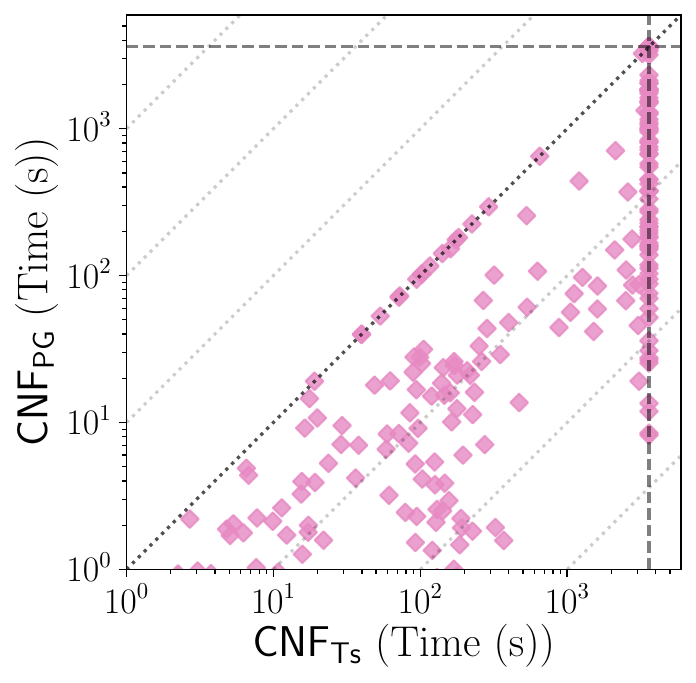}%
            \label{fig:plt:syn:lra:rep:time:lab_vs_pol}
        \end{subfigure}\hfill
        \begin{subfigure}[t]{0.29\textwidth}
            \includegraphics[width=.85\textwidth]{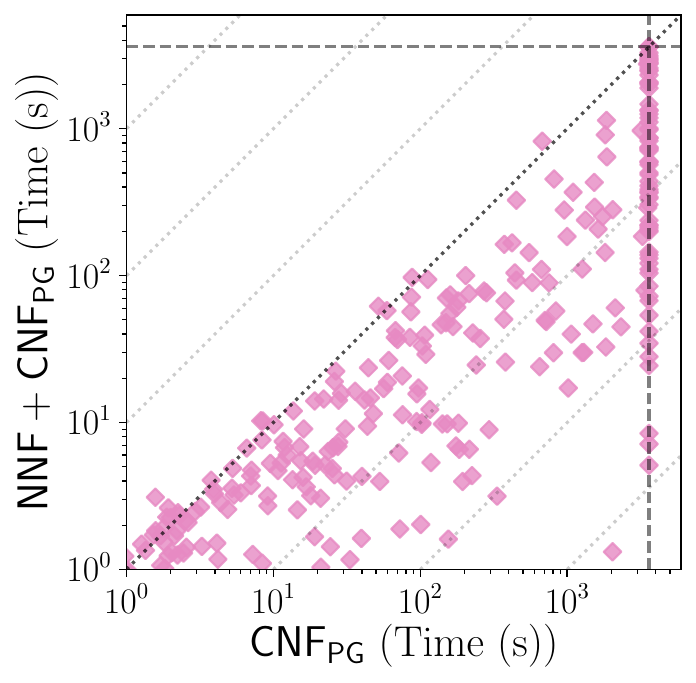}%
            \label{fig:plt:syn:lra:rep:time:pol_vs_nnfpol}
        \end{subfigure}\hfill
        \begin{subfigure}[t]{0.29\textwidth}
            \includegraphics[width=.85\textwidth]{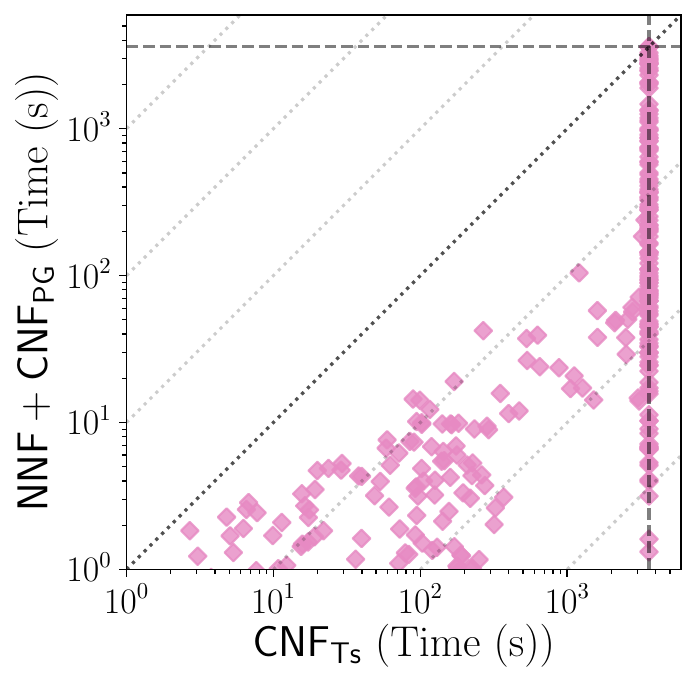}%
            \label{fig:plt:syn:lra:rep:time:lab_vs_nnfpol}
        \end{subfigure}
        \caption{Results for non-disjoint enumeration. 
        }%
        \label{fig:plt:syn:lra:rep:scatter}
    \end{subfigure}
    \caption{Results on the \smtlarat{} synthetic benchmarks using \mathsat{}.
        Plots in~\ref{fig:plt:syn:lra:norep:scatter} and \ref{fig:plt:syn:lra:rep:scatter} compare CNF-izations by \TAna{} size (first row) and execution time (second row).
        Points on dashed lines represent timeouts, shown in~\ref{tab:timeouts:lra}.
        All axes use a logarithmic scale.}%
    \label{fig:plt:syn:lra:scatter}
\end{figure}
\begin{figure}
    \centering
    \begin{subfigure}[t]{\textwidth}
        \begin{subfigure}[t]{0.29\textwidth}
            \includegraphics[width=.85\textwidth]{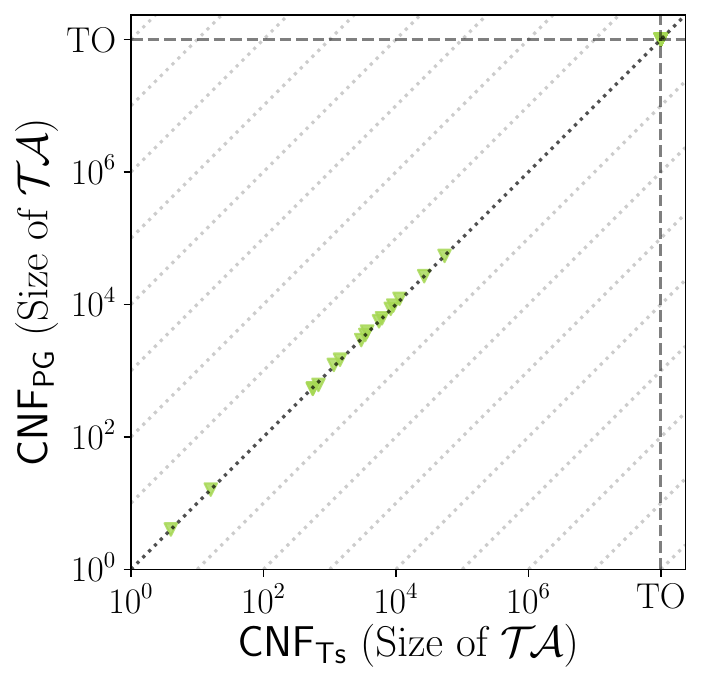}%
            \label{fig:plt:wmi:norep:models:lab_vs_pol}
        \end{subfigure}\hfill
        \begin{subfigure}[t]{0.29\textwidth}
            \includegraphics[width=.85\textwidth]{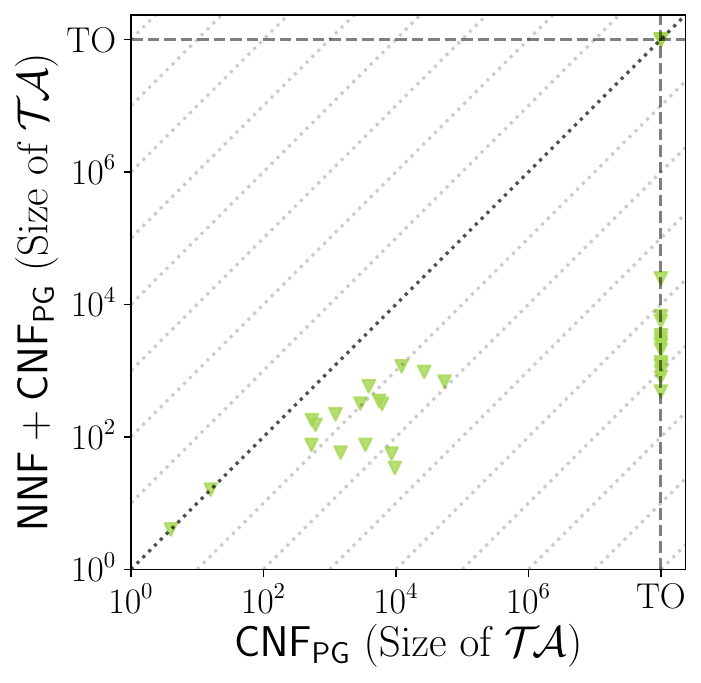}%
            \label{fig:plt:wmi:norep:models:pol_vs_nnfpol}
        \end{subfigure}\hfill
        \begin{subfigure}[t]{0.29\textwidth}
            \includegraphics[width=.85\textwidth]{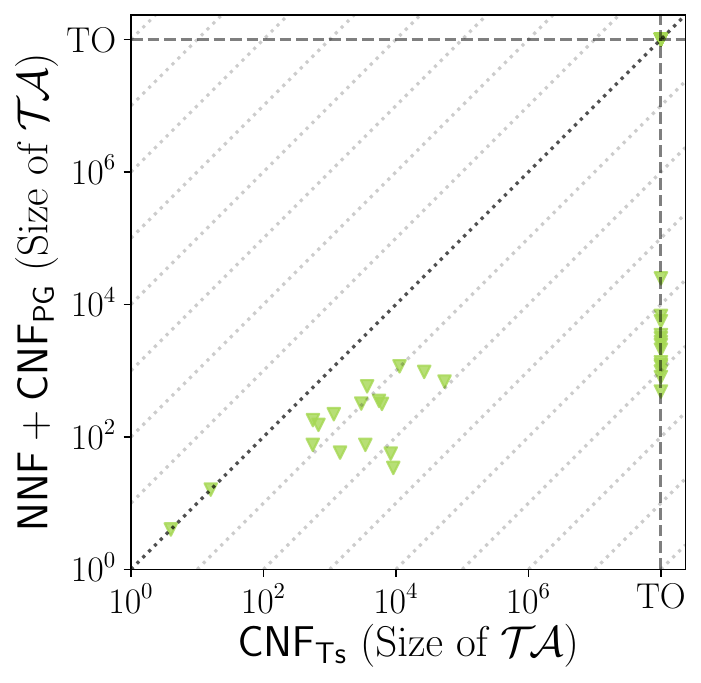}%
            \label{fig:plt:wmi:norep:models:lab_vs_nnfpol}
        \end{subfigure}\hfill
        \begin{subfigure}[t]{0.29\textwidth}
            \includegraphics[width=.85\textwidth]{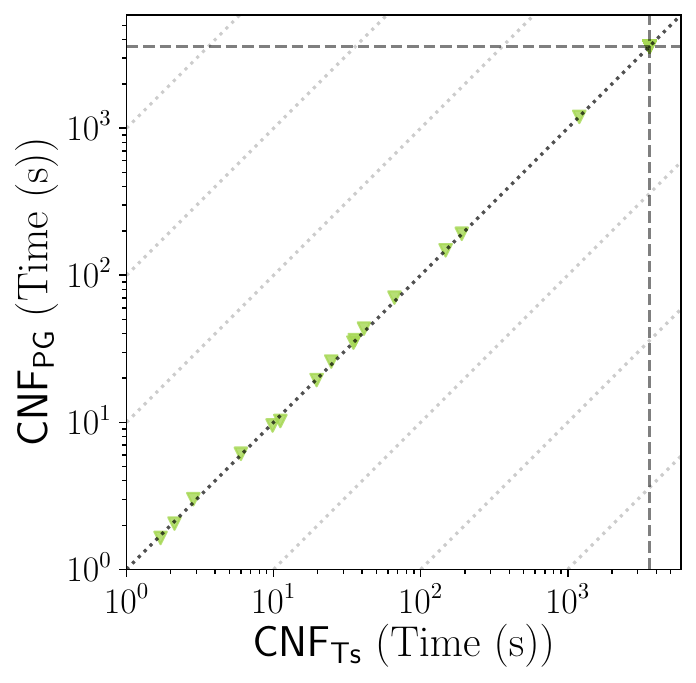}%
            \label{fig:plt:wmi:norep:time:lab_vs_pol}
        \end{subfigure}\hfill
        \begin{subfigure}[t]{0.29\textwidth}
            \includegraphics[width=.85\textwidth]{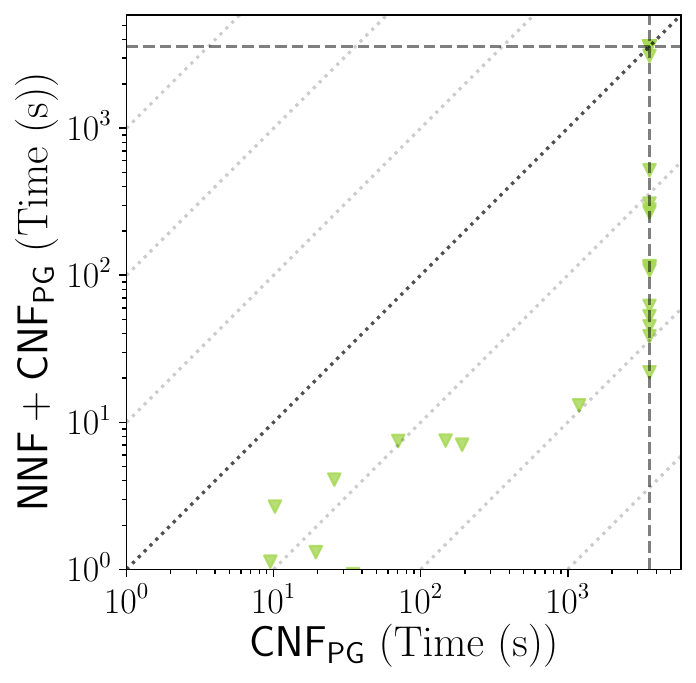}%
            \label{fig:plt:wmi:norep:time:pol_vs_nnfpol}
        \end{subfigure}\hfill
        \begin{subfigure}[t]{0.29\textwidth}
            \includegraphics[width=.85\textwidth]{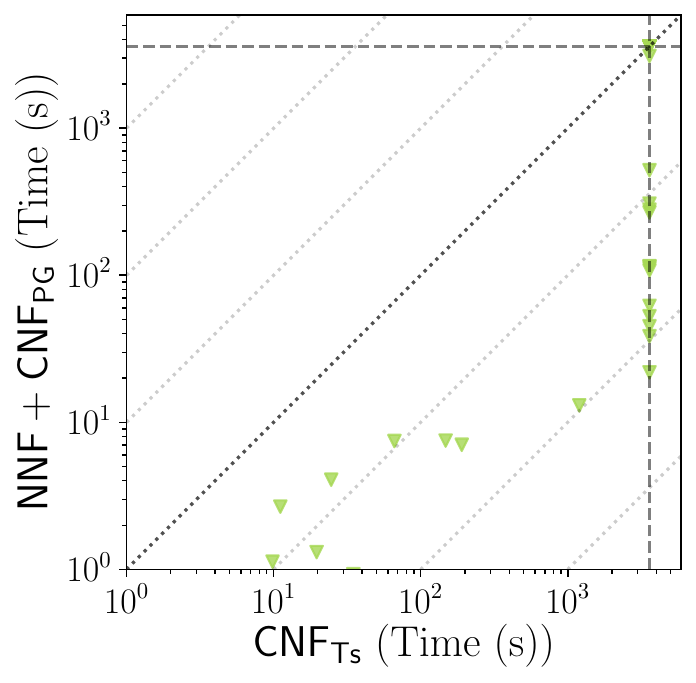}%
            \label{fig:plt:wmi:norep:time:lab_vs_nnfpol}
        \end{subfigure}
        \caption{Results for disjoint enumeration. 
        }%
        \label{fig:plt:wmi:norep:scatter}
    \end{subfigure}
    \begin{subfigure}[t]{\textwidth}
        \begin{subfigure}[t]{0.29\textwidth}
            \includegraphics[width=.85\textwidth]{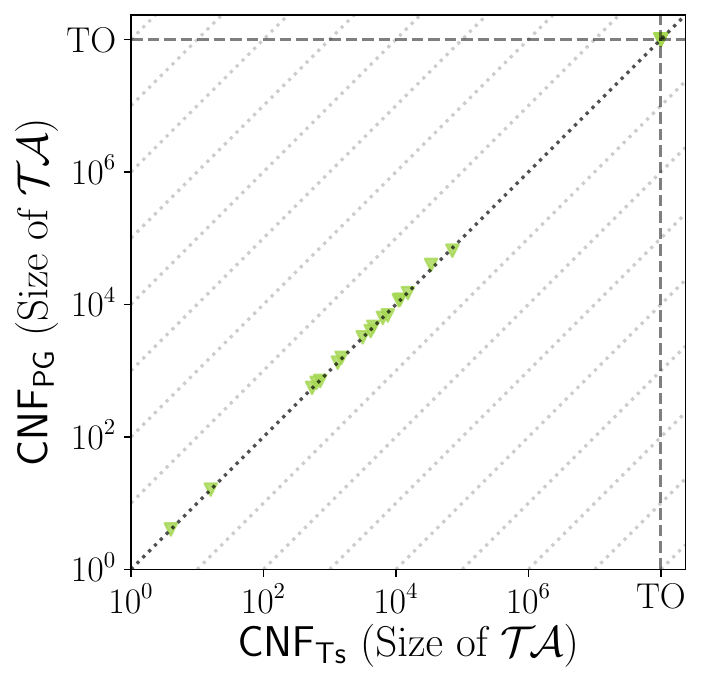}%
            \label{fig:plt:wmi:rep:models:lab_vs_pol}
        \end{subfigure}\hfill
        \begin{subfigure}[t]{0.29\textwidth}
            \includegraphics[width=.85\textwidth]{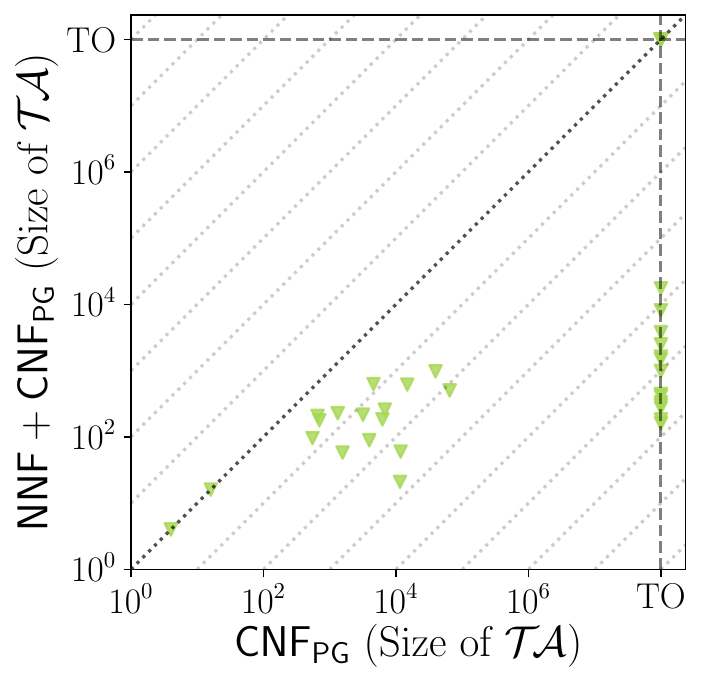}%
            \label{fig:plt:wmi:rep:models:pol_vs_nnfpol}
        \end{subfigure}\hfill
        \begin{subfigure}[t]{0.29\textwidth}
            \includegraphics[width=.85\textwidth]{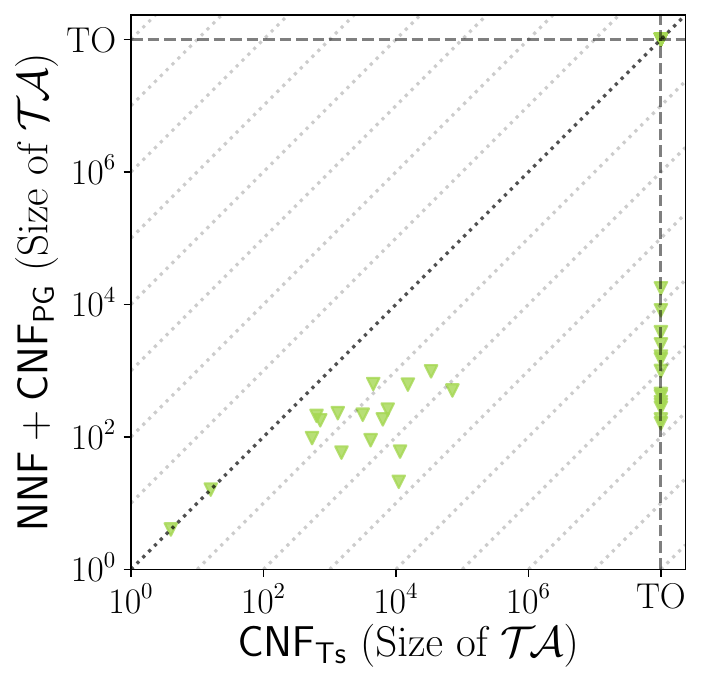}%
            \label{fig:plt:wmi:rep:models:lab_vs_nnfpol}
        \end{subfigure}\hfill
        \begin{subfigure}[t]{0.29\textwidth}
            \includegraphics[width=.85\textwidth]{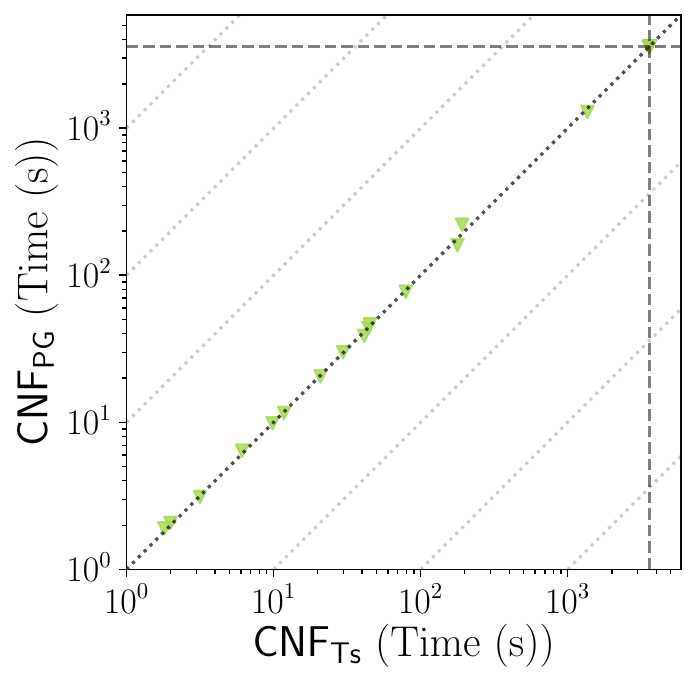}%
            \label{fig:plt:wmi:rep:time:lab_vs_pol}
        \end{subfigure}\hfill
        \begin{subfigure}[t]{0.29\textwidth}
            \includegraphics[width=.85\textwidth]{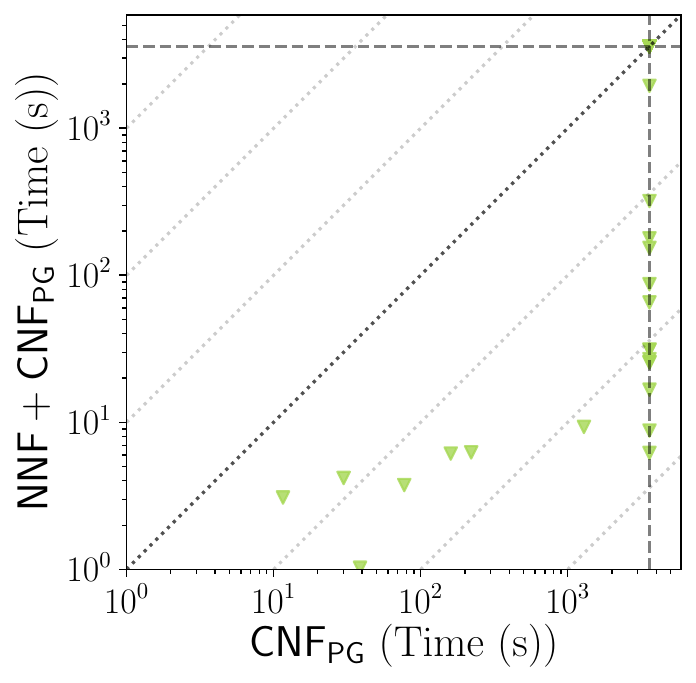}%
            \label{fig:plt:wmi:rep:time:pol_vs_nnfpol}
        \end{subfigure}\hfill
        \begin{subfigure}[t]{0.29\textwidth}
            \includegraphics[width=.85\textwidth]{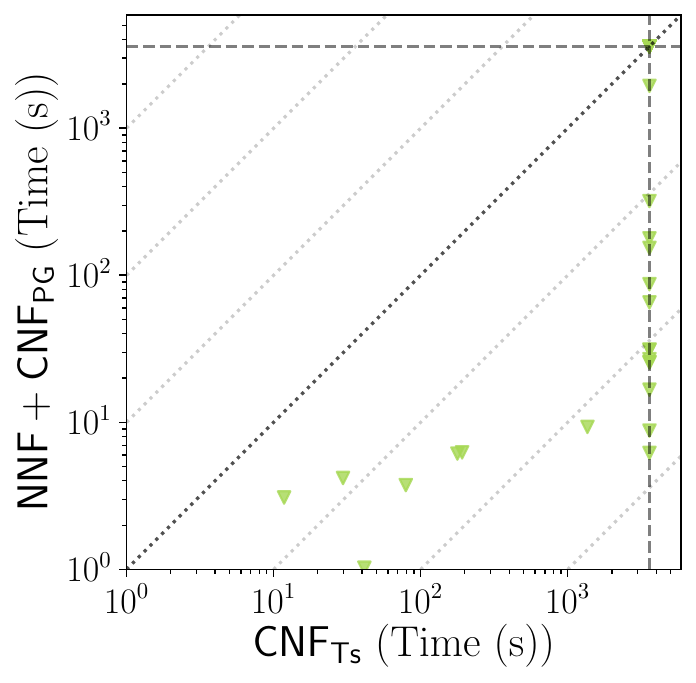}%
            \label{fig:plt:wmi:rep:time:lab_vs_nnfpol}
        \end{subfigure}
        \caption{Results for non-disjoint enumeration. 
        }%
        \label{fig:plt:wmi:rep:scatter}
    \end{subfigure}
    \caption{Results on the WMI benchmarks using \mathsat{}.
        Plots in~\ref{fig:plt:wmi:norep:scatter} and \ref{fig:plt:wmi:rep:scatter} compare CNF-izations by \TAna{} size (first row) and execution time (second row).
        Points on dashed lines represent timeouts, shown in~\ref{tab:timeouts:lra}.
        All axes use a logarithmic scale.}%
    \label{fig:plt:wmi:scatter}
\end{figure}


\begin{figure}
    \centering
    \begin{subfigure}[t]{\textwidth}
        \begin{subfigure}[t]{0.29\textwidth}
            \includegraphics[width=.85\textwidth]{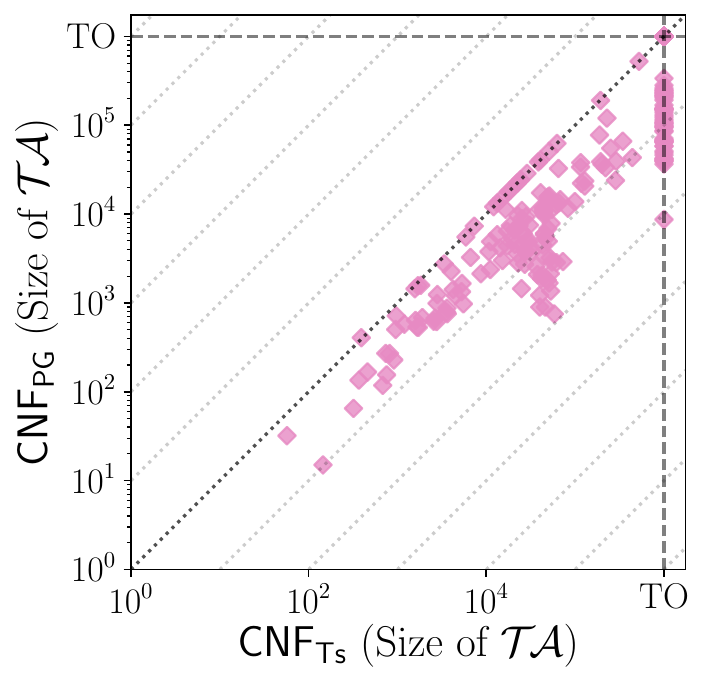}%
            \label{fig:plt:tabula:syn:lra:norep:models:lab_vs_pol}
        \end{subfigure}\hfill
        \begin{subfigure}[t]{0.29\textwidth}
            \includegraphics[width=.85\textwidth]{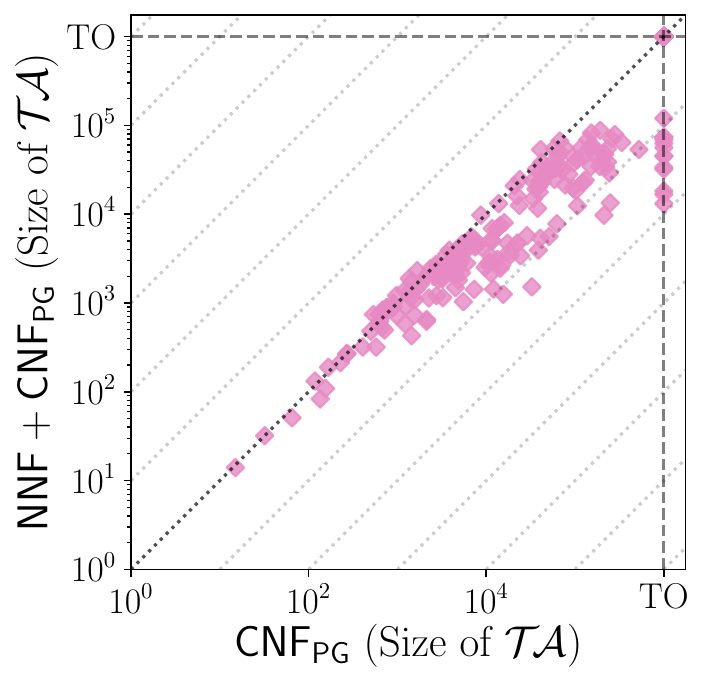}%
            \label{fig:plt:tabula:syn:lra:norep:models:pol_vs_nnfpol}
        \end{subfigure}\hfill
        \begin{subfigure}[t]{0.29\textwidth}
            \includegraphics[width=.85\textwidth]{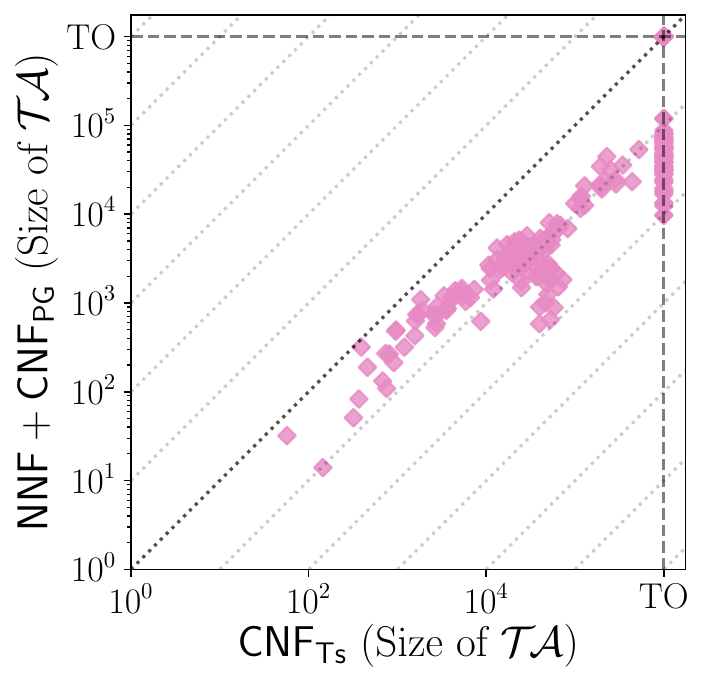}%
            \label{fig:plt:tabula:syn:lra:norep:models:lab_vs_nnfpol}
        \end{subfigure}\hfill
        \begin{subfigure}[t]{0.29\textwidth}
            \includegraphics[width=.85\textwidth]{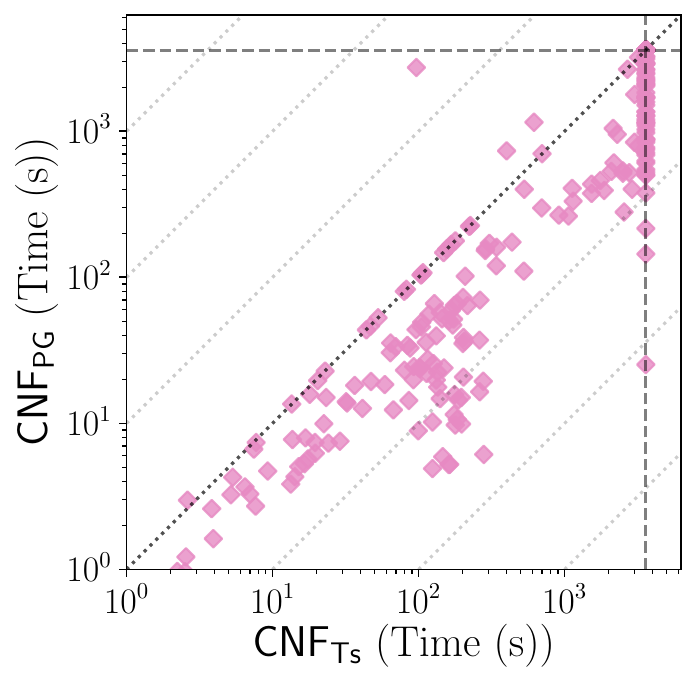}%
            \label{fig:plt:tabula:syn:lra:norep:time:lab_vs_pol}
        \end{subfigure}\hfill
        \begin{subfigure}[t]{0.29\textwidth}
            \includegraphics[width=.85\textwidth]{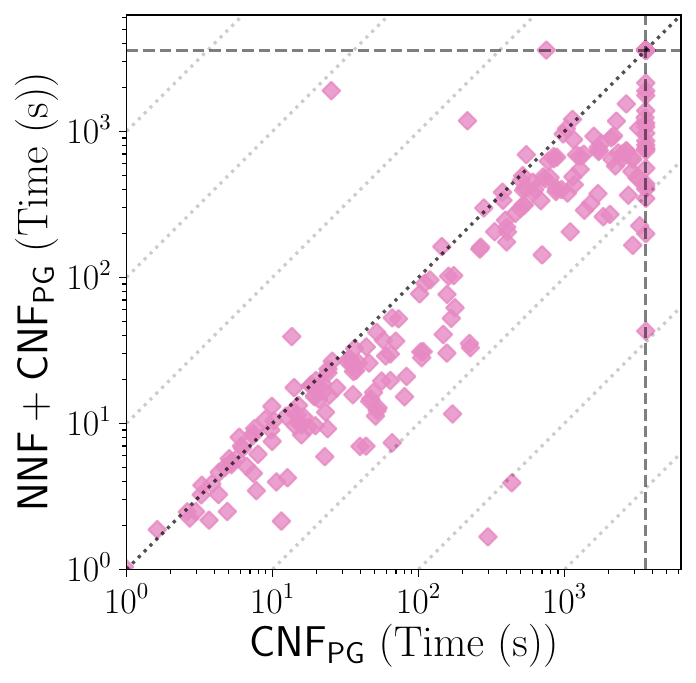}%
            \label{fig:plt:tabula:syn:lra:norep:time:pol_vs_nnfpol}
        \end{subfigure}\hfill
        \begin{subfigure}[t]{0.29\textwidth}
            \includegraphics[width=.85\textwidth]{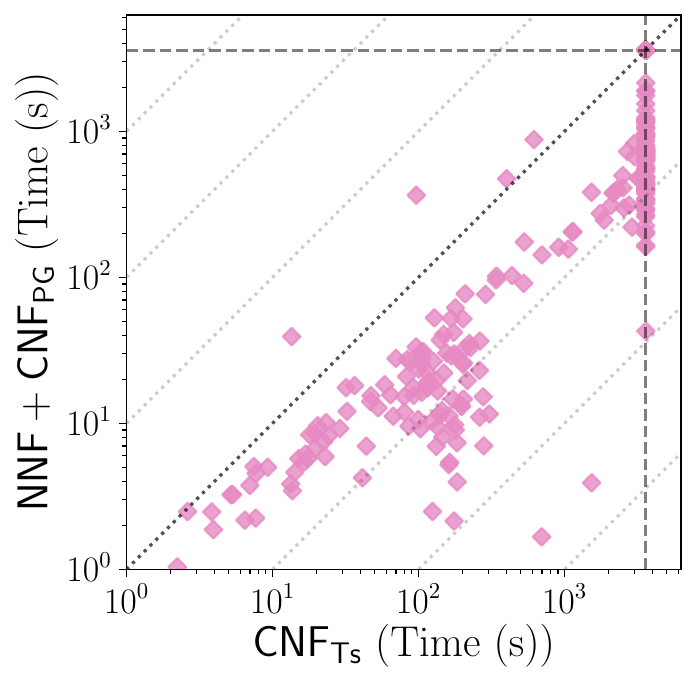}%
            \label{fig:plt:tabula:syn:lra:norep:time:lab_vs_nnfpol}
        \end{subfigure}
        \caption{Results for disjoint enumeration. 
        }%
        \label{fig:plt:tabula:syn:lra:norep:scatter}
    \end{subfigure}
    \caption{Results on the \smtlarat{} synthetic benchmarks using \tabularallsmt{}.
        Plots in~\ref{fig:plt:tabula:syn:lra:norep:scatter} compare CNF-izations by \TAna{} size (first row) and execution time (second row).
        Points on dashed lines represent timeouts, shown in~\ref{tab:timeouts:tabula:lra}.
        All axes use a logarithmic scale.}%
    \label{fig:plt:tabula:syn:lra:scatter}
\end{figure}

\begin{figure}
    \centering
    \begin{subfigure}[t]{\textwidth}
        \begin{subfigure}[t]{0.29\textwidth}
            \includegraphics[width=.85\textwidth]{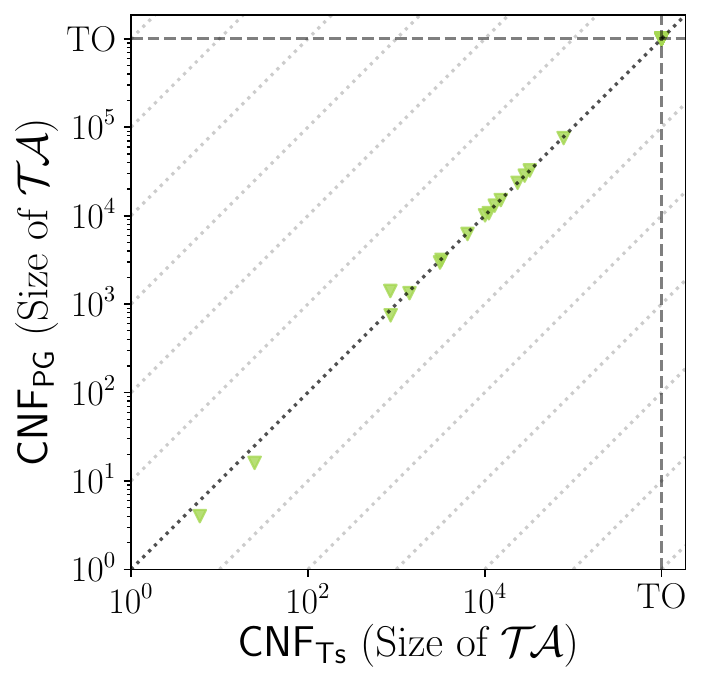}%
            \label{fig:plt:tabula:wmi:norep:models:lab_vs_pol}
        \end{subfigure}\hfill
        \begin{subfigure}[t]{0.29\textwidth}
            \includegraphics[width=.85\textwidth]{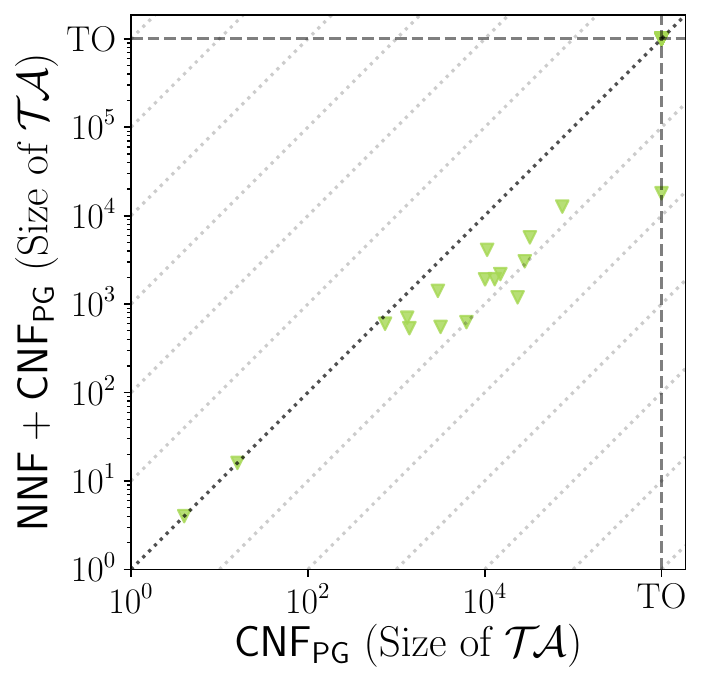}%
            \label{fig:plt:tabula:wmi:norep:models:pol_vs_nnfpol}
        \end{subfigure}\hfill
        \begin{subfigure}[t]{0.29\textwidth}
            \includegraphics[width=.85\textwidth]{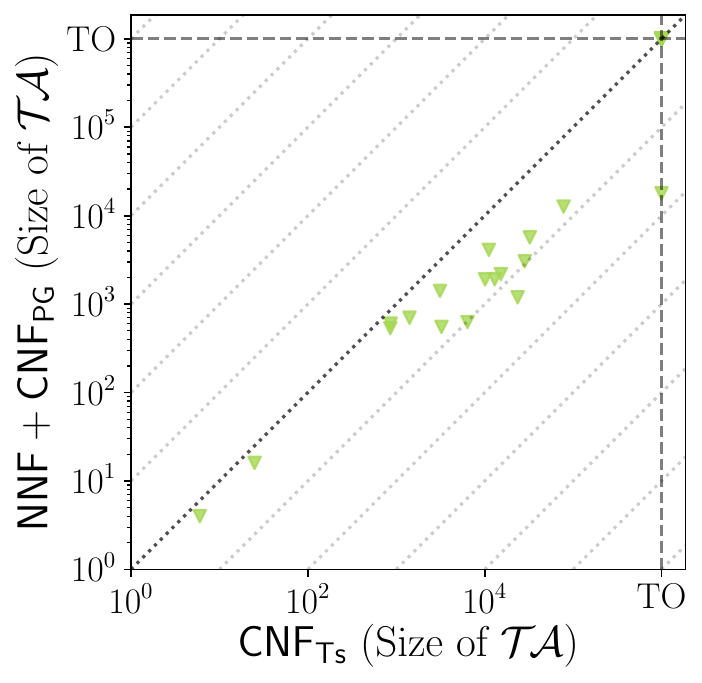}%
            \label{fig:plt:tabula:wmi:norep:models:lab_vs_nnfpol}
        \end{subfigure}\hfill
        \begin{subfigure}[t]{0.29\textwidth}
            \includegraphics[width=.85\textwidth]{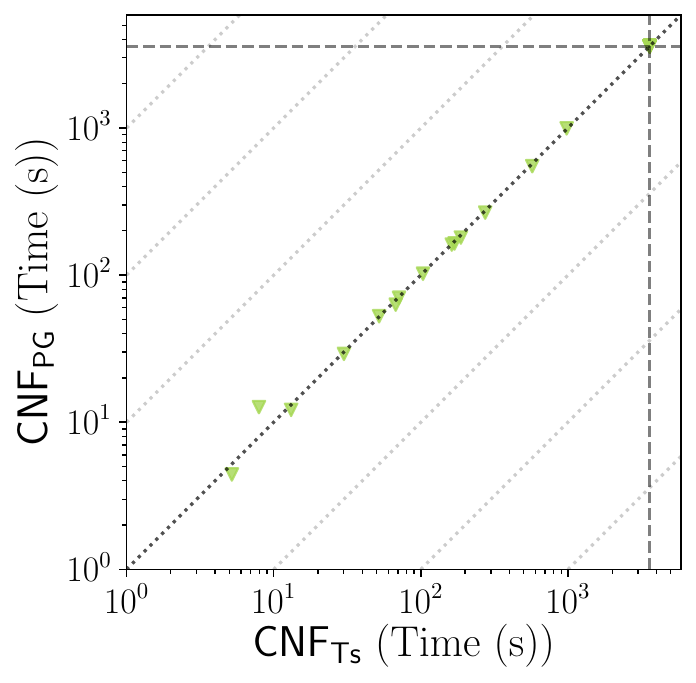}%
            \label{fig:plt:tabula:wmi:norep:time:lab_vs_pol}
        \end{subfigure}\hfill
        \begin{subfigure}[t]{0.29\textwidth}
            \includegraphics[width=.85\textwidth]{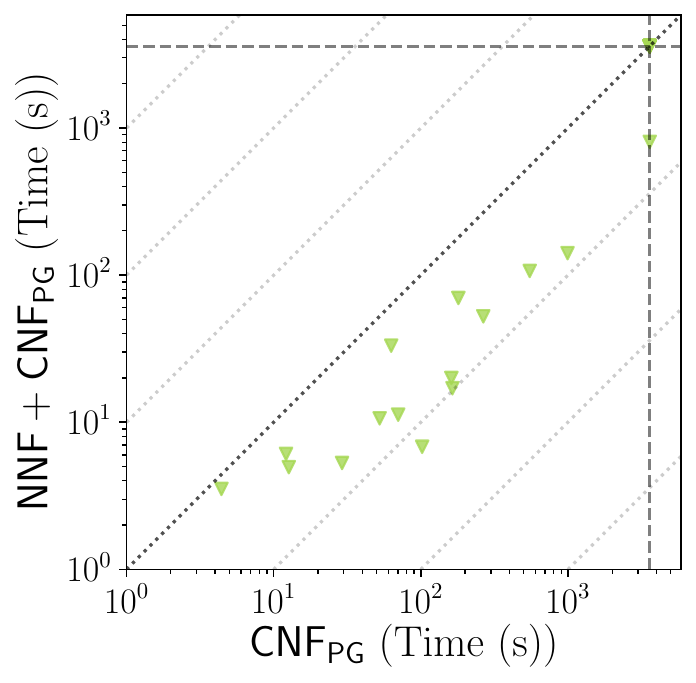}%
            \label{fig:plt:tabula:wmi:norep:time:pol_vs_nnfpol}
        \end{subfigure}\hfill
        \begin{subfigure}[t]{0.29\textwidth}
            \includegraphics[width=.85\textwidth]{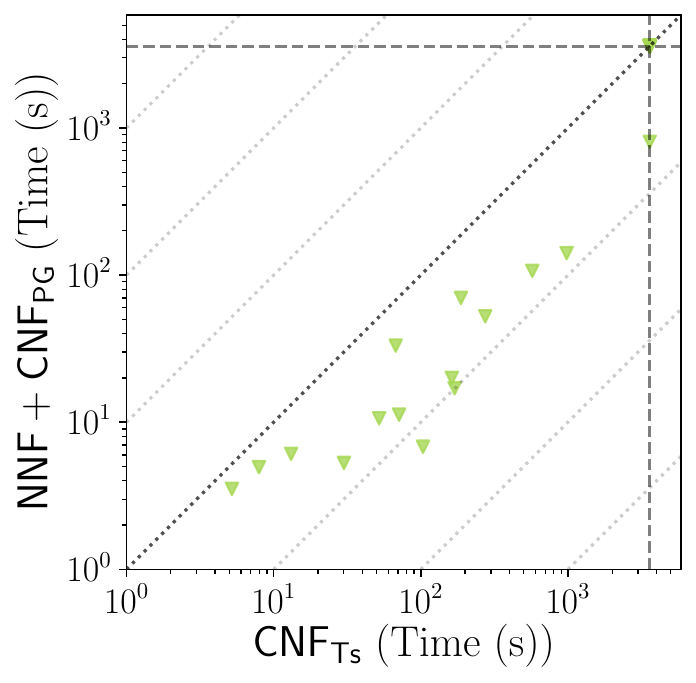}%
            \label{fig:plt:tabula:wmi:norep:time:lab_vs_nnfpol}
        \end{subfigure}
        \caption{Results for disjoint enumeration. 
        }%
        \label{fig:plt:tabula:wmi:norep:scatter}
    \end{subfigure}
    \caption{Results on the WMI benchmarks using \tabularallsmt{}.
        Plots in~\ref{fig:plt:tabula:wmi:norep:scatter} compare CNF-izations by \TAna{} size (first row) and execution time (second row).
        Points on dashed lines represent timeouts, shown in~\ref{tab:timeouts:tabula:lra}.
        All axes use a logarithmic scale.}%
    \label{fig:plt:tabula:wmi:scatter}
\end{figure}

\FloatBarrier
\newpage
\printbibliography

\end{document}